\definecolor{darkblue}{rgb}{0,0,.5}
\long\def\@makecaption#1#2{
  \vskip 0.8ex
  \setbox\@tempboxa\hbox{\small {\bf #1:} #2}
  \parindent 1.5em  
  \dimen0=\hsize
  \advance\dimen0 by -3em
  \ifdim \wd\@tempboxa >\dimen0
  \hbox to \hsize{
    \parindent 0em
    \hfil 
    \parbox{\dimen0}{\def\baselinestretch{0.96}\small
      {\bf #1.} #2
    } 
    \hfil}
  \else \hbox to \hsize{\hfil \box\@tempboxa \hfil}
  \fi
}
\newcommand{\R}{\mathbb{R}}
\newcommand{\N}{\mathbb{N}}
\newtheorem{theorem}{Theorem}
\newtheorem{lemma}{Lemma}[section]
\newtheorem{corollary}{Corollary}[section]
\newtheorem{definition}{Definition}[section]
\newtheorem{claim}{Claim}[section]
\renewcommand\log{\ln}
\begin{document}

\title{Practical Differentially Private Top-$k$ Selection with Pay-what-you-get Composition}
\author[1]{David Durfee}
\author[1]{Ryan Rogers}
\affil[1]{Applied Research, LinkedIn}

\maketitle 

\begin{abstract}
We study the problem of top-$k$ selection over a large domain universe subject to user-level differential privacy.  Typically, the exponential mechanism or report noisy max are the algorithms used to solve this problem.  However, these algorithms require querying the database for the count of each domain element.  We focus on the setting where the data domain is unknown, which is different than the setting of frequent itemsets where an \emph{apriori} type algorithm can help prune the space of domain elements to query.  We design algorithms that ensures (approximate) $(\diffp,\delta>0)$-differential privacy and only needs access to the true top-$\bar{k}$ elements from the data for any chosen $\bar{k} \geq k$.  This is a highly desirable feature for making differential privacy practical, since the algorithms require no knowledge of the domain.  We consider both the setting where a user's data can modify an arbitrary number of counts by at most 1, i.e. unrestricted sensitivity, and the setting where a user's data can modify at most some small, fixed number of counts by at most 1, i.e. restricted sensitivity.  Additionally, we provide a \emph{pay-what-you-get} privacy composition bound for our algorithms.  That is, our algorithms might return fewer than $k$ elements when the top-$k$ elements are queried, but the overall privacy budget only decreases by the size of the outcome set.
\end{abstract}

\clearpage

\tableofcontents

\clearpage


\section{Introduction}
Determining the top-$k$ most frequent items from a massive dataset in an efficient way is one of the most fundamental problems in data science, see \citet{IlyasBeSo08} for a survey of top-$k$ processing techniques.  For example, consider the task of returning the 10 most popular articles that users engaged with.  However, it is important to consider users' privacy in the dataset, since results from data mining approaches can reveal sensitive information about a user's data \cite{KantarciogluJiCl04}.  Simple thresholding techniques, e.g. $k$-anonymity, do not provide formal privacy guarantees, since adversary background knowledge or linking other datasets may cause someone's data in a protected dataset to be revealed \cite{MachanavajjhalaGeKiVe06}.  Our aim is to provide rigorous privacy techniques for determining the top-$k$ so that it can be built on top of highly distributed, real-time systems that might already be in place.

Differential privacy has become the gold standard for rigorous privacy guarantees in data analytics.  One of the primary benefits of differential privacy is that the privacy loss of a computation on a dataset can be quantified.  Many companies have adopted differential privacy, including Google \cite{ErlingssonPiKo14}, Apple \cite{ApplePrivacy17}, Uber \cite{JohnsonNeSo18}, Microsoft \cite{DingKuYe17}, and LinkedIn \cite{KenthapadiTr18}, as well as government agencies, like the U.S. Census Bureau \cite{DajaniLaSiKiReMaGaDaGrKaKiLeScSeViAb17}.   For this work, we hope to extend the use of differential privacy in practical systems to allow analysts to compute the $k$ most frequent elements in a given dataset.  We are certainly not the first to explore this topic, yet the previous works require querying the count of every domain element, e.g. report noisy max \cite{DworkRo14} or the exponential mechanism \cite{McSherryTa07}, or require some structure on the large domain universe, e.g. frequent item sets (see Related Work).  We aim to design practical, (approximate) differentially private algorithms that do not require any structure on the data domain, which is typically the case in exploratory data analysis.  Further, our algorithms work in the setting where data is preprocessed prior to running our algorithms, so that the differentially private computation only accesses a subset of the data while still providing user privacy in the full underlying dataset.  

We design $(\diffp,\delta>0)$-differentially private algorithms that can return the top-$k$ results by querying the counts of elements that only exist in the dataset.  To ensure user level privacy, where we want to protect the privacy of a user's entire dataset that might consist of many data records, we consider two different settings.  In the \emph{restricted sensitivity setting}, we assume that a user can modify the counts by at most 1 across at most a fixed number $\Delta$ of elements in a data domain, which is assumed to be known.  An example of such a setting would be computing the top-$k$ countries where users have a certain skill set.  Assuming a user can only be in one country, we have $\Delta = 1$.  In the more general setting, we consider \emph{unrestricted sensitivity}, where a user can modify the counts by at most 1 across an arbitrary number of elements.  An example of the unrestricted setting would be if we wanted to compute the top-$k$ articles with distinct user engagement (liked, commented, shared, etc.).  We design different algorithms for either setting so that the privacy parameter $\diffp$ needs to scale with either $\approx \Delta$ in the restricted sensitivity setting or $\approx \sqrt{k}$ in the unrestricted setting.  Thus, our differentially private algorithms will ensure user level privacy despite a user being able to modify the counts of any arbitrary number of elements.

The reason that our algorithms require $\delta>0$, and are thus approximate differentially private, is that we want to allow our algorithms to not have to know the data domain, or any structure on it.  For exploratory analyses, one would like to not have to provide the algorithm the full data domain  beforehand.  The mere presence of a domain element in the exploratory analysis might be the result of a single user's data.  Hence, if we remove a user's data in a neighboring dataset, there are some outcomes that cannot occur.  We design algorithms such that these events occur with very small $\delta$ probability.  Simultaneously, we ensure that the private algorithms do not compromise the efficiency of existing systems.  

As a byproduct of our analysis, we also include some results of independent interest.  In particular, we give a composition theorem that essentially allows for \emph{pay-what-you-get} privacy loss.  Since our algorithms allow for outputting fewer than $k$ elements when asked for the top-$k$, we allow the analyst to ask more queries if the algorithms return fewer than $k$ outcomes, up to some fixed bound.  Further, we define a condition on differentially private algorithms that allows for better composition bounds than the general, optimal composition bounds \cite{KairouzOhVi17,MurtaghVa16}.  Lastly, we show how we can achieve a one-shot differentially private algorithm that provides a ranked top-$k$ result and has privacy parameter that scales with $~\sqrt{k}$, which uses a different noise distribution than work from Qiao, et al.
\cite{DworkSuZh15}.  

We see this work as bringing together multiple theoretical results in differential privacy to arrive at a practical privacy system that can be used on top of existing, real-time data analytics platforms for massive datasets distributed across multiple servers.  Essentially, the algorithms allow for solving the top-$\bk$ problem first with the existing infrastructure  for  any chosen $\bk \geq k$, and then incorporate noise and a threshold to output the top-$k$, or fewer outcomes.  In our approach, we can think of the existing system, such as online analytical processing (OLAP) systems, as a blackbox top-$k$ solver and without adjusting the input dataset or opening up the blackbox, we can still implement private algorithms.


\subsection{Related Work}

There are several works in differential privacy for discovering the most frequent elements in a dataset, e.g. top-$k$ selection and heavy hitters.  There are different approaches to solving this problem depending on whether you are in the \emph{local} privacy model, which assumes that each data record is privatized prior to aggregation on the server, or in the \emph{trusted curator} privacy model, which assumes that the data is stored centrally and then private algorithms can be run on top of it.  In the local setting, there has been academic work \cite{BassilySm15,BassilyNiStTh17} as well as industry solutions \cite{ApplePrivacy17,FantiPiEr16} to identifying the heavy hitters.  Note that these algorithms require some additional structure on the data domain, such as fixed length words, where the data can be represented as a sequence of some known length $\ell$ and each element of the sequence belongs to some known set.  One can then prune the space of potential heavy hitters by eliminating subsequences that are not heavy, since a subsequence is frequent only if it is contained in a frequent sequence.

We will be working in the trusted curator model.  There has been several works in this model that estimate frequent itemsets subject to differential privacy, including \cite{BhaskarLaSmTh10, LiQaSuCa12, ZengNaCa12, LeeCl14,ZhuKaSuMcLi19}.  Similar to our work, \citet{BhaskarLaSmTh10} first solve the top-$\bk$ problem nonprivately (but with restrictions on the choice of $\bk \geq k$ which can be $d$ for certain databases) and then use the exponential mechanism to return an estimate for the top-$k$.  The primary difference between these works and ours is that the domain universe in our setting is unknown and not assumed to have any structure.  For itemsets, one can iteratively build up the domain from smaller itemsets, as in the locally private algorithms.  

We assume no structure on the domain, as one would assume without considering privacy restrictions.  This is a highly desirable feature for making differential privacy practical, since the algorithms can work over arbitrary domains.  \citet{ChaudhuriHsSo14} considers the problem of returning the $\argmax$ subject to differential privacy, where their algorithm works in the \emph{range independent} setting.  That is, their algorithms can return domain elements that are unknown to the analyst querying the dataset.  However, their large margin mechanism can run over the entire domain universe in the worst case.  The algorithms in \cite{ChaudhuriHsSo14} and \cite{BhaskarLaSmTh10} share a similar approach in that both use the exponential mechanism on elements above a threshold (completeness).  In order to obtain \emph{pure}-differential privacy ($\delta=0$), \cite{BhaskarLaSmTh10} samples uniformly from elements below the threshold, whereas \cite{ChaudhuriHsSo14} never sample anything from this remaining set and thus satisfy \emph{approximate}-differential privacy ($\delta>0$). Our approach will also follow this high-level idea, but set the threshold in a different manner to ensure computational efficiency.
To our knowledge, there are no top-$k$ differentially private algorithms for the unknown domain setting that never require iterating over the entire domain.

When the data domain is known and we want to compute the top-$k$ most frequent elements, then the usual approach is to first either use report noisy max \cite{DworkRo14}, which adds Laplace noise to each count and reports the index of the largest noisy count, or use the exponential mechanism \cite{McSherryTa07}.  Then we can use a \emph{peeling technique}, which removes the top element's count and then uses report noisy max or the exponential mechanism again.  There has also been work in achieving a one-shot version that adds Laplace noise to the counts once and can return a set of $k$ indices, which would be computationally more efficient, 
\cite{DworkSuZh15}.

There have been several works bounding the total privacy loss of an (adaptive) sequence of differentially private mechanisms, including \emph{basic composition} \cite{DworkMcNiSm06,DworkRo14}, \emph{advanced composition} (with improvements) \cite{DworkRoVa10,DworkRo16,BunSt16}, and \emph{optimal composition} \cite{KairouzOhVi17,MurtaghVa16}.  There has also been work in bounding the privacy loss when the privacy parameters themselves can be chosen adaptively --- where the previous composition theorems cannot be applied --- with \emph{pay-as-you-go} composition \cite{RogersRoUlVa16}.  In this work, we provide a \emph{pay-what-you-get} composition theorem for our algorithms which allows the analyst to only pay for the number of elements that were returned by our algorithms in the overall privacy budget.  Because our algorithms can return fewer than $k$ elements when asked for the top-$k$, we want to ensure the analyst can ask many more queries if fewer than $k$ elements have been given.  


\section{Preliminaries}
We will represent the domain as $[d] \defeq \{1, \cdots, d \}$ and a user $i$'s data as $x_i \in 2^{[d]} \eqdef \cX$.   We then write a dataset of $n$ users as $\bbx = \{x_1, \cdots, x_n\}$.  We say that $\bbx, \bbx'$ are neighbors if they differ in the addition or deletion of one user's data, e.g. $\bbx = \bbx' \cup \{ x_i \}$.  We now define differential privacy \cite{DworkMcNiSm06}.
\begin{definition}[Differential Privacy]
An algorithm $\cM$ that takes a collection of records in $\cX$ to some arbitrary outcome set $\cY$ is $(\diffp,\delta)$-differentially private (DP) if for all neighbors $\bbx,\bbx'$ and for all outcome sets $S \subseteq \cY$, we have
$$
\Pr[\cM(\bbx) \in S] \leq e^\diffp \Pr[\cM(\bbx') \in S] + \delta
$$
If $\delta=0$, then we simply write $\diffp$-DP.
\end{definition}

In this work, we want to select the top-$k$ most frequent elements in a dataset $\bbx$.  Let $h_j(\bbx) \in \N$ denote the number of users that have element $j \in [d]$, i.e. $h_j(\bbx) = \sum_{i =1}^n \1{j \in x_i }$.  We then sort the counts and denote the ordering as $h_{i_{(1)}}(\bbx) \geq \cdots \geq h_{i_{(d)}}(\bbx)$ with corresponding elements $i_{(1)}, \cdots, i_{(d)} \in [d]$.   Hence, from dataset $\bbx$, we seek to output $i_{(1)}, \cdots, i_{(k)}$ where we break ties in some arbitrary, data independent way.  

Note that for neighboring datasets $\bbx$ and $\bbx'$, the corresponding neighboring histograms $\bbh = \bbh(\bbx)$ and $\bbh' = \bbh(\bbx')$ can differ in all $d$ positions by at most $1$, i.e. $|| \bbh - \bbh' ||_\infty \leq 1$.   In some instances, one user can only impact the count on at most a fixed number of coordinates.  We then say that $\bbh, \bbh'$ are $\Delta$-restricted sensitivity neighbors if $||\bbh - \bbh'||_\infty \leq 1$ and $|| \bbh - \bbh' ||_0 \leq \Delta$.  

The algorithms we describe will only need access to a histogram $\bbh(\bbx) = (h_1(\bbx), \cdots, h_{d}(\bbx)) \in \N^d$, where we drop $\bbx$ when it is clear from context.  We will be analyzing the privacy loss of an individual user over many different top-$k_1$, top-$k_2, \cdots$ queries on a larger, overall dataset.  Consider the example where we want to know the top-$k_1$ articles that distinct users engaged with, then we want to know the top-$k_2$ articles that distinct users engaged with in Germany, and so on.  A user's data can be part of each input histogram, so we want to compose the privacy loss across many different queries.  

In our algorithms, we will add noise to the histogram counts.  The noise distributions we consider are from a Gumbel random variable or a Laplace random variable where $\gum(b)$ has PDF $p_\gum(z;b)$, $\lap(b)$ has PDF $p_\lap(z;b)$, and
\begin{equation}
p_\gum(z;b) = \frac{1}{b} \cdot \exp\left(- (z/b + e^{-z/b} )  \right) \qquad \text{ and } \qquad p_\lap(z;b) = \frac{1}{2b} \cdot \exp\left(- |z | / b \right).
\label{eq:PDFs}
\end{equation}


\section{Main Algorithm and Results}
We now present our main algorithm for reporting the top-$k$ domain elements and state its privacy guarantee.  The \emph{limited domain} procedure $\rT{k,\bk}$ is given in Algorithm~\ref{algo:randThresExp} and takes as input a histogram $\bbh \in \N^d$, parameter $k$, some cutoff $\bk \geq k$ for the number of domain elements to consider, and privacy parameters $(\diffp,\delta)$.  It then returns at most $k$ indices in relative rank order.  At a high level, our algorithm can be thought of as solving the top-$\bk$ problem with access to the true data, then from this set of histogram counts, adds noise to each count to determine the noisy top-$k$ and include each index in the output only if its respective noisy count is larger than some noisy threshold. The noise that we add will be from a Gumbel random variable, given in \eqref{eq:PDFs}, which has a nice connection with the exponential mechanism \cite{McSherryTa07} (see Section~\ref{sect:existing}).  In later sections we will present its formal analysis and some extensions.

\begin{algorithm}[h!]
	\caption{$\rT{k,\bk}$; Top-$k$ from the $\bk\geq k$ limited domain}
	\begin{algorithmic}
		\State \textbf{Input:} Histogram $\bbh$; privacy parameters $\diffp,\delta$.
		\State \textbf{Output:} Ordered set of indices.
		\State Sort $h_{(1)} \geq h_{(2)} \geq \cdots$.
		\State Set $h_{\bot} = h_{(\bk + 1)} + 1 +  \ln(\min\{\Delta,\bk\}/\delta)/\diffp$.\footnotemark
		\State Set $v_{\bot} = h_{\bot} +  \gum(1/\diffp)$.
		\For {$j \leq \bk$}
        				\State Set $v_{(j)} = h_{(j)} + \gum(1/\diffp)$.
		\EndFor
        		\State Sort $\{v_{(j)}\} \cup v_{\bot}$.
        		\State Let $v_{i_{(1)} },....,v_{i_{(j)}},v_{\bot}$ be the sorted list up until $v_{\bot}$.
		\State Return $\{i_{(1)},...,i_{(j)},\bot\}$ if $j < k$, otherwise return $\{i_{(1)},...,i_{(k)}\}$.	
	\end{algorithmic}\label{algo:randThresExp}
\end{algorithm}
\footnotetext{Note that if $\bk$ becomes comparable to $d$, then we can also have $d-\bk$ in the minimum statement, but we omit for simplicity.  If $\bk = d$, then we use write $h_{(d+1)} = 0$, in which case the algorithm becomes equivalent to the exponential mechanism with peeling.  This emphasizes that $\bk$ provides a tuning knob between efficiency and utility.}

We now state its privacy guarantee.
\begin{theorem}
Algorithm~\ref{algo:randThresExp} is $(\diffp',\delta+\delta')$-DP for any $\delta' \geq 0$ where
\begin{equation}
\diffp' =\min\left\{ k\diffp, k\diffp \cdot \left( \frac{e^{\diffp} -1}{e^{\diffp} + 1} \right) +  \diffp \sqrt{2 k \log(1/\delta')} , \frac{k\diffp^2}{2} + \diffp \sqrt{ \frac{1}{2} k  \log(1/\delta')}\right\}.
\label{eq:compComp}
\end{equation}
\label{thm:rT_DP}
\end{theorem}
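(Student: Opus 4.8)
The plan is to exhibit $\rT{k,\bk}$ as an adaptive composition of $k$ exponential mechanisms run over a \emph{fixed} domain and then invoke the standard composition theorems for $\diffp$-DP; the additive $\delta$ pays for the fact that the relevant domain is itself data-dependent. First I would record the Gumbel--max identity: for i.i.d.\ $G_1,\dots\sim\gum(1/\diffp)$, the index maximizing $h_j+G_j$ is drawn exactly as the exponential mechanism with parameter $\diffp$ and scores $h_j$ (i.e.\ proportionally to $e^{\diffp h_j}$), and more strongly the \emph{entire} sorted order of $\{h_{(j)}+G_{(j)}\}_{j\le\bk}\cup\{v_\bot\}$ has the law of the sequence produced by repeatedly drawing the exponential-mechanism winner with parameter $\diffp$ and removing it (the Plackett--Luce property of the Gumbel distribution). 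Hence $\rT{k,\bk}(\bbh)$ is distributed identically to the following peeling procedure: over the $\bk+1$ items $\{(1),\dots,(\bk),\bot\}$ with scores $h_{(1)},\dots,h_{(\bk)},h_\bot$, repeatedly draw the exponential-mechanism winner (parameter $\diffp$), remove it, and stop once $\bot$ is drawn or $k$ items have been drawn; then output the drawn items, appending $\bot$ when there are fewer than $k$. This step is pure probability and has no privacy content.

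Next I would handle the data-dependent domain. Fix neighbors $\bbh,\bbh'$ and let $T,T'$ be their top-$\bk$ index sets. Since one user perturbs each count by at most $1$ and monotonically, interlacing of order statistics gives $|h_{(\bk+1)}(\bbh)-h_{(\bk+1)}(\bbh')|\le 1$, every index $j\in T\setminus T'$ has $h_j(\bbh)\le h_{(\bk+1)}(\bbh)+1$, and there are at most $\min\{\Delta,\bk\}$ such indices (the ``boundary'' slots of the top-$\bk$ that can get vacated). By the choice $h_\bot=h_{(\bk+1)}(\bbh)+1+\log(\min\{\Delta,\bk\}/\delta)/\diffp$, each such index lands above $\bot$ in the Gumbel-sort with probability $\tfrac{e^{\diffp h_j}}{e^{\diffp h_j}+e^{\diffp h_\bot}}\le e^{-\diffp(h_\bot-h_j)}\le\delta/\min\{\Delta,\bk\}$, so a union bound shows $\rT{k,\bk}(\bbh)$ outputs an element of $T\setminus T'$ with probability at most $\delta$, and symmetrically for $\bbh'$. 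On the complementary event the Gumbel-sort restricted to ``above $v_\bot$'' over $T\cup\{\bot\}$ agrees with the one over $(T\cap T')\cup\{\bot\}$; and since deleting low-score items from the exponential mechanism is the same as conditioning on the winner lying in the retained sub-domain, up to an additive $\delta$ the procedure coincides with the peeling procedure above run over the \emph{fixed} common domain $(T\cap T')\cup\{\bot\}$, with scores read off from the respective histogram.

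Over a fixed domain, each round is the exponential mechanism with parameter $\diffp$ applied to a quality function that is $1$-sensitive and monotone under addition or deletion of a user: this holds for the counts $h_{(j)}$, and also for the threshold score $h_\bot$, which is a fixed shift of the order statistic $h_{(\bk+1)}$ and is hence itself $1$-sensitive and monotone. Thus each round is $\diffp$-DP and the domain used in round $t$ depends only on the items already released, so $\rT{k,\bk}$ (on the good event) is an adaptive composition of at most $k$ mechanisms that are each $\diffp$-DP. Combining basic composition, the optimal composition theorem \citep{KairouzOhVi17,MurtaghVa16}, and a concentrated-/R\'enyi-DP accounting of the $k$-fold composition \citep{BunSt16}, and taking the minimum of the resulting bounds, this composition is $(\diffp',\delta')$-DP with $\diffp'$ as in \eqref{eq:compComp}; folding in the $\delta$ loss from the previous step gives $(\diffp',\delta+\delta')$-DP.

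The step I expect to be the main obstacle is the second one: converting the event ``the output avoids the boundary indices'' into a genuine comparison of $\rT{k,\bk}(\bbh)$ and $\rT{k,\bk}(\bbh')$ as two runs of the \emph{same} fixed-domain mechanism, so that the two per-side failure probabilities collapse into a single additive $\delta$ rather than inflating the $e^{\diffp'}$ multiplier. This relies on the exact ``deletion equals conditioning'' description of the exponential mechanism together with the observation that under $\bbh$ the indices of $T'\setminus T$ are automatically below $v_\bot$ by the same margin (and vice versa). A secondary subtlety is checking that the data-dependent threshold $h_\bot$ does not spoil the per-round $\diffp$-DP, which is exactly where monotonicity of the order statistic $h_{(\bk+1)}$ is used.
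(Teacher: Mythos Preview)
Your overall architecture matches the paper's: Gumbel--max gives the peeling exponential mechanism (Lemma~\ref{lem:ExpGumbelOneShot}), the data-dependent domain is handled by bounding the probability that any ``boundary'' index appears above $\bot$ (Lemma~\ref{lem:mainDeltBoundRest}), the ``deletion equals conditioning'' identity reduces to a fixed common domain (Lemma~\ref{lem:mainEpsBoundRest}), and composition over that fixed domain finishes. Your one-shot union bound for the $\delta$ event---each $j\in T\setminus T'$ satisfies $\Pr[v_j>v_\bot]\le e^{-\diffp(h_\bot-h_j)}\le\delta/\min\{\Delta,\bk\}$, and there are at most $\min\{\Delta,\bk\}$ such indices---is in fact cleaner than the paper's inductive argument through the peeling formulation, and it is correct.

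There is, however, one genuine gap. The third term $\tfrac{k\diffp^2}{2}+\diffp\sqrt{\tfrac{1}{2}k\log(1/\delta')}$ does \emph{not} follow from any off-the-shelf composition bound you cite. Basic composition gives $k\diffp$; the Kairouz--Oh--Viswanath bound gives the second term; but generic zCDP accounting from \cite{BunSt16} applied to $\diffp$-DP mechanisms only yields $\tfrac{k\diffp^2}{2}+\diffp\sqrt{2k\log(1/\delta')}$, because the privacy loss of an $\diffp$-DP mechanism ranges over $[-\diffp,\diffp]$ (width $2\diffp$), and Azuma--Hoeffding with that width produces the $\sqrt{2k}$ factor. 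The paper obtains the tighter $\sqrt{k/2}$ by first showing that the exponential mechanism is $\diffp$-\emph{range-bounded}---its privacy loss lies in some interval of width $\diffp$, not $2\diffp$ (Lemma~\ref{lem:range_bounded_EM})---and then rerunning Azuma--Hoeffding with that narrower width (Lemma~\ref{lem:compBoundedRange}). This is one of the paper's advertised contributions and is not a consequence of concentrated or R\'enyi DP for generic $\diffp$-DP mechanisms.

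Two minor remarks. Your aside that ``under $\bbh$ the indices of $T'\setminus T$ are automatically below $v_\bot$'' is beside the point: on input $\bbh$ those indices are not in the algorithm's domain at all, so nothing needs checking; the reduction to $(T\cap T')\cup\{\bot\}$ comes purely from conditioning out $T\setminus T'$. And you correctly flag the collapse of the two per-side failure probabilities into a single additive $\delta$ as the delicate step; the paper handles it by writing $\Pr[\,\cdot\in S^\diffp\,]=(1-\delta_{\bbh})\cdot\Pr_{\text{common domain}}[\,\cdot\in S^\diffp\,]$ exactly (Lemma~\ref{lem:mainEpsBoundRest}) and then applying an elementary inequality (Claim~\ref{claim:delt_event_helper}), rather than a soft ``up to an additive $\delta$''.
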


Note that our algorithm is not guaranteed to output $k$ indices, and this is key to obtaining our privacy guarantees. The primary difficulty here is that the indices within the true top-$\bk$ can change by adding or removing one person's data. The purpose of the threshold, $v_\bot$, is then to ensure that the probability of outputting any index in the top-$\bk$ for histogram $\bbh$ but not in the top-$\bk$ for a neighboring histogram $\bbh'$ is bounded by $\frac{\delta}{\min\{\Delta,\bk\}}$.
We give more high-level intuition on this in Section~\ref{sec:techniques}.

In order to maximize the probability of outputting $k$ indices, we want to minimize our threshold value. Accordingly, whenever we have restricted sensitivity such that $\min\{\Delta,\bk\} = \Delta$, we can simply choose $\bk$ to be as large as is computationally feasible because that will minimize our threshold $h_{(\bk + 1)} + 1 +  \ln(\min\{\Delta,\bk\}/\delta)/\diffp$. However, if the sensitivity is unrestricted or quite large, it becomes natural to consider how to set $\bk$, as there becomes a tradeoff where $h_{(\bk + 1)}$ is decreasing in $\bk$ whereas $\ln(\bk/\delta)/\diffp$ is increasing in $\bk$. Ideally, we would set $\bk$ to be a point within the histogram in which we see a sudden drop, but setting it in such a data dependent manner would violate privacy. Instead, we will simply consider the optimization problem of finding index $\bk$ that minimizes $h_{(\bk + 1)} + 1 +  \ln(\bk/\delta)/\diffp$ (and is computationally feasible), and we will solve this problem with standard DP techniques.

\begin{lemma}[Informal]
We can find a noisy estimate of the optimal parameter $\bk$ for a given histogram $\bbh$, and this will only increase our privacy loss by substituting $k+1$ for $k$ in the guarantees in Theorem~\ref{thm:rT_DP}.
\end{lemma}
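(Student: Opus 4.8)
The plan is to reduce the problem to one extra round of private selection that is then composed with Algorithm~\ref{algo:randThresExp}. Fix a computationally feasible cutoff $K \geq k$, so that a single top-$(K+1)$ query returns the order statistics $h_{(1)} \geq \cdots \geq h_{(K+1)}$. For each candidate $j \in \{k, k+1, \dots, K\}$ define the score
\begin{equation*}
q_{\bbh}(j) \defeq -\bigl( h_{(j+1)} + 1 + \log(j/\delta)/\diffp \bigr), \qquad j \in \{k, k+1, \dots, K\},
\end{equation*}
and select $\hat{k} = \argmax_{j} \bigl\{ q_{\bbh}(j) + \gum(1/\diffp) \bigr\}$; that is, we apply report-noisy-max (equivalently the exponential mechanism, cf. Section~\ref{sect:existing}) to the scores $q_\bbh$ so as to approximately minimize the threshold that Algorithm~\ref{algo:randThresExp} would use, and then we output $\rT{k,\hat{k}}(\bbh)$.

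Two facts drive the privacy analysis of the selection step. First, $\bbh \mapsto h_{(j+1)}$ is $1$-Lipschitz with respect to $\|\cdot\|_\infty$ (sorting does not increase $\ell_\infty$ distance), and $\log(j/\delta)/\diffp$ is data-independent, so each $q_{\bbh}(j)$ has sensitivity $1$ and the data-independent term only plays the role of a fixed base measure that cancels in the privacy ratio. Second, if $\bbx = \bbx' \cup \{x_i\}$ then $\bbh \geq \bbh'$ coordinatewise, hence $h_{(j+1)} \geq h'_{(j+1)}$ and $q_{\bbh}(j) \leq q_{\bbh'}(j)$ for \emph{every} $j$ at once --- the score is monotone in the sense that makes the Gumbel/exponential-mechanism selection $\diffp$-DP rather than $2\diffp$-DP, exactly matching the per-element cost of each peeling step inside Algorithm~\ref{algo:randThresExp}. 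Moreover all candidates lie in $\{k,\dots,K\}$ and $h_{(1)},\dots,h_{(K+1)}$ have already been read off the data, so the selection reveals no domain element that Algorithm~\ref{algo:randThresExp} would not already examine; in particular it contributes nothing to the $\delta$ term that accounts for the unknown domain.

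It then remains to compose. I would re-run the argument behind Theorem~\ref{thm:rT_DP} on the combined mechanism, viewing it as $1 + k$ rounds, each a $\diffp$-DP exponential-mechanism-type step: one round to pick $\hat k$, then at most $k$ rounds of noisy-max peeling with the threshold $v_\bot$ possibly cutting the output short. The additive $\delta$ is unchanged --- the selection round adds $0$, and the per-element ``escape'' probability for the top-$k$ part is still at most $\delta/\min\{\Delta,\hat k\}$ by the choice of $h_\bot$ with the realized $\hat k$, so a union bound over the at most $\min\{\Delta,\hat k\}$ vulnerable elements again gives $\delta$. Plugging $k+1$ into the composition bound \eqref{eq:compComp} in place of $k$ then yields the claimed $\diffp'$, and hence $(\diffp',\delta+\delta')$-DP.

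The main obstacle is handling the adaptivity cleanly: $\hat k$ is data-dependent and is fed into a mechanism whose own output distribution --- including the location of the bad events of probability $\delta$ --- depends on the data, so one cannot simply invoke Theorem~\ref{thm:rT_DP} conditionally on $\hat k$. Instead the coupling between the runs on $\bbx$ and $\bbx'$ must be carried through the selection round and the peeling rounds simultaneously, checking that (i) the tight composition in \eqref{eq:compComp} still applies when one of the $k+1$ rounds uses the differently shaped score $q_\bbh$ rather than a raw count, and (ii) the union bound producing the $\delta$ term is taken over the correct coupled set of vulnerable elements even though $\hat k$ may differ on $\bbx$ and $\bbx'$. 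After the sensitivity and monotonicity observations above, I expect this to reduce to precisely the bookkeeping already done in the proof of Theorem~\ref{thm:rT_DP} with one extra leading round, which is why the net effect is only $k \mapsto k+1$.
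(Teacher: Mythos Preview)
Your proposal is correct and matches the paper's approach almost exactly: the paper likewise selects $\bar k$ by running the exponential mechanism (Gumbel noise) on the score $-h_{(i)}-\ln(i/\delta)/\diffp$, invokes Lemma~\ref{lem:thresSimilar} for sensitivity~$1$ and monotonicity to get $\diffp$-range-boundedness of that round, and then replays the Section~\ref{sec:exp} analysis with one extra exponential-mechanism call prepended, yielding the $k\mapsto k{+}1$ substitution. One small clarification on your obstacle~(ii): in the adaptive-composition framing the realized $\hat k$ is an \emph{output} of the first round and hence the same in both worlds when you condition on it, so the $\delta$ bookkeeping for the vulnerable elements is done for a fixed $\hat k$ exactly as in Lemma~\ref{lem:mainDeltBoundRest}; there is no need to couple across different values of $\hat k$.
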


\subsubsection*{Pay-what-you-get Composition}

While the privacy loss for Algorithm~\ref{algo:randThresExp} will be a function of $k$ regardless of whether it outputs far fewer than $k$ indices, we can actually show that in making multiple calls to this algorithm, we can instead bound the privacy loss in terms of the number of indices that are output. More specifically, we will instead take the length of the output for each call to Algorithm~\ref{algo:randThresExp}, which is not deterministic, and ensure that the sum of these lengths does not exceed some $\kmax$.  Additionally, we need to restrict how many individual top-$k$ queries can be asked of our system, which we denote as $\ellmax$.  Accordingly, the privacy loss will then be in terms of $\kmax$ and $\ellmax$. We detail the multiple calls procedure $\mrT$ in Algorithm~\ref{algo:multipleThres}.  

\begin{algorithm}[h!]
	\caption{$\mrT$; Multiple queries to random threshold}
	\begin{algorithmic}
		\State \textbf{Input:} An adaptive stream of histograms $\bbh_{1},\bbh_{2},....$, 		fixed integers $\kmax$ and $\ellmax$, along with per iterate privacy parameters $\diffp,\delta$.
		\State \textbf{Output:} Sequence of outputs $(o_{1}, \cdots, o_{\ell})$ for $\ell \leq \ellmax$. 
		\While {$\kmax > 0$ and $\ellmax > 0$}
		\State Based on previous outcomes, select adaptive histogram $\bbh_{i}$ and parameters $k_{i},\bk_{i}$
		\If {$k_{i} \leq \kmax$}
		\State Let $o_{i} = \rT{k_{i}, \bk_{i}}({\bbh_{i}})$ with privacy parameters $\diffp$ and $\delta$
		\State $\kmax \leftarrow \kmax - |o_{i}| $ and $\ellmax \leftarrow \ellmax - 1$
		\EndIf
		\EndWhile
		\State Return $o = (o_{1},o_{2},\cdots)$	
	\end{algorithmic}\label{algo:multipleThres}
\end{algorithm}

From a practical perspective, this means that if we allowed a client to make multiple top-$k$ queries with a total budget of $\kmax$, whenever a top-$k$ query was made their total budget would only decrease in the size of the output, as opposed to $k$. 
We will further discuss in Section~\ref{sec:accOver} how this property in some ways can actually provide higher utility than standard approaches that have access to the full histogram and must output $k$ indices.  We then have the following privacy statement.

\begin{theorem}
For any $\delta' \geq 0$,  $\mrT$ in Algorithm~\ref{algo:multipleThres} is $(\diffpmax, 2 \ellmax\delta + \delta')$-DP where 
\begin{equation}
\diffpmax = \min\left\{ \kmax\diffp, \kmax\diffp \cdot \left( \frac{e^{\diffp} -1}{e^{\diffp} + 1} \right) +  \diffp \sqrt{2 \kmax \log(1/\delta')} , \frac{\kmax\diffp^2}{2} + \diffp \sqrt{ \frac{1}{2} \kmax  \log(1/\delta')}\right\}.
\label{eq:eps_max}
\end{equation}
\label{thm:multiple_callsDP}
\end{theorem}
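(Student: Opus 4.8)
The plan is to reduce the composition of many variable-length calls to $\rT{\cdot,\cdot}$ to the analysis of a single privacy-loss martingale whose number of increments is the \emph{total} number of emitted indices, which is capped at $\kmax$. The first thing to notice is that the bound $\diffpmax$ in \eqref{eq:eps_max} is literally the bound $\diffp'$ of Theorem~\ref{thm:rT_DP} with $k$ replaced by $\kmax$; so the target statement is that, from the privacy point of view, $\mrT$ behaves as though it were one call to $\rT{\kmax,\bk}$, with the per-call $\delta$'s simply adding up. The structural fact I would extract from the proof of Theorem~\ref{thm:rT_DP} (and from the Gumbel/exponential-mechanism correspondence recalled in Section~\ref{sect:existing}) is that, outside a failure event of probability at most $\delta$ per neighbor direction, a single invocation $\rT{k_i,\bk_i}(\bbh_i)$ that emits $\ell_i$ indices is distributed as a length-$\ell_i$ prefix of a peeling/exponential-mechanism process, each round of which is $\diffp$-DP with privacy loss confined to $[-\diffp,\diffp]$ and with conditional mean at most $\diffp\cdot\frac{e^{\diffp}-1}{e^{\diffp}+1}$ (and also at most $\diffp^2/2$).

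Next I would dispatch the additive $\delta$ term. Since $\ellmax$ is decremented in every iteration in which $\rT{\cdot,\cdot}$ is actually run, there are at most $\ellmax$ such invocations; applying the per-call failure bound above together with a union bound over these invocations in each of the two neighbor directions contributes at most $2\ellmax\delta$ to the overall failure mass. On the complement of all these events we are left with a clean adaptive composition of exponential-mechanism-style rounds.

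For the main estimate I would build the privacy-loss process $(Z_1,\dots,Z_{\kmax})$ indexed by \emph{cumulative output size} rather than by query index: $Z_t$ is the privacy loss incurred by the $t$-th emitted index, across whichever call produced it, and we set $Z_t\equiv 0$ once the total number of emitted indices has been reached. This padding is legitimate because halting early only decreases the accumulated privacy loss, and the $\mrT$ halting rule ("$\kmax>0$, $\ellmax>0$, and $k_i\le\kmax$") is a stopping time with respect to the filtration generated by the per-round outcomes. By the structural fact, $|Z_t|\le\diffp$ and $\mathbb{E}[Z_t\mid Z_1,\dots,Z_{t-1}]\le\diffp\cdot\frac{e^{\diffp}-1}{e^{\diffp}+1}$, respectively $\le\diffp^2/2$. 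The privacy loss of $\mrT$ on the good event is $\sum_{t=1}^{\kmax}Z_t$: bounding it trivially by $\kmax\diffp$ gives the first term of the minimum, while an Azuma–Hoeffding bound on this $\kmax$-step bounded-difference (super)martingale, exactly as in advanced composition, yields the other two terms with the deviation contributions $\diffp\sqrt{2\kmax\log(1/\delta')}$ and $\diffp\sqrt{\tfrac12\kmax\log(1/\delta')}$. Taking the minimum over the three bounds and folding in the $2\ellmax\delta+\delta'$ failure mass gives the claim.

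The step I expect to be the main obstacle is making the reindexing by cumulative output size fully rigorous: the number of rounds contributed by each call, and the number of calls, are both data-dependent and adaptively chosen, so I must verify that $\mrT$'s halting behaviour is measurable with respect to the per-round filtration, that truncating a call early because it would overshoot $\kmax$ introduces no extra privacy loss, and that the padded process is a genuine supermartingale to which Azuma applies with precisely $\kmax$ increments. Everything else — the per-round $[-\diffp,\diffp]$ range bound, the conditional-mean bounds, and the Azuma computation itself — is either inherited from the proof of Theorem~\ref{thm:rT_DP} or identical to the standard advanced-composition argument.
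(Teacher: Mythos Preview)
Your proposal is essentially correct and follows the same strategy as the paper: flatten the nested calls into a single adaptive sequence of at most $\kmax$ exponential-mechanism rounds, separate out the per-call ``bad'' events (indices that fall outside the intersection domain), and apply the range-bounded composition bound of Lemma~\ref{lem:compBoundedRange} to the remaining rounds.

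One point where the paper is more precise than your sketch, and which you correctly flagged as the main obstacle, is the passage from ``outside a failure event the process looks like peeling'' to a rigorous statement. The paper does this not by conditioning on a good event but by constructing an explicit auxiliary algorithm $\mpEM$ that runs the peeling mechanism on the \emph{intersection} domain $\cD_i^{\bk_i}=\domain{\bk_i}{\bbh_i^{(0)}}\cap\domain{\bk_i}{\bbh_i^{(1)}}$, and then proving a pointwise sandwich
\[
(1-\ellmax\delta)\Pr[\mpEM(b)=o]\ \le\ \Pr[\mrT(\cH^{(b)})=o]\ \le\ \Pr[\mpEM(b)=o]
\]
for every good outcome $o$ (this is Lemma~\ref{lem:multCallsEps}, which iterates the exact conditional-equality of Lemma~\ref{lem:mainEpsBoundRest}). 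The $2\ellmax\delta$ then arises as: one $\ellmax\delta$ from the bad event for $\cH^{(0)}$ directly, and a second $\ellmax\delta$ from the $(1-\ellmax\delta)$ factor when passing back from $\mpEM(1)$ to $\mrT(\cH^{(1)})$ via the lower side of the sandwich. This is morally the same as your ``union bound in both neighbor directions,'' but the mechanism is the multiplicative conditioning correction rather than a second union bound. Your martingale-on-cumulative-output-size viewpoint is exactly what $\mpEM$ formalizes; the paper's explicit simulator sidesteps the measurability and padding concerns you raised, since $\mpEM$ is literally an adaptive sequence of at most $\kmax$ calls to $\hEM{\bk}$ on a fixed domain, to which Corollary~\ref{cor:exp_range_bound} and Lemma~\ref{lem:compBoundedRange} apply directly.
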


\subsubsection*{Extensions}

We further consider the restricted sensitivity setting, where any individual can change at most $\Delta$ counts. Algorithm~\ref{algo:randThresExp} allowed for a smaller additive factor of $\ln(\Delta/\delta)/\diffp$ on the threshold for this setting, but the privacy loss for $\diffp$ was still in terms of $k$. The primary reason for this is that, unlike $\lap$ noise, adding $\gum$ noise to a value and releasing this estimate is not differentially private. Accordingly, if we instead run Algorithm~\ref{algo:randThresExp} with $\lap$ noise, then we can achieve a $\Delta$ dependency on the $\diffp$. We note that adding $\lap$ noise instead will not allow us to provably achieve the same guarantees as Theorem~\ref{thm:rT_DP}, and we discuss some of the intuition for this later.

\begin{lemma}[Informal]
If we instead add $\lap$ noise to Algorithm~\ref{algo:randThresExp}, and we have $\Delta$-restricted sensitivity where $\Delta < k$, then we can obtain $(\Delta\diffp, (e^{\diffp\Delta} + 1) \bar{\delta} )$-DP where $ \bar{\delta} = \frac{\delta}{4} \cdot \left( 3 + \ln(\Delta/\delta) \right)$

\end{lemma}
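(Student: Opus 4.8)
The plan is to reduce to one neighboring pair, carve out a low‑probability ``bad'' event governed by the threshold, and on its complement exhibit the output as a post‑processing of a $\Delta\diffp$‑DP Laplace release, then assemble by a sparse‑vector‑style integration. \textbf{Setup and order‑statistic facts.} Neighbors arise by adding or removing one user, so fix histograms $\bbh \ge \bbh'$ (coordinatewise) differing on a set $S$ of at most $\Delta$ coordinates, each by $1$; I will prove both directions of the $(\Delta\diffp,\cdot)$‑inequality for this ordered pair. Write $T(\bbg)$ for the true top‑$\bk$ index set of a histogram $\bbg$ and $h_\bot(\bbg)=h_{(\bk+1)}(\bbg)+1+\ln(\Delta/\delta)/\diffp$ for its threshold base (the minimum in Algorithm~\ref{algo:randThresExp} equals $\Delta$ since $\Delta<k\le\bk$). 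I record four elementary facts: (i) $h_{(\bk+1)}(\bbh') \le h_{(\bk+1)}(\bbh) \le h_{(\bk+1)}(\bbh')+1$ (monotonicity, and $\bbh'$ dominates $\bbh$ reduced by $1$ in every coordinate); (ii) $T(\bbh)\setminus T(\bbh') \subseteq S$, hence also $|T(\bbh')\setminus T(\bbh)| \le \Delta$ by equal cardinalities; (iii) $h_\bot(\bbh)-h_i(\bbh) \ge \ln(\Delta/\delta)/\diffp$ for every $i\notin T(\bbh')$, and symmetrically $h_\bot(\bbh')-h_i(\bbh') \ge \ln(\Delta/\delta)/\diffp$ for every $i\notin T(\bbh)$ (combine (i) with the bound $h_i(\bbg)\le h_{(\bk+1)}(\bbg)$ for indices outside $T(\bbg)$); and (iv) $|S\cap C| + |h_{(\bk+1)}(\bbh)-h_{(\bk+1)}(\bbh')| \le |S| \le \Delta$ where $C := T(\bbh)\cap T(\bbh')$, because if no incremented coordinate leaves $C$ then $T(\bbh)=T(\bbh')$ and the $(\bk{+}1)$‑st order statistic does not move.

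\textbf{The bad event.} Run the algorithm with $\lap(1/\diffp)$ in place of $\gum(1/\diffp)$, and for $\bbg\in\{\bbh,\bbh'\}$ call a draw \emph{bad for $\bbg$} if some index in $T(\bbg)\setminus C$ has noisy count at least the noisy threshold $v_\bot$. Such an index lies outside $T(\bbh')$ (when $\bbg=\bbh$) or outside $T(\bbh)$ (when $\bbg=\bbh'$), so by (iii) its noisy count minus $v_\bot$ is a difference of two independent $\lap(1/\diffp)$ variables shifted down by at least $\ln(\Delta/\delta)/\diffp$; using $\Pr[\lap(1/\diffp)-\lap(1/\diffp) > t] = \tfrac14(2+\diffp t)e^{-\diffp t}$ together with the fact that $(2+x)e^{-x}$ is decreasing on $x\ge 0$, this happens with probability at most $\tfrac14(2+\ln(\Delta/\delta))\tfrac{\delta}{\Delta} \le \bar\delta/\Delta$; a union bound over the at most $\Delta$ such indices (fact (ii)) gives $\Pr_{\bbg}[\text{bad for }\bbg] \le \bar\delta$. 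On any draw that is not bad for $\bbg$, the output of the algorithm on $\bbg$ equals $\phi(V_c, v_\bot)$, where $V_c$ is the vector of noisy counts of the indices in $C$, $v_\bot$ is the noisy threshold, and $\phi$ is the \emph{histogram‑independent} map that sorts the supplied values, outputs those exceeding $v_\bot$ in order (appending $\bot$ if fewer than $k$ remain), truncated to $k$.

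\textbf{The good part and assembly.} By (iv), the true‑count vector $\big((h_i(\bbh))_{i\in C},\, h_\bot(\bbh)\big)$ differs from its $\bbh'$‑analogue in $\ell_1$‑norm by at most $\Delta$, and the Laplace noise added to its entries is i.i.d.\ $\lap(1/\diffp)$ and independent of all other noise; hence the law $\mu^c_{\bbg}$ of $(V_c,v_\bot)$ satisfies $\mu^c_{\bbh}(E) \le e^{\Delta\diffp}\mu^c_{\bbh'}(E)$ for every measurable $E$ (Laplace mechanism / basic composition). For any output set $S$,
\[
\Pr[\cM(\bbh)\in S] \;\le\; \Pr_{\bbh}[\text{bad for }\bbh] + \int \mathbf{1}[\phi(V_c,v_\bot)\in S]\, d\mu^c_{\bbh} \;\le\; \bar\delta + e^{\Delta\diffp}\!\int \mathbf{1}[\phi(V_c,v_\bot)\in S]\, d\mu^c_{\bbh'},
\]
and splitting off ``bad for $\bbh'$'' on the right gives $\int \mathbf{1}[\phi\in S]\, d\mu^c_{\bbh'} \le \Pr[\cM(\bbh')\in S] + \bar\delta$; chaining yields $\Pr[\cM(\bbh)\in S] \le e^{\Delta\diffp}\Pr[\cM(\bbh')\in S] + (e^{\Delta\diffp}+1)\bar\delta$. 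Swapping $\bbh\leftrightarrow\bbh'$ gives the reverse inequality, so the algorithm is $(\Delta\diffp, (e^{\diffp\Delta}+1)\bar\delta)$‑DP.

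\textbf{Main obstacle.} The crux is fact (iv): one must show, with a tie‑aware case analysis, that the at most $\Delta$ unit increments cannot simultaneously shuffle $\Delta$ common indices \emph{and} move the threshold base $h_{(\bk+1)}$, so that releasing the common counts together with the threshold costs $\Delta\diffp$ and not $(\Delta{+}1)\diffp$. A secondary, standard subtlety --- also why $\gum$ noise does not yield this bound, and why the additive term carries the factor $e^{\diffp\Delta}+1$ rather than a bare $\bar\delta$ --- is that on a bad draw an out‑of‑$C$ index may clear $v_\bot$ yet be truncated from the top‑$k$ output, so the identity ``output $=\phi(V_c,v_\bot)$'' holds only off the bad event and the discrepancy must be paid for under both $\bbh$ and $\bbh'$.
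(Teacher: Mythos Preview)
Your proposal is correct and follows essentially the same decomposition as the paper's Section~6.1: bound the ``bad'' event that an index outside the common top-$\bk$ domain $C=\cD^\diffp$ beats the noisy threshold (paper's Lemma~6.3), invoke the Laplace mechanism on the $(|C|{+}1)$-vector of common counts plus threshold to get $\Delta\diffp$-DP (paper's Lemma~6.2), and assemble with the two-sided $\bar\delta$ loss (paper's Lemma~6.5). Your presentation via a coupling/post-processing map $\phi$ is a direct noise-realization phrasing of the paper's auxiliary mechanism $\rnm{k,\bk}(\cdot,\cD^\diffp)$, and your exact formula $\Pr[\lap(1/\diffp)-\lap(1/\diffp)>t]=\tfrac14(2+\diffp t)e^{-\diffp t}$ is in fact sharper than the paper's piecewise bound (which yields the constant $3$ rather than $2$).

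One point worth noting: your fact~(iv) --- that $|S\cap C|+|h_{(\bk+1)}(\bbh)-h_{(\bk+1)}(\bbh')|\le\Delta$ --- is exactly the $\ell_1$-sensitivity claim underpinning the paper's Lemma~6.2, which the paper asserts but does not prove (it only remarks that one must exclude $i_{(\bk+1)}$ from $\cD$). Your sketch ``if $S\subseteq C$ then $T(\bbh)=T(\bbh')$ and the $(\bk{+}1)$-st order statistic does not move'' is the right argument and goes through: if all increments land in $T(\bbh')$ then indices outside $T(\bbh')$ are unchanged, so by the fixed tie-break $T(\bbh)=T(\bbh')$ and $h_{(\bk+1)}$ stays put; contrapositively, a moving threshold forces $|S\cap C|\le\Delta-1$. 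You are right to flag this as the crux --- it is the step that separates $\Delta\diffp$ from $(\Delta{+}1)\diffp$ --- and your treatment here is more explicit than the paper's.
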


In addition, we give a slight variant of Algorithm~\ref{algo:randThresExp} in Section~\ref{sec:practical} that will achieve the same privacy guarantees at the cost of some generality, but will be even more practical for implementation.

\subsubsection*{Improved Advanced Composition}

We also provide a result that may be of independent interest. In Section~\ref{sect:existing}, we consider a slightly tighter characterization of pure $(\delta = 0)$ differential privacy, which we refer to as \textit{range-bounded}, and show that it can improve upon the total privacy loss over a sequence of adaptively chosen private algorithms. In particular, we consider the exponential mechanism, which is known to be $\diffp$-DP, and show that it has even stronger properties that allow us to show it is $\diffp$-\textit{range-bounded} under the same parameters. Accordingly, we can then give improved advanced composition bounds for exponential mechanism compared to the optimal composition bounds for general $\diffp$-DP mechanisms given in \cite{KairouzOhVi17,MurtaghVa16} (we show a comparison of these bounds in Appendix~\ref{app:CompCompare}).  
\subsection{Accuracy Comparisons}\label{sec:accOver}

In contrast to previous work in top-$k$ selection subject to DP, our algorithms can return fewer than $k$ indices.  Typically, accuracy in this setting is to return a set of exactly $k$ indices such that each returned index has a count that is at least the $k$-th ranked value minus some small amount.  There are known lower bounds for this definition of accuracy \cite{BafnaUl17} that are tight for the exponential mechanism.  
Relaxing the utility statement to allow returning fewer than $k$ indices, we can show that our algorithm will achieve asymptotically better accuracy where $d$ is replaced with $\bk$ because our algorithm is essentially privately determining the top-$k$ on the true top-$\bk$ instead of top-$d$.
In fact, if we set $\bk = k$, then we will only output indices in the top-$k$ and achieve perfect accuracy, but it is critically important to note that we are unlikely to output all $k$ indices in this parameter setting.
We then provide additional conditions under which we output $k$ indices with probability at least $1 - \beta$ (these formal accuracy statements are encompassed in Lemma~\ref{lem:accuracy}).
This condition requires a certain distance between $h_{(k)}$ and $h_{(\bk + 1)}$, which is comparable to the requirement for determining $\bk$ for privately outputting top-$k$ itemsets in \cite{BhaskarLaSmTh10}, and we achieve similar accuracy guarantees under this condition. 
The key difference becomes that for some histograms $\bk$ can be as large as $d$ and hence less efficient for the algorithm in \cite{BhaskarLaSmTh10}, but it will always return $k$ indices.
Conversely, for those same histograms we maintain computational efficiency because our $\bk$ is a fixed parameter, but our routine will most likely output fewer than $k$ indices.

Even for those histograms in which we are unlikely to return $k$ indices, we see this as the primary advantage of our \emph{pay-what-you-get} composition.
If there are a lot of counts that are similar to the $\bk$-th ranked value, our algorithm will simply return a single $\bot$ rather than a random permutation of these indices, and the analyst need only pay for a single $\bot$ outcome rather than for up to $k$ indices in this random permutation.  
Essentially, the indices that are returned are normally the clear winners, i.e. indices with counts substantially above the $(\bk + 1)$th value, and then the $\bot$ value is informative that the remaining values are approximately equal where the analyst only has to pay for this single output as opposed to paying for the remaining outputs that are close to a random permutation.
We see this as an added benefit to allowing the algorithm to return fewer than $k$ indices.

\subsection{Our Techniques}\label{sec:techniques}

The primary difficulty with ensuring differential privacy in our setting is that initially taking the true top-$\bk$ indices will lead to different domains for neighboring histograms. More explicitly, the indices within the top-$\bk$ can change by adding or removing one user's data, and this makes ensuring pure differential privacy impossible. However, the key insight will be that only indices whose value is within 1 of $h_{(\bk + 1)}$, the value of the $(\bk +1)$th index, can go in or out of the top-$\bk$ by adding or removing one user's data. Accordingly, the noisy threshold that we add will be explicitly set such that for indices with value within 1 of $h_{(\bk + 1)}$, the noisy estimate exceeding the noisy threshold will be a low probability event. By restricting our output set of indices to those whose noisy estimate are in the top-$k$ \textbf{\textit{and}} exceed the noisy threshold, we ensure that indices in the top-$\bk$ for one histogram but not in a neighboring histogram will output with probability at most $\frac{\delta}{\min\{\Delta,\bk\}}$. A union bound over the total possible indices that can change will then give our desired bound on these bad events.

We now present the high level reasoning behind the proof of privacy in Theorem~\ref{thm:rT_DP}.

\begin{enumerate}
\item Adding $\gum$ noise and taking the top-$k$ in one-shot is equivalent to iteratively choosing the subsequent index using the exponential mechanism with peeling, see Lemma~\ref{lem:ExpGumbelOneShot}.\footnote{Note that we could have alternatively written our algorithm in terms of iteratively applying exponential mechanism (and all of our analysis will be in this context), but instead adding $\gum$ noise once is computationally more efficient.}

\item To get around the fact that the domains can change in neighboring datasets, we define a variant of Algorithm~\ref{algo:randThresExp} that takes a histogram and a domain as input.  We then prove that this variant is DP for any input domain, see Corollary~\ref{cor:exp_range_bound}, and for a choice of domain that depends on the input histogram, it is the same as Algorithm~\ref{algo:randThresExp}, see Lemma~\ref{lem:peelingEQone}

\item Due to the choice of the count for element $\bot$, we show that for any given neighboring datasets $\bbh,\bbh'$, the probability that Algorithm 1 evaluated on $\bbh$ can return any element that is not part of the domain with $\bbh'$ occurs with probability $\delta$, see Lemma~\ref{lem:mainDeltBoundRest}.

\end{enumerate}

We now present an overview of the analysis for the pay-what-you-get composition bound in Theorem~\ref{thm:multiple_callsDP}.
\begin{enumerate}
\item Because Algorithm~\ref{algo:randThresExp} can be expressed as multiple iterations of the exponential mechanism, we can string together many calls to Algorithm~\ref{algo:randThresExp} as an adaptive sequence of DP mechanisms.  

\item With multiple calls to Algorithm~\ref{algo:randThresExp}, if we ever get a $\bot$ outcome, we can simply start a new top-$k$ query and hence a new sequence of exponential mechanism calls.  Hence, we do not need to get $k$ outcomes before we switch to a new query.

\item To get the improved constants in \eqref{eq:eps_max}, compared to advanced composition given in Theorem~\ref{thm:comp} \cite{DworkRoVa10}, we introduce a tigher \emph{range-bounded} characterization, which the exponential mechanism satisfies, that enjoys better composition, see Lemma~\ref{lem:compBoundedRange}.
\end{enumerate}


\section{Existing DP Algorithms and Extensions\label{sect:existing}}

We now cover some existing differentially private algorithms and extensions to them.  We start with the exponential mechanism \cite{McSherryTa07}, and show how it is equivalent to adding noise from a particular distribution and taking the argmax outcome.  Next, we will present a stronger privacy condition than differential privacy which will then lead to improved composition theorems than the optimal composition theorems \cite{KairouzOhVi17,MurtaghVa16} for general DP.

Throughout, we will make use of the following composition theorem in differential privacy.
\begin{theorem}[Composition \cite{DworkRo14, DworkRoVa10} with improvements by \cite{KairouzOhVi17,MurtaghVa16}]
Let $\cM_1, \cM_2, \cdots, \cM_t$ be each $(\diffp_i, \delta_i)$-DP, where the choice of $\cM_i$ may depend on the previous outcomes of $\cM_1, \cdots, \cM_{i-1}$, then the composed algorithm $\cM(\bbx) = \left(\cM_1(\bbx), \cM_{2}(\bbx), \cdots, \cM_t(\bbx) \right)$ is $(\diffp'(\delta'),\sum_{i=1}^t \delta_i + \delta')$-DP for any $\delta' \geq 0$ where
$$
\diffp'(\delta') = \min\left\{ \sum_{i=1}^t \diffp_i, \sum_{i=1}^t \diffp_i \cdot \left( \frac{e^{\diffp_i} -1}{e^{\diffp_i} + 1} \right) +  \sqrt{2 \sum_{i=1}^t \diffp_i^2 \log(1/\delta')}\right\}. 
$$
\label{thm:comp}
\end{theorem}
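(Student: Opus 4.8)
The plan is to establish the two expressions inside the minimum separately; the one that is smaller for the given parameters then yields $\diffp'(\delta')$. The bound $\diffp' = \sum_i \diffp_i$ (with $\delta = \sum_i \delta_i$) is the classical basic composition theorem: process the mechanisms one at a time and argue by induction on $t$, using at step $i$ that $\cM_i$ is a post‑processing function of the fixed data $\bbx$ and the already‑revealed outcomes $o_1,\dots,o_{i-1}$, so that $(\diffp_i,\delta_i)$‑DP applies conditionally; multiplying the $e^{\diffp_i}$ factors and adding the $\delta_i$ terms through the hybrid gives the claim.

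For the sharper bound I would analyze the privacy‑loss random variable of the composed mechanism. First reduce to the pure case: $(\diffp_i,\delta_i)$‑DP implies that for every pair of neighbors there is an event of probability at most $\delta_i$ under $\cM_i(\bbx)$ outside of which $\cM_i$ behaves like an $\diffp_i$‑DP mechanism; running over the whole sequence, a union bound shows that with probability at least $1-\sum_i\delta_i$ over the composed output on $\bbx$ none of these events occurs, contributing the $\sum_i\delta_i$ to the final failure probability. Conditioned on that good event, let $o=(o_1,\dots,o_t)$ be the composed output on $\bbx$ and set
\[
L_i \;=\; \log\frac{\Pr[\cM_i(\bbx)=o_i \mid o_1,\dots,o_{i-1}]}{\Pr[\cM_i(\bbx')=o_i \mid o_1,\dots,o_{i-1}]},
\]
so the total privacy loss of the composition is $\sum_i L_i$. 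Two facts drive the argument: (i) $|L_i|\le\diffp_i$ always, by $\diffp_i$‑DP; and (ii) conditioned on the history $\mathbb{E}[e^{-L_i}\mid o_1,\dots,o_{i-1}]=1$ (the likelihood ratio in the reverse direction integrates to one), and among random variables supported on $[-\diffp_i,\diffp_i]$ with that constraint the expectation is maximized by a two‑point distribution on $\{\pm\diffp_i\}$; solving that two‑point optimization gives the conditional‑mean bound $\mathbb{E}[L_i\mid o_1,\dots,o_{i-1}]\le \diffp_i\cdot\frac{e^{\diffp_i}-1}{e^{\diffp_i}+1}$.

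Then $\{L_i-\mathbb{E}[L_i\mid\text{history}]\}_i$ is a martingale difference sequence whose $i$‑th increment lies in an interval of length $2\diffp_i$, so Azuma–Hoeffding gives $\Pr\!\big[\sum_i L_i > \sum_i \diffp_i\tfrac{e^{\diffp_i}-1}{e^{\diffp_i}+1} + \lambda\big] \le \exp\!\big(-\lambda^2/(2\sum_i\diffp_i^2)\big)$, and taking $\lambda=\sqrt{2\sum_i\diffp_i^2\log(1/\delta')}$ makes this at most $\delta'$. Finally I would convert "the privacy loss exceeds $\diffp'$ with probability at most $\delta+\delta'$" back into $(\diffp',\delta+\delta')$‑DP in the standard way: for any output set $S$, split it into the part where the loss is at most $\diffp'$ (bounded by $e^{\diffp'}$ times the probability under $\bbx'$) and the part where it exceeds $\diffp'$ (of mass at most $\delta+\delta'$), noting the whole argument is symmetric in $\bbx$ and $\bbx'$.

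I expect the main obstacle to be the bookkeeping around the $\delta_i>0$ reduction in the presence of adaptivity — making the "good event" decomposition for approximate DP precise and checking that conditioning on it does not disturb the martingale structure — together with the extremal optimization in (ii) that produces the tight constant $\frac{e^{\diffp_i}-1}{e^{\diffp_i}+1}$ rather than a cruder $O(\diffp_i^2)$ bound. The Azuma step and the final conversion are routine once these pieces are in place.
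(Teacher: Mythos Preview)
The paper does not prove this theorem; it is quoted as a known result from the cited references \cite{DworkRo14,DworkRoVa10,KairouzOhVi17,MurtaghVa16} and used as a black box, so there is no in-paper proof to compare your proposal against. Your sketch is essentially the standard argument from those references: basic composition by induction for the first term, and for the second term the privacy-loss martingale analysis of \cite{DworkRoVa10} combined with the tighter conditional-mean bound $\diffp_i\,\tfrac{e^{\diffp_i}-1}{e^{\diffp_i}+1}$ obtained via the extremal two-point (randomized-response) optimization of \cite{KairouzOhVi17}, followed by Azuma--Hoeffding and the usual conversion from a tail bound on the privacy loss back to $(\diffp',\delta')$-DP. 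Your flagged obstacle --- making the ``good event'' reduction from $(\diffp_i,\delta_i)$-DP to $\diffp_i$-DP precise under adaptivity without disturbing the martingale structure --- is exactly the delicate bookkeeping step in the literature, and your outline handles it the way those papers do.
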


\subsection{Exponential Mechanism and Gumbel Noise}
The exponential mechanism takes a quality score $q: \cX \times \cY \to \R$ and can be thought of as evaluating how good $q(\bbx,y)$ is for an outcome $y \in \cY$ on dataset $\bbx$.  For our setting, we will be using the following quality score $q(\bbh,i) = h_i$ in the exponential mechanism.  
\begin{definition}[Exponential Mechanism]
Let $\EM_q: \cX \to \cY$ be a randomized mapping where for all outputs $y \in\cY$ we have
$$
\Pr[\EM_q(\bbx) = y] \propto \exp(\tfrac{\diffp}{\Delta(q)} q(\bbx,y)) 
$$
where $\Delta(q)$ is the sensitivity of the quality score, i.e. for all neighboring inputs $\bbx,\bbx'$ we have $\sup_{y \in \cY} |q(\bbx, y) - q(\bbx',y)|\leq \Delta(q)$
\end{definition}

We say that a quality score $q(\cdot, \cdot)$ is monotonic in the dataset if the addition of a data record can either increase (decrease) or remain the same with any outcome, e.g. $q(\bbx,y) \leq q(\bbx \cup \{x_i \},y)$ for any input and outcome $y$.  Note that $q(\bbh,i) = h_i$ is monotonic in the dataset.  We then have the following privacy guarantee.
\begin{lemma}\label{lem:standard_exp}
The exponential mechanism $\EM_q$ is $2\diffp$-DP.  Further, if $q$ is monotonic in the dataset, then $\EM_q$ is $\diffp$-DP. 
\end{lemma}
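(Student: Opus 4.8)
The plan is to bound, for arbitrary neighbors $\bbx,\bbx'$ and an arbitrary singleton outcome $y \in \cY$, the privacy-loss ratio $\Pr[\EM_q(\bbx) = y] / \Pr[\EM_q(\bbx') = y]$, and then extend to arbitrary output sets $S$ by summing over $y \in S$ (the distribution is discrete and $\delta = 0$, so no additive slack is needed). Writing out the definition, this ratio factors as
\begin{equation}
\frac{\Pr[\EM_q(\bbx) = y]}{\Pr[\EM_q(\bbx') = y]}
= \underbrace{\exp\!\left( \tfrac{\diffp}{\Delta(q)}\big(q(\bbx,y) - q(\bbx',y)\big) \right)}_{\text{(a) numerator ratio}}
\cdot \underbrace{\frac{\sum_{y' \in \cY} \exp\!\left( \tfrac{\diffp}{\Delta(q)} q(\bbx',y') \right)}{\sum_{y' \in \cY} \exp\!\left( \tfrac{\diffp}{\Delta(q)} q(\bbx,y') \right)}}_{\text{(b) partition-function ratio}}.
\notag
\end{equation}
First I would handle the general case: by the sensitivity bound $|q(\bbx,y) - q(\bbx',y)| \le \Delta(q)$, factor (a) is at most $e^{\diffp}$; and since the same bound holds termwise inside the sums, factor (b) is a ratio of sums each of whose numerator terms is at most $e^{\diffp}$ times the matching denominator term, so factor (b) is also at most $e^{\diffp}$. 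Multiplying gives $e^{2\diffp}$, hence $\EM_q$ is $2\diffp$-DP.

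Next I would sharpen this under monotonicity. Here the key observation is that monotonicity fixes a consistent direction for the quality change across all outcomes simultaneously, so in each of the two directions of the DP inequality one of the two factors is $\le 1$ rather than merely $\le e^{\diffp}$. Concretely, write $\bbx' = \bbx \cup \{x_i\}$, so $q(\bbx,y') \le q(\bbx',y') \le q(\bbx,y') + \Delta(q)$ for every $y'$. Then in the ratio displayed above, factor (a) has a nonpositive exponent and is $\le 1$, while factor (b) is $\le e^{\diffp}$ as before — so $\Pr[\EM_q(\bbx)=y]/\Pr[\EM_q(\bbx')=y] \le e^{\diffp}$. For the reverse ratio $\Pr[\EM_q(\bbx')=y]/\Pr[\EM_q(\bbx)=y]$, the roles swap: the numerator-ratio factor is $\le e^{\diffp}$ but the partition-function-ratio factor is now $\le 1$ (each of its numerator terms is at most the matching denominator term), again giving $\le e^{\diffp}$. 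Since both neighbor directions are covered, $\EM_q$ is $\diffp$-DP. Finally I would note that the count quality score $q(\bbh,i) = h_i$ indeed satisfies both hypotheses, with $\Delta(q) = 1$ in the unrestricted setting, so the lemma applies to it.

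I do not expect a genuine obstacle here; the only things to be careful about are (i) keeping the two factors cleanly separated so the monotone argument can kill the right one in each direction, and (ii) justifying the step from singleton outcomes to general measurable $S \subseteq \cY$, which for the discrete exponential mechanism is immediate by additivity of probability and the fact that $\frac{\sum_{y \in S} a_y}{\sum_{y \in S} b_y} \le \max_{y \in S} \frac{a_y}{b_y}$ for positive $a_y, b_y$.
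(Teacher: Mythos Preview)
Your proof is correct and is the standard textbook argument for the privacy of the exponential mechanism. The paper does not actually supply its own proof of this lemma; it is stated as a known result (originally due to McSherry and Talwar).

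That said, the paper does, a few paragraphs later, prove the stronger statement that $\EM_q$ is $2\diffp$-range-bounded (respectively $\diffp$-range-bounded in the monotone case) in Lemma~\ref{lem:range_bounded_EM}, and then notes via the preceding corollary that range-boundedness implies differential privacy with the same parameter. The argument there takes a slightly different route: it looks at the product of two privacy-loss ratios for distinct outcomes $y,y'$, in which the partition functions cancel entirely, leaving only the ratio of unnormalized weights. This avoids your separate bounding of factors (a) and (b), at the cost of targeting a different (stronger) conclusion. Your direct decomposition into numerator and partition-function factors is the more elementary and self-contained route to the DP statement itself, and in the monotone case your observation that one of the two factors is automatically $\le 1$ in each direction is exactly the right way to recover the factor-of-two savings.
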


We point out that the exponential mechanism can be simulated by adding Gumbel noise $\gum(\Delta(q)/\diffp)$ to each quality score value and then reporting the outcome with the largest noisy count.\footnote{Special thanks to Aaron Roth for pointing out this known connection with the \emph{Gumbel-max trick} \url{http://lips.cs.princeton.edu/the-gumbel-max-trick-for-discrete-distributions/}.} 
This is similar to the report noisy max mechanism \cite{DworkRo14} except Gumbel noise is added rather than Laplace.  We define $\pEM{k}_q$ to be the iterative peeling algorithm that first samples the outcome with the largest quality score then repeats on the remaining outcomes and continues $k$ times.  We further define $\cM_\gum^k(\bbq(\bbx))$ to be the algorithm that adds $\gum(\Delta(q) /\diffp)$ to each $q(\bbx,y)$ for $y \in \cY$ and takes the $k$ indices with the largest noisy counts.  We then make the following connection between $\pEM{k}$ and $\cM_\gum^k$, so that we can compute the top-$k$ outcomes in one-shot.  We defer the proof to Appendix~\ref{app:proofExpGumbelOneShot}

\begin{lemma}
For any input $\bbx \in \cX$ the peeling exponential mechanism $\pEM{k}_q(\bbx)$ is equal in distribution to $\cM_\gum^k(\bbq(\bbx))$.  That is for any outcome vector $(o_1, \cdots, o_k) \in [d]^k$ we have
$$
\Pr[\pEM{k}_q(\bbx) = (o_1, \cdots, o_k)] = \Pr[\cM_\gum^k(\bbq(\bbx)) = (o_1, \cdots, o_k)]
$$
\label{lem:ExpGumbelOneShot}
\end{lemma}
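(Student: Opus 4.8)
The plan is to establish the claimed equality in distribution by induction on $k$, exploiting the well-known \emph{Gumbel-max trick}: if we add i.i.d.\ $\gum(1/\lambda)$ noise to a collection of real numbers $\{a_y\}_{y \in \cY}$, then the probability that index $y$ attains the maximum noisy value is exactly $\exp(\lambda a_y)/\sum_{y'} \exp(\lambda a_{y'})$, which is precisely the exponential-mechanism sampling probability with quality scores $a_y$ and the appropriate $\diffp/\Delta(q)$ in place of $\lambda$. So first I would state this single-draw fact as the base case $k=1$: $\cM_\gum^1(\bbq(\bbx))$ returns $\argmax_y \{q(\bbx,y) + \gum(\Delta(q)/\diffp)\}$, and by the Gumbel-max trick this equals in distribution $\EM_q(\bbx) = \pEM{1}_q(\bbx)$. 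The cleanest derivation of the Gumbel-max fact is a direct integral: condition on the noise value $g_y$ added to $y$, use that the other noisy counts are all below $a_y + g_y$ with probability $\prod_{y' \ne y}\exp(-e^{-\diffp(a_y + g_y - a_{y'})/\Delta(q)})$ (the Gumbel CDF), multiply by the Gumbel density of $g_y$, and integrate over $g_y \in \R$; the integral collapses to the stated ratio.

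The inductive step is where the real argument lives, and its success hinges on the \textbf{memorylessness / independence-of-irrelevant-alternatives} property of the Gumbel-max trick: conditioned on the identity of the argmax being $o_1$, the noisy counts of the \emph{remaining} indices $\cY \setminus \{o_1\}$ are still distributed as $q(\bbx,y)$ plus independent $\gum(\Delta(q)/\diffp)$ noise (this is the subtle point — the conditioning does not distort the residual noise distribution on the losers, only on the winner). Granting this, I would condition on $\cM_\gum^k(\bbq(\bbx))$ having first coordinate $o_1$: by the $k=1$ analysis this happens with probability $\exp(\tfrac{\diffp}{\Delta(q)} q(\bbx,o_1))/\sum_{y}\exp(\tfrac{\diffp}{\Delta(q)}q(\bbx,y))$, exactly the probability $\pEM{k}_q(\bbx)$ selects $o_1$ first; and conditioned on this, the remaining $k-1$ coordinates of $\cM_\gum^k$ are the top $k-1$ noisy counts among $\cY \setminus \{o_1\}$ with fresh independent Gumbel noise, which by the induction hypothesis applied to the quality score restricted to $\cY \setminus \{o_1\}$ is distributed exactly as $\pEM{k-1}_q$ on that reduced outcome set — i.e.\ as the last $k-1$ peeling steps. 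Multiplying the two factors and summing telescopes to the claimed equality $\Pr[\pEM{k}_q(\bbx) = (o_1,\dots,o_k)] = \Pr[\cM_\gum^k(\bbq(\bbx)) = (o_1,\dots,o_k)]$.

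I expect the main obstacle to be making the independence-of-irrelevant-alternatives claim rigorous rather than hand-waving it: precisely, one must verify that for i.i.d.\ Gumbel variables $\{g_y\}$, the conditional law of $(g_y)_{y \ne o_1}$ given that $o_1 = \argmax_y (a_y + g_y)$ and given the \emph{value} $m = \max_y(a_y + g_y)$ is such that $(a_y + g_y)_{y \ne o_1}$ are independent Gumbels truncated below $m$ — and then that the residual ranking among the losers, after removing $o_1$, is itself a Gumbel-max ranking. A clean way to sidestep the truncation bookkeeping is to instead prove the whole lemma by writing out the joint density of $(g_{o_1},\dots,g_{o_k})$ on the event that the noisy order is exactly $o_1 \succ \dots \succ o_k \succ (\text{rest})$, integrating out the realized noise values one index at a time from the top, and checking that the resulting product of ratios matches the peeling-exponential-mechanism product $\prod_{j=1}^{k} \exp(\tfrac{\diffp}{\Delta(q)} q(\bbx,o_j)) / \sum_{y \notin \{o_1,\dots,o_{j-1}\}} \exp(\tfrac{\diffp}{\Delta(q)} q(\bbx,y))$ term by term. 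That computation is routine once set up, so the only genuine care needed is in the order of integration and in confirming the normalizing sums come out as the successively-peeled partition functions.
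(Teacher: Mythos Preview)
Your proposal is correct, and the ``clean'' route you settle on at the end---writing the joint density on the event $\{q(\bbx,o_1)+g_{o_1} > \cdots > q(\bbx,o_k)+g_{o_k} > \text{rest}\}$ as a nested integral and evaluating it one layer at a time---is exactly what the paper does. One small correction: the paper integrates from the \emph{innermost} variable outward (i.e.\ over $u_k$, the $k$-th ranked noisy value, first), not ``from the top.'' Evaluating $\int_{-\infty}^{u_{k-1}} p_\gum(u_k - q(o_k))\prod_{y \notin \{o_1,\dots,o_k\}}\Pr[\gum < u_k - q(y)]\,du_k$ yields precisely the $k$-th peeling factor $e^{\diffp_q q(o_k)}\big/\sum_{y \notin \{o_1,\dots,o_{k-1}\}} e^{\diffp_q q(y)}$ times a residual of the same CDF-product form, now indexed by the complement of $\{o_1,\dots,o_{k-1}\}$ and evaluated at $u_{k-1}$; the induction is then immediate.

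Your first route (condition on the argmax identity and invoke independence of irrelevant alternatives) is a legitimate alternative framing, but your parenthetical that ``the conditioning does not distort the residual noise distribution on the losers'' is false as written: conditioning on $o_1$ being the argmax \emph{does} change the marginal law of each loser (it becomes a mixture over the max value $m$ of Gumbels truncated below $m$). What is true---and what you correctly identify a paragraph later---is that this truncation is common to all losers and therefore preserves their \emph{relative ranking}, so the conditional rank order among the losers is again Plackett--Luce with $o_1$ removed. Making that precise amounts to the same integral the paper computes, so this route buys intuition but not a shorter proof.
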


We next show that the one-shot noise addition is $(\approx \sqrt{k} \diffp,\delta)$-DP using Theorem~\ref{thm:comp}.  Previous work 
\cite{DworkSuZh15} considered a one-shot approach for top-$k$ selection subject to DP with Laplace noise addition and in order to get the $\sqrt{k}\diffp$ factor on the privacy loss, their algorithm could not return the ranked list of indices.  Using Gumbel noise allows us to return the ranked list of indices in one-shot with the same privacy loss.
\begin{corollary}
The one-shot $\cM_\gum^k(\bbq(\cdot))$ is $(\diffp', \delta)$-DP for any $\delta \geq 0$ where 
$$
\diffp' =  \min\left\{ 2k\diffp,2 k\diffp \cdot \left( \frac{e^{2\diffp} -1}{e^{2\diffp} + 1} \right) +  2\diffp \sqrt{2 k \log(1/\delta)}\right\}.
$$  
Further, if the quality score $q$ is monotonic in the dataset, then  $\cM_\gum^k(\bbq(\cdot))$ is also $(\diffp'', \delta)$-DP for any $\delta \geq 0$ where 
$$
\diffp'' =  \min\left\{ k\diffp,k\diffp \cdot \left( \frac{e^{\diffp} -1}{e^{\diffp} + 1} \right) +  \diffp \sqrt{2 k \log(1/\delta)}\right\}.
$$  
\end{corollary}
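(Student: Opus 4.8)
The plan is to reduce the one-shot Gumbel mechanism to an adaptive composition of exponential mechanisms and then invoke Theorem~\ref{thm:comp}. By Lemma~\ref{lem:ExpGumbelOneShot}, for every input the random tuple $\cM_\gum^k(\bbq(\bbx))$ is equal in distribution to the peeling exponential mechanism $\pEM{k}_q(\bbx)$, so any $(\diffp,\delta)$-DP guarantee proved for $\pEM{k}_q$ transfers verbatim to $\cM_\gum^k(\bbq(\cdot))$. Hence it suffices to bound the privacy loss of $\pEM{k}_q$.

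Next I would observe that $\pEM{k}_q$ is exactly an adaptive $k$-fold composition in the sense of Theorem~\ref{thm:comp}: having produced outcomes $o_1,\dots,o_{i-1}$, the $i$-th component runs the exponential mechanism with quality score $q$ restricted to the surviving outcome set $\cY\setminus\{o_1,\dots,o_{i-1}\}$, and this restricted outcome set is a fixed function of the previous outcomes. Two elementary facts pin down the privacy level of each component. First, restricting the outcome space can only decrease the quality-score sensitivity, so the $i$-th component is an exponential mechanism with parameter at most $\Delta(q)$ and, by Lemma~\ref{lem:standard_exp}, is $2\diffp$-DP. Second, if $q$ is monotonic in the dataset, then so is its restriction to any subset of $\cY$, so in that case each component is $\diffp$-DP, again by Lemma~\ref{lem:standard_exp}. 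In either case every component has $\delta_i=0$.

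Then I would apply Theorem~\ref{thm:comp} with $t=k$, all $\delta_i=0$, and $\diffp_i=2\diffp$ (respectively $\diffp_i=\diffp$ in the monotone case), taking the slack parameter to be $\delta'=\delta$. Substituting $\sum_{i=1}^k\diffp_i=2k\diffp$, $\sum_{i=1}^k\diffp_i\cdot\tfrac{e^{\diffp_i}-1}{e^{\diffp_i}+1}=2k\diffp\cdot\tfrac{e^{2\diffp}-1}{e^{2\diffp}+1}$, and $\sqrt{2\sum_{i=1}^k\diffp_i^2\log(1/\delta')}=2\diffp\sqrt{2k\log(1/\delta)}$ yields the stated expression for $\diffp'$; the monotone case is identical with every occurrence of $2\diffp$ replaced by $\diffp$, giving $\diffp''$. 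The total privacy parameter is $\sum_i\delta_i+\delta'=\delta$ since each component is pure DP. The only points needing care — and the closest thing here to an obstacle — are the bookkeeping claims in the second step, namely that sensitivity and monotonicity of $q$ are preserved under peeling off already-selected outcomes and that the peeling procedure genuinely matches the adaptive template of Theorem~\ref{thm:comp}; both are immediate, so the corollary follows essentially directly from Lemmas~\ref{lem:ExpGumbelOneShot} and~\ref{lem:standard_exp} together with Theorem~\ref{thm:comp}.
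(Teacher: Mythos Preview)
Your proposal is correct and follows essentially the same route as the paper: reduce $\cM_\gum^k$ to the peeling exponential mechanism via Lemma~\ref{lem:ExpGumbelOneShot}, note that each peel is a $2\diffp$-DP (respectively $\diffp$-DP in the monotone case) exponential mechanism by Lemma~\ref{lem:standard_exp}, and plug into the composition bound of Theorem~\ref{thm:comp} with $t=k$ and all $\delta_i=0$. The paper presents this as an immediate corollary without a written-out proof, and your bookkeeping matches exactly.
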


\subsection{Bounded Range Composition}

It turns out that we can actually improve on the total privacy loss for this algorithm and for a wider class of algorithms in general.  We first define a slightly stronger condition than (pure) differential privacy that can give a tighter characterization of the privacy loss for certain DP mechanisms. 

\begin{definition}[Range-Bounded]
Given a mechanism $\cM$ that takes a collection of records in $\cX$ to outcome set $\cY$, we say that $\cM$ is $\diffp$-range-bounded if for any $y,y' \in \cY$ and any neighboring databases $\bbx,\bbx'$ we have

\[
\frac{\Pr[\cM(\bbx) = y]}{\Pr[\cM(\bbx') = y]} \leq e^{\diffp} \frac{\Pr[\cM(\bbx) = y']}{\Pr[\cM(\bbx') = y']}
\]
where we use the probability density function instead for continuous outcome spaces. 
\footnote{We could also equivalently define this in terms of output sets $S,S' \subseteq \cY$ because we are only considering pure $(\delta = 0)$ differential privacy.}

\end{definition}

It is straightforward to see that this definition is within a factor of 2 of standard differential privacy.
\begin{corollary}
If a mechanism $\cM$ is $\diffp$-range-bounded, then it is also $\diffp$-DP and conversely if $\cM$ is $\diffp$-DP then it is also $2 \diffp$-range-bounded.  Furthermore, if $\cM$ is $\diffp$-range-bounded, then we have
$$
\sup_{y \in \cY} \log\left(\frac{\Pr[\cM(\bbx) = y] }{\Pr[\cM(\bbx') = y] } \right) - \inf_{y' \in \cY} \log\left( \frac{\Pr[\cM(\bbx) = y'] }{\Pr[\cM(\bbx') = y'] } \right) \leq \diffp
$$
\end{corollary}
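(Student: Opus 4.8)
The plan is to prove the three assertions of the corollary in turn, each by a short direct manipulation of the defining inequalities.

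\textbf{Range-bounded implies DP.} Fix neighbors $\bbx, \bbx'$ and an outcome $y$. I want to show $\Pr[\cM(\bbx) = y] \le e^\diffp \Pr[\cM(\bbx') = y]$. Apply the range-bounded inequality with $y' $ chosen so that the ratio $\Pr[\cM(\bbx) = y']/\Pr[\cM(\bbx') = y']$ is at most $1$; such a $y'$ must exist, since if every outcome had ratio strictly above $1$ then integrating (or summing) $\Pr[\cM(\bbx) = \cdot]$ against $\Pr[\cM(\bbx') = \cdot]$ would force $1 = \int \Pr[\cM(\bbx)=y']\,dy' > \int \Pr[\cM(\bbx')=y']\,dy' = 1$, a contradiction. (In the discrete case this is just the pigeonhole/averaging argument; in the continuous case one takes an essential infimum of the likelihood ratio, which is $\le 1$ by the same normalization argument.) Plugging this $y'$ into the range-bounded inequality gives $\Pr[\cM(\bbx) = y]/\Pr[\cM(\bbx') = y] \le e^\diffp \cdot 1$, which is the $\diffp$-DP bound for singleton outcomes; extending to measurable sets $S$ is routine integration (and is automatic since $\delta = 0$).

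\textbf{DP implies $2\diffp$-range-bounded.} If $\cM$ is $\diffp$-DP (with $\delta = 0$) then for every outcome $z$ we have both $\Pr[\cM(\bbx)=z] \le e^\diffp \Pr[\cM(\bbx')=z]$ and $\Pr[\cM(\bbx')=z] \le e^\diffp \Pr[\cM(\bbx)=z]$, i.e. the log-likelihood ratio $L(z) \defeq \log\big(\Pr[\cM(\bbx)=z]/\Pr[\cM(\bbx')=z]\big)$ lies in $[-\diffp, \diffp]$ for all $z$. Then for any $y, y'$, $L(y) - L(y') \le \diffp - (-\diffp) = 2\diffp$, and exponentiating gives exactly the $2\diffp$-range-bounded inequality.

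\textbf{The sup–inf bound.} This is essentially the observation already used above, read in the other direction. If $\cM$ is $\diffp$-range-bounded, then for all $y, y'$ we have $L(y) - L(y') \le \diffp$, where $L$ is the log-likelihood ratio above. Taking the supremum over $y$ and then the supremum over $y'$ of $-L(y')$ (equivalently the infimum over $y'$ of $L(y')$) yields $\sup_y L(y) - \inf_{y'} L(y') \le \diffp$, which is the displayed claim.

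I do not expect any serious obstacle here; the only point that needs a little care is the existence of an outcome with likelihood ratio $\le 1$ in the first part, which in the continuous setting should be phrased via the essential infimum of the density ratio together with the fact that both densities integrate to $1$. Everything else is bookkeeping with the two-sided log-ratio bound.
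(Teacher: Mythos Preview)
Your proof is correct. The paper does not actually give a proof of this corollary; it is stated immediately after the remark that ``it is straightforward to see that this definition is within a factor of 2 of standard differential privacy,'' so your argument is exactly the natural way to fill in the omitted details, and in particular your averaging argument for the existence of an outcome with likelihood ratio at most $1$ is the right observation for the first implication.
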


The final consequence is exactly where our \textit{range-bounded} terminology comes from because this implies that for any $y \in \cY$ there is some fixed $t \in [0,\diffp]$ such that 

\[
\log\left(\frac{\Pr[\cM(\bbx) = y] }{\Pr[\cM(\bbx') = y] } \right)\in [-t,\diffp - t].
\]
In contrast, $\diffp$-DP only guarantees that for any $y \in \cY$

\[
\log\left(\frac{\Pr[\cM(\bbx) = y] }{\Pr[\cM(\bbx') = y] } \right)\in [-\diffp,\diffp ]
\]

where we know that this range is tight for some mechanisms such as randomized response, which was the mechanism used for proving optimal advanced composition bounds \cite{KairouzOhVi17,MurtaghVa16}. However, for other mechanisms this range is too loose. For the exponential mechanism, constructing worst-case neighboring databases such that some output's probability increases by a factor of about $e^\diffp$ requires the quality score of that output to increase and all other quality scores to decrease, which implies that their output probability remains about the same. We then show that exponential mechanism achieves the same privacy parameters as in Lemma~\ref{lem:standard_exp} for our stronger charaterization.

\begin{lemma}
The exponential mechanism $\EM_q$ is $2\diffp$-range-bounded, furthermore if $q$ is monotonic in its dataset then $\EM_q$ is $\diffp$-range bounded.
\label{lem:range_bounded_EM}
\end{lemma}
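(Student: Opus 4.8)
The plan is to work straight from the definition of the exponential mechanism and to exploit the fact that the \emph{range-bounded} inequality compares ratios of output probabilities \emph{at the same input}, so that the input-dependent normalizing constants cancel. Write $\alpha = \diffp/\Delta(q)$ and, for an input $\bbx$, let $Z(\bbx) = \sum_{y \in \cY} \exp(\alpha\, q(\bbx,y))$, so that $\Pr[\EM_q(\bbx) = y] = \exp(\alpha\, q(\bbx,y))/Z(\bbx)$ (replacing probabilities by densities in the continuous case). For any outcomes $y, y' \in \cY$ and any neighbors $\bbx, \bbx'$ I would first compute
\[
\frac{\Pr[\EM_q(\bbx) = y] \big/ \Pr[\EM_q(\bbx') = y]}{\Pr[\EM_q(\bbx) = y'] \big/ \Pr[\EM_q(\bbx') = y']}
= \exp\!\Big( \alpha \big[ (q(\bbx,y) - q(\bbx',y)) - (q(\bbx,y') - q(\bbx',y')) \big] \Big),
\]
since each of $Z(\bbx)$ and $Z(\bbx')$ occurs once in the numerator and once in the denominator of the left-hand side. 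Hence it suffices to upper bound the bracketed quantity.

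For the general claim, I would invoke the sensitivity bound $|q(\bbx,z) - q(\bbx',z)| \le \Delta(q)$ for each fixed $z$: applied to $z = y$ it gives $q(\bbx,y) - q(\bbx',y) \le \Delta(q)$, and applied to $z = y'$ it gives $-(q(\bbx,y') - q(\bbx',y')) \le \Delta(q)$, so the bracket is at most $2\Delta(q)$ and the ratio of ratios is at most $e^{2\diffp}$. That is exactly $2\diffp$-range-boundedness (and, via the earlier corollary, recovers the $2\diffp$-DP bound of Lemma~\ref{lem:standard_exp}).

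For the monotone case I would split on the direction of the neighbor relation, since range-boundedness must be checked for both $\bbx' = \bbx \cup \{x_i\}$ and $\bbx = \bbx' \cup \{x_i\}$. Assume $q$ is monotone non-decreasing in the dataset. If $\bbx' = \bbx \cup \{x_i\}$, then $q(\bbx,z) \le q(\bbx',z)$ for all $z$, so $q(\bbx,y) - q(\bbx',y) \le 0$ while still $-(q(\bbx,y') - q(\bbx',y')) \le \Delta(q)$; if instead $\bbx = \bbx' \cup \{x_i\}$, then $q(\bbx,z) \ge q(\bbx',z)$ for all $z$, so $q(\bbx,y) - q(\bbx',y) \le \Delta(q)$ while $-(q(\bbx,y') - q(\bbx',y')) \le 0$. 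In either case the bracket is at most $\Delta(q)$, giving a ratio bound of $e^{\diffp}$, i.e. $\diffp$-range-boundedness; the monotone non-increasing case is symmetric.

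The argument is short, and the only real idea is the cancellation of the partition functions $Z(\bbx), Z(\bbx')$ in the ratio-of-ratios — after that, everything reduces to two applications of the sensitivity (resp.\ monotonicity) bound. The main points requiring care, rather than any genuine obstacle, are (i) stating the continuous-outcome case in terms of densities so the cancellation goes through verbatim, and (ii) making sure the direction-of-neighbor case split in the monotone setting is exhaustive, so that the $\diffp$ bound holds uniformly over all neighbor pairs.
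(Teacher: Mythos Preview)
Your proposal is correct and is essentially the paper's own argument: both compute the ratio-of-ratios, observe that the normalizing constants $Z(\bbx),Z(\bbx')$ cancel, and then bound the remaining exponent by $2\Delta(q)$ via two applications of the sensitivity bound (resp.\ by $\Delta(q)$ in the monotone case by noting one of the two differences has a fixed sign). Your treatment of the monotone case is slightly more explicit about the direction-of-neighbor split than the paper's, but the idea is identical.
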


\begin{proof}
Consider any outcomes $y,y' \in \cY$, and take any neighboring inputs $\bbx$ and $\bbx'$. 

Plugging in the specific forms of these probabilities, it is straightforward to see that the denominators will cancel and we are left with the following with the substitution $\diffp_q = \tfrac{\diffp}{\Delta(q)}$
\[
\frac{\Pr[\EM_q(\bbx) = y] }{\Pr[\EM_q(\bbx') = y] } \frac{\Pr[\EM_q(\bbx') = y'] }{\Pr[\EM_q(\bbx) = y'] } = \frac{\exp(\diffp_q q(\bbx,y))}{\exp(\diffp_q q(\bbx',y))} \frac{\exp(\diffp_q q(\bbx',y'))}{\exp(\diffp q(\bbx,y'))} \leq e^{2 \diffp}.
\]

When $q$ is monotonic in the  dataset, we have either the case where $ \frac{\exp(\diffp_q q(\bbx,y))}{\exp(\diffp_q q(\bbx',y))} \leq e^\diffp$ and $\frac{\exp(\diffp_q q(\bbx',y'))}{\exp(\diffp_q q(\bbx,y'))} \leq 1$ or the case where $ \frac{\exp(\diffp_q q(\bbx,y))}{\exp(\diffp_q q(\bbx',y))} \leq 1$ and $\frac{\exp(\diffp_q q(\bbx',y'))}{\exp(\diffp_q q(\bbx,y'))} \leq e^\diffp$.  Hence the factor of 2 savings in the privacy parameter.

\end{proof}

We now show that we can achieve a better composition bound when we compose $\diffp$-range-bounded algorithms as opposed to using Theorem~\ref{thm:comp}, which applies to the composition of general DP algorithms.  
Intuitively this composition will save a factor of 2 because the range that will maximize the variance is $[-\frac{\diffp}{2}, \frac{\diffp}{2}]$ due to the fact that if the range was instead skewed towards $\diffp$ (i.e. a range of $[-o(1),\diffp- o(1)]$) then almost all of the probability mass has to be on events with log-ratio around $-o(1)$.
Rather than using Azuma's inequality on the sum of the privacy losses, as is done in the original advanced composition paper \cite{DworkRoVa10}, we use the more general Azuma-Hoeffding bound. 
\begin{theorem}[Azuma-Hoeffding\footnote{\url{http://www.math.wisc.edu/~roch/grad-prob/gradprob-notes20.pdf}}]
Let $(X_t)$ be a martingale with respect to the filtration $(\cF_t)$.  Assume that there exist $\cF_{t-1}$ measurable variables $A_t, B_t$ and a constant $c_t$ such that 
$$
A_t \leq X_t - X_{t-1} \leq B_t \qquad B_t - A_t \leq c_t.
$$
Then for any $\beta >0$ we have
$$
\Pr[Z_t - Z_{0} \geq \beta] \leq \exp\left( \frac{-2 \beta^2}{\sum_{i = 1}^t c_i^2} \right).
$$
\label{thm:Azume_Hoeffding}
\end{theorem}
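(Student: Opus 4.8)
The plan is to run the standard Chernoff-bound-plus-Hoeffding's-lemma argument for martingale differences, being careful that the one-step bounds $A_t \le X_t - X_{t-1} \le B_t$ have $\cF_{t-1}$-measurable endpoints, since that is exactly what licenses a conditional application of Hoeffding's lemma. (Here I read the ``$Z$'' in the statement as the martingale $X$.)

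First I would set $D_t = X_t - X_{t-1}$, so that $\mathbb{E}[D_t \mid \cF_{t-1}] = 0$ and, by hypothesis, $D_t$ lies in the random interval $[A_t, B_t]$, which is $\cF_{t-1}$-measurable and has length at most $c_t$. Then I invoke Hoeffding's lemma: if $Y$ has $\mathbb{E}[Y] = 0$ and $a \le Y \le b$ almost surely, then $\mathbb{E}[e^{\lambda Y}] \le \exp(\lambda^2 (b-a)^2 / 8)$ for every $\lambda$ (proved by Taylor-expanding the cumulant generating function and bounding its second derivative by the variance of a $[a,b]$-valued variable, which is at most $(b-a)^2/4$). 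Applying this conditionally on $\cF_{t-1}$ — valid precisely because $A_t, B_t$ are $\cF_{t-1}$-measurable, so conditioned on $\cF_{t-1}$ the variable $D_t$ is genuinely mean-zero and confined to a fixed interval of length $\le c_t$ — yields $\mathbb{E}[e^{\lambda D_t} \mid \cF_{t-1}] \le \exp(\lambda^2 c_t^2 / 8)$ for all $\lambda \in \mathbb{R}$.

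Next, for $\lambda > 0$ I would apply Markov's inequality to $e^{\lambda(X_t - X_0)}$, giving $\Pr[X_t - X_0 \ge \beta] \le e^{-\lambda\beta}\, \mathbb{E}[e^{\lambda(X_t - X_0)}]$, and then peel off the differences one at a time with the tower property: $\mathbb{E}[e^{\lambda(X_t - X_0)}] = \mathbb{E}\big[e^{\lambda(X_{t-1} - X_0)}\, \mathbb{E}[e^{\lambda D_t} \mid \cF_{t-1}]\big] \le e^{\lambda^2 c_t^2/8}\, \mathbb{E}[e^{\lambda(X_{t-1} - X_0)}]$, iterating down to the trivial term at $X_0 - X_0 = 0$ to get $\mathbb{E}[e^{\lambda(X_t - X_0)}] \le \exp\big(\tfrac{\lambda^2}{8}\sum_{i=1}^t c_i^2\big)$. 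Combining, $\Pr[X_t - X_0 \ge \beta] \le \exp\big(-\lambda\beta + \tfrac{\lambda^2}{8}\sum_{i=1}^t c_i^2\big)$, and the scalar optimization $\lambda = 4\beta / \sum_{i=1}^t c_i^2$ delivers the claimed $\exp\big(-2\beta^2 / \sum_{i=1}^t c_i^2\big)$.

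The only genuine subtlety — and the step I would be most careful with — is the conditional use of Hoeffding's lemma: one needs the confining interval for $D_t$ to be determined by $\cF_{t-1}$, which is why the hypothesis is stated with $\cF_{t-1}$-measurable $A_t, B_t$ rather than with deterministic constants. Everything else (Markov's inequality, the tower-property telescoping, and the one-variable optimization over $\lambda$) is routine. If the two-sided version were needed, one would simply run the identical argument on the martingale $(-X_t)$ and take a union bound.
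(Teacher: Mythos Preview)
Your proof is correct and is the standard Chernoff--Hoeffding's-lemma argument for the Azuma--Hoeffding inequality. Note, however, that the paper does not actually prove this theorem: it is stated as a cited classical result (with a reference to external lecture notes) and is used as a black box in the proof of Lemma~\ref{lem:compBoundedRange}. So there is no ``paper's own proof'' to compare against; your write-up supplies exactly the textbook argument one would expect, including the correct observation that the $\cF_{t-1}$-measurability of $A_t, B_t$ is what licenses the conditional application of Hoeffding's lemma, and the correct identification of the $Z$/$X$ typo in the statement.
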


In fact, our composition bound for range-bounded algorithms improves on the \emph{optimal} composition theorem for general DP algorithms \cite{KairouzOhVi17,MurtaghVa16}.  See Appendix~\ref{app:CompCompare} for a comparison of the different bounds.  We defer the proof, which largely follows a similar argument to \cite{DworkRoVa10}, to Appendix~\ref{app:proofCompBoundedRange}.
\begin{lemma}
Let $\cM_1, \cM_2, \cdots, \cM_t$ each be $\diffp_i$-bounded range where the choice of $\cM_i$ may depend on the previous outcomes of $\cM_1, \cdots, \cM_{i-1}$, then the composed algorithm $\cM(\bbx)$ of each of the algorithms $\cM_1(\bbx), \cM_{2}(\bbx), \cdots, \cM_t(\bbx)$ is $(\diffp''(\delta), \delta)$-DP for any $\delta \geq 0$ where
\begin{equation}
\diffp''(\delta) = \min\left\{ \sum_{i=1}^t \diffp_i, \sum_{i=1}^t \diffp_i \cdot \left( \frac{e^{\diffp_i} -1}{e^{\diffp_i} + 1} \right) +  \sqrt{2 \sum_{i=1}^t \diffp_i^2 \log(1/\delta)} ,\sum_{i=1}^t \frac{\diffp_i^2}{2} +  \sqrt{ \frac{1}{2} \sum_{i=1}^t \diffp_i^2 \log(1/\delta)}\right\}. 
\label{eq:compBetter}
\end{equation}
\label{lem:compBoundedRange}
\end{lemma}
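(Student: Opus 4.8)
The plan is to follow the privacy-loss random variable / martingale argument used for advanced composition in \cite{DworkRoVa10}, but to exploit range-boundedness in the concentration step and to invoke the two-sided Azuma--Hoeffding bound of Theorem~\ref{thm:Azume_Hoeffding} in place of the symmetric Azuma inequality. First recall the standard reduction: to show $\cM$ is $(\diffp''(\delta),\delta)$-DP it suffices to show that for every pair of neighbors $\bbx,\bbx'$ the privacy loss $\mathcal{L} \defeq \log\frac{\Pr[\cM(\bbx)=o]}{\Pr[\cM(\bbx')=o]}$, evaluated at $o\sim\cM(\bbx)$, satisfies $\Pr[\mathcal{L} > \diffp''(\delta)] \le \delta$ (and symmetrically with $\bbx,\bbx'$ swapped), since then $\Pr[\cM(\bbx)\in S] \le e^{\diffp''}\Pr[\cM(\bbx')\in S] + \Pr[\mathcal{L}>\diffp'']$. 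Since $\diffp''(\delta)$ is a minimum of three quantities and being $(\diffp,\delta)$-DP for any one of them implies being $(\diffp'',\delta)$-DP for the minimum, it is enough to establish each bound separately. The first two, $\sum_i\diffp_i$ and $\sum_i\diffp_i\frac{e^{\diffp_i}-1}{e^{\diffp_i}+1}+\sqrt{2\sum_i\diffp_i^2\log(1/\delta)}$, follow immediately from the fact that $\diffp_i$-range-boundedness implies $\diffp_i$-DP together with basic composition and Theorem~\ref{thm:comp} (taking each $\delta_i = 0$). So the work is entirely in the third bound $\sum_i\frac{\diffp_i^2}{2}+\sqrt{\frac12\sum_i\diffp_i^2\log(1/\delta)}$.

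For that bound, decompose the privacy loss by the chain rule: writing $o=(o_1,\cdots,o_t)$ for the output of the composition on $\bbx$ and $\cF_i=\sigma(o_1,\cdots,o_i)$, we have $\mathcal{L} = \sum_{i=1}^t L_i$ with $L_i \defeq \log\frac{\Pr[\cM_i(\bbx)=o_i\mid o_{<i}]}{\Pr[\cM_i(\bbx')=o_i\mid o_{<i}]}$, where conditioned on $\cF_{i-1}$ the mechanism $\cM_i$ is a fixed $\diffp_i$-range-bounded mechanism and $o_i$ is drawn from $\cM_i(\bbx)$. Let $\mu_i = \mathbb{E}[L_i\mid\cF_{i-1}]$ and define the martingale $X_i = \sum_{j\le i}(L_j-\mu_j)$ with $X_0=0$, which is a martingale with respect to $(\cF_i)$ by construction. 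Two facts drive the argument. (i) Since $\cM_i$ conditioned on $\cF_{i-1}$ is $\diffp_i$-range-bounded, for every outcome $o_i$ the value $L_i$ lies within $\diffp_i$ of $s_i \defeq \inf_{o}\log\frac{\Pr[\cM_i(\bbx)=o\mid o_{<i}]}{\Pr[\cM_i(\bbx')=o\mid o_{<i}]}$, i.e. $L_i\in[s_i,s_i+\diffp_i]$ with $s_i$ being $\cF_{i-1}$-measurable; hence $X_i-X_{i-1}=L_i-\mu_i\in[s_i-\mu_i,\,s_i+\diffp_i-\mu_i]=:[A_i,B_i]$ with $A_i,B_i$ both $\cF_{i-1}$-measurable and $B_i-A_i=\diffp_i=:c_i$. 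This width-$\diffp_i$ interval (versus the width $2\diffp_i$ one gets from plain $\diffp_i$-DP) is precisely where range-boundedness saves the factor of $2$. (ii) Since $\mathbb{E}[e^{-L_i}\mid\cF_{i-1}]=1$ (a restatement of $\sum_o\Pr[\cM_i(\bbx')=o\mid o_{<i}]=1$) and $\diffp_i$-range-boundedness implies $\diffp_i$-DP, the standard bound on the expected privacy loss of a $\diffp_i$-DP mechanism gives $\mu_i \le \diffp_i\frac{e^{\diffp_i}-1}{e^{\diffp_i}+1} = \diffp_i\tanh(\diffp_i/2) \le \frac{\diffp_i^2}{2}$, so $\sum_i\mu_i \le \sum_i\frac{\diffp_i^2}{2}$.

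Now apply Theorem~\ref{thm:Azume_Hoeffding} to $(X_i)$ with $c_i=\diffp_i$: for any $\beta>0$, $\Pr[X_t\ge\beta]\le\exp\!\big(-2\beta^2/\sum_i\diffp_i^2\big)$. Choosing $\beta = \sqrt{\tfrac12\big(\sum_i\diffp_i^2\big)\log(1/\delta)}$ makes the right-hand side equal to $\delta$, so on the complementary event, of probability at least $1-\delta$, we have $\mathcal{L} = X_t + \sum_i\mu_i \le \sqrt{\tfrac12\sum_i\diffp_i^2\log(1/\delta)} + \sum_i\frac{\diffp_i^2}{2} = \diffp''(\delta)$; by the reduction above this yields $(\diffp''(\delta),\delta)$-DP, and swapping $\bbx$ and $\bbx'$ handles the other direction. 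I expect the only genuinely delicate point to be step (i): one must verify that the conditional privacy-loss increment really does lie in an interval of width $\diffp_i$ (the content of range-boundedness, not just of $\diffp_i$-DP) and that the endpoints $A_i,B_i$ are $\cF_{i-1}$-measurable, so that the adaptive version of Azuma--Hoeffding applies even though each $\cM_i$ is chosen adaptively; everything else is the routine advanced-composition calculation.
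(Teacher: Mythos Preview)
Your proposal is correct and follows essentially the same approach as the paper: both decompose the privacy loss into a martingale of centered increments, use range-boundedness to place each increment in an $\cF_{i-1}$-measurable interval of width $\diffp_i$ (rather than $2\diffp_i$), bound the conditional means by $\diffp_i^2/2$, and then apply the two-sided Azuma--Hoeffding inequality of Theorem~\ref{thm:Azume_Hoeffding}. The paper's proof is terser (it cites \cite{BunSt16} directly for the mean bound rather than deriving it via $\tanh$), but the structure and key ideas are identical.
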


Note that in order to see an improvement in the advanced composition bound, we do not necessarily require that an $\diffp$-DP mechanism is also $\diffp$-\textit{range-bounded}, but could be relaxed to showing it is $\alpha\diffp$-\textit{range-bounded} for some $\alpha < 2$. In particular, this will still give improvements with respect to the simpler formulation of the advanced composition bound. More specifically, the significant term that is normally considered in advanced composition is $\sqrt{2k\ln(1/\delta')}\diffp$, which can be replaced with $\sqrt{\frac{\alpha^2}{2}k\ln(1/\delta')}\diffp$ for composing $\alpha\diffp$-\textit{range-bounded} mechanisms with $\alpha \leq 2$.
Consequently, we believe that this formulation could be useful for mechanisms beyond the exponential mechanism.


\section{Limited Domain Algorithm}

In this section we present the analysis of our main procedure in Algorithm~\ref{algo:randThresExp}. We begin by giving basic properties of histograms when an individual's data is added or removed, and how this can change the domain of the true top-$\bk$. This will be critical for achieving our bounds on the bad events when an index moves in or out of the true top-$\bk$ for a neighboring database. Next, we will give an alternative formulation of our algorithm based upon a peeling exponential mechanism. The general idea will be to show that once we have bounded the probability of outputting indices unique to the true top-$\bk$ of one on the neighboring histograms, then we can just consider the remaining similar outputs according to this peeling exponential mechanism and bound this in terms of pure differential privacy. Finally, we will provide a proof of Theorem~\ref{thm:rT_DP}.


\subsection{Properties of Data Dependent Thresholds}

In this section we will cover basic properties of how the domain of elements above a data dependent threshold can change in neighboring histograms, i.e. $\bbh$ and $\bbh'$, where $|| \bbh - \bbh'||_\infty \leq 1$. In our algorithm, we will use a data dependent threshold, such as the $\bk$-th ordered count $h_{(\bk)}$.  Our first property that we use often within our analysis is that the count of the $\bk$th largest histogram value will not change by more than one (even though the index itself may change).

\begin{lemma}\label{lem:thresSimilar}
For any neighboring histograms $\bbh,\bbh'$, where w.l.o.g. $\bbh \geq \bbh'$, and for any $\bk\leq d$, we must have either $h_{(\bk)}= h_{(\bk)}'$ or $h_{(\bk)} = h_{(\bk)}' + 1$. 
\end{lemma}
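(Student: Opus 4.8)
The plan is to argue via a counting/pigeonhole argument on how many coordinates sit at or above a given threshold value. Since $\bbh \geq \bbh'$ coordinatewise and $\|\bbh-\bbh'\|_\infty \le 1$, for every value $t \in \N$ we have the set inclusions $\{j : h'_j \ge t\} \subseteq \{j : h_j \ge t\} \subseteq \{j : h'_j \ge t-1\}$, because $h'_j \le h_j \le h'_j + 1$. Writing $N(t) = |\{j : h_j \ge t\}|$ and $N'(t) = |\{j : h'_j \ge t\}|$, this gives $N'(t) \le N(t) \le N'(t-1)$ for all $t$. The key observation is that $N'$ is a nonincreasing step function that decreases by at most... well, actually it can decrease by more than one at a jump, so I need the right monotone quantity. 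The cleaner route: note $h_{(\bk)}$ is exactly $\max\{t : N(t) \ge \bk\}$, i.e. the largest threshold met by at least $\bk$ coordinates, and similarly for $h'_{(\bk)}$.

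First I would establish $h_{(\bk)} \ge h'_{(\bk)}$: since $N'(h'_{(\bk)}) \ge \bk$ and $N(t) \ge N'(t)$ for all $t$, we get $N(h'_{(\bk)}) \ge \bk$, so $h_{(\bk)} = \max\{t : N(t)\ge \bk\} \ge h'_{(\bk)}$. Next I would establish $h_{(\bk)} \le h'_{(\bk)} + 1$: from $N(t) \le N'(t-1)$, setting $t = h_{(\bk)}$ gives $\bk \le N(h_{(\bk)}) \le N'(h_{(\bk)} - 1)$, hence $h'_{(\bk)} = \max\{s : N'(s) \ge \bk\} \ge h_{(\bk)} - 1$, i.e. $h_{(\bk)} \le h'_{(\bk)} + 1$. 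Combining the two bounds, $h'_{(\bk)} \le h_{(\bk)} \le h'_{(\bk)} + 1$, so $h_{(\bk)}$ is either $h'_{(\bk)}$ or $h'_{(\bk)}+1$, which is the claim.

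The only mild subtlety — and the step I'd be most careful about — is the characterization $h_{(\bk)} = \max\{t \in \N : N(t) \ge \bk\}$ and the monotonicity inequalities $N'(t) \le N(t) \le N'(t-1)$; these follow directly from $h'_j \le h_j \le h'_j+1$ for each $j$, but one should double-check the boundary/degenerate cases (e.g. ties in the histogram, or $h_{(\bk)} = 0$, or $\bk$ larger than the number of strictly positive coordinates), since the problem allows arbitrary tie-breaking and nonnegative integer counts. An equivalent and perhaps more transparent presentation avoids $N(\cdot)$ entirely: the sorted vector $\bbh_{(\cdot)}$ dominates $\bbh'_{(\cdot)}$ coordinatewise (rearrangement preserves the coordinatewise order when one vector dominates the other), giving $h_{(\bk)} \ge h'_{(\bk)}$; and the deletion of one user decreases $\sum_j \1{h_j \ge t}$ by at most $d$ but — more to the point — each sorted coordinate drops by at most $1$, which one can see because interleaving the two sorted sequences and using $\|\bbh - \bbh'\|_\infty \le 1$ forces $h_{(\bk)} - h'_{(\bk)} \le 1$. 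I would pick whichever of these two framings is cleanest to write rigorously; the counting-function version is the one I'd lead with since the inequalities $N'(t)\le N(t)\le N'(t-1)$ are immediate and self-contained.
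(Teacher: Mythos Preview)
Your proposal is correct and is essentially the same argument as the paper's, just packaged differently. The paper argues directly that the $\bk$ indices in the top-$\bk$ of $\bbh'$ all have $h$-value at least $h'_{(\bk)}$ (giving $h_{(\bk)}\ge h'_{(\bk)}$), and the $\bk$ indices in the top-$\bk$ of $\bbh$ all have $h'$-value at least $h_{(\bk)}-1$ (giving $h'_{(\bk)}+1\ge h_{(\bk)}$); your counting function $N(t)=|\{j:h_j\ge t\}|$ with $h_{(\bk)}=\max\{t:N(t)\ge\bk\}$ and the inequalities $N'(t)\le N(t)\le N'(t-1)$ encode precisely the same two observations.
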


\begin{proof}
Let $i'_{(j)}$ be the index for $h'_{(j)}$. By assumption we have $\bbh \geq \bbh'$, which implies that for each index $i'_{(j)}$ we must have $h_{i'_{(j)}} \geq h'_{i'_{(j)}}$. Therefore, for each $j \leq \bk$, we have $h_{i'_{(j)}} \geq h'_{i'_{(\bk)}}$, which implies $h_{(\bk)} \geq h'_{(\bk)}$.

Similarly, we let $i_{(j)}$ be the index for $h_{(j)}$, and we know that $\bbh$ and $\bbh'$ are neighboring so for each index $i_{(j)}$ we must have $h'_{i_{(j)}} + 1 \geq h_{i_{(j)}}$. Therefore, for each $j \leq \bk$, we have $h'_{i_{(j)}} + 1 \geq h_{i_{(\bk)}}$, which implies $h'_{(\bk)} + 1 \geq h_{(\bk)}$.
\end{proof}

Instead of considering the entire domain of size $d$, our algorithms will be limited to a much smaller domain $\domain{\bk}{\bbh}$ for each database and a given value $\bk$, where

\begin{equation}
\domainE{\bk}{\bbh} \defeq \{i_{(j)} \in [d]: j \leq \bk \text{ and } h_{i_{(1)}} \geq h_{i_{(2)}} \geq \cdots \geq h_{i_{(d)}} \}.
\label{eq:domain}
\end{equation}

and assume that there is some arbitrary (data-independent) tie-breaking that occurs for ordering the histograms.  We then have the following result, which bounds how much the change in counts between neighboring databases can be on elements that are in the set difference of the two domains.

\begin{lemma}\label{lem:outsideIntersection}
For any $\Delta$-restricted sensitivity neighboring histograms $\bbh,\bbh'$, and some fixed $\bk < d$, if $i \in \domainE{\bk}{\bbh} \setminus \domainE{\bk}{\bbh'}$ then $h_i \leq h_{(\bk+1)} + 1$ and if $i \in \domainE{\bk}{\bbh'} \setminus \domainE{\bk}{\bbh}$ then $h'_i \leq h'_{(\bk+1)} + 1$
\end{lemma}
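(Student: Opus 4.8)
The plan is to exploit the fact that the two domains $\domainE{\bk}{\bbh}$ and $\domainE{\bk}{\bbh'}$ are simply the top-$\bk$ index sets of two histograms that differ coordinatewise by at most $1$, together with the companion fact (Lemma~\ref{lem:thresSimilar}) that the $(\bk+1)$-th largest count moves by at most $1$. By symmetry it suffices to prove the first implication: if $i \in \domainE{\bk}{\bbh} \setminus \domainE{\bk}{\bbh'}$ then $h_i \le h_{(\bk+1)} + 1$, where $h_{(\bk+1)}$ is the $(\bk+1)$-th largest count of $\bbh$ itself. (The second implication follows by swapping the roles of $\bbh$ and $\bbh'$, since neighboring is a symmetric relation; the $\Delta$-restricted hypothesis is not even needed here, only $\|\bbh-\bbh'\|_\infty \le 1$.)

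First I would fix such an index $i$. Since $i \notin \domainE{\bk}{\bbh'}$, the count $h'_i$ is not among the top $\bk$ values of $\bbh'$, so $h'_i \le h'_{(\bk+1)}$. Using $\|\bbh - \bbh'\|_\infty \le 1$ we get $h_i \le h'_i + 1 \le h'_{(\bk+1)} + 1$. So it remains to relate $h'_{(\bk+1)}$ to $h_{(\bk+1)}$. Here I would either invoke Lemma~\ref{lem:thresSimilar} applied at index $\bk+1$ — which gives $|h_{(\bk+1)} - h'_{(\bk+1)}| \le 1$ in general, and in particular $h'_{(\bk+1)} \le h_{(\bk+1)} + 1$ — or, to get the sharper constant, split on the direction of the perturbation. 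The cleaner route is the direction-split: the statement as written uses $h_{(\bk+1)}$ for the first implication and $h'_{(\bk+1)}$ for the second, which suggests the intended argument is the one-sided one. Concretely, if $i \in \domainE{\bk}{\bbh}$ then $h_i \ge h_{(\bk)} \ge h_{(\bk+1)}$, so we already know $h_i \ge h_{(\bk+1)}$; the content of the lemma is the matching upper bound $h_i \le h_{(\bk+1)}+1$, i.e. $i$'s count in $\bbh$ sits in the narrow band $[h_{(\bk+1)}, h_{(\bk+1)}+1]$. To close the gap I would argue: since $i$ is in the top $\bk$ of $\bbh$ but not of $\bbh'$, some index $i''$ that is in the top $\bk$ of $\bbh'$ is not in the top $\bk$ of $\bbh$ (the two top-$\bk$ sets have the same size), so $h_{i''} \le h_{(\bk+1)}$; but $h'_{i''} \ge h'_{(\bk)} \ge h'_i \ge h_i - 1$, and $h_{i''} \ge h'_{i''}$ when $\bbh \ge \bbh'$ (or in general $h_{i''} \ge h'_{i''} - 1$... ), chaining these to get $h_i - 1 \le h'_{i''} \le h_{i''} + [\bbh\not\ge\bbh'] \le h_{(\bk+1)} + \cdots$, i.e. $h_i \le h_{(\bk+1)}+1$ (possibly $+2$ without the $\bbh\ge\bbh'$ normalization, at which point one falls back on Lemma~\ref{lem:thresSimilar} as above).

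The main obstacle — really the only subtlety — is the bookkeeping of which histogram's $(\bk+1)$-th quantile appears in the bound and tracking the off-by-one constants through the chain, since $\|\bbh-\bbh'\|_\infty\le 1$ introduces a $\pm1$ at each comparison and one must be careful not to accumulate a $+2$. I expect the paper resolves this by the swap trick (so that in the first implication one may assume $\bbh \ge \bbh'$ componentwise is \emph{not} available, and instead reads off $h'_i \le h'_{(\bk+1)}$ and then applies Lemma~\ref{lem:thresSimilar} in the favorable direction to land exactly at $h_{(\bk+1)}+1$). A secondary point to handle cleanly is the role of the arbitrary fixed tie-breaking in the definition of $\domainE{\bk}{\cdot}$: because it is data-independent, the ``top-$\bk$ set has size exactly $\bk$'' and ``index not in top-$\bk$ implies count $\le$ the $(\bk+1)$-th value'' statements remain valid, so the tie-breaking causes no trouble beyond a remark.
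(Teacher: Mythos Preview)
Your instinct to ``split on the direction of the perturbation'' is exactly the paper's proof, and you should simply execute that rather than the swap argument you pivot to. The point you are not fully exploiting is that in this paper's model neighbors are \emph{monotone}: one dataset is obtained from the other by adding or deleting a single user, so either $\bbh \ge \bbh'$ or $\bbh' \ge \bbh$ coordinatewise. Hence the two cases are exhaustive, and in each case the $\pm 1$ slack appears only once, not twice. Concretely, starting from $h'_i \le h'_{(\bk+1)}$ (since $i \notin \domainE{\bk}{\bbh'}$):
\begin{itemize}
\item If $\bbh \ge \bbh'$, then $h_i \le h'_i + 1$ and Lemma~\ref{lem:thresSimilar} gives $h'_{(\bk+1)} \le h_{(\bk+1)}$, so $h_i \le h'_i + 1 \le h'_{(\bk+1)} + 1 \le h_{(\bk+1)} + 1$.
\item If $\bbh' \ge \bbh$, then $h_i \le h'_i$ and Lemma~\ref{lem:thresSimilar} gives $h'_{(\bk+1)} \le h_{(\bk+1)} + 1$, so $h_i \le h'_i \le h'_{(\bk+1)} \le h_{(\bk+1)} + 1$.
\end{itemize}
In each case exactly one of the two inequalities contributes a $+1$, so there is no accumulation to $+2$. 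The swap/counting argument with an auxiliary index $i''$ is unnecessary and, as you noticed, does not close without the monotonicity anyway. Your remark that the $\Delta$-restriction plays no role here is correct, and the tie-breaking issue is indeed a non-issue for the reason you give.
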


\begin{proof}
If $i \in \domainE{\bk}{\bbh} \setminus \domainE{\bk}{\bbh'}$, then $h'_i \leq h'_{(\bk+1)}$ because $i \notin \domainE{\bk}{\bbh'}$. We first consider the case $\bbh \geq \bbh'$, which implies $h'_{(\bk+1)} \leq h_{(\bk+1)}$ by Lemma~\ref{lem:thresSimilar} and because they are neighbors, we must have $h_i \leq h'_i + 1$. Therefore, $h_i \leq h'_i + 1 \leq h'_{(\bk+1)} + 1 \leq h_{(\bk+1)} + 1$ as desired. If instead $\bbh' \geq \bbh$, then again by Lemma~\ref{lem:thresSimilar} we have $h'_{(\bk+1)} \leq h_{(\bk+1)} + 1$, and we must also have $h_i \leq h'_i$. Therefore, $h_i \leq h'_i \leq h'_{(\bk+1)} \leq h_{(\bk+1)} + 1$ as desired. 

The other claim follows symmetrically.

\end{proof}

We now show that the set difference between the domain under $\bbh$ and $\bbh'$ is no more than $\bk$ and the restricted sensitivity of the neighboring histograms

\begin{lemma}\label{lem:boundedDomain}
For any $\Delta$-restricted sensitivity neighboring histograms $\bbh,\bbh'$, and some fixed $\bk < d$, then we must have

\[
| \domainE{\bk}{\bbh} \setminus \domainE{\bk}{\bbh'} | \leq \min\{\Delta,\bk, d- \bk\}.
\]

\end{lemma}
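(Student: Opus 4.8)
The plan is to prove the three upper bounds $\bk$, $d-\bk$, and $\Delta$ separately and then take their minimum. The first two are immediate cardinality facts: since the tie-breaking is data-independent, both $\domainE{\bk}{\bbh}$ and $\domainE{\bk}{\bbh'}$ are well-defined sets of exactly $\bk$ indices, so $\domainE{\bk}{\bbh} \setminus \domainE{\bk}{\bbh'} \subseteq \domainE{\bk}{\bbh}$ gives the bound $\bk$, and $\domainE{\bk}{\bbh} \setminus \domainE{\bk}{\bbh'} \subseteq [d] \setminus \domainE{\bk}{\bbh'}$ gives the bound $d-\bk$.

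The substantive content is the bound $\Delta$. As in Lemmas~\ref{lem:thresSimilar} and~\ref{lem:outsideIntersection}, we may assume w.l.o.g. that $\bbh \geq \bbh'$ coordinatewise, since neighbors differ by the addition or deletion of one user's data. Let $D \defeq \{j \in [d] : h_j \neq h'_j\}$; by $\Delta$-restricted sensitivity, $|D| \leq \Delta$. I claim $\domainE{\bk}{\bbh} \setminus \domainE{\bk}{\bbh'} \subseteq D$, which finishes the proof. To prove the claim, fix $i \notin D$, so $h_i = h'_i$, with $i \in \domainE{\bk}{\bbh}$, and show $i \in \domainE{\bk}{\bbh'}$. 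Let $\prec$ denote the fixed, data-independent tie-break total order on $[d]$, and say $j$ \emph{outranks} $i$ under a histogram $\bbg$ if $g_j > g_i$, or $g_j = g_i$ and $j \prec i$; then $i \in \domainE{\bk}{\bbg}$ iff at most $\bk-1$ indices outrank $i$ under $\bbg$. It therefore suffices to show that every $j$ outranking $i$ under $\bbh'$ also outranks $i$ under $\bbh$: if $h'_j > h'_i$, then $h_j \geq h'_j > h'_i = h_i$; if $h'_j = h'_i$ and $j \prec i$, then $h_j \geq h'_j = h_i$, so either $h_j > h_i$ or $h_j = h_i$ with $j \prec i$. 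In all cases $j$ outranks $i$ under $\bbh$, hence at most $\bk-1$ indices outrank $i$ under $\bbh'$, so $i \in \domainE{\bk}{\bbh'}$, establishing the claim and thus $|\domainE{\bk}{\bbh}\setminus\domainE{\bk}{\bbh'}| \leq \min\{\Delta,\bk,d-\bk\}$.

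The only delicate point is the handling of ties: adding one user's data can permute the relative order of elements whose counts are equal, so one cannot reason purely about raw counts and must carry the data-independent order $\prec$ through the argument explicitly, checking that the monotonicity $\bbh \geq \bbh'$ combined with $h_i = h'_i$ is precisely what preserves the ``outranks'' relation; everything else is routine. I would also note in passing that since $|\domainE{\bk}{\bbh}| = |\domainE{\bk}{\bbh'}| = \bk$, we get $|\domainE{\bk}{\bbh}\setminus\domainE{\bk}{\bbh'}| = |\domainE{\bk}{\bbh'}\setminus\domainE{\bk}{\bbh}|$, so the same bound holds for the symmetric difference, which is the form later invoked when bounding the bad events.
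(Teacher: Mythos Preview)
Your proof is correct and follows essentially the same approach as the paper: both dispose of the $\bk$ and $d-\bk$ bounds by cardinality, assume w.l.o.g.\ $\bbh \geq \bbh'$, and then show that any $i \in \domainE{\bk}{\bbh} \setminus \domainE{\bk}{\bbh'}$ must have $h_i \neq h'_i$, with the equal-cardinality observation handling the other direction. Your treatment is in fact more careful than the paper's, which simply asserts that ``the position of index $i$ cannot have moved up the ordering from $\bbh'$ to $\bbh$'' without making the tie-break order $\prec$ explicit; your ``outranks'' argument spells this out rigorously.
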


\begin{proof}
By definition, we have $| \domainE{\bk}{\bbh} \setminus \domainE{\bk}{\bbh'} | \leq \min\{ \bk,d- \bk\}$, so we will assume $\Delta <\bk$ and show for $\Delta$.
We assume w.l.o.g. that $\bbh \geq \bbh'$, and because we know by construction that $|\domainE{\bk}{\bbh}| = \bk$ for any $\bbh$, then proving $| \domainE{\bk}{\bbh} \setminus \domainE{\bk}{\bbh'} | \leq \Delta$ will imply $| \domainE{\bk}{\bbh'} \setminus \domainE{\bk}{\bbh} | \leq \Delta$. It now suffices to show that for any $i \in \domainE{\bk}{\bbh} \setminus \domainE{\bk}{\bbh'}$ we must have $h_i > h'_i$. If $h_i = h'_i$ then the position of index $i$ cannot have moved up the ordering from $\bbh'$ to $\bbh$ because we assumed $\bbh \geq \bbh'$. Therefore, if $i \notin \domainE{\bk}{\bbh'}$ and $h_i = h'_i$ we must also have $i \notin \domainE{\bk}{\bbh}$.

\end{proof}

These properties will ultimately be critical in bounding the probability of indices outside of $ \domainE{\bk}{\bbh} \cap \domainE{\bk}{\bbh'} $ being output.  Note that we typically think of $\bk \ll d$, so we will eliminate $d-\bk$ from the minimum statement in Lemma~\ref{lem:boundedDomain} throughout the rest of the analysis.


\subsection{Limited Domain Peeling Exponential Mechanism}\label{sec:exp}
Our main procedure $\rT{k,\bk}$ is given in Algorithm~\ref{algo:randThresExp}, which involves adding Gumbel noise to each of the top-$\bk$  terms in the histogram we are given where  $\bk \geq k$.  Note that from Section~\ref{sect:existing} we know that our analysis can be done by considering the exponential mechanism instead of noise addition.

We now generalize the exponential mechanism we presented in Section~\ref{sect:existing}.

\begin{definition}[Limited Histogram Exponential Mechanism]\label{def:restrictedExp}
We define the Limited Histogram Exponential Mechanism for any $\bk \leq d$ to be $\hEM{\bk}: \mathbb{N}^{d} \times 2^{[d]} \rightarrow [d] \cup \{ \bot \}$ such that

\[
\Pr[\hEM{\bk}(\bbh,\cD) = i] = \frac{\exp(\diffp h_i)}{\exp(\diffp h_{\bot}) + \sum_{j \in \cD} \exp(\diffp h_j)}
\]
for all $i \in \cD \cup \{ \bot \}$ where $\cD \subseteq [d]$ and
\begin{equation}
h_{\bot} = h_{(\bk + 1)} + 1 + \ln(\min\{\bk,\Delta\}/\delta)/\diffp. 
\label{eq:h_bot}
\end{equation}

\end{definition}

From Lemma~\ref{lem:range_bounded_EM} we then have the following result due to the fact that the exponential mechanism is $\diffp$-DP and we are simply adding a new coordinate $\bot$ with count $h_{(\bk+1)} + 1 + \ln(\min\{\bk,\Delta\}/\delta)/\diffp$.
\begin{corollary}\label{cor:exp_range_bound}
For any fixed $\cD \subseteq [d]$ then $\hEM{\bk}(\cdot,\cD)$ is $\diffp$-range bounded and $\diffp$-DP.
\end{corollary}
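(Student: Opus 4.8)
The plan is to observe that $\hEM{\bk}(\cdot,\cD)$ is nothing but the standard exponential mechanism $\EM_q$ of Section~\ref{sect:existing}, instantiated over the augmented outcome set $\cY = \cD \cup \{\bot\}$ with quality score $q(\bbh,i) = h_i$ for $i \in \cD$ and $q(\bbh,\bot) = h_\bot$ as defined in \eqref{eq:h_bot}, and with $\Delta(q) = 1$; indeed, Definition~\ref{def:restrictedExp} makes $\Pr[\hEM{\bk}(\bbh,\cD) = i] \propto \exp(\diffp h_i) = \exp(\tfrac{\diffp}{\Delta(q)} q(\bbh,i))$ over $i \in \cY$. Note this is exactly why the statement fixes $\cD$: since $\cD$ does not depend on the data, the outcome set $\cD \cup \{\bot\}$ is identical for neighboring histograms, as the exponential mechanism requires. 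Given this identification, both conclusions follow from results already proved: $\diffp$-range-boundedness is the monotone case of Lemma~\ref{lem:range_bounded_EM}, and $\diffp$-DP then follows since any $\diffp$-range-bounded mechanism is $\diffp$-DP (equivalently, directly from the monotone case of Lemma~\ref{lem:standard_exp}).

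So the only real content is to verify that the augmented quality score $q$ on $\cD \cup \{\bot\}$ is (i) monotone in the dataset and (ii) of sensitivity at most $1$, since Lemmas~\ref{lem:standard_exp} and~\ref{lem:range_bounded_EM} apply to a generic monotone $q$ with $\Delta(q)$ normalizing the exponent. For the coordinates $i \in \cD$ this is standard: neighboring histograms satisfy $\|\bbh - \bbh'\|_\infty \le 1$, so $|h_i - h_i'| \le 1$, and adding a user's record can only weakly increase each count $h_i$. For the $\bot$ coordinate, write $h_\bot = h_{(\bk+1)} + c$ where $c = 1 + \ln(\min\{\bk,\Delta\}/\delta)/\diffp$ is a fixed, data-independent constant, which therefore affects neither property; the only data-dependent part is the order statistic $h_{(\bk+1)}$. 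Applying Lemma~\ref{lem:thresSimilar} at rank $\bk+1$ gives $|h_{(\bk+1)} - h_{(\bk+1)}'| \le 1$, so $q(\cdot,\bot)$ has sensitivity $1$; and since a single added record weakly increases every histogram coordinate, it weakly increases the $(\bk+1)$-th largest count as well (the coordinatewise monotonicity of order statistics, exactly the inequality $h_{(\bk)} \ge h_{(\bk)}'$ established inside the proof of Lemma~\ref{lem:thresSimilar}), so $q(\cdot,\bot)$ is monotone of the same increasing type as the other coordinates. Hence $q$ is a monotone quality score with $\Delta(q) = 1$.

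The only point needing care is this sensitivity/monotonicity check for the $\bot$ coordinate, in particular invoking Lemma~\ref{lem:thresSimilar} at rank $\bk+1$, which requires $\bk+1 \le d$; when $\bk = d$ one uses the convention $h_{(d+1)} = 0$, so $h_\bot$ is a constant and both properties are trivial. Everything else is bookkeeping: once $q$ is seen to be a monotone, sensitivity-$1$ quality score, Lemma~\ref{lem:range_bounded_EM} gives that $\EM_q = \hEM{\bk}(\cdot,\cD)$ is $\diffp$-range-bounded, and a fortiori $\diffp$-DP.
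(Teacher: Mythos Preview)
Your proposal is correct and takes essentially the same approach as the paper: the paper's justification is the single sentence preceding the corollary, which identifies $\hEM{\bk}(\cdot,\cD)$ as an exponential mechanism with an added $\bot$ coordinate and invokes Lemma~\ref{lem:range_bounded_EM}. You simply fill in the details the paper leaves implicit---in particular, the explicit verification via Lemma~\ref{lem:thresSimilar} that the $\bot$ coordinate's quality score $h_{(\bk+1)} + c$ inherits sensitivity $1$ and monotonicity---which is exactly the check needed to make the appeal to Lemma~\ref{lem:range_bounded_EM} rigorous.
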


In order to make our peeling algorithm $\prEM{k,\bk}$ in Algorithm~\ref{algo:peelingExp} equivalent to $\rT{k,\bk}$ in Algorithm~\ref{algo:randThresExp}, we will need to iterate over the same set of indices.  Recall how we defined the limited domain $\domain{k}{\bbh}$ in \eqref{eq:domain}.

\begin{lemma}\label{lem:peelingEQone}
For any input histogram $\bbh$, $\rT{k,\bk}(\bbh)$ and $\prEM{k,\bk}(\bbh,\domain{\bk}{\bbh})$ are equal in distribution.
\end{lemma}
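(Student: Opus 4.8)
The plan is to unfold the definition of $\prEM{k,\bk}$ as a peeling exponential mechanism over the augmented domain $\cD \cup \{\bot\}$ and then invoke the one-shot Gumbel equivalence of Lemma~\ref{lem:ExpGumbelOneShot}. Concretely, by Algorithm~\ref{algo:peelingExp} the mechanism $\prEM{k,\bk}(\bbh,\cD)$ repeatedly draws an outcome from $\hEM{\bk}(\bbh,\cD)$, removes the selected element from $\cD$, and iterates, stopping after $k$ draws or as soon as $\bot$ is drawn (returning the indices drawn so far, with $\bot$ appended in the latter case). By Definition~\ref{def:restrictedExp}, $\hEM{\bk}(\bbh,\cD)$ is exactly the exponential mechanism $\EM_q$ with quality score $q(\bbh,i) = h_i$ run over the finite outcome set $\cD \cup \{\bot\}$, where the auxiliary outcome $\bot$ carries the fixed, data-dependent value $h_\bot = h_{(\bk+1)} + 1 + \ln(\min\{\bk,\Delta\}/\delta)/\diffp$. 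The first thing to note is that neither $h_\bot$ nor the values $h_j$ for $j \in \cD$ change across peeling rounds — only the set of still-available indices shrinks — so $\prEM{k,\bk}$ is literally the peeling exponential mechanism $\pEM{k}_q$ applied to a fixed score vector indexed by $\cD \cup \{\bot\}$, truncated at the first time $\bot$ is selected.

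Next I would apply Lemma~\ref{lem:ExpGumbelOneShot} to this fixed score vector: $\pEM{k}_q$ over $\cD \cup \{\bot\}$ is equal in distribution to $\cM_\gum^k$, i.e.\ the mechanism that adds independent $\gum(1/\diffp)$ noise to each of $\{h_j : j \in \cD\}$ and to $h_\bot$, sorts the noisy values, and reports the top indices in rank order. Because Lemma~\ref{lem:ExpGumbelOneShot} couples the two mechanisms as distributions over \emph{ranked} outcome vectors, not merely over sets, the ``stop at the first $\bot$'' truncation on the peeling side corresponds exactly to ``report the prefix of the sorted noisy list up to the noisy value of $\bot$'' on the one-shot side, and the bookkeeping of appending $\bot$ when fewer than $k$ genuine indices precede it is identical in both descriptions. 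Specializing $\cD = \domain{\bk}{\bbh}$, equation~\eqref{eq:domain} says $\domain{\bk}{\bbh} = \{i_{(1)}, \dots, i_{(\bk)}\}$ under the fixed tie-breaking, so $\{h_j : j \in \cD\} = \{h_{(1)}, \dots, h_{(\bk)}\}$ and $h_\bot$ is precisely the threshold of Algorithm~\ref{algo:randThresExp}. Hence the one-shot description becomes: draw $v_{(j)} = h_{(j)} + \gum(1/\diffp)$ for $j \le \bk$ and $v_\bot = h_\bot + \gum(1/\diffp)$, sort $\{v_{(j)}\} \cup \{v_\bot\}$, and output the indices ranked above $v_\bot$, capped at $k$ with $\bot$ appended if there are fewer than $k$ — which is verbatim $\rT{k,\bk}(\bbh)$. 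Chaining the two equalities in distribution proves the lemma.

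The one place that needs care is the truncation bookkeeping together with the ranked-versus-set form of Lemma~\ref{lem:ExpGumbelOneShot}: I must check that ``peel until $\bot$'' on one side and ``take the sorted prefix before $v_\bot$'' on the other induce the same distribution on these variable-length outputs. This follows because Lemma~\ref{lem:ExpGumbelOneShot} already matches the full ranked orderings, so the (deterministic, measurable) operation of truncating a ranked list at the position of $\bot$ is the same function of the output on both sides, and pushing the coupled distributions through it preserves equality. Everything else is just translating notation between Definition~\ref{def:restrictedExp}/Algorithm~\ref{algo:peelingExp} and Algorithm~\ref{algo:randThresExp}.
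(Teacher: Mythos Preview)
Your proposal is correct and follows essentially the same approach as the paper's proof: both reduce to the one-shot Gumbel equivalence (Lemma~\ref{lem:ExpGumbelOneShot}) applied to the score vector $(h_j : j \in \domain{\bk}{\bbh} \cup \{\bot\})$, and then observe that the ``stop at $\bot$'' rule is a deterministic function of the ranked output. Your write-up is considerably more explicit about the truncation bookkeeping than the paper's three-sentence argument, but the content is the same.
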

\begin{proof}
Note that both $\rT{k,\bk}(\bbh)$ and $\prEM{k,\bk}(\bbh,\domain{\bk}{\bbh})$ will only consider terms in $\domain{\bk}{\bbh}$ to add to the output.   We showed in Lemma~\ref{lem:ExpGumbelOneShot} that adding Gumbel noise to all counts in a histogram, in this case $(h_j : j \in \domain{\bk}{\bbh} \cup \{ \bot\})$, and taking the largest $k$ is equivalent to peeling the exponential mechanism to return the largest count $k$ times.  Lastly, if we select $\bot$ as one of the indices, then we do not return any other indices with smaller count than $h_\bot$.
\end{proof}

\begin{algorithm}[h!]
	\caption{$\prEM{k,\bk}$; Peeling Exponential Mechanism version of Algorithm~\ref{algo:randThresExp} }
	\begin{algorithmic}
		\State \textbf{Input:} Histogram $\bbh$, subset $\cD \subseteq [d]$ of indices; privacy parameters $\diffp,\delta$.
		\State \textbf{Output:} Ordered set of indices.
		\State Set $\cI = \emptyset$
		\While {$|\cI| < k$}
			\State Set $ o = \hEM{\bk}(\bbh, \cD \setminus \cI)$
			\If {$o = \bot$}
				\State Let $\cI \leftarrow \cI \cup \{o\}$ \#concatenate $o$ to $\cI$ to retain the order
				\State Return $\cI$ 
			\Else
				\State Let $\cI \leftarrow \cI \cup\{o\}$ \#concatenate $o$ to $\cI$ to retain the order
			\EndIf
		\EndWhile
		\State Return $\cI$	
	\end{algorithmic}\label{algo:peelingExp}
\end{algorithm}

\begin{corollary}\label{cor:peelingDP}
For any fixed $\cD \subseteq [d]$ and neighboring histograms $\bbh,\bbh'$, we have that $\prEM{k,\bk}(\cdot, \cD)$ is $(\diffp',\delta')$ for any $\delta' \geq 0$ where $\diffp'$ is given in \eqref{eq:compComp}.
\end{corollary}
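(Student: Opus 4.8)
The plan is to recognize $\prEM{k,\bk}(\cdot,\cD)$ as (a post-processing of) an adaptive composition of $k$ range-bounded mechanisms and then invoke Lemma~\ref{lem:compBoundedRange}. Fix the data-independent domain $\cD$. Inside Algorithm~\ref{algo:peelingExp}, step $i$ invokes $\hEM{\bk}(\cdot,\cD\setminus\cI_{i-1})$, where $\cI_{i-1}=(o_1,\ldots,o_{i-1})$ is the ordered set of outcomes seen so far. Since $\cD$ is fixed and $\cI_{i-1}$ depends only on the previous outputs, the domain $\cD\setminus\cI_{i-1}$ handed to the $i$-th call is a deterministic function of the previous \emph{outputs} and does not otherwise depend on the input histogram; this is exactly the adaptive-choice setting permitted by the composition theorems. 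By Corollary~\ref{cor:exp_range_bound}, for every fixed $\cD'\subseteq[d]$ the mechanism $\hEM{\bk}(\cdot,\cD')$ is $\diffp$-range-bounded.

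To remove the complication that $\prEM{k,\bk}$ can terminate early upon outputting $\bot$, consider the ``extended'' process that keeps invoking $\hEM{\bk}(\cdot,\cD\setminus\cI_{i-1})$ for all $k$ steps regardless of whether $\bot$ has already appeared (this is well defined since $\bot\notin[d]$, so deleting it from $\cD$ changes nothing). This extended process is an honest adaptive composition of exactly $k$ mechanisms, each $\diffp$-range-bounded, and the true output of $\prEM{k,\bk}(\cdot,\cD)$ is recovered from the length-$k$ transcript $(o_1,\ldots,o_k)$ by the deterministic map that returns the prefix up to and including the first $\bot$ (or all of $(o_1,\ldots,o_k)$ if none is $\bot$). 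Since differential privacy is closed under post-processing, it suffices to bound the privacy of the extended composition.

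Applying Lemma~\ref{lem:compBoundedRange} with $t=k$ and $\diffp_i=\diffp$ for every $i$ gives, for any $\delta'\ge 0$, an $(\diffp''(\delta'),\delta')$-DP guarantee with $\diffp''(\delta')$ equal to the minimum in \eqref{eq:compBetter}; substituting $\sum_i\diffp_i=k\diffp$, $\sum_i\diffp_i\cdot\tfrac{e^{\diffp_i}-1}{e^{\diffp_i}+1}=k\diffp\cdot\tfrac{e^{\diffp}-1}{e^{\diffp}+1}$, and $\sum_i\diffp_i^2=k\diffp^2$ turns this into precisely the expression $\diffp'$ in \eqref{eq:compComp}. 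Post-processing closure then yields that $\prEM{k,\bk}(\cdot,\cD)$ is $(\diffp',\delta')$-DP. It is worth noting that, unlike Theorem~\ref{thm:rT_DP}, no additive $\delta$ appears here: the $\delta$ term in that theorem accounts solely for outcomes in $\domain{\bk}{\bbh}\setminus\domain{\bk}{\bbh'}$, i.e.\ for the domain changing between neighbors, which cannot occur once $\cD$ is held fixed.

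I expect the only real subtlety to be the bookkeeping above --- checking that the per-step domain is chosen adaptively but data-independently given the history, and that early termination on $\bot$ leaks nothing beyond the released transcript --- after which the result is a mechanical specialization of Lemma~\ref{lem:compBoundedRange}. Everything else (the range-boundedness of each exponential-mechanism step from Corollary~\ref{cor:exp_range_bound}, and the arithmetic reducing \eqref{eq:compBetter} to \eqref{eq:compComp}) is already in hand.
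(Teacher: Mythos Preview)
Your proposal is correct and matches the paper's intended reasoning: the corollary is stated without proof precisely because it is meant to follow immediately from Corollary~\ref{cor:exp_range_bound} (each $\hEM{\bk}(\cdot,\cD')$ is $\diffp$-range-bounded for fixed $\cD'$) together with Lemma~\ref{lem:compBoundedRange} applied to the $k$ adaptive calls. Your explicit handling of early termination via an extended $k$-step process plus post-processing is a clean way to make rigorous what the paper leaves implicit.
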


We will now fix two neighboring histograms $\bbh,\bbh'$, and separate out our outcome space into bad events for $\bbh$ and $\bbh'$. In particular, these will just be outputs that contain some index in the top-$\bk$ for one, but not in the top-$\bk$ for the other.

\begin{definition}
For any neighboring histograms $\bbh,\bbh'$, then we define $\cS$ as the outcome set of $\prEM{k,\bk}(\bbh,\domain{\bk}{\bbh})$ and the outcome set of  $\prEM{k,\bk}(\bbh',\domain{\bk}{\bbh'})$ as $\cS'$.

We then define the \emph{bad outcomes} as 

\[
\cS^\delta \defeq \cS \setminus \cS' \qquad \text{ and } \qquad \cS'^\delta \defeq \cS' \setminus \cS
\]
\label{defn:eps_delt_sets}
\end{definition}

The bulk of the heavy lifting will then be done by the following two lemmas that bound the bad events, and also give a simpler way to compare the good events in terms of pure differential privacy. For bounding the bad events, we need to upper bound the probability of outputting an index in $\domain{k}{\bbh} \setminus \domain{k}{\bbh'}$. If we consider one call to the exponential mechanism, then we could obtain an upper bound on the probability of outputting a given index in $\domain{k}{\bbh} \setminus \domain{k}{\bbh'}$, by restricting the possible outputs to just that index and $\bot$. This will then give us the bound of $\delta$. However, applying this over the possible $k$ iterative calls will give a bound of $k \delta$, so we will instead need a slightly more sophisticated argument that accounts for the fact that the iterative process terminates whenever $\bot$ is output.

\begin{lemma}\label{lem:mainDeltBoundRest}
For any neighboring histograms $\bbh,\bbh'$, 
\[
\Pr[\prEM{k,\bk}(\bbh,\domainE{\bk}{\bbh}) \in \cS^\delta] \leq {\delta}
\]

\end{lemma}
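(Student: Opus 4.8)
The plan is to bound the probability that $\prEM{k,\bk}(\bbh,\domainE{\bk}{\bbh})$ outputs a sequence containing some index in $D \defeq \domainE{\bk}{\bbh}\setminus\domainE{\bk}{\bbh'}$, since every outcome in $\cS^\delta$ must contain at least one such ``bad'' index (an outcome lying in $\cS\setminus\cS'$ cannot consist only of indices in $\domainE{\bk}{\bbh}\cap\domainE{\bk}{\bbh'}$ together with $\bot$, because such an outcome is also reachable under $\bbh'$). By Lemma~\ref{lem:boundedDomain} we have $|D|\le\min\{\Delta,\bk\}$, so by a union bound it suffices to show that for each fixed $i\in D$, the probability that the peeling process ever selects $i$ is at most $\delta/\min\{\Delta,\bk\}$.

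The key step is to control $\Pr[\,i \text{ is selected at some iteration}\,]$ for a single fixed $i \in D$. I would condition on the iteration at which $i$ is first eligible to be compared; at each iteration $t$ in which $i$ has not yet been peeled and $\bot$ has not yet been output, the conditional probability that this iteration selects $i$ rather than $\bot$ is at most
\[
\frac{\exp(\diffp h_i)}{\exp(\diffp h_i) + \exp(\diffp h_\bot)} \le \exp\bigl(\diffp(h_i - h_\bot)\bigr),
\]
using only that the exponential-mechanism probability of $i$ restricted to the two-element subset $\{i,\bot\}$ dominates... wait, more carefully: among the full remaining domain, $\Pr[o=i] / \Pr[o \in \{i,\bot\}] = \exp(\diffp h_i)/(\exp(\diffp h_i)+\exp(\diffp h_\bot))$, independent of the other counts. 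Now by Lemma~\ref{lem:outsideIntersection}, $h_i \le h_{(\bk+1)}+1$, and by the definition of $h_\bot$ in \eqref{eq:h_bot}, $h_\bot = h_{(\bk+1)} + 1 + \ln(\min\{\Delta,\bk\}/\delta)/\diffp$, so $\diffp(h_i - h_\bot) \le -\ln(\min\{\Delta,\bk\}/\delta)$, giving $\exp(\diffp(h_i-h_\bot)) \le \delta/\min\{\Delta,\bk\}$.

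The remaining obstacle is turning this per-iteration bound into a bound on the overall probability that $i$ is ever selected, given that there are up to $k$ iterations. The clean way is to observe that $i$ is selected on some iteration only if, on the first iteration where the remaining domain is $\{i,\bot\}$ together with at most lower-priority matters — no, better: let $T$ be the (random) first iteration at which the process outputs an element of $\{i,\bot\}$ (this is well-defined since the process must eventually output $\bot$ or exhaust the domain, and once only $i$ and $\bot$-type comparisons... ). Actually the cleanest framing: consider the event $E_i$ that $i$ appears in the output. On $E_i$, let $t$ be the iteration where $i$ is chosen; at that iteration $\bot$ has not yet been chosen (else the process would have halted), so $i$ and $\bot$ are both in the active domain, and by the displayed inequality the conditional probability of choosing $i$ over $\bot$ at that moment is at most $\delta/\min\{\Delta,\bk\}$. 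Summing the law of total probability over the disjoint events ``$i$ chosen at iteration $t$ and $\bot$ not yet chosen,'' each such event is contained in ``reach iteration $t$ with $i,\bot$ active, then pick $i$ from $\{i,\bot\}$,'' and these pre-conditions are disjoint across $t$ (they partition by the first time an element of $\{i,\bot\}$ is emitted); hence $\Pr[E_i] \le \delta/\min\{\Delta,\bk\}$, because the branch ``emit $i$ first'' versus ``emit $\bot$ first'' at that decisive moment has probability of the former at most $\delta/\min\{\Delta,\bk\}$ of the total. I expect the main subtlety to be this last bookkeeping — making precise that the $k$ iterations do not cost a factor of $k$, which works precisely because emitting $\bot$ terminates the process, so $i$ gets only a single ``real chance'' against $\bot$. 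Finally, a union bound over the $\le\min\{\Delta,\bk\}$ indices in $D$ yields $\Pr[\prEM{k,\bk}(\bbh,\domainE{\bk}{\bbh})\in\cS^\delta]\le\delta$.
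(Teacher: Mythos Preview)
Your proposal is correct, and the stopping-time reasoning you sketch (partition by the first iteration at which an element of $\{i,\bot\}$ is emitted, then use that the conditional probability of $i$ versus $\bot$ at that moment is at most $\delta/\min\{\Delta,\bk\}$, so $\Pr[E_i]\le\delta/\min\{\Delta,\bk\}$) goes through cleanly once written out carefully.

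Your route differs from the paper's. The paper does \emph{not} fix a single bad index and union-bound; instead it treats the whole bad set $T=\domainE{\bk}{\bbh}\setminus\domainE{\bk}{\bbh'}$ at once and proves by induction on $k$ the algebraic inequality
\[
\Pr\bigl[\prEM{k,\bk}(\bbh,\cD)\in\cT\bigr]\;\le\;\Pr\bigl[\hEM{\bk}(\bbh,T)\neq\bot\bigr]
=\frac{\sum_{i\in T}e^{\diffp h_i}}{e^{\diffp h_\bot}+\sum_{i\in T}e^{\diffp h_i}},
\]
i.e.\ the $k$-round peeling process is no more likely to hit $T$ than a single exponential-mechanism draw restricted to $T\cup\{\bot\}$. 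Then one bound using Lemmas~\ref{lem:outsideIntersection} and~\ref{lem:boundedDomain} finishes. Your per-index stopping-time argument is arguably more probabilistically transparent and avoids the induction; the paper's collective argument is slightly tighter numerically (it yields $\tfrac{\delta|T|/m}{1+\delta|T|/m}$ rather than $\delta|T|/m$, where $m=\min\{\Delta,\bk\}$) and packages the ``no factor of $k$'' phenomenon into a clean reusable lemma, but both land on the same $\le\delta$ conclusion.
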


We defer the proof to Appendix~\ref{app:proofs}. The next lemma will give us a clean way to compare the good events, that will mainly be due to the fact that conditional probabilities are simpler to work with in the exponential mechanism. More specifically, if we consider the rejection sampling scheme of redrawing when we see a bad event, then the resulting probability distribution is actually equivalent to simply restricting our domain to $\domain{k}{\bbh} \cap \domain{k}{\bbh'}$, the set of indices in the top-$\bk$ for both histograms. This will then allow us to compare the probability distributions of both histograms outputting from the same domain.

\begin{lemma}\label{lem:mainEpsBoundRest}
For any neighboring histograms $\bbh,\bbh'$, such that $\cD^{\diffp} = \domainE{\bk}{\bbh'} \cap \domainE{\bk}{\bbh}$, then we have that for any $o \in \cS\cap \cS'$

\[
\Pr[\prEM{k,\bk}(\bbh,\domainE{\bk}{\bbh}) = o] = \Pr[\prEM{k,\bk}(\bbh,\cD^{\diffp}) = o] \cdot \Pr[\prEM{k,\bk}(\bbh,\domainE{\bk}{\bbh}) \notin \cS^\delta] 
\]

\end{lemma}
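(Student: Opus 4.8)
The plan is to unwind both peeling mechanisms into products of one-step $\hEM{\bk}$ probabilities and to show that conditioning the full-domain run on never selecting an index outside $\cD^\diffp$ reproduces exactly the restricted-domain run; the lemma is then a rearrangement of this conditioning identity. As a first step I would pin down the three outcome sets. Since $\prEM{k,\bk}(\bbh,\cdot)$ outputs only sequences of distinct indices drawn from its domain argument, possibly terminated by $\bot$, and since $\bk\geq k$ so the domain never runs out, an outcome $o$ lies in $\cS\cap\cS'$ precisely when every non-$\bot$ coordinate of $o$ lies in $\cD^\diffp = \domainE{\bk}{\bbh}\cap\domainE{\bk}{\bbh'}$, while $o\in\cS^\delta=\cS\setminus\cS'$ precisely when $o\in\cS$ and at least one coordinate of $o$ lies in the extra set $B\defeq\domainE{\bk}{\bbh}\setminus\domainE{\bk}{\bbh'}$. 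In particular, for a good outcome $o\in\cS\cap\cS'$ the event $\{\prEM{k,\bk}(\bbh,\domainE{\bk}{\bbh})=o\}$ is contained in the event $\{\prEM{k,\bk}(\bbh,\domainE{\bk}{\bbh})\notin\cS^\delta\}$, which is exactly the event that no index of $B$ is ever peeled.

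Next, for a good outcome $o=(i_1,\dots,i_j)$ (with $i_j$ possibly $\bot$) I would write, for either domain $\cD\in\{\domainE{\bk}{\bbh},\cD^\diffp\}$,
\[
\Pr[\prEM{k,\bk}(\bbh,\cD)=o] \;=\; \prod_{\ell=1}^{j}\frac{\exp(\diffp h_{i_\ell})}{\exp(\diffp h_\bot) + \sum_{m\in\cD\setminus\{i_1,\dots,i_{\ell-1}\}}\exp(\diffp h_m)},
\]
reading $\exp(\diffp h_{i_j})\defeq\exp(\diffp h_\bot)$ when $i_j=\bot$. The key observation is that $h_\bot$ in \eqref{eq:h_bot} depends on $\bbh$ only through $h_{(\bk+1)}$ and is independent of the domain argument, so for each step $\ell$ the two denominators differ by exactly $W\defeq\sum_{m\in B}\exp(\diffp h_m)$ — a strictly positive constant that does not depend on $\ell$ or on $o$, since none of $i_1,\dots,i_{\ell-1}$ lies in $B$ when $o$ is good. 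Hence at every peeling step the one-step law $\hEM{\bk}(\bbh,\cD^\diffp\setminus\{i_1,\dots,i_{\ell-1}\})$ equals the one-step law $\hEM{\bk}(\bbh,\domainE{\bk}{\bbh}\setminus\{i_1,\dots,i_{\ell-1}\})$ conditioned on the drawn element not lying in $B$.

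Finally I would assemble these one-step identities by induction on the realized output length, peeling off $i_1$ and recursing on $\prEM{k-1,\bk}(\bbh,\cdot\setminus\{i_1\})$ (note that $\bk$, hence $h_\bot$, is held fixed along the recursion while only the domain shrinks); this shows that the law of $\prEM{k,\bk}(\bbh,\domainE{\bk}{\bbh})$ conditioned on never peeling an element of $B$ coincides with the law of $\prEM{k,\bk}(\bbh,\cD^\diffp)$. Combining with the containment from the first paragraph gives, for $o\in\cS\cap\cS'$,
\[
\Pr[\prEM{k,\bk}(\bbh,\domainE{\bk}{\bbh})=o] = \Pr[\prEM{k,\bk}(\bbh,\domainE{\bk}{\bbh})=o \mid \prEM{k,\bk}(\bbh,\domainE{\bk}{\bbh})\notin\cS^\delta]\cdot\Pr[\prEM{k,\bk}(\bbh,\domainE{\bk}{\bbh})\notin\cS^\delta],
\]
and the conditional probability equals $\Pr[\prEM{k,\bk}(\bbh,\cD^\diffp)=o]$, which is the claim. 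Equivalently and more computationally, one checks that the product $\prod_{\ell=1}^{j} Z_\ell^{\mathrm{res}}/Z_\ell^{\mathrm{full}}$ of denominator ratios telescopes to $\Pr[\prEM{k,\bk}(\bbh,\domainE{\bk}{\bbh})\notin\cS^\delta]$, independently of the particular good $o$. I expect the main obstacle to be the bookkeeping for the variable-length, $\bot$-terminated outputs: making the recursion precise, keeping $h_\bot$ (and thus $\bk$) fixed while the domain shrinks, and rigorously justifying that conditioning the multi-step process on the global event ``$B$ never drawn'' factors step by step. The probabilistic content — that deleting the $B$-coordinates from the domain is the same as rejection-sampling them away — is clean once this scaffolding is in place.
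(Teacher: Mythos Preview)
Your proposal is correct and follows essentially the same route as the paper: set $T=B=\domainE{\bk}{\bbh}\setminus\domainE{\bk}{\bbh'}$, use that for a good outcome $\{o\}\subseteq\{\text{no index of }B\text{ ever drawn}\}=\{\notin\cS^\delta\}$, factor the conditional probability into one-step conditionals, and observe that conditioning a single $\hEM{\bk}$ step on avoiding $B$ is identical to deleting $B$ from its domain. The paper packages exactly these two observations as separate lemmas (a step-by-step factorization of the conditional law, and the one-step identity $\Pr[\hEM{\bk}(\bbh,\cD)=j\mid\hEM{\bk}(\bbh,\cD)\notin T]=\Pr[\hEM{\bk}(\bbh,\cD\setminus T)=j]$) and then combines them in a corollary; your sketch merges them but is the same argument. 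One small caveat: your ``equivalently, the product $\prod_\ell Z_\ell^{\mathrm{res}}/Z_\ell^{\mathrm{full}}$ telescopes to $\Pr[\notin\cS^\delta]$'' is true but does not literally telescope and is not a priori independent of $o$ (its length and removed indices vary with $o$); that independence is exactly what the conditioning argument establishes, so don't present it as a separate computational shortcut.
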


We defer the proof to Appendix~\ref{app:proofs}. This lemma does not immediately give us pure differential privacy on outcomes in $\cS \cap \cS'$ because while we will be able to compare $\Pr[\prEM{k,\bk}(\bbh,\cD^{\diffp}) = o] $ and $\Pr[\prEM{k,\bk}(\bbh',\cD^{\diffp}) = o] $ using Corollary~\ref{cor:peelingDP}, we still need to account for $\Pr[\prEM{k,\bk}(\bbh,\domainE{\bk}{\bbh}) \notin \cS^\delta] $ which we know is at least $1 - \delta$. This will give us a reasonably simple way to achieve a bound of $2\delta$ on the total variation distance, but with some additional work we can eliminate the factor of two. In particular, we will use the following general result in the proof of our main result. 

\begin{claim}\label{claim:delt_event_helper}
For any $\delta_1 \in [0,1]$ and $\delta_2 \in [0,1)$, and any non-negative $x \leq 1 - \delta_2$, we have that

\[
x\frac{1 - \delta_1}{1 - \delta_2} + \delta_1 \leq x + \max\{\delta_1,\delta_2\} 
\]
\end{claim}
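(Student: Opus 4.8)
The plan is to subtract $x$ from both sides and reduce the claim to controlling a single term whose sign is governed by the sign of $\delta_2 - \delta_1$. Concretely, the inequality is equivalent to
\[
x\left( \frac{1-\delta_1}{1-\delta_2} - 1 \right) \leq \max\{\delta_1,\delta_2\} - \delta_1,
\]
and since $\delta_2 \in [0,1)$ we have $1 - \delta_2 > 0$, so the left-hand factor simplifies cleanly to $\frac{1-\delta_1}{1-\delta_2} - 1 = \frac{\delta_2 - \delta_1}{1-\delta_2}$. Thus it suffices to prove
\[
x \cdot \frac{\delta_2 - \delta_1}{1-\delta_2} \leq \max\{\delta_1,\delta_2\} - \delta_1.
\]

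Then I would split into two cases. If $\delta_1 \geq \delta_2$, the right-hand side is $0$, while the left-hand side is a product of the nonnegative quantities $x$ and $\tfrac{1}{1-\delta_2}$ with the nonpositive quantity $\delta_2 - \delta_1$, hence is $\leq 0$; this settles the case. If instead $\delta_1 < \delta_2$, the right-hand side equals $\delta_2 - \delta_1 > 0$, so after dividing both sides by $\delta_2 - \delta_1$ the desired inequality becomes $\frac{x}{1-\delta_2} \leq 1$, i.e. $x \leq 1 - \delta_2$, which is precisely the hypothesis on $x$.

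There is essentially no obstacle here beyond bookkeeping: the only point requiring care is that the division by $1-\delta_2$ (and the sign reasoning) relies on $1 - \delta_2 > 0$, which is guaranteed by the assumption $\delta_2 \in [0,1)$; the boundary value $\delta_2 = 1$ is excluded by hypothesis precisely so this step is legitimate. The assumption $\delta_1 \in [0,1]$ is not actually needed for the argument, but does no harm.
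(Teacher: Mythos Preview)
Your proof is correct and follows essentially the same approach as the paper: both arguments reduce the inequality by a simple algebraic manipulation (you subtract $x$ and simplify the fraction; the paper multiplies through by $1-\delta_2$ and cancels) and then perform the identical case split on whether $\delta_1 \geq \delta_2$ or $\delta_1 < \delta_2$, with the latter case reducing to the hypothesis $x \leq 1-\delta_2$. Your presentation is arguably a touch cleaner, but the substance is the same.
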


\begin{proof}
Multiplying each term by $(1 - \delta_2)$ and cancelling like terms gives the equivalent inequality of 

\[
\delta_2 x + \delta_1(1 - \delta_2) \leq \delta_1 x + \max\{\delta_1,\delta_2\}(1 - \delta_2)
\]

If $\delta_1 \geq \delta_2$, then $\delta_2 x \leq \delta_1 x$ and we are done. If $\delta_1 < \delta_2$, then our inequality reduces to 

\[
\delta_2 x + \delta_1(1 - \delta_2) \leq \delta_1 x + \delta_2(1 - \delta_2)
\]

Rearranging terms we get this is equivalent to 

\[
\delta_1(1 - \delta_2 - x) \leq \delta_2 ( 1 - \delta_2 - x)
\]
which holds because we assumed $x \leq 1 - \delta_2$.
\end{proof}

We now combine these lemmas and claim to provide our main result of this section, and we will then show how Theorem~\ref{thm:rT_DP} immediately follows from this lemma.
\begin{lemma}
For any neighboring histograms $\bbh,\bbh'$ and for any $S \subseteq \cS$, we have that 

\[
\Pr[\prEM{k,\bk}(\bbh,\domain{\bk}{\bbh}) \in S] \leq e^{\diffp'} \Pr[\prEM{k,\bk}(\bbh',\domain{\bk}{\bbh'}) \in S] + \delta + \delta'
\]

for any $\delta' \geq 0$, where $\diffp' = \min\{k\diffp, \sqrt{2k\ln(1/\delta')}\diffp + k\diffp(e^\diffp - 1)/(e^\diffp + 1) , \sqrt{\tfrac{k}{2}\ln(1/\delta')}\diffp + k\diffp^2/2 \}$.
\label{lem:mainAlgoDP}
\end{lemma}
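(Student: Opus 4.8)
The plan is to partition the target event $S\subseteq\cS$ according to whether an outcome is ``bad'', control each piece with one of the three ingredients (Lemma~\ref{lem:mainDeltBoundRest}, Lemma~\ref{lem:mainEpsBoundRest}, Corollary~\ref{cor:peelingDP}), and glue them together with Claim~\ref{claim:delt_event_helper} so as not to lose a factor of two in $\delta$. Abbreviate $A=\prEM{k,\bk}(\bbh,\domain{\bk}{\bbh})$ and $A'=\prEM{k,\bk}(\bbh',\domain{\bk}{\bbh'})$, and let $B=\prEM{k,\bk}(\bbh,\cD^\diffp)$, $B'=\prEM{k,\bk}(\bbh',\cD^\diffp)$ for the common domain $\cD^\diffp=\domainE{\bk}{\bbh}\cap\domainE{\bk}{\bbh'}$. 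Put $\delta_1=\Pr[A\in\cS^\delta]$ and $\delta_2=\Pr[A'\in\cS'^\delta]$; by Lemma~\ref{lem:mainDeltBoundRest}, and by the same lemma applied to the ordered pair $(\bbh',\bbh)$ (whose bad set $\cS'\setminus\cS$ is exactly $\cS'^\delta$), we have $\delta_1,\delta_2\le\delta$. Since $S\subseteq\cS$, one checks $S\setminus\cS^\delta=S\cap\cS'\subseteq\cS\cap\cS'$, hence $\Pr[A\in S]\le\Pr[A\in\cS^\delta]+\Pr[A\in S\setminus\cS^\delta]=\delta_1+\Pr[A\in S\setminus\cS^\delta]$.

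For the second term I first note that the support of $B$ (and of $B'$) equals $\cS\cap\cS'$: a run on $\cD^\diffp$ can only emit ordered tuples of indices lying in both $\domain{\bk}{\bbh}$ and $\domain{\bk}{\bbh'}$, and conversely any index of a tuple in $\cS\cap\cS'$ lies in $\cD^\diffp$. Applying Lemma~\ref{lem:mainEpsBoundRest} pointwise on $\cS\cap\cS'$ and summing over $S\setminus\cS^\delta$ gives $\Pr[A\in S\setminus\cS^\delta]=(1-\delta_1)\Pr[B\in S\setminus\cS^\delta]$; Corollary~\ref{cor:peelingDP} on the fixed domain $\cD^\diffp$ (with $\diffp'$ as in \eqref{eq:compComp}) gives $\Pr[B\in S\setminus\cS^\delta]\le e^{\diffp'}\Pr[B'\in S\setminus\cS^\delta]+\delta'$; and the $(\bbh\leftrightarrow\bbh')$ form of Lemma~\ref{lem:mainEpsBoundRest} gives $\Pr[A'=o]=(1-\delta_2)\Pr[B'=o]$ on $\cS\cap\cS'$, so $\Pr[B'\in S\setminus\cS^\delta]=\Pr[A'\in S\setminus\cS^\delta]/(1-\delta_2)$. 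Chaining these three facts (and bounding $1-\delta_1\le1$ against the $\delta'$ term) yields
\[
\Pr[A\in S]\ \le\ \delta_1+\frac{1-\delta_1}{1-\delta_2}\,e^{\diffp'}\Pr[A'\in S\setminus\cS^\delta]+\delta'.
\]

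To conclude, set $x=e^{\diffp'}\Pr[A'\in S\setminus\cS^\delta]\ge 0$. If $x>1-\delta_2$, the claim is immediate: then $e^{\diffp'}\Pr[A'\in S]\ge x>1-\delta_2\ge1-\delta$, so $e^{\diffp'}\Pr[A'\in S]+\delta+\delta'>1\ge\Pr[A\in S]$. Otherwise $x\le 1-\delta_2$ and Claim~\ref{claim:delt_event_helper} (with this $x$, $\delta_1$, $\delta_2$) gives $\delta_1+\frac{1-\delta_1}{1-\delta_2}x\le x+\max\{\delta_1,\delta_2\}$; since $\Pr[A'\in S\setminus\cS^\delta]\le\Pr[A'\in S]$ and $\max\{\delta_1,\delta_2\}\le\delta$, this is at most $e^{\diffp'}\Pr[A'\in S]+\delta+\delta'$, proving the lemma. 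Theorem~\ref{thm:rT_DP} then follows from Lemma~\ref{lem:peelingEQone}, which identifies $\rT{k,\bk}(\cdot)$ in distribution with $\prEM{k,\bk}(\cdot,\domain{\bk}{\bbh})$ (extending to arbitrary $S$ by restricting to $\cS$).

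The crux — and the reason Claim~\ref{claim:delt_event_helper} is needed — is the $1/(1-\delta_2)$ inflation created when we undo the restriction to $\cD^\diffp$ on the $\bbh'$ side: naively this turns the $(1-\delta_1)/(1-\delta_2)$ factor into something like $1+\delta$, leaving a $2\delta$ additive slack, whereas the claim bounds the excess over $x$ by $\max\{\delta_1,\delta_2\}\le\delta$. Two bookkeeping points also deserve attention: verifying that the swapped-argument versions of Lemmas~\ref{lem:mainDeltBoundRest} and~\ref{lem:mainEpsBoundRest} are legitimate (the ``bad set'' and the common domain transform correctly under $\bbh\leftrightarrow\bbh'$), and the identification of $\cS\cap\cS'$ with the support of the runs on $\cD^\diffp$, which is what makes the event $\{B\in S\setminus\cS^\delta\}$ and the pointwise identities meaningful.
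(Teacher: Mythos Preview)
Your proof is correct and follows essentially the same route as the paper's: split $S$ along $\cS^\delta$, apply Lemma~\ref{lem:mainEpsBoundRest} on each side to pass to the common domain $\cD^\diffp$, invoke Corollary~\ref{cor:peelingDP} there, and use Claim~\ref{claim:delt_event_helper} together with Lemma~\ref{lem:mainDeltBoundRest} to avoid the $2\delta$ loss. The only cosmetic difference is how the hypothesis $x\le 1-\delta_2$ of Claim~\ref{claim:delt_event_helper} is secured: the paper carries the trivial bound $\Pr[B\in S^\diffp]\le 1$ through as a $\min\{1,\cdot\}$, which after multiplying by $(1-\delta_{\bbh'})$ forces $x\le 1-\delta_{\bbh'}$ automatically, whereas you dispose of the complementary case $x>1-\delta_2$ directly (and correctly) before invoking the claim.
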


\begin{proof}

We will first separate $S$ such that $S^\delta = S \cap \cS^{\delta} $ and $S^\diffp = S \setminus S^{\delta}$ . For ease of notation, we will let 

\[
\Pr[\prEM{k,\bk}(\bbh,\domainE{\bk}{\bbh}) \in \cS^\delta] = \delta_{\bbh} \qquad \text{ and } \qquad \Pr[\prEM{k,\bk}(\bbh',\domainE{\bk}{\bbh'}) \in \cS'^\delta] = \delta_{\bbh'}
\]

This then implies

\begin{multline*}
\Pr[\prEM{k,\bk}(\bbh,\domain{\bk}{\bbh}) \in S] = \Pr[\prEM{k,\bk}(\bbh,\domain{\bk}{\bbh}) \in S^\diffp] + \Pr[\prEM{k,\bk}(\bbh,\domain{\bk}{\bbh}) \in S^\delta] \\
\leq \Pr[\prEM{k,\bk}(\bbh,\domain{\bk}{\bbh}) \in S^\diffp] + \delta_{\bbh}
\end{multline*}

Applying Lemma~\ref{lem:mainEpsBoundRest}, with $\cD^{\diffp} = \domainE{\bk}{\bbh'} \cap \domainE{\bk}{\bbh}$, and the fact that $\Pr[\prEM{k,\bk}(\bbh,\domainE{\bk}{\bbh}) \in \cS^\delta] + \Pr[\prEM{k,\bk}(\bbh,\domainE{\bk}{\bbh}) \notin \cS^\delta] = 1$, we have

\[
\Pr[\prEM{k,\bk}(\bbh,\domain{\bk}{\bbh}) \in S] \leq \Pr[\prEM{k,\bk}(\bbh,\cD^{\diffp}) \in S^\diffp] (1 - \delta_{\bbh}) + \delta_{\bbh}
\]

From Corollary~\ref{cor:peelingDP} we know that for $\delta' \geq 0$ and $\diffp'$ given in \eqref{eq:compComp}
\[
\Pr[\prEM{k,\bk}(\bbh,\cD^{\diffp}) \in S^\diffp] \leq \min\{1, e^{\diffp'} \Pr[\prEM{k,\bk}(\bbh',\cD^{\diffp}) \in S^\diffp] + \delta' \}
\]

which implies

\begin{multline*}
\Pr[\prEM{k,\bk}(\bbh,\domain{\bk}{\bbh}) \in S] \leq \min\{1, e^{\diffp'} \Pr[\prEM{k,\bk}(\bbh',\cD^{\diffp}) \in S^\diffp]  + \delta'\}\cdot (1 - \delta_{\bbh}) + \delta_{\bbh} \\
= 
\min\{1 - \delta_{\bbh'}, e^{\diffp'} \Pr[\prEM{k,\bk}(\bbh',\cD^{\diffp}) \in S^\diffp] (1 - \delta_{\bbh'}) + \delta'(1 - \delta_{\bbh'})  \}\frac{1 - \delta_{\bbh}}{1 - \delta_{\bbh'}} + \delta_{\bbh}\\
=
\min\{1 - \delta_{\bbh'}, e^{\diffp'} \Pr[\prEM{k,\bk}(\bbh',\domain{\bk}{\bbh'}) \in S^\diffp] + \delta'(1 - \delta_{\bbh'}) \}\frac{1 - \delta_{\bbh}}{1 - \delta_{\bbh'}} + \delta_{\bbh}
\end{multline*}

where the last step follows from Lemma~\ref{lem:mainEpsBoundRest}. We then apply Claim~\ref{claim:delt_event_helper} \footnote{Note that Claim~\ref{claim:delt_event_helper} doesn't apply if $\delta_{\bbh'} = 1$, but Lemma~\ref{lem:mainDeltBoundRest} then implies that $\delta \geq 1$ and our desired statement is trivially true.} to get

\[
\Pr[\prEM{k,\bk}(\bbh,\domain{\bk}{\bbh}) \in S] \leq \min\{1 - \delta_{\bbh'}, e^{\diffp'} \Pr[\prEM{k,\bk}(\bbh',\domain{\bk}{\bbh'}) \in S^\diffp] + \delta'(1 - \delta_{\bbh'}) \} + \max\{\delta_{\bbh},\delta_{\bbh'}\}
\]

We further use Lemma~\ref{lem:mainDeltBoundRest} to bound $\max\{\delta_{\bbh},\delta_{\bbh'}\} \leq \delta$ and obtain

\[
\Pr[\prEM{k,\bk}(\bbh,\domain{\bk}{\bbh}) \in S] \leq  e^{\diffp'} \Pr[\prEM{k,\bk}(\bbh',\domain{\bk}{\bbh'}) \in S^\diffp]  + \delta' + \delta 
\]

Finally, by definition $ \Pr[\pEMR{k,\bk}(\bbh,\domainE{\bk}{\bbh}) \in S^\delta]  = 0$ which implies our desired bound.

\end{proof}

\begin{proof}[Proof of Theorem~\ref{thm:rT_DP}]
From Lemma~\ref{lem:peelingEQone} we know that Algorithm~\ref{algo:randThresExp} is equivalent to the peeling exponential mechanism. Our privacy guarantees then follow immediately from Lemma~\ref{lem:mainAlgoDP}.

\end{proof}


\section{Variants and Improvements of the Limited Domain Algorithm}
In this section, we will discuss some variants of our main algorithm and some improvements.  Specifically, we discuss a variant that adds Laplace noise to the counts, which is similar to the report noisy max algorithm in \citet{DworkRo14}.  This variant of report noisy max allows us to only pay a $\Delta$ factor on the $\diffp$ parameter in the $\Delta$-restricted sensitivity setting.  We then present a more practical version of our main algorithm only considers domain elements that are strictly greater than the $(\bk+1)$-th value, so it has cardinality at most $\bk$.  We then present a way to optimize the threshold value $\bk$ in a data dependent way.  
	

\subsection{Laplace Limited Domain Algorithm\label{sec:RNM}}
We now restrict ourselves to the case in which neighboring histograms can vary in at most $\Delta$ positions so that $||\bbh - \bbh'||_\infty \leq 1$ and $||\bbh - \bbh' ||_0 \leq \Delta$, i.e. $\bbh,\bbh'$ are $\Delta$-restricted sensitivity neighbors.  We want to show that we can see substantial improvements in the privacy loss, where the loss can instead be written in terms of $\Delta$, rather than $k$ when $\Delta < k$. Our previous algorithm did achieve improvement for the \textit{restricted sensitivity} setting, in that the additive term on the threshold could instead be written as $\ln(\Delta/\delta)/\diffp$, but our privacy loss for the $\diffp$ term was still in terms of $k$. For this section, we will then assume $\min\{\Delta,\bk\} = \Delta$.

Note that the procedure $\rTE{k,\bk}$ in Algorithm~\ref{algo:boundSens} is nearly equivalent to our main procedure $\rT{k,\bk}$ in Algorithm~\ref{algo:randThresExp}, with the critical difference that we use Laplace noise, rather than Gumbel noise.  Our privacy analysis uses the Laplace mechanism \cite{DworkMcNiSm06} to return the top-$k$ over a limited domain set that is given as input, which we call $\rnm{k,\bk}$. We cannot achieve this similar privacy guarantee for Gumbel noise because unlike Laplace noise, releasing a count value with added Gumbel noise is not necessarily differentially private. Conversely, we cannot achieve the same privacy guarantees from this procedure as we do with Gumbel noise, particularly with respect to the application of advanced composition.

\begin{algorithm}[h!]
	\caption{$\rTE{k,\bk}$; $\Delta$-Restricted Sensitivity Random Threshold with $k,\bk$}
	\begin{algorithmic} 
		\State \textbf{Input:} Histogram $\bbh$, cut off at $\bk \geq k$, along with parameters $\diffp,\delta$.
		\State \textbf{Output:} Ordered set of indices $S$.
		\State Set $v_\bot = h_{(\bk + 1)}+ 1 + \ln(\Delta/\delta)/\diffp + \lap(1/\diffp)$
		\For {$i \leq \bk$}
        \State Set $v_i = h_{(i)} + \lap(1/\diffp)$
		\EndFor
        \State Sort $\{v_i\} \cup v_\bot$
        \State Let $v_{i_{(1)}},....,v_{i_{(j)}}$ be the sorted list until $v_\bot$
		\State Return $\{i_{(1)},...,i_{(j)}, \bot \}$ if $j < k$, otherwise return $\{i_{(1)},...,i_{(k)}\}$		
	\end{algorithmic}\label{algo:boundSens}
\end{algorithm}

\begin{lemma}\label{lem:main_Lap}
Algorithm~\ref{algo:boundSens} is $(\Delta \diffp, (e^{\Delta\diffp} + 1) \bar{\delta})$-DP where
$\bar{\delta} = \frac{\delta}{4}(3 + \ln(\Delta/\delta))$
\end{lemma}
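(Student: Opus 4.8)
The plan is to mirror the structure of the Gumbel analysis but replace the exponential mechanism with the Laplace mechanism for report-noisy-max over a fixed input domain, and then handle the domain-mismatch "bad events" separately. First I would observe that, exactly as in Lemma~\ref{lem:peelingEQone}, Algorithm~\ref{algo:boundSens} run on $\bbh$ is equivalent in distribution to first computing $\domainE{\bk}{\bbh}$ and then running the $\rnm{k,\bk}$ subroutine that adds $\lap(1/\diffp)$ to each of the $\bk$ retained counts together with the augmented $\bot$-count $h_{(\bk+1)} + 1 + \ln(\Delta/\delta)/\diffp$, and returns the prefix above $v_\bot$. The key difference from the Gumbel case is the privacy accounting for the fixed-domain routine: here I would invoke the standard Laplace/report-noisy-max analysis. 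Adding $\lap(1/\diffp)$ noise to each count of a histogram with $\Delta$-restricted sensitivity, and releasing the argmax (or the top-$k$ prefix via peeling — which for Laplace report noisy max composes with no loss, being a single noise draw per coordinate), is $\Delta\diffp$-DP, since each coordinate has sensitivity $1$ and only $\Delta$ coordinates change. So $\rnm{k,\bk}(\cdot,\cD)$ is $\Delta\diffp$-DP for any fixed $\cD$, the analogue of Corollary~\ref{cor:peelingDP}.

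Next I would bound the bad events, i.e.\ the probability that Algorithm~\ref{algo:boundSens} on $\bbh$ outputs an index in $\domainE{\bk}{\bbh}\setminus\domainE{\bk}{\bbh'}$, following Lemma~\ref{lem:mainDeltBoundRest}. By Lemma~\ref{lem:outsideIntersection}, any such index $i$ has $h_i \leq h_{(\bk+1)}+1$, and by Lemma~\ref{lem:boundedDomain} there are at most $\Delta$ such indices. For a single index $i$ of value at most $h_{(\bk+1)}+1$ to be output it must in particular exceed the noisy threshold $v_\bot$, i.e.\ $h_i + \lap(1/\diffp) > h_{(\bk+1)}+1+\ln(\Delta/\delta)/\diffp + \lap'(1/\diffp)$; the gap between the deterministic parts is at least $\ln(\Delta/\delta)/\diffp$, so this requires the difference of two independent Laplace variables to exceed $\ln(\Delta/\delta)/\diffp$. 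The density of $\lap(1/\diffp) - \lap'(1/\diffp)$ is $\tfrac{\diffp}{4}(1+\diffp|z|)e^{-\diffp|z|}$, and integrating its tail from $t = \ln(\Delta/\delta)/\diffp$ gives $\tfrac14 e^{-\diffp t}(2 + \diffp t) = \tfrac{\delta}{4\Delta}(2 + \ln(\Delta/\delta))$ — though to match the stated $\bar\delta$ I would instead bound the relevant single-index output probability by $\bar\delta/\Delta$ with $\bar\delta = \tfrac{\delta}{4}(3 + \ln(\Delta/\delta))$, the extra constant absorbing the union over the iterative peeling steps (the same "terminates on $\bot$" subtlety handled in Lemma~\ref{lem:mainDeltBoundRest}, which caps the total bad-event mass at the single-index bound rather than $k$ times it). A union bound over the $\leq \Delta$ changeable indices then gives total bad-event probability at most $\bar\delta$.

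Finally I would assemble the pieces. On the good outcomes $\cS \cap \cS'$ I would use the rejection-sampling identity analogous to Lemma~\ref{lem:mainEpsBoundRest} — redrawing on bad events is the same as restricting to $\cD^\diffp = \domainE{\bk}{\bbh}\cap\domainE{\bk}{\bbh'}$ — and then apply the $\Delta\diffp$-DP bound on $\rnm{k,\bk}(\cdot,\cD^\diffp)$ to compare $\bbh$ and $\bbh'$. Converting the normalization-by-$(1-\delta_{\bbh'})$ factor into an additive slack as in Claim~\ref{claim:delt_event_helper} yields $(\Delta\diffp, \bar\delta)$-type control on the good part; the $\delta$-slack coming from $\bbh$'s bad events contributes additively and the $\bbh'$-side bad events, which sit inside the $e^{\Delta\diffp}\Pr[\cdot]$ term, contribute $e^{\Delta\diffp}\bar\delta$, for a total additive term $(e^{\Delta\diffp}+1)\bar\delta$. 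The main obstacle I anticipate is the tail-integral bookkeeping: nailing the exact constant $3$ in $\bar\delta = \tfrac{\delta}{4}(3+\ln(\Delta/\delta))$ requires being careful both with the $\lap - \lap$ convolution tail (which produces the $(2+\diffp t)$ polynomial factor) and with how the peeling iterations are folded into a single-index bound, so that the union bound over $\Delta$ indices is not lossy by a factor of $k/\Delta$. Everything else is a fairly direct Laplace-mechanism analogue of the Gumbel argument already carried out in Section~5.
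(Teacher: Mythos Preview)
Your overall decomposition is right, and so is the fixed-domain Laplace step and the bad-event union bound over at most $\Delta$ indices. The genuine gap is in your good-event step. You propose to use ``the rejection-sampling identity analogous to Lemma~\ref{lem:mainEpsBoundRest}'', i.e.\ that conditioning on not outputting any index in $\domainE{\bk}{\bbh}\setminus\domainE{\bk}{\bbh'}$ is the \emph{same} as running the mechanism on the restricted domain $\cD^\diffp$. That identity is special to the exponential mechanism: conditioning $\hEM{\bk}(\bbh,\cD)$ on avoiding a set $T$ literally renormalizes the softmax and equals $\hEM{\bk}(\bbh,\cD\setminus T)$ (Lemma~\ref{lem:expConditional}). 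With Laplace noise this fails: all coordinates, including $\bot$, share one joint noise draw, so conditioning on ``bad index $i$ did not appear'' constrains the realized $v_\bot$ and the ranks of the other noisy values in a way that is not equivalent to simply deleting $i$. The paper flags exactly this and replaces the identity by the sandwich inequality of Lemma~\ref{lem:lap_eps},
\[
\Pr[\rnm{k,\bk}(\bbh,\domainE{\bk}{\bbh})\in S]\ \le\ \Pr[\rnm{k,\bk}(\bbh,\cD^\diffp)\in S]\ \le\ \Pr[\rnm{k,\bk}(\bbh,\domainE{\bk}{\bbh})\in S]+\bar\delta,
\]
whose first inequality is monotonicity (dropping competitors only raises the probability of a surviving outcome) and whose second is a short contradiction using Lemma~\ref{lem:lap_del}. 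The $(e^{\Delta\diffp}+1)$ factor on $\bar\delta$ then comes out mechanically: one $\bar\delta$ from bad events on $\bbh$, and an $e^{\Delta\diffp}\bar\delta$ from applying the second inequality on the $\bbh'$ side after the $e^{\Delta\diffp}$ blow-up. Claim~\ref{claim:delt_event_helper} is not used here at all; indeed, if your rejection-sampling identity held you would beat the stated bound, not match it.

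Two smaller points. First, your Laplace-difference tail computation $\tfrac14 e^{-\diffp t}(2+\diffp t)$ is correct and in fact tighter than the paper's per-index bound $\tfrac{\delta}{4\Delta}(3+\ln(\Delta/\delta))$; the paper's constant $3$ comes from a slightly loose splitting of the convolution integral, not from any peeling union, so your attempt to attribute the extra $1$ to ``folding in the iterative peeling'' is off. Second, there is no $k$-factor danger in the Laplace bad-event bound that needs the machinery of Lemma~\ref{lem:badElemBound}: a bad index $i$ can appear in the output only if $v_i>v_\bot$, which is a single one-shot event independent of $k$, so a plain union over the $\le\Delta$ bad indices suffices.
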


As in Section~\ref{sec:exp}, we will instead write this algorithm with respect to a more generalized version that considers restricting to an arbitrary subset of indices as opposed to just those with value in the true top-$\bk$.

\begin{definition}\label{defn:rnmk}[Limited Histogram Report Noisy top $k$]
We define the limited histogram report noisy top $k$ to be $\rnm{k,\bk}$ that takes as input a histogram along with a domain set of indices and returns an ordered list of elements of length at most $k$, where

\begin{equation*}
\rnm{k,\bk}(\bbh,\cD) = \begin{cases}
(i_{(1)},...,i_{(j)},\bot) &\text{if $j < k$}\\
(i_{(1)},...,i_{(k)}) &\text{otherwise} 
\end{cases}
\end{equation*}
where $(v_{(1)},...,v_{(j)},v_\bot)$ is the sorted list until $v_\bot$ of $v_i = h_{(i)} + \lap(1/\diffp)$ and $v_\bot = h_\bot + \lap(1/\diffp)$, for each $i \in \cD$ and 
\begin{equation}
h_\bot \defeq h_{(\bk + 1)} + 1 +\ln(\Delta/\delta)/\diffp
\label{eq:h_bot}
\end{equation}

\end{definition}

We then have the following result that connects $\rnm{k,\bk}$ with $\rTE{k,\bk}$.
\begin{corollary}\label{cor:lap_equiv}
For any histogram $\bbh$, we have that $\rnm{k,\bk}(\bbh,\domain{\bk}{\bbh})$ and $\rTE{k,\bk}(\bbh)$ are equal in distribution.
\end{corollary}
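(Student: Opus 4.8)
The proof of Corollary~\ref{cor:lap_equiv} should be essentially a direct unfolding of definitions, closely mirroring the argument already used for the Gumbel case in Lemma~\ref{lem:peelingEQone}. The plan is to observe that both $\rnm{k,\bk}(\bbh,\domain{\bk}{\bbh})$ and $\rTE{k,\bk}(\bbh)$ operate on exactly the same set of indices together with the same $\bot$ coordinate: by the definition of $\domain{\bk}{\bbh}$ in \eqref{eq:domain}, the domain $\domain{\bk}{\bbh}$ consists precisely of the indices $i_{(1)}, \dots, i_{(\bk)}$ of the top-$\bk$ histogram counts, which are the indices over which the \texttt{for} loop in Algorithm~\ref{algo:boundSens} ranges. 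Moreover, the $\bot$ value $h_\bot = h_{(\bk+1)} + 1 + \ln(\Delta/\delta)/\diffp$ from Definition~\ref{defn:rnmk} coincides exactly with the pre-noise threshold used to define $v_\bot$ in Algorithm~\ref{algo:boundSens}.

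First I would note that in both procedures, the randomness enters in exactly the same way: each count $h_{(i)}$ for $i$ in the common index set receives an independent draw of $\lap(1/\diffp)$, and the threshold count $h_\bot$ receives one more independent $\lap(1/\diffp)$ draw. Since the noisy values $\{v_i\}$ and $v_\bot$ are then sorted identically, and the output rule --- return the prefix of indices appearing before $v_\bot$ in the sorted order, appending $\bot$ only if fewer than $k$ indices precede it --- is literally the same in Definition~\ref{defn:rnmk} and in the last two lines of Algorithm~\ref{algo:boundSens}, the two output distributions agree. A clean way to phrase this is to exhibit an explicit coupling: use the same $\bk+1$ Laplace samples in both procedures, and observe that they produce identical sorted lists and hence identical outputs pointwise, so the distributions are equal.

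The one bookkeeping point worth making carefully is that the original histogram ordering $h_{(1)} \geq h_{(2)} \geq \cdots$ that $\rTE{k,\bk}$ computes internally by sorting is consistent with the data-independent tie-breaking assumed in the definition of $\domain{\bk}{\bbh}$; this ensures that the index set $\{i_{(1)}, \dots, i_{(\bk)}\}$ that Algorithm~\ref{algo:boundSens} implicitly iterates over (after its sort step) is exactly $\domain{\bk}{\bbh}$, and not some other set differing by ties. With that observation in hand there is nothing further to do.

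I do not expect any real obstacle here --- the statement is a definitional equivalence and the "proof" is a one-paragraph verification. If anything, the only subtlety is making explicit that $\rnm{k,\bk}$ as defined does not re-sort $\cD$ by raw counts (it is handed indices), whereas $\rTE{k,\bk}$ does sort; but since $\domain{\bk}{\bbh}$ is precisely the top-$\bk$ indices under the same ordering, feeding that set into $\rnm{k,\bk}$ reproduces the same $\{v_i\}$ collection, and sorting a set of noisy values is order-independent of how the set was presented. Hence the equivalence follows, completing the proof.
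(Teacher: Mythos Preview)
Your proposal is correct and matches the paper's treatment: the paper states this corollary without proof, as it is an immediate consequence of comparing Definition~\ref{defn:rnmk} with Algorithm~\ref{algo:boundSens}, and your unfolding-of-definitions argument (same index set $\domain{\bk}{\bbh}$, same $h_\bot$, same Laplace noise, same output rule) is exactly the verification that justifies it.
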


If we fix a domain $\cD$ beforehand, then we have the following privacy statement.  Note that we could allow $\rnm{k,\bk}(\bbh, \cD)$ to release the full noisy histogram, with counts, over the limited domain $\cD \cup \bot$, since the privacy analysis follows from the Laplace mechanism \cite{DworkMcNiSm06} being $\Delta\diffp$-DP. We just need to ensure that $i_{(\bk + 1)} \notin \cD$ because then if it was, then changing one index would change the count of both $h_{(\bk + 1)}$ and $h_{\bot}$.
\begin{lemma}\label{lem:lap_restricted}
For any fixed $\cD \subseteq [d]$ and neighboring histograms $\bbh,\bbh'$ such that $i_{(\bk + 1)},i'_{(\bk + 1)} \notin \cD$, then we have that for any set of outcomes $S$

 \[
 \Pr[\rnm{k,\bk}(\bbh, \cD) \in S] \leq e^{\Delta\epsilon}\Pr[\rnm{k,\bk}(\bbh',\cD) \in S]
 \]
\end{lemma}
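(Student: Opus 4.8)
The plan is to treat $\rnm{k,\bk}(\bbh,\cD)$ as a report-noisy-top-$k$ over the finite element set $\cD \cup \{\bot\}$: independent $\lap(1/\diffp)$ noise is added to each count $h_i$ ($i \in \cD$) and to the dummy count $h_\bot = h_{(\bk+1)} + 1 + \ln(\Delta/\delta)/\diffp$, the noisy values are sorted, and we read off elements in decreasing order, stopping the first time $\bot$ would be included. Thus the output is a deterministic post-processing of the noisy vector $\bbv(\bbh) := \big(h_i + \lap(1/\diffp)\big)_{i\in\cD} \cup \{h_\bot + \lap(1/\diffp)\}$. One could try to finish by bounding the $\ell_1$-sensitivity of $\bbh \mapsto (h_i)_{i\in\cD}\cup\{h_\bot\}$ and invoking the Laplace mechanism plus post-processing, but I expect that route to be slightly lossy (it charges the $h_\bot$ coordinate separately from the $\cD$ coordinates and so only yields roughly $(\Delta+1)\diffp$), so the argument will instead be a direct shifted-threshold argument on the output, in the style of the classical report-noisy-max proof.

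First I would reduce to the case $\bbh \geq \bbh'$. Since the two hypotheses $i_{(\bk+1)}\notin\cD$ and $i'_{(\bk+1)}\notin\cD$ are symmetric in $\bbh$ and $\bbh'$, and every $\Delta$-restricted-sensitivity neighbor pair satisfies $\bbh \geq \bbh'$ or $\bbh' \geq \bbh$, it suffices to handle $\bbh \geq \bbh'$; write $T = \{j : h_j \neq h'_j\}$, so $|T| \leq \Delta$ and $h_j = h'_j + 1$ for $j \in T$. Then I would record the two structural facts that drive everything: (i) by Lemma~\ref{lem:thresSimilar} applied to the $(\bk+1)$-st order statistic, $h_\bot - h'_\bot = h_{(\bk+1)} - h'_{(\bk+1)} \in \{0,1\}$, so the dummy count moves by at most $1$ and monotonically; and (ii) the counts $(h_i)_{i\in\cD}$ and $(h'_i)_{i\in\cD}$ differ in exactly $|\cD\cap T| \leq \Delta$ coordinates, each by $+1$. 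The role of $i_{(\bk+1)}\notin\cD$ (and symmetrically $i'_{(\bk+1)}\notin\cD$) is precisely to keep $h_{(\bk+1)}$ from entering the mechanism twice: if $i_{(\bk+1)}$ were in $\cD$ then its count would be both a regular coordinate $v_{i_{(\bk+1)}}$ and, through $h_\bot$, the dummy coordinate, and a single user could move both by $1$ in a coordinated way, which is exactly the worst case that would double the per-comparison loss.

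The core argument then mirrors report-noisy-max, working step by step through the ordered output. Fix an admissible output $o = (o_1,\dots,o_j)$, with $o_j$ possibly $\bot$. For each peeling step $s$, conditioning on all of the noise except the variable attached to $o_s$, the event that $o_s$ is the noisy-max of the remaining pool $P_s \subseteq \cD\cup\{\bot\}$ has the form $Z_{o_s} > \max_{m\in P_s\setminus\{o_s\}}(h_m + Z_m) - h_{o_s}$. Passing from $\bbh'$ to $\bbh$ increases $\max_m(h_m+Z_m)$ by at most $1$ (a maximum of terms each increasing by at most $1$) and increases $h_{o_s}$ by $0$ or $1$, so the threshold changes by at most $1$ unless $o_s\in T$; the Laplace tail bound $\Pr[Z > t - c] \leq e^{c\diffp}\Pr[Z > t]$ then charges a factor $e^{\diffp}$ to each step in which a relevant count moved. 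The key bookkeeping is that, because $i_{(\bk+1)}\notin\cD$, the single possible shift of $h_\bot$ can be attributed to an element of $T$ that is not already being charged through a $\cD$-coordinate, so the total number of $e^{\diffp}$ factors is at most $|T|\leq\Delta$; multiplying them gives $\Pr[\rnm{k,\bk}(\bbh,\cD)=o] \leq e^{\Delta\diffp}\Pr[\rnm{k,\bk}(\bbh',\cD)=o]$ for every $o$, and summing over $o\in S$ yields the claim with $\delta = 0$.

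I expect the main obstacle to be exactly this last bookkeeping: extending the single-argmax report-noisy-max bound to the ordered top-$k$ output with $\bot$-stopping, and showing that the at-most-$\Delta$ touched counts together with the one possible shift of $h_\bot$ contribute a total of only $\Delta$ factors of $e^{\diffp}$ rather than $\Delta+1$ (or, in a careless peeling argument, something growing with $k$). This is where Lemma~\ref{lem:thresSimilar} and the hypotheses $i_{(\bk+1)},i'_{(\bk+1)}\notin\cD$ must be used in a coordinated way; the rest — the reformulation of $\rnm{k,\bk}$ as noise-addition-plus-peeling, the post-processing reduction, and the Laplace CDF estimate — is routine.
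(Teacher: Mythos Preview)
The paper does not take your report-noisy-max route at all. It uses exactly the approach you dismissed as ``lossy'': it observes that $\rnm{k,\bk}(\bbh,\cD)$ is a deterministic post-processing of the full noisy vector $\big(h_i + \lap(1/\diffp)\big)_{i \in \cD} \cup \big(h_\bot + \lap(1/\diffp)\big)$, and that this vector is the Laplace mechanism applied to $\bbh \mapsto (h_i)_{i\in\cD}\cup\{h_\bot\}$ with noise scale $1/\diffp$. The whole content of the lemma is then the claim that this map has $\ell_1$-sensitivity $\Delta$, not $\Delta+1$, and that is precisely what the hypothesis $i_{(\bk+1)},\,i'_{(\bk+1)}\notin\cD$ is for: if the $(\bk+1)$-st order statistic moves, the coordinate that caused the move sits outside $\cD$, so the unit shift in $h_\bot$ is not charged on top of a shift already counted among $(h_i)_{i\in\cD}$. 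With sensitivity $\Delta$, the Laplace mechanism gives $\Delta\diffp$-DP on the noisy vector and post-processing finishes in one line.

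Your alternative peeling argument has a real obstacle. The classical report-noisy-max proof conditions on all noise but one; that works once, for a single argmax. Extending it to the ordered top-$k$ by ``at step $s$, condition on all noise except $Z_{o_s}$'' reuses randomness across steps: after step $1$ you have conditioned on an event coupling $Z_{o_1}$ to all of $Z_{o_2},\dots,Z_{o_k},Z_\bot$, so at step $2$ those variables are no longer fresh independent Laplace and the tail-shift bound $\Pr[Z>t-c]\le e^{c\diffp}\Pr[Z>t]$ no longer applies cleanly. With Gumbel noise this is rescued by the peeling property of Lemma~\ref{lem:ExpGumbelOneShot}; Laplace has no such property, which is exactly why the paper abandons peeling here and switches to Laplace-mechanism-plus-post-processing (and why it remarks that Laplace \emph{cannot} recover the Gumbel composition story). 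The bookkeeping you flag as the hard part --- getting $\Delta$ factors of $e^{\diffp}$ rather than $\Delta+1$ or $k$ --- is the very same sensitivity question, just embedded in a harder framework. Drop the peeling plan, prove the $\ell_1$-sensitivity bound directly using the hypothesis, and invoke the Laplace mechanism.
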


As we did in Definition~\ref{defn:eps_delt_sets}, we define the \emph{good} and \emph{bad} outcome sets.
\begin{definition}
Given two neighboring histograms $\bbh,\bbh'$, we define $\cS_{\lap}$ as the outcome set of $\rTE{k,\bk}(\bbh,\domain{\bk}{\bbh})$ and the outcome set of $\rTE{k,\bk}(\bbh',\domain{\bk}{\bbh'})$ as $\cS'_{\lap}$.

We then define the \emph{bad outcomes} as 

\[
\cS^\delta_{\lap} \defeq \cS_{\lap} \setminus \cS'_{\lap} \qquad \text{ and } \qquad \cS'^\delta_{\lap} \defeq \cS'_{\lap} \setminus \cS_{\lap}
\]
\label{defn:eps_delt_setsLAP}
\end{definition}

As with the analysis in Section~\ref{sec:exp}, we will need to bound the probability of outputting something from $\cS^\delta_{\lap}$, and also show that we can achieve differential privacy for the remaining outputs that are possible for both $\bbh$ and $\bbh'$. For bounding the bad outcomes, it suffices to consider each index in $\domain{\bk}{\bbh} \setminus \domain{\bk}{\bbh'}$ and bound the probability that its respective noisy value is above the noisy threshold. By construction, both will have Laplace noise added, which will cause the expression for this bound to be slightly messier because we cannot rewrite $\lap(1/\diffp) + \lap(1/\diffp) \equiv \lap(2/\diffp)$, and the proof will require a more technical analysis.

\begin{lemma}\label{lem:lap_del}
For any $\Delta$-restricted sensitivity neighboring histograms $\bbh,\bbh'$, we must have 

\begin{equation}
\Pr[\rnm{k,\bk}(\bbh,\domainE{\bk}{\bbh}) \in \cS^\delta_{\lap}] \leq  \frac{\delta}{4} \cdot \left( 3 + \ln(\Delta/\delta) \right) \eqdef \bar{\delta}
\label{eq:barDelta}
\end{equation}

\end{lemma}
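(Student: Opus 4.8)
The plan is to bound the probability that some index $i \in \domainE{\bk}{\bbh} \setminus \domainE{\bk}{\bbh'}$ gets output, i.e., that its noisy value $v_i = h_i + \lap(1/\diffp)$ exceeds the noisy threshold $v_\bot = h_\bot + \lap(1/\diffp)$. First I would note that if some index from $\cS^\delta_{\lap}$ is output, then in particular the first such index must have its noisy value above $v_\bot$, so by a union bound
\[
\Pr[\rnm{k,\bk}(\bbh,\domainE{\bk}{\bbh}) \in \cS^\delta_{\lap}] \leq \sum_{i \in \domainE{\bk}{\bbh} \setminus \domainE{\bk}{\bbh'}} \Pr[v_i \geq v_\bot].
\]
By Lemma~\ref{lem:outsideIntersection}, any such $i$ satisfies $h_i \leq h_{(\bk+1)} + 1$, and by definition $h_\bot = h_{(\bk+1)} + 1 + \ln(\Delta/\delta)/\diffp$, so $h_\bot - h_i \geq \ln(\Delta/\delta)/\diffp$. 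Hence $\Pr[v_i \geq v_\bot] \leq \Pr[\lap(1/\diffp) - \lap(1/\diffp) \geq \ln(\Delta/\delta)/\diffp]$ where the two Laplace variables are independent. By Lemma~\ref{lem:boundedDomain} there are at most $\Delta$ such indices (using the restricted-sensitivity assumption $\min\{\Delta,\bk\}=\Delta$), so it suffices to show $\Pr[\lap(1/\diffp) - \lap(1/\diffp) \geq \ln(\Delta/\delta)/\diffp] \leq \frac{\delta}{4\Delta}(3 + \ln(\Delta/\delta))$.

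The core of the argument is therefore a tail bound on the difference of two independent Laplace random variables. Writing $Y = \diffp(\lap(1/\diffp) - \lap(1/\diffp))$, which is the difference of two independent standard Laplace variables, I want $\Pr[Y \geq t] \leq \frac{\delta}{4\Delta}(3+t)$ at $t = \ln(\Delta/\delta)$. The density of $Y$ can be computed by convolution: for $t \geq 0$, the difference-of-Laplaces density is $\tfrac14 e^{-t}(1 + t)$, so $\Pr[Y \geq t] = \int_t^\infty \tfrac14 e^{-s}(1+s)\,ds = \tfrac14 e^{-t}(2 + t)$ (using $\int_t^\infty e^{-s}(1+s)ds = e^{-t}(t+2)$). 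Plugging in $t = \ln(\Delta/\delta)$ gives $\Pr[Y \geq t] = \tfrac14 \cdot \tfrac{\delta}{\Delta}\cdot(2 + \ln(\Delta/\delta)) \leq \tfrac{\delta}{4\Delta}(3 + \ln(\Delta/\delta))$, which is exactly the per-index bound needed. Summing over the at most $\Delta$ bad indices yields $\bar\delta = \tfrac{\delta}{4}(3 + \ln(\Delta/\delta))$.

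The main obstacle I anticipate is handling the fact that both $v_i$ and $v_\bot$ carry their own independent Laplace noise — one cannot simply collapse the difference into a single $\lap(2/\diffp)$, as the paper explicitly warns. So the density/tail computation for the difference of two independent Laplaces (the factor $e^{-t}(1+t)$ shape and its integral) is the technical heart, and I would carry it out carefully, possibly splitting the convolution integral at $0$. A secondary subtlety is the union-bound step: I should make precise that "outputting an index of $\cS^\delta_{\lap}$" implies the event $\{v_i \geq v_\bot\}$ for at least one bad index $i$ — this follows because $\rnm{k,\bk}$ only outputs indices whose noisy value sits above $v_\bot$ in the sorted order, so any output index $i$ satisfies $v_i > v_\bot$. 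With that in place the rest is the routine counting via Lemma~\ref{lem:boundedDomain} and the threshold gap via Lemma~\ref{lem:outsideIntersection}.
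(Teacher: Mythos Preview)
Your proposal is correct and follows essentially the same approach as the paper: a union bound over the at most $\Delta$ indices in $\domainE{\bk}{\bbh}\setminus\domainE{\bk}{\bbh'}$ (Lemma~\ref{lem:boundedDomain}), using Lemma~\ref{lem:outsideIntersection} to bound the gap $h_\bot - h_i \geq \ln(\Delta/\delta)/\diffp$, and then a tail bound on the difference of two independent Laplace variables. The only difference is cosmetic---the paper splits the convolution integral into three regions and bounds two of them crudely to get the per-index bound $\tfrac{\delta}{4\Delta}(3+\ln(\Delta/\delta))$, whereas you compute the exact density $\tfrac14 e^{-t}(1+t)$ of the Laplace difference and integrate to obtain the sharper $\tfrac{\delta}{4\Delta}(2+\ln(\Delta/\delta))$, which you then relax to match the stated $\bar\delta$.
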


 We defer the proof to Appendix~\ref{app:more_proofs}. We then also need to give differential privacy bounds on the \emph{good} outcomes, but these bounds will be harder to achieve for Laplace noise because working with conditional probabilities in this setting is much more difficult.
More specifically, if we again consider the rejection sampling scheme where we throw out any outcomes in $\cS^\delta_{\lap}$, when Laplace noise is added we cannot just consider this equivalent to never having considered any of the indices in $\domain{\bk}{\bbh} \setminus \domain{\bk}{\bbh'}$.
As a result, our bounds on the \emph{good} outcomes will instead be approximate differential privacy guarantees, which is the reason for the $(e^{\Delta\diffp} + 1)$ factor on the $\delta$ in our final privacy guarantees.

\begin{lemma}\label{lem:lap_eps}
For any neighboring histograms $\bbh,\bbh'$ and for any $S \subseteq \cS_{\lap} \cap \cS'_{\lap}$, we let $\cD^\diffp = \domainE{\bk}{\bbh} \cap \domainE{\bk}{\bbh'}$ and we must have the following for $\bar{\delta}$ given in \eqref{eq:barDelta}

\[
\Pr[\rnm{k,\bk}(\bbh, \domainE{\bk}{\bbh}) \in S] \leq \Pr[\rnm{k,\bk}(\bbh,\cD^\diffp) \in S] \leq \Pr[\rnm{k,\bk}(\bbh,\domainE{\bk}{\bbh}) \in S] + \bar{\delta}
\]

\end{lemma}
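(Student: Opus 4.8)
The plan is to prove both inequalities through a single coupling of the Laplace noise used by $\rnm{k,\bk}(\bbh,\domainE{\bk}{\bbh})$ and $\rnm{k,\bk}(\bbh,\cD^{\diffp})$. First I would record the structural fact that every index occurring in any $o\in S$ lies in $\cD^{\diffp}$: since $o\in\cS_{\lap}$ it is in the support of $\rnm{k,\bk}(\bbh,\domainE{\bk}{\bbh})$, so all of its indices lie in $\domainE{\bk}{\bbh}$, and since $o\in\cS'_{\lap}$ the same outcome is in the support of $\rnm{k,\bk}(\bbh',\domainE{\bk}{\bbh'})$, so all of its indices lie in $\domainE{\bk}{\bbh'}$; hence they all lie in the intersection $\cD^{\diffp}$. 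I would then fix one family of independent draws $\xi_i\sim\lap(1/\diffp)$ for $i\in\domainE{\bk}{\bbh}\cup\{\bot\}$, run $\rnm{k,\bk}(\bbh,\domainE{\bk}{\bbh})$ on this family and run $\rnm{k,\bk}(\bbh,\cD^{\diffp})$ on its restriction to $\cD^{\diffp}\cup\{\bot\}$; note the noisy threshold $v_\bot=h_\bot+\xi_\bot$ is the same in both runs because $h_\bot$ is a function of the histogram $\bbh$ and the parameters $\bk,\Delta,\delta,\diffp$ alone, not of the input domain set, and recall $\domainE{\bk}{\bbh}\setminus\cD^{\diffp}=\domainE{\bk}{\bbh}\setminus\domainE{\bk}{\bbh'}$.

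For the left inequality, I would show that for every $o\in S$ the event $\{\rnm{k,\bk}(\bbh,\domainE{\bk}{\bbh})=o\}$ implies $\{\rnm{k,\bk}(\bbh,\cD^{\diffp})=o\}$ under the coupling. The indices deleted in passing to $\cD^{\diffp}$ all lie in $\domainE{\bk}{\bbh}\setminus\cD^{\diffp}$ and hence do not appear in $o$; if $o$ ends in $\bot$, then only the indices of $o$ had noisy value above $v_\bot$, so deleting the others leaves the set of above-threshold indices and its descending order intact; if $o=(i_{(1)},\dots,i_{(k)})$, then the $k$ largest noisy values were attained exactly on $\{i_{(1)},\dots,i_{(k)}\}\subseteq\cD^{\diffp}$, so no deleted index was among the top $k$, and again the top $k$ and its order are unchanged. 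Summing over $o\in S$ gives $\Pr[\rnm{k,\bk}(\bbh,\domainE{\bk}{\bbh})\in S]\le\Pr[\rnm{k,\bk}(\bbh,\cD^{\diffp})\in S]$.

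For the right inequality, let $E$ be the event that some $i\in\domainE{\bk}{\bbh}\setminus\cD^{\diffp}$ has $v_i>v_\bot$. Symmetrically, if $\rnm{k,\bk}(\bbh,\cD^{\diffp})=o\in S$ and $E$ fails, then re-inserting the deleted indices (all with noisy value at most $v_\bot$) changes neither the above-threshold set nor the top-$k$ set nor their orders, so $\rnm{k,\bk}(\bbh,\domainE{\bk}{\bbh})=o$; hence $\Pr[\rnm{k,\bk}(\bbh,\cD^{\diffp})\in S]\le\Pr[\rnm{k,\bk}(\bbh,\domainE{\bk}{\bbh})\in S]+\Pr[E]$. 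It remains to bound $\Pr[E]\le\bar\delta$, which is the same union-bound computation underlying Lemma~\ref{lem:lap_del}: by Lemma~\ref{lem:boundedDomain} there are at most $\Delta$ indices in $\domainE{\bk}{\bbh}\setminus\cD^{\diffp}$, each of which by Lemma~\ref{lem:outsideIntersection} satisfies $h_i\le h_{(\bk+1)}+1$, so $v_i>v_\bot$ forces $\xi_i-\xi_\bot>\ln(\Delta/\delta)/\diffp$; since $\xi_i-\xi_\bot$ is a sum of two independent $\lap(1/\diffp)$ variables this tail equals $\tfrac14\bigl(2+\ln(\Delta/\delta)\bigr)\tfrac{\delta}{\Delta}$, and a union bound gives $\Pr[E]\le\tfrac{\delta}{4}\bigl(2+\ln(\Delta/\delta)\bigr)\le\bar\delta$.

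The main obstacle will be the bookkeeping in the two coupling arguments, especially confirming that in the non-$\bot$ case the insertion or deletion of the ``extra'' indices cannot alter either the identity or the relative order of the top $k$ noisy values; this forces a clean case split on whether the outcome terminates in $\bot$. A minor but necessary technical point is that, under continuous Laplace noise, ties among the noisy values (and with $v_\bot$) occur with probability zero, so ``the sorted list until $v_\bot$'' is almost surely well defined; and the right-inequality bound requires the exact tail of the difference of two $\lap(1/\diffp)$ variables, which is where the $\tfrac14\bigl(2+\ln(\Delta/\delta)\bigr)$ factor, and hence $\bar\delta$, enters.
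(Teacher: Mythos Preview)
Your proof is correct but takes a different route from the paper's. For the left inequality you use a noise-coupling argument to show pathwise event inclusion; the paper instead writes $\Pr[\rnm{k,\bk}(\bbh,\domainE{\bk}{\bbh})=o]$ as an explicit nested integral over the Laplace densities and observes that the only difference from $\Pr[\rnm{k,\bk}(\bbh,\cD^{\diffp})=o]$ is an extra product $\prod_{j\in\domainE{\bk}{\bbh}\setminus\cD^{\diffp}}\Pr[\lap(1/\diffp)<x_\ell-h_j]\le 1$ inside the innermost integral. These are two presentations of the same monotonicity; your coupling is arguably cleaner and avoids writing the integral. The real divergence is in the right inequality: the paper argues by contradiction, applying the already-established left inequality to the complement $\bar S=(\cS_{\lap}\cap\cS'_{\lap})\setminus S$, summing, and observing that $\Pr[\rnm{k,\bk}(\bbh,\cD^{\diffp})\in\cS_{\lap}\cap\cS'_{\lap}]=1$ while $\Pr[\rnm{k,\bk}(\bbh,\domainE{\bk}{\bbh})\in\cS_{\lap}\cap\cS'_{\lap}]\ge 1-\bar\delta$ by Lemma~\ref{lem:lap_del}. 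This indirect route is slick because it recycles Lemma~\ref{lem:lap_del} rather than recomputing any tail bound. Your coupling route is more direct and constructive, but the price is that you must redo the Laplace-difference tail computation underlying Lemma~\ref{lem:lap_del} to bound $\Pr[E]$; your exact value $\tfrac{\delta}{4\Delta}(2+\ln(\Delta/\delta))$ is in fact sharper than the paper's upper bound $\tfrac{\delta}{4\Delta}(3+\ln(\Delta/\delta))$, which is why you land at $\Pr[E]\le\bar\delta$ with room to spare.
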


We defer the proof to Appendix~\ref{app:more_proofs}. Combining these two lemmas in a similar way to the previous section, will then give our main result of this subsection.

\begin{lemma}\label{lem:eps_delta_lap}
For any neighboring histograms $\bbh,\bbh'$ and any $S \subseteq \cS_{\lap}$, then for $\bar{\delta}$ given in \eqref{eq:barDelta},

\[
\Pr[\rnm{k,\bk}(\bbh,\domainE{\bk}{\bbh}) \in S] \leq e^{\Delta\diffp} \Pr[\rnm{k,\bk}(\bbh',\domainE{\bk}{\bbh'}) \in S] + (e^{\Delta\diffp} + 1)\bar{\delta}.
\]
\end{lemma}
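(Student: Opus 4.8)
The plan is to mirror the structure of the proof of Lemma~\ref{lem:mainAlgoDP} from the Gumbel case, decomposing an arbitrary outcome set $S \subseteq \cS_{\lap}$ into its \emph{bad} part and its \emph{good} part, controlling the bad part by Lemma~\ref{lem:lap_del} and the good part by Lemmas~\ref{lem:lap_restricted} and~\ref{lem:lap_eps}. Concretely, write $S = S^\delta \cup S^\diffp$ where $S^\delta = S \cap \cS^\delta_{\lap}$ and $S^\diffp = S \setminus \cS^\delta_{\lap}$; since $\cS^\delta_{\lap} = \cS_{\lap} \setminus \cS'_{\lap}$, the good part satisfies $S^\diffp \subseteq \cS_{\lap} \cap \cS'_{\lap}$. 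Then $\Pr[\rnm{k,\bk}(\bbh,\domainE{\bk}{\bbh}) \in S]$ splits additively, and the $S^\delta$ summand is at most $\Pr[\rnm{k,\bk}(\bbh,\domainE{\bk}{\bbh}) \in \cS^\delta_{\lap}] \leq \bar{\delta}$ by Lemma~\ref{lem:lap_del}.

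For the good part, I would first apply the left inequality of Lemma~\ref{lem:lap_eps} (with $\cD^\diffp = \domainE{\bk}{\bbh} \cap \domainE{\bk}{\bbh'}$) to move from the domain $\domainE{\bk}{\bbh}$ down to $\cD^\diffp$: $\Pr[\rnm{k,\bk}(\bbh,\domainE{\bk}{\bbh}) \in S^\diffp] \leq \Pr[\rnm{k,\bk}(\bbh,\cD^\diffp) \in S^\diffp]$. Next I would verify that $\cD^\diffp$ satisfies the hypothesis of Lemma~\ref{lem:lap_restricted}: since $\cD^\diffp \subseteq \domainE{\bk}{\bbh}$ and $i_{(\bk+1)} \notin \domainE{\bk}{\bbh}$ by construction of the top-$\bk$ domain, and symmetrically $i'_{(\bk+1)} \notin \domainE{\bk}{\bbh'} \supseteq \cD^\diffp$, we have $i_{(\bk+1)}, i'_{(\bk+1)} \notin \cD^\diffp$. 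Hence Lemma~\ref{lem:lap_restricted} applies on the \emph{fixed} domain $\cD^\diffp$ and gives $\Pr[\rnm{k,\bk}(\bbh,\cD^\diffp) \in S^\diffp] \leq e^{\Delta\diffp}\Pr[\rnm{k,\bk}(\bbh',\cD^\diffp) \in S^\diffp]$. Finally, the right inequality of Lemma~\ref{lem:lap_eps}, applied with the roles of $\bbh$ and $\bbh'$ swapped (the statement is symmetric in the two neighbors), yields $\Pr[\rnm{k,\bk}(\bbh',\cD^\diffp) \in S^\diffp] \leq \Pr[\rnm{k,\bk}(\bbh',\domainE{\bk}{\bbh'}) \in S^\diffp] + \bar{\delta} \leq \Pr[\rnm{k,\bk}(\bbh',\domainE{\bk}{\bbh'}) \in S] + \bar{\delta}$, using $S^\diffp \subseteq S$ in the last step.

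Chaining these four bounds gives
\[
\Pr[\rnm{k,\bk}(\bbh,\domainE{\bk}{\bbh}) \in S] \leq \bar{\delta} + e^{\Delta\diffp}\bigl(\Pr[\rnm{k,\bk}(\bbh',\domainE{\bk}{\bbh'}) \in S] + \bar{\delta}\bigr) = e^{\Delta\diffp}\Pr[\rnm{k,\bk}(\bbh',\domainE{\bk}{\bbh'}) \in S] + (e^{\Delta\diffp}+1)\bar{\delta},
\]
which is exactly the claim.

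The main obstacle is not the arithmetic but the correct bookkeeping of which domain each probability is evaluated over: the two directions of Lemma~\ref{lem:lap_eps} are used for opposite purposes (upper-bounding the $\bbh$-probability by the $\cD^\diffp$-probability, and then upper-bounding the $\bbh'$-probability \emph{on} $\cD^\diffp$ by the full-domain $\bbh'$-probability plus slack), and the bridge between them is Lemma~\ref{lem:lap_restricted}, which only holds on a \emph{fixed} domain — hence the necessity of passing through $\cD^\diffp$. It is worth remarking why this is cruder than the Gumbel argument: there the rejection-sampling identity of Lemma~\ref{lem:mainEpsBoundRest} was exact, letting Claim~\ref{claim:delt_event_helper} collapse the two $\delta$ contributions into a single $\max$; here Lemma~\ref{lem:lap_eps} only gives a one-sided $\bar{\delta}$ cushion on each side, so the two cushions accumulate and the $\bbh'$-side one is further inflated by the $e^{\Delta\diffp}$ factor, producing the $(e^{\Delta\diffp}+1)\bar{\delta}$ term rather than a clean $\bar{\delta}$.
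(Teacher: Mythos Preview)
Your proposal is correct and follows essentially the same route as the paper's proof: split $S$ into $S^\delta$ and $S^\diffp$, bound the bad part by Lemma~\ref{lem:lap_del}, and on the good part sandwich through the fixed domain $\cD^\diffp$ using both inequalities of Lemma~\ref{lem:lap_eps} together with Lemma~\ref{lem:lap_restricted}. Your explicit check that $i_{(\bk+1)},i'_{(\bk+1)}\notin\cD^\diffp$ is a detail the paper leaves implicit, and the paper replaces your final step $S^\diffp\subseteq S$ with the (equivalent here) observation that $\Pr[\rnm{k,\bk}(\bbh',\domainE{\bk}{\bbh'})\in S^\delta]=0$, but otherwise the arguments are identical.
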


\begin{proof}
We will first separate $S$ such that $S^\delta= S \cap \cS^\delta_{\lap}$ and $S^\diffp = S \setminus S^\delta$. We use Lemma~\ref{lem:lap_del} to bound 

\[
\Pr[\rnm{k,\bk}(\bbh,\domainE{\bk}{\bbh}) \in S^\delta] \leq \bar{\delta}
\]

Furthermore, by Lemma~\ref{lem:lap_eps} we have 

\[
\Pr[\rnm{k,\bk}(\bbh, \domainE{\bk}{\bbh}) \in S^\diffp] \leq \Pr[\rnm{k,\bk}(\bbh,\cD^\diffp) \in S^\diffp]
\]

and also

\[
\Pr[\rnm{k,\bk}(\bbh',\cD^\diffp) \in S^\diffp] \leq \Pr[\rnm{k,\bk}(\bbh', \domainE{\bk}{\bbh'}) \in S^\diffp]  + \bar{\delta}
\]

Combining these inequalities with Lemma~\ref{lem:lap_restricted} we have

\begin{align*}
\Pr[\rnm{k,\bk}(\bbh, \domainE{\bk}{\bbh}) \in S^\diffp] & \leq \Pr[\rnm{k,\bk}(\bbh,\cD^\diffp) \in S^\diffp] \\
& \leq e^{\Delta\diffp} \Pr[\rnm{k,\bk}(\bbh',\cD^\diffp) \in S^\diffp] \\
& = e^{\Delta\diffp} \left( \Pr[\rnm{k,\bk}(\bbh', \domainE{\bk}{\bbh'}) \in S^\diffp]  + \bar{\delta} \right) 
\end{align*}

We use the fact that $\Pr[\rnm{k,\bk}(\bbh', \domainE{\bk}{\bbh'}) \in S^\delta] = 0$ by definition, so $\Pr[\rnm{k,\bk}(\bbh', \domainE{\bk}{\bbh'}) \in S^\diffp] = \Pr[\rnm{k,\bk}(\bbh', \domainE{\bk}{\bbh'}) \in S] $. Finally, this gives

\begin{align*}
\Pr[\rnm{k,\bk}(\bbh, \domainE{\bk}{\bbh}) \in S] & = \Pr[\rnm{k,\bk}(\bbh, \domainE{\bk}{\bbh}) \in S^\delta]  + \Pr[\rnm{k,\bk}(\bbh, \domainE{\bk}{\bbh}) \in S^\diffp] \\
& \leq \bar{\delta} + e^{\Delta\diffp} \left( \Pr[\rnm{k,\bk}(\bbh', \domainE{\bk}{\bbh'}) \in S^\diffp]  + \bar{\delta} \right) \\
& = e^{\Delta\diffp}  \Pr[\rnm{k,\bk}(\bbh', \domainE{\bk}{\bbh'}) \in S]  + (e^{\Delta\diffp} + 1)\bar{\delta}  
\end{align*}

\end{proof}

\begin{proof}[Proof of Lemma~\ref{lem:main_Lap}]
Follows immediately from Corollary~\ref{cor:lap_equiv} and Lemma~\ref{lem:eps_delta_lap}.
\end{proof}
	

\subsection{Strictly Limited Domain Algorithm}\label{sec:practical}

In this section, we give a slight variant of Algorithm~\ref{algo:randThresExp} that achieves the same privacy guarantees in the \textit{unrestricted sensitivity} setting, but will allow for an even more efficient implementation. Note that in Algorithm~\ref{algo:randThresExp} we always assume access to the true top-$\bk$, and this is necessary even if some of those values are zero. However, as you would expect, most online analytical processing algorithms for top-$\bk$ queries only return non-zero values. One way to work around this is to maintain a fixed list of $\bk$ indices, and anytime the true top-$\bk$ has fewer than $\bk$ non-zeros we auto-populate the histogram with indices from our fixed list (note that we allowed for a fixed arbitrary tie-breaking of indices). A more practical approach would be to instead only consider indices whose value is strictly greater than the $(\bk + 1)$th index. We will show that this approach still achieves the same privacy guarantees, but does not see the same improvement in the \textit{restricted-sensitivity} setting.

\begin{algorithm}[h!]
	\caption{Top $k$ from the strictly limited domain of  $\bk \in \{k,\cdots, d-1 \}$ of the histogram}
	\begin{algorithmic}
		\State \textbf{Input:} Histogram $\bbh$; cut off at $\bk \geq k$, along with parameters $\diffp,\delta$.
		\State \textbf{Output:} Ordered set of indices.
		\State Set $h_{\bot} = h_{(\bk + 1)} + 1 +  \ln(\bk/\delta)/\diffp$
		\State Set $v_{\bot} = h_{\bot} +  \gum(1/\diffp)$
		\For {$j \leq \bk$}
			\If {$h_{(j)} > h_{(\bk + 1)}$} 
        				\State Set $v_j = h_{(j)} + \gum(1/\diffp)$
		\EndIf
		\EndFor
        		\State Sort $\{v_j\} \cup v_{\bot}$ 
        		\State Let $v_{i_{(1)} },....,v_{i_{(j)}},v_{\bot}$ be the sorted list up until $v_{\bot}$
		\State Return $\{i_{(1)},...,i_{(j)},\bot\}$ if $j < k$, otherwise return $\{i_{(1)},...,i_{(k)}\}$		
	\end{algorithmic}\label{algo:randThresExp2}
\end{algorithm}

\begin{lemma}\label{lem:main2algo}
Algorithm~\ref{algo:randThresExp2} achieves the same privacy guarantees as Algorithm~\ref{algo:randThresExp}.

\end{lemma}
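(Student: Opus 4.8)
The plan is to replay the analysis of Section~\ref{sec:exp} verbatim, with the \emph{strictly limited domain} $\cD^{>}_{\bk}(\bbh) \defeq \{ i \in [d] : h_i > h_{(\bk+1)} \}$ taking the role of $\domainE{\bk}{\bbh}$. First I would observe that any index with $h_i > h_{(\bk+1)}$ automatically lies among the top-$\bk$, so $\cD^{>}_{\bk}(\bbh) \subseteq \domainE{\bk}{\bbh}$ and $|\cD^{>}_{\bk}(\bbh)| \leq \bk$, and that, exactly as in Lemma~\ref{lem:peelingEQone} via the Gumbel--exponential equivalence of Lemma~\ref{lem:ExpGumbelOneShot}, Algorithm~\ref{algo:randThresExp2} is equal in distribution to $\prEM{k,\bk}(\bbh, \cD^{>}_{\bk}(\bbh))$. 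Here we are in the unrestricted regime ($\Delta \geq \bk$), so the $\bot$ count in $\hEM{\bk}$ is $h_\bot = h_{(\bk+1)} + 1 + \ln(\bk/\delta)/\diffp$, matching the threshold in Algorithm~\ref{algo:randThresExp2}. Since for a \emph{fixed} domain $\cD$ this is still just the exponential mechanism with one additional, sensitivity-$1$ coordinate (the value $h_{(\bk+1)}$ moves by at most $1$ between neighbors by Lemma~\ref{lem:thresSimilar}), Corollaries~\ref{cor:exp_range_bound} and~\ref{cor:peelingDP} apply unchanged: $\prEM{k,\bk}(\cdot, \cD)$ is $(\diffp', \delta')$-DP with $\diffp'$ as in \eqref{eq:compComp}.

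The only genuinely new ingredients are the analogues of Lemmas~\ref{lem:outsideIntersection} and~\ref{lem:boundedDomain} for $\cD^{>}_{\bk}(\cdot)$. For the value bound: if $i \in \cD^{>}_{\bk}(\bbh) \setminus \cD^{>}_{\bk}(\bbh')$ then $h_i > h_{(\bk+1)}$ and $h'_i \leq h'_{(\bk+1)}$, and combining $|h_i - h'_i| \leq 1$ with Lemma~\ref{lem:thresSimilar} applied at index $\bk+1$ gives $h_i \leq h_{(\bk+1)} + 1$ in both orientations $\bbh \geq \bbh'$ and $\bbh' \geq \bbh$ --- the same short argument as in Lemma~\ref{lem:outsideIntersection}. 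For the cardinality bound, $|\cD^{>}_{\bk}(\bbh) \setminus \cD^{>}_{\bk}(\bbh')| \leq |\cD^{>}_{\bk}(\bbh)| \leq \bk$ is immediate. This last point is precisely where the $\Delta$-restricted-sensitivity improvement is lost, and why $h_\bot$ must carry $\ln(\bk/\delta)/\diffp$ rather than $\ln(\min\{\Delta,\bk\}/\delta)/\diffp$: shifting $h_{(\bk+1)}$ by $1$, which a single user can cause even under $\Delta$-restricted sensitivity, can flip the membership of up to $\bk$ indices sitting at the old threshold value, so the symmetric difference of the strict domains is only bounded by $\bk$, not by $\Delta$.

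With these two facts in hand, the remainder is a line-by-line repeat of Lemmas~\ref{lem:mainDeltBoundRest}, \ref{lem:mainEpsBoundRest}, and~\ref{lem:mainAlgoDP}: restricting the exponential mechanism to a single bad index $i$ together with $\bot$ and using $h_i \leq h_{(\bk+1)} + 1 = h_\bot - \ln(\bk/\delta)/\diffp$ bounds the probability of outputting that $i$ by $\delta/\bk$; a union bound over the at most $\bk$ indices in $\cD^{>}_{\bk}(\bbh) \setminus \cD^{>}_{\bk}(\bbh')$, with the same termination-on-$\bot$ bookkeeping as in Lemma~\ref{lem:mainDeltBoundRest}, bounds the total bad-event probability by $\delta$; the rejection-sampling identity of Lemma~\ref{lem:mainEpsBoundRest} is insensitive to which domain is used; and Claim~\ref{claim:delt_event_helper} removes the spurious factor of two. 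This yields $(\diffp', \delta+\delta')$-DP with $\diffp'$ from \eqref{eq:compComp}, which is exactly the guarantee of Theorem~\ref{thm:rT_DP}. The main thing to be careful about is the first step: one must check that dropping the indices at value $\leq h_{(\bk+1)}$, rather than padding to a full $\bk$-element domain as in Algorithm~\ref{algo:randThresExp}, does not reintroduce a dependence on $\Delta$ anywhere in the bad-event count --- which is exactly what the enlarged $\ln(\bk/\delta)$ threshold is designed to absorb.
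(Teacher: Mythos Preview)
Your proposal is correct and follows essentially the same approach as the paper: replay the Section~\ref{sec:exp} analysis with $\domainG{\bk}{\bbh}$ in place of $\domainE{\bk}{\bbh}$, verifying the value bound $h_i \leq h_{(\bk+1)}+1$ for indices in the set difference and the cardinality bound $|\domainG{\bk}{\bbh}\setminus\domainG{\bk}{\bbh'}|\leq \bk$, so that the enlarged threshold $\ln(\bk/\delta)/\diffp$ absorbs the bad events. The paper packages the value bound inside a slightly stronger structural statement (Lemma~\ref{lem:domainSimilar}, which also shows one strict domain is contained in the other), whereas you derive the value bound directly from Lemma~\ref{lem:thresSimilar}; this is a harmless simplification since only the value bound is actually used downstream.
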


Recall how we defined the limited domain $\domain{k}{\bbh}$ in \eqref{eq:domain} for a given histogram $\bbh$. We will further restrict this domain to a smaller domain $\domainG{k}{\bbh}$ for each database and a given value $k$, where he only consider counts that are strictly larger than the $(k+1)$-th largest count,

\begin{equation}
\domainG{k}{\bbh} \defeq \{i \in [d]: h_i > h_{(k+1)} \} \subseteq \domain{k}{\bbh}.
\label{eq:domainG}
\end{equation}

We now present a variant of Lemma~\ref{lem:outsideIntersection} for the limited domain $\domainG{k}{\bbh}$
\begin{lemma}\label{lem:domainSimilar}
For any neighboring histograms $\bbh, \bbh'$, and some fixed $k<d$, then one of the following must hold:

\begin{enumerate}
\item $\domainG{k}{\bbh'} \subseteq \domainG{k}{\bbh}$ and for any $i \in \domainG{k}{\bbh} \setminus \domainG{k}{\bbh'}$ we must have $h_i = h_{(k+1)} + 1$

\item $\domainG{k}{\bbh} \subseteq \domainG{k}{\bbh'}$ and for any $i \in \domainG{k}{\bbh'} \setminus \domainG{k}{\bbh}$ we must have $h_i' = h'_{(k+1)} + 1$

\end{enumerate}
\end{lemma}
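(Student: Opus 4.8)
The plan is to do a WLOG reduction to the case $\bbh \geq \bbh'$ and then track which indices can lie in the symmetric difference of the two strictly-limited domains. Since $\bbh,\bbh'$ are neighbors, exactly one of $\bbh \geq \bbh'$ or $\bbh' \geq \bbh$ holds; assume $\bbh \geq \bbh'$ and aim to prove item (1), as item (2) will then follow by the symmetric argument. By Lemma~\ref{lem:thresSimilar} applied at index $k+1$, we know $h_{(k+1)} \in \{h'_{(k+1)}, h'_{(k+1)}+1\}$.

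First I would show the containment $\domainG{k}{\bbh'} \subseteq \domainG{k}{\bbh}$. Take $i \in \domainG{k}{\bbh'}$, so $h'_i > h'_{(k+1)}$. Since $\bbh \geq \bbh'$ we have $h_i \geq h'_i \geq h'_{(k+1)}+1$. If $h_{(k+1)} = h'_{(k+1)}$ we are immediately done since $h_i \geq h'_{(k+1)}+1 > h_{(k+1)}$. If instead $h_{(k+1)} = h'_{(k+1)}+1$, I need $h_i > h'_{(k+1)}+1$, i.e. $h_i \geq h'_{(k+1)}+2$; here the point is that when the $(k+1)$-th count goes \emph{up} by one, at least one index must have had its count increase, and a short counting argument (how many indices among those counted by $h'_{(1)},\dots,h'_{(k+1)}$ can still sit at a value $\leq h'_{(k+1)}$ under $\bbh$) should force the strict inequality — this is the same flavor of ordering argument used in Lemma~\ref{lem:boundedDomain}. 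Once the containment is established, for $i \in \domainG{k}{\bbh} \setminus \domainG{k}{\bbh'}$ we have $h_i > h_{(k+1)}$ but $h'_i \leq h'_{(k+1)}$; combining $h_i \leq h'_i + 1$ (neighbors), $h_i > h_{(k+1)} \geq h'_{(k+1)}$, and $h'_i \leq h'_{(k+1)}$ pins $h_i = h'_{(k+1)}+1$, and then $h_{(k+1)} < h_i = h'_{(k+1)}+1 \leq h_{(k+1)}+1$ forces $h_{(k+1)} = h'_{(k+1)}$ and hence $h_i = h_{(k+1)}+1$, as claimed.

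The step I expect to be the main obstacle is the containment $\domainG{k}{\bbh'}\subseteq\domainG{k}{\bbh}$ in the case $h_{(k+1)} = h'_{(k+1)}+1$: one must rule out that an index $i$ with $h'_i = h'_{(k+1)}+1$ ends up with $h_i = h'_{(k+1)}+1 = h_{(k+1)}$ and thus falls out of $\domainG{k}{\bbh}$. The resolution is that $\bbh \geq \bbh'$ together with $h_{(k+1)} = h'_{(k+1)}+1$ means the unique coordinate that changed must be one of the top-$(k+1)$ coordinates raising from $h'_{(k+1)}$ to $h'_{(k+1)}+1$, so every index already strictly above $h'_{(k+1)}$ under $\bbh'$ is still strictly above $h_{(k+1)} = h'_{(k+1)}+1$ under $\bbh$ only if it was at least $h'_{(k+1)}+2$ already — and an index with $h'_i > h'_{(k+1)}$ that is \emph{not} the changed coordinate satisfies $h_i = h'_i \geq h'_{(k+1)}+1$; to get strictness I would argue that if $h'_i = h'_{(k+1)}+1$ for such an $i$, then together with the single raised coordinate there are $\geq k+1$ coordinates of value $\geq h'_{(k+1)}+1$ in $\bbh'$ already contradicting $h'_{(k+1)} = $ the $(k+1)$-th value (accounting carefully for ties via the fixed data-independent tie-break). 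Filling in this tie-breaking bookkeeping is the one genuinely fiddly part; everything else is routine manipulation of the defining inequalities.
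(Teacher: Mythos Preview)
Your WLOG reduction is the genuine gap. You assume that $\bbh \geq \bbh'$ forces item (1), and that item (2) then follows by symmetry. This is false: the paper explicitly notes that either item can occur regardless of which histogram dominates. A concrete counterexample with $k=1$ is $\bbh=(3,3,1)$ and $\bbh'=(3,2,1)$. Here $\bbh \geq \bbh'$, yet $h_{(2)}=3$ gives $\domainG{1}{\bbh}=\emptyset$ while $h'_{(2)}=2$ gives $\domainG{1}{\bbh'}=\{1\}$, so $\domainG{1}{\bbh'}\not\subseteq\domainG{1}{\bbh}$ and it is item (2) that holds. This is precisely the case $h_{(k+1)}=h'_{(k+1)}+1$ that you flagged as ``fiddly'': no amount of tie-breaking bookkeeping can rescue it, because the containment you are trying to prove is simply not true there. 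Your auxiliary claim that ``the unique coordinate that changed'' can be located is also incorrect, since neighbors here may differ in every coordinate by at most one.

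The paper's proof is organized differently. After assuming $\bbh\geq\bbh'$ it does \emph{not} commit to proving item (1). Instead it splits on which non-containment occurs: if $\cD\not\subseteq\cD'$ (writing $\cD=\domainG{k}{\bbh}$, $\cD'=\domainG{k}{\bbh'}$), a witness $j\in\cD\setminus\cD'$ together with $h'_j\leq h_j\leq h'_j+1$ and Lemma~\ref{lem:thresSimilar} forces $h_{(k+1)}=h'_{(k+1)}$ and $h_j=h_{(k+1)}+1$; one then shows by contradiction that $\cD'\subseteq\cD$, giving item (1). If instead $\cD'\not\subseteq\cD$, the same style of argument forces $h_{(k+1)}=h'_{(k+1)}+1$ and $h'_{j'}=h'_{(k+1)}+1$ for the witness, and then $\cD\subseteq\cD'$, giving item (2). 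Your second paragraph already contains the correct chain of inequalities for the first of these cases; you only need to reorganize the case split so that you are proving the disjunction rather than a single branch.
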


Note that either of these properties is possible even if $\bbh'$ is the histogram obtained from removing one person's data from $\bbh$, or vice versa.
\begin{proof}
To simplify notation, let $\cD = \domainG{k}{\bbh}$ and $\cD' =  \domainG{k}{\bbh'}$.  Consider the case where $\bbh \geq \bbh'$.  Assume $\cD \not\subseteq \cD'$ so there must exist $j \in \cD \setminus \cD'$.  Hence, $h_j > h_{(k+1)}$ but $h_j' \leq h_{(k+1)}'$.  Since $h_j' \leq h_j \leq h_j'+1$, we must have 
$$
h_{(k+1)} < h_j \leq h_j' + 1 \leq h_{(k+1)}' + 1.
$$
By Lemma~\ref{lem:thresSimilar} we must have $h_{(k+1)} = h_{(k+1)}'$ and $h_j = h_{(k+1)} + 1$.  We now show that it must be the case that $\cD' \subseteq \cD$, so we assume that $\exists j' \in \cD'\setminus \cD$.  Thus, $h_{(k+1)}' < h_{j'}'$ but $h_{j'} \leq h_{(k+1)}$ and recall that $h_{(k+1)} = h_{(k+1)}'$ in this case, in which case
$
h_{(k+1)} < h_{j'}' \leq h_{j'} \leq h_{(k+1)}
$, and hence a contradiction.

We now assume that $\cD' \not\subseteq \cD$ so there must exist $j' \in \cD' \setminus \cD$.  We then have $h_{j'}' > h_{(k+1)}'$ but $h_{j'} \leq h_{(k+1)}$, which forms the following sequence of inequalities,
$$
h_{(k+1)}' < h_{j'}' \leq h_{j'}  \leq h_{(k+1)} .
$$
Again, from Lemma~\ref{lem:thresSimilar} we must have $h_{(k+1)} = h_{(k+1)}' + 1$ and $h_{j'}' = h_{(k+1)}' + 1$.  Now assume that $\exists j \in 
\cD\setminus \cD'$ and so $h_{(k+1)} < h_j$ and $h_j' \leq h_{(k+1)}'$.  This gives us
$
h_{(k+1)}' + 1 < h_{j}' + 1 \leq h_{(k+1)}' +1
$, and hence a contradiction.

The case where $\bbh' \geq \bbh$ follows the same argument.  
\end{proof}

\begin{proof}[Proof of Lemma~\ref{lem:main2algo}]
For all the proofs in Section~\ref{sec:exp} we can just replace all calls to $\domain{\bk}{\bbh}$ with our new definition of $\domainG{\bk}{\bbh}$, as this variant is the same algorithm but simply considers a smaller set of indices to output. We wrote our definition of the peeling exponential mechanism to allow any subset of indices, so all the proofs in this regard will generalize to this new domain definition.

The only change we then need to consider is how this changes our bound of $\delta$. The two sufficient properties that were used for bounding $\delta$ were that for any $i \in \domainG{\bk}{\bbh} \setminus \domainG{\bk}{\bbh'}$ we had $h_i \leq h_{(\bk + 1)} + 1$, i.e. Lemma~\ref{lem:outsideIntersection}, and that $| \domainG{\bk}{\bbh} \setminus \domainG{\bk}{\bbh'}| \leq \min\{\Delta,\bk\}$, i.e. Lemma~\ref{lem:boundedDomain}. Note that Lemma~\ref{lem:domainSimilar} still gives us the first property for this new domain. Further note that the only other change to our algorithm is that we now have the value added for $h_{\bot}$ is now $\ln(\bk/\delta)/\diffp$ as opposed to $\ln(\min\{\Delta,\bk\}/\delta)/\diffp$. Accordingly, it then just suffices to bound $| \domainG{\bk}{\bbh} \setminus \domainG{\bk}{\bbh'}| \leq \bk$, which is true by construction.

\end{proof}


\subsection{Optimizing for Threshold Index}\label{sec:opt_thres}

In this section, we will discuss the choice of $\bk$ for determining our threshold when we assume that our sensitivity is unbounded, i.e. $\Delta = d$. This can viewed as an optimization problem that is dependent on the data, where we would like to maximize the probability of the noisy values being above our noisy threshold $h_\bot = h_{(\bk + 1)} + 1 + \ln(\bk/\delta)/\diffp + \gum(1/\diffp)$ in Algorithm~\ref{algo:randThresExp}. Note that our threshold has one term, $h_{(\bk + 1)}$, that is decreasing in $\bk$ and another term, $\ln(\bk/\delta)/\diffp$, that is increasing in $\bk$. The optimization problem will then be to minimize $h_{(\bk + 1)} + 1 + \ln(\bk/\delta)/\diffp$. Intuitively, this will be a point in the histogram in which we see a sudden drop. However, making this data-dependent can violate privacy, so we will instead compute this minimal index using the exponential mechanism $\EM_q$ with $q(\bbh,i) = -h_i - \ln(i/\delta)/\diffp$ to return an estimate for the minimum count and pay an additional $\diffp$ in the privacy to achieve a better threshold.

\begin{algorithm}[h!]
	\caption{Optimal Threshold}
	\begin{algorithmic}
		\State \textbf{Input:} Histogram $\bbh$, cut off at $\bar{d}$ with values $h_{(1)} \geq \cdots \geq h_{(\bar{d})} \geq \cdots \geq h_{(d)}$, along with $k, \diffp,\delta$.
		\State \textbf{Output:} Ordered set of indices $S$.
		\For {$i \in [k,\bar{d}]$}
        		\State Set $v_i = h_{(i)} + \ln(i/\delta)/\diffp + \gum(1/\diffp)$
		\EndFor
		\State Set $\bk = \argmin \{v_i\}$
		\State Return $\rT{k-1,\bk}(\bbh; \diffp,\delta)$		
	\end{algorithmic}\label{algo:optimization}
\end{algorithm}

 We can actually return the optimal $\bk$ that was computed with no additional privacy cost.

\begin{lemma}
For any $\delta' \geq 0$,  Algorithm~\ref{algo:optimization} is $\left(\diffp', \delta + \delta' \right)$-DP for $\diffp'$ given in \eqref{eq:compComp}.
\end{lemma}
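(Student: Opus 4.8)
The plan is to decompose Algorithm~\ref{algo:optimization} into two adaptively composed pieces: a single exponential-mechanism call that privately selects the threshold index $\bk$, followed by one run of $\rT{k-1,\bk}$ on the same histogram. First I would observe that the selection step is exactly an exponential mechanism: writing $v_i = h_{(i)} + \ln(i/\delta)/\diffp + \gum(1/\diffp)$ and setting $\bk = \argmin_i v_i$ over $i \in \{k,\dots,\bar d\}$ realizes, via the Gumbel--argmax representation of Lemma~\ref{lem:ExpGumbelOneShot}, a draw from $\EM_q$ with quality score $q(\bbh,i) = -h_{(i)} - \ln(i/\delta)/\diffp$. So Algorithm~\ref{algo:optimization} is the adaptive composition "sample $\bk \sim \EM_q(\bbh)$, then output $\rT{k-1,\bk}(\bbh;\diffp,\delta)$ together with $\bk$".

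Next I would check the two properties of $q$ that we need. Its sensitivity is at most $1$: the term $\ln(i/\delta)/\diffp$ is data-independent, and by Lemma~\ref{lem:thresSimilar} the order statistic $h_{(i)}$ changes by at most $1$ between neighboring histograms, so $|q(\bbh,i) - q(\bbh',i)| \le 1$ for every rank $i$. Moreover $q$ is monotone in the dataset: adding a user's record can only weakly increase every count, hence weakly increases each order statistic $h_{(i)}$, so $q(\bbh,i)$ weakly decreases. Therefore, by Lemma~\ref{lem:standard_exp} and Lemma~\ref{lem:range_bounded_EM}, the selection step is $\diffp$-DP and, more usefully here, $\diffp$-range-bounded.

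Then I would plug in the structure of $\rT{k-1,\bk}$ established in Section~\ref{sec:exp}. By Lemma~\ref{lem:peelingEQone}, $\rT{k-1,\bk}(\bbh)$ is equal in distribution to $\prEM{k-1,\bk}(\bbh,\domain{\bk}{\bbh})$, which runs at most $k-1$ peeling steps of $\hEM{\bk}$, and each of these is $\diffp$-range-bounded on any fixed domain by Corollary~\ref{cor:exp_range_bound}; the only departure from pure DP is the bad-event mass $\le\delta$ from indices that lie in the top-$\bk$ of one histogram but not the other (Lemma~\ref{lem:mainDeltBoundRest}), which Lemma~\ref{lem:mainAlgoDP} already shows can be absorbed as a single additive $\delta$ using Claim~\ref{claim:delt_event_helper}. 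Concatenating the $\EM_q$ step with these peeling steps, Algorithm~\ref{algo:optimization} is an adaptive sequence of $1 + (k-1) = k$ mechanisms, each $\diffp$-range-bounded --- here $\bk$, and then the peeled-down domain at each subsequent step, are legitimate adaptive functions of the earlier outputs. Applying the range-bounded composition bound of Lemma~\ref{lem:compBoundedRange} with $t = k$ and every $\diffp_i = \diffp$ yields precisely the $\diffp'$ of \eqref{eq:compComp}, while the single bad-event term contributes $\delta$ and the composition slack contributes $\delta'$, giving $(\diffp',\delta+\delta')$-DP. Finally, returning the value $\bk$ itself costs nothing, since it is already a deterministic function of the first mechanism's output, which has been fully accounted for.

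I expect the main obstacle to be making that last step rigorous: verifying that the adaptive composition of a pure-DP (range-bounded) selection step with the \emph{approximate}-DP mechanism $\rT{k-1,\bk}$ still loses only "$+\delta$", rather than "$+2\delta$" or "$+e^{\diffp}\delta$". Rather than treating $\rT{k-1,\bk}$ as a black box, the clean route is to re-run the argument of Lemma~\ref{lem:mainAlgoDP} with the extra leading $\EM_q$ step inserted: the pure-DP / range-bounded part --- the $\EM_q$ call together with the peeling restricted to the common domain $\cD^{\diffp} = \domainE{\bk}{\bbh}\cap\domainE{\bk}{\bbh'}$ --- composes cleanly through Lemma~\ref{lem:compBoundedRange}, and the bad-event bookkeeping via Claim~\ref{claim:delt_event_helper} is carried out once at the very end exactly as in that proof. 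A minor secondary check is that the ranks passed to $q$ always satisfy $\bk \ge k > k-1$, so $\rT{k-1,\bk}$ is invoked with a valid $\bk \ge k-1$.
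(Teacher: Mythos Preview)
Your proposal is correct and follows essentially the same route as the paper: verify via Lemma~\ref{lem:thresSimilar} that the quality score for the $\bk$-selection has sensitivity $1$ and is monotone, conclude the selection step is $\diffp$-range-bounded, and then---rather than composing black-box with $\rT{k-1,\bk}$---prepend this one extra range-bounded call to the peeling sequence from Section~\ref{sec:exp} and re-run the argument of Lemma~\ref{lem:mainAlgoDP} so the bad-event mass contributes a single $+\delta$. Your write-up is in fact more explicit than the paper about why the black-box route would lose a factor and why re-opening Lemma~\ref{lem:mainAlgoDP} is the clean fix.
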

\begin{proof} From  Lemma~\ref{lem:thresSimilar} we know that if we have neighboring histograms $\bbh \geq \bbh'$ then for any $i \in[d]$ we must have $h_{(i)} = h'_{(i)}$ or $h_{(i)} = h'_{(i)} + 1$, which is to say that it has sensitivity 1 and is monotonic. Furthermore, we know that $\ln(i/\delta)/\diffp$ is fixed and not data dependent, and so because adding Gumbel noise is equivalent to running the exponential mechanism, which is $\diffp$-range-bounded, our choice of $\bk$ is $\diffp$-range-bounded.  We define $\cM(\bbh)$ to be the mechanism that returns $\argmin_{i\in [k+1,\bar{d} ]}\{ h_{(i)} + \ln(i/\delta)/\diffp + \gum(1/\diffp) \}$ and is hence $\diffp$-range-bounded. 

Note that our analysis in Section~\ref{sec:exp} required introducing an adaptive sequence of \emph{peeling} exponential mechanisms $\prEM{k,\bk}(\bbh,\domain{\bk}{\bbh})$ and then showing that $\rT{k,\bk}(\bbh)$ was equivalent in distribution to it.  Here, we are starting with a different exponential mechanism to obtain $\bk$, but once we have that we continue as usual, albeit with $k-1$ rather than $k$. It is then straightforward to replicate the analysis in Section~\ref{sec:exp}, except with the alternative peeling exponential mechanism $\prEM{k-1,\cM(\bbh)}\left(\bbh,\domain{\cM(\bbh)}{\bbh} \right)$.

\end{proof}


\section{Pay-what-you-get Composition}

In this section, we will examine our multiple calls procedure $\mrT$ given in Algorithm~\ref{algo:multipleThres}, and present the analysis to prove Theorem~\ref{thm:multiple_callsDP}.\footnote{Note that if we use Algorithm~\ref{algo:optimization} from Section~\ref{sec:opt_thres} to optimize the threshold at each round, then we just need to instead update with $\kmax \leftarrow \kmax - (|o_i| +1)$ in $\mrT$ because we need to additionally pay for the optimization in each call.} 
We will use the fact that the closeness of the probability distribution for our limited domain algorithm depends on the size of the output in order to reduce the privacy loss over multiple calls to the limited domain algorithm. We can view this as a combination of the peeling exponential mechanism and the sparse vector technique \cite{DworkRo14}.  
More specifically, we know that Algorithm~\ref{algo:randThresExp} is equivalent to a variant of the peeling exponential mechanism, and the size of the output tells us the number of adaptive calls made to the exponential mechanism.
Accordingly, we set a threshold that bounds the total number of adaptively selected limited domain exponential mechanism calls made over multiple rounds of Algorithm~\ref{algo:randThresExp}. 
This parallels the sparse-vector technique that bounds the number of calls to above-threshold, a mechanism that also has a stopping condition based upon noisy estimates falling above or below a noisy threshold.
However, if we want to show that the privacy degrades with the size of the output, then we cannot just take the privacy guarantees from Theorem~\ref{thm:rT_DP} (which are in terms of $k$) and apply known adaptive composition theorems. 
Additionally, we want to apply advanced composition on the total calls to the limited exponential mechanism within calls to Algorithm~\ref{algo:randThresExp}, but we only want to pay for an additional $\delta$ each time we call Algorithm~\ref{algo:randThresExp}.

This will require a more meticulous analysis of our multiple calls procedure.
As in our previous analysis, we want to fix neighboring histograms so that we can separate the respective outcome space into \emph{good} and \emph{bad} events. This will become notationally heavy for this procedure because the histograms are an adaptive sequence. It then becomes necessary to fix neighboring adaptive sequences of queries, which we will set up with the following notation.

Let $\bbh_{i} (o_{<i})$ be the adaptive histogram that results from seeing the previous outcomes $o_{1},...,o_{i-1}$. 
We want to fix the randomness in the adaptive sequence of neighboring histograms.  We then define $\cH^{(0)}$ to be the family of all possible adaptive histogram sequences, $\bbh^{(0)}_{1}, \bbh^{(0)}_{2}( o_{1}),$  $\bbh^{(0)}_{3} (o_{<3})$, $\cdots, \bbh^{(0)}_{\ell}( o_{<\ell})$ where $\ell \leq \ellmax$ and $o_{i}$ is a feasible outcome for $\rT{k_{i},\bk_{i}}\left( \bbh^{(0)}_{i}(o_{<i}) \right)$.  We then write $\cH^{(1)}$ to be a neighboring family of $\cH^{(0)}$, which is to say that it consists of adaptive histogram sequences $\bbh^{(1)}_{1}, \bbh^{(1)}_{2}(o_{1}), \cdots, $ where each $\bbh^{(1)}_{i}(o_{<i} )$ is a neighbor of  $\bbh^{(0)}_{i}(o_{<i})$.  Note that there might be outcomes $o_i$ that are feasible in the adaptive sequence of histograms $h_1^{(b)}, \cdots h_{i}^{(b)}(o_{<i})$ when $b = 0$ but not when $b = 1$.  Further, when $b = 1$, we are not even considering some feasible outcomes in $\cH^{(0)}$ or $\cH^{(1)}$.  We want to make sure that we are only looking over feasible outcomes for both $b = 0$ and $b= 1$.  Note that $\cH^{(b)}$ for $b \in \{ 0,1\}$ can be thought of as a tree showing each possible realized histogram in an adaptive sequence based on the previous outcomes, so that a sequence $\vbbh \in \cH^{(b)}$ gives a possible path in $\cH^{(b)}$.  
Furthermore, for any $\bbh^{(1)}_{i} (o_{<i})$ and $\bbh^{(0)}_{i}(o_{<i})$, we let $\cD_i^{\bk_{i}}(o_{<i}) = \domainE{\bk_{i}}{\bbh^{(0)}_{i}(o_{<i})} \cap \domainE{\bk_{i}}{\bbh^{(1)}_{i}(o_{<i})}$

The procedure $\mpEM$ in Algorithm~\ref{algo:multipleThresExp} presents a variant of Algorithm~\ref{algo:multipleThres} that calls the peeling exponential mechanism, but only has adaptive sequences in $\cH^{(0)}$ or $\cH^{(1)}$.  Note that we have limited the outcomes at round $i$ to only be from $\prEM{k_{i}}(\bbh^{(b)}_{i}(o_{<i}), \cD_i^{\bk_{i}}(o_{<i}))$, which restricts the outcomes to only be sequences in $ \cD_i^{\bk_{i}}(o_{<i}) \cup \{ \bot\}$, where the indices are common for either $b = 0$ or $b = 1$.

\begin{algorithm}[h!]
	\caption{$\mpEM$; Multiple queries to random threshold with limited domain}
	\begin{algorithmic}
		\State \textbf{Input:} Bit $b \in \{0,1 \}$ corresponding to $\cH^{(b)}$  with privacy parameters $\diffp,\delta$.
		\State \textbf{Output:} Sequence of outputs $(o_{1},...,o_{\ell})$ for $\ell \leq \ellmax$.
		\While {$\kmax > 0$ and $\ellmax > 0$}
		\State From previous outcome $o_{<i}$, select $k_{i}$ and $\bk_{i}$ and $\bbh^{(b)}_{i}(o_{<i})$ in $\cH^{(b)}$
		\If {$k_{i} \leq \kmax$}
		\State Let $o_{i} = \prEM{k_{i},\bk_{i}}(\bbh^{(b)}_{i}(o_{<i}), \cD_i^{\bk_{i}}(o_{<i}))$ along with $\diffp$ and $\delta$
		\State $\kmax \leftarrow \kmax - |o_{i}| $ and $\ellmax \leftarrow \ellmax - 1$
		\EndIf
		\EndWhile
		\State Return $o = (o_{1},...,o_{\ell})$	
	\end{algorithmic}\label{algo:multipleThresExp}
\end{algorithm}

\begin{lemma}\label{lem:multPeeling}
For any $\delta' \geq 0$, Algorithm~\ref{algo:multipleThresExp} is $(\diffpmax,\delta')$-DP where $\diffpmax$ is given in \eqref{eq:eps_max}. 
\end{lemma}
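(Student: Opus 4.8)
The plan is to fix the neighboring pair of adaptive families $\cH^{(0)},\cH^{(1)}$ and exhibit $\mpEM$ as a post-processing of an adaptive composition of $\diffp$-range-bounded mechanisms of a fixed length $\kmax$, so that Lemma~\ref{lem:compBoundedRange} applies directly. First I would unwind the nested loops: each call to $\prEM{k_i,\bk_i}$ in round $i$ makes one call to the limited histogram exponential mechanism $\hEM{\bk_i}(\cdot,\,\cD_i^{\bk_i}(o_{<i})\setminus\cI)$ per peeling step, and peeling halts as soon as $\bot$ is drawn or $k_i$ indices have been collected, so round $i$ makes exactly $|o_i|$ such calls. Since the outer loop decrements $\kmax$ by $|o_i|$ and the guard $k_i\le\kmax$ forbids overshoot, the whole run makes at most $\kmax$ calls to $\hEM{}$ --- this is the place where the output length, rather than $k_i$, controls the cost, in the same spirit as the stopping rule of sparse vector. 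I would then pad this random-length sequence to the fixed length $\kmax$ by appending trivial mechanisms that deterministically emit a fixed symbol $\star$ once the process has terminated (a constant mechanism is $0$-range-bounded), obtaining a sequence $\cN_1,\dots,\cN_{\kmax}$; the identity of $\cN_t$ --- which round $i$, which partial set $\cI$, hence which domain $\cD_i^{\bk_i}(o_{<i})\setminus\cI$, or else ``emit $\star$'' --- is a function of $\bigl(\cN_1(\cdot),\dots,\cN_{t-1}(\cdot)\bigr)$ and is the same function whether the underlying family is $\cH^{(0)}$ or $\cH^{(1)}$, because the common domains $\cD_i^{\bk_i}(o_{<i})$ are defined symmetrically in the two families.

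Next I would verify the hypotheses of Lemma~\ref{lem:compBoundedRange}. For range-boundedness, each non-trivial $\cN_t$ is $\hEM{\bk_i}(\cdot,\cD)$ for a fixed domain $\cD$, which Corollary~\ref{cor:exp_range_bound} certifies to be $\diffp$-range-bounded; instantiated at the neighbors $\bbh^{(0)}_i(o_{<i})$ and $\bbh^{(1)}_i(o_{<i})$ (differing in one record, so the threshold coordinate $h_\bot$ moves monotonically by at most $1$, which Corollary~\ref{cor:exp_range_bound} already absorbs), this is exactly the step-$t$ range-bounded guarantee. Since every outcome lies in $\cD_i^{\bk_i}(o_{<i})\cup\{\bot\}$, which is feasible for both families, there are no ``bad'' outcomes and hence no additive $\delta$ is incurred here. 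For the post-processing step, the transcript $\bigl(\cN_1(\cdot),\dots,\cN_{\kmax}(\cdot)\bigr)$ determines the output $(o_1,\dots,o_\ell)$ of $\mpEM$: re-segment the non-$\star$ entries into rounds using the adaptively chosen $k_i$ and the positions of the $\bot$ symbols.

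Finally, Lemma~\ref{lem:compBoundedRange} applied with $t=\kmax$ and $\diffp_i=\diffp$ gives that $\bigl(\cN_1,\dots,\cN_{\kmax}\bigr)$ is $(\diffpmax,\delta')$-DP for any $\delta'\ge 0$, where substituting into \eqref{eq:compBetter} yields exactly the $\diffpmax$ of \eqref{eq:eps_max}; post-processing preserves this, so $\mpEM$ is $(\diffpmax,\delta')$-DP. I expect the main obstacle to be making the padding / stopping-time argument watertight --- precisely, that the ``when do we stop'' information leaks nothing beyond what the $\hEM{}$ outputs already reveal, that the adaptivity allowed in Lemma~\ref{lem:compBoundedRange} genuinely covers the choice of a null mechanism, and that $\cD_i^{\bk_i}(o_{<i})\setminus\cI$ is a fixed set once the neighboring pair and the transcript are fixed, so that Corollary~\ref{cor:exp_range_bound} applies verbatim at each step.
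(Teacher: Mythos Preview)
Your proposal is correct and follows essentially the same approach as the paper: unwind $\mpEM$ into an adaptive sequence of at most $\kmax$ calls to $\hEM{\bk_i}(\cdot,\cD)$ on a fixed common domain, invoke Corollary~\ref{cor:exp_range_bound} for $\diffp$-range-boundedness of each call, and apply Lemma~\ref{lem:compBoundedRange}. The paper's proof is a two-sentence sketch of exactly this; your explicit padding to length $\kmax$ with null mechanisms and the post-processing step are the natural way to make that sketch rigorous.
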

\begin{proof}
Algorithm~\ref{algo:multipleThresExp} is just an adaptive sequence of exponential mechanism calls (and at most $\kmax$), and they must use the same subset of $[d]$ by construction so we know each is $\diffp$-range-bounded by Corollary~\ref{cor:exp_range_bound}. The privacy guarantees then follow from Lemma~\ref{lem:compBoundedRange}.
\end{proof}

As in previous sections, we consider these fixed neighbors which are adaptive sequences here, and we separate our outcome space into \emph{bad} and \emph{good} outcome sets.

\begin{definition}
Given neighboring families of adaptive histogram sequences $\cH^{(0)}$ and $\cH^{(1)}$, let $\cS^{(b)}$ be the set of possible outputs from Algorithm~\ref{algo:multipleThres} on $\cH^{(b)}$. As before we let

\[
\cS^\delta \defeq \cS^{(0)} \setminus \cS^{(1)} 
\]

\end{definition}

Note that  $\cS^{(0)} \cap \cS^{(1)}$ is the common set of possible outputs when Algorithm~\ref{algo:multipleThresExp} has input $b = 0$ or $b = 1$. 
We then bound the probability of an outcome in the set of \emph{bad} outcomes by simply considering each adaptive call to Algorithm~\ref{algo:randThresExp}, and use our known bounds for this algorithm outputting a \emph{bad} outcome.

\begin{lemma}\label{lem:multCallsDelt}
For adaptive sequence of histograms $\cH^{(0)}$, the procedure $\mrT$ satisfies the following
\[
\Pr[\mrT(\cH^{(0)}) \in \cS^\delta] \leq \ellmax \delta
\]
\end{lemma}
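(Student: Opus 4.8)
The plan is to decompose the bad event $\{\mrT(\cH^{(0)})\in\cS^\delta\}$ into a union over the at most $\ellmax$ rounds of ``round $i$ is locally bad'' events, show that each such event has conditional probability at most $\delta$ given the outcome prefix by invoking the single-call bound of Lemma~\ref{lem:mainDeltBoundRest}, and then finish with a union bound.

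First I would pin down what it means for an outcome sequence $o=(o_1,\dots,o_\ell)$ to lie in $\cS^\delta=\cS^{(0)}\setminus\cS^{(1)}$. Along either neighboring tree the histogram used at round $i$ is the deterministic function $\bbh^{(b)}_i(o_{<i})$ of the prefix, and the parameters $k_i,\bk_i$ as well as the $\kmax,\ellmax$ decrements are themselves determined by $o_{<i}$ and $|o_1|,\dots,|o_{i-1}|$; hence $o$ induces a unique path in each of $\cH^{(0)}$ and $\cH^{(1)}$, with identical stopping behavior. Since $\rT{k_i,\bk_i}$ only ever outputs indices from its own limited domain (together with a possible trailing $\bot$), membership $o\in\cS^{(0)}$ forces $o_i\setminus\{\bot\}\subseteq\domainE{\bk_i}{\bbh^{(0)}_i(o_{<i})}$ at every round, whereas $o\notin\cS^{(1)}$ forces the existence of a round $i$ at which $o_i$ contains some index outside $\domainE{\bk_i}{\bbh^{(1)}_i(o_{<i})}$. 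Combining the two, that index must lie in $\domainE{\bk_i}{\bbh^{(0)}_i(o_{<i})}\setminus\domainE{\bk_i}{\bbh^{(1)}_i(o_{<i})}$. So if $B_i$ denotes the event that the $i$-th output contains an index of $\domainE{\bk_i}{\bbh^{(0)}_i(o_{<i})}\setminus\domainE{\bk_i}{\bbh^{(1)}_i(o_{<i})}$, then $\{\mrT(\cH^{(0)})\in\cS^\delta\}\subseteq\bigcup_{i=1}^{\ellmax}B_i$.

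Next I would bound $\Pr[B_i\mid o_{<i}]$ on the (random) prefix $o_{<i}$. Conditioned on $o_{<i}$, either no call is made at round $i$ (when $k_i>\kmax$), in which case $B_i$ is impossible, or the $i$-th call is exactly $\rT{k_i,\bk_i}$ run on the now-fixed histogram $\bbh^{(0)}_i(o_{<i})$ whose fixed neighbor is $\bbh^{(1)}_i(o_{<i})$. By Lemma~\ref{lem:peelingEQone} this call is distributed as $\prEM{k_i,\bk_i}\big(\bbh^{(0)}_i(o_{<i}),\domainE{\bk_i}{\bbh^{(0)}_i(o_{<i})}\big)$, and $B_i$ is precisely the event of landing in the bad set $\cS^\delta$ of Definition~\ref{defn:eps_delt_sets} for this neighboring pair, so Lemma~\ref{lem:mainDeltBoundRest} gives $\Pr[B_i\mid o_{<i}]\le\delta$. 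Taking expectation over the prefix, $\Pr[B_i]=\mathbb{E}[\Pr[B_i\mid o_{<i}]]\le\delta$, and since there are at most $\ellmax$ rounds a union bound yields $\Pr[\mrT(\cH^{(0)})\in\cS^\delta]\le\sum_{i=1}^{\ellmax}\Pr[B_i]\le\ellmax\,\delta$.

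I expect the main obstacle to be the bookkeeping in the first step rather than any probabilistic subtlety: one must carefully argue that, in the presence of adaptive choices of histograms and parameters, the domains $\domainE{\bk_i}{\cdot}$, the parameters, and the budget-driven stopping are all fixed once the outcome prefix is fixed, so that ``$o$ is infeasible for $\cH^{(1)}$'' can be localized to a single round and matched to the single-call bad set of Lemma~\ref{lem:mainDeltBoundRest}. Once that containment is in hand, the remaining argument is a standard tower-rule-plus-union-bound, with the per-round $\delta$ supplied verbatim by the earlier analysis of Algorithm~\ref{algo:randThresExp}.
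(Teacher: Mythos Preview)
Your proposal is correct and follows essentially the same approach as the paper: both localize the global bad event to a per-round bad set, invoke Lemma~\ref{lem:peelingEQone} and Lemma~\ref{lem:mainDeltBoundRest} to bound each round's conditional probability by $\delta$, and conclude via a union bound over the at most $\ellmax$ rounds. Your write-up is somewhat more explicit about the adaptive bookkeeping than the paper's, but the argument is the same.
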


\begin{proof}
By construction, we know that $\mrT(\cH^{(0)})$ is just an adaptive sequence of calls to $\rT{k_i,\bk_i}(\bbh_i^{(0)}(o_{<i})) $ that depends on the outcomes of the previous rounds. We know from Lemma~\ref{lem:peelingEQone} that conditional on $o_{<i}$, this probability distribution is equivalent to $\prEM{k_i,\bk_i}(\bbh_i(o_{<i} ), \domainE{\bk_i}{\bbh_i(o_{<i})})$. We will write $\cS^{(b)}_{o_{<i}}$ to denote the possible outcomes for 
$$
\prEM{k_i,\bk_i}(\bbh_i (o_{<i}), \domainE{\bk_i}{\bbh_i(o_{<i})}).
$$
We then write $S^\delta_{o_{<i}} \defeq S^{(0)}_{o_{<i}} \setminus S^{(1)}_{o_{<i}}$.  From Lemma~\ref{lem:mainDeltBoundRest} we have 

\[
\Pr[\prEM{k_i,\bk_i}(\bbh_i(o_{<i}), \domainE{\bk_i}{\bbh_i(o_{<i})}) \in \cS^\delta_{o_{<i}} \mid o_{<i}] \leq \delta
\]

We then have a bound for every call $\rT{k_i,\bk_i}(\bbh_i (o_{<i}))$ conditioned on the previous outcomes, allowing us to union bound this event for each $i \leq \ellmax$.
\end{proof}

We further need to give differential privacy guarantees for the \emph{good} outcomes as in Lemma~\ref{lem:mainEpsBoundRest}, but we will not be able to achieve as nice of formulation composing these equalities because we are considering the martingale setting and these probabilities are dependent on previous outcomes.
Accordingly, we will lose an additional factor of 2 on the $\delta$ for this multiple call setting because we must instead apply bounds on the probability of not outputting a \emph{bad} outcome that hold regardless of the previous outcome, and we cannot apply the same trick from Claim~\ref{claim:delt_event_helper}.

\begin{lemma}\label{lem:multCallsEps}
Considering the neighboring adaptive sequences $\cH^{(0)}$ and $\cH^{(1)}$, for any $o \in \cS^{(0)} \cap \cS^{(1)}$ we have the following set of inequalities comparing $\mrT(\cH^{(0)})$ and $\mpEM(0)$

\begin{multline*}
\left(1 - \ellmax \delta \right) \Pr[\mpEM(b) = o] \\
 \leq \Pr[\mrT(\cH^{(b)}) = o] \\
 \leq  \Pr[\mpEM(b) = o]
\end{multline*}
\end{lemma}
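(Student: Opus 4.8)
The plan is to compare $\mrT(\cH^{(b)})$ with $\mpEM(b)$ round-by-round, exploiting the fact that at each round $i$ the two procedures call $\rT{k_i,\bk_i}(\bbh_i^{(b)}(o_{<i}))$ and $\prEM{k_i,\bk_i}(\bbh_i^{(b)}(o_{<i}), \cD_i^{\bk_i}(o_{<i}))$ respectively, and these are related by the same kind of restriction-to-common-domain argument used in Lemma~\ref{lem:mainEpsBoundRest}. Concretely, by Lemma~\ref{lem:peelingEQone} the $i$-th call of $\mrT$ (conditioned on $o_{<i}$) is distributed as $\prEM{k_i,\bk_i}(\bbh_i^{(b)}(o_{<i}), \domainE{\bk_i}{\bbh_i^{(b)}(o_{<i})})$, and Lemma~\ref{lem:mainEpsBoundRest} tells us that for an outcome $o_i$ feasible under both $b=0$ and $b=1$,
\[
\Pr[\prEM{k_i,\bk_i}(\bbh_i^{(b)}(o_{<i}), \domainE{\bk_i}{\bbh_i^{(b)}(o_{<i})}) = o_i \mid o_{<i}]
= \Pr[\prEM{k_i,\bk_i}(\bbh_i^{(b)}(o_{<i}), \cD_i^{\bk_i}(o_{<i})) = o_i \mid o_{<i}] \cdot p_i,
\]
where $p_i \defeq \Pr[\prEM{k_i,\bk_i}(\bbh_i^{(b)}(o_{<i}), \domainE{\bk_i}{\bbh_i^{(b)}(o_{<i})}) \notin \cS^\delta_{o_{<i}} \mid o_{<i}] \in [1-\delta, 1]$ by Lemma~\ref{lem:mainDeltBoundRest}. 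Thus at each round the $\mrT$-probability of $o_i$ equals the $\mpEM$-probability of $o_i$ times a factor $p_i \le 1$, which immediately gives the upper bound $\Pr[\mrT(\cH^{(b)}) = o] \le \Pr[\mpEM(b) = o]$ by multiplying the per-round identities along the path $o = (o_1,\dots,o_\ell)$ and summing over the (common) realization.

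For the lower bound, I would multiply the per-round factors: writing $\Pr[\mrT(\cH^{(b)}) = o] = \Pr[\mpEM(b) = o]\cdot \prod_{i=1}^{\ell} p_i$, it suffices to show $\prod_{i=1}^{\ell} p_i \ge 1 - \ellmax\delta$. Since each $p_i \ge 1-\delta$ and the number of rounds $\ell$ is at most $\ellmax$, we have $\prod_{i=1}^\ell p_i \ge (1-\delta)^{\ell} \ge (1-\delta)^{\ellmax} \ge 1 - \ellmax\delta$ by Bernoulli's inequality (or a simple union-bound argument: the event that none of the $\ell$ rounds lands in its bad set has probability at least $1 - \ellmax\delta$). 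One subtlety to handle carefully is that the product telescopes correctly only because conditioning is on the prefix $o_{<i}$ and the per-round identity holds pointwise for each conditioning value; I would set this up by induction on the number of rounds, with the inductive hypothesis being exactly the pair of inequalities in the statement for the truncated process. The base case is trivial, and the inductive step is the per-round identity above combined with $p_i \in [1-\delta, 1]$.

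The main obstacle I anticipate is bookkeeping around feasibility: an outcome $o$ that is in $\cS^{(0)}\cap\cS^{(1)}$ must, at each prefix, have $o_i$ feasible under both branches so that Lemma~\ref{lem:mainEpsBoundRest} applies at round $i$ — this is precisely why $\mpEM$ was defined to restrict round-$i$ outcomes to $\cD_i^{\bk_i}(o_{<i}) \cup \{\bot\}$, so I would invoke that construction to guarantee the hypotheses of Lemma~\ref{lem:mainEpsBoundRest} are met at every round. A second, minor point is that the stopping condition of $\mrT$ (the $\kmax$ and $\ellmax$ counters) must decrement identically along a fixed path $o$ for both procedures, so that the set of rounds $\ell$ and the queries $(k_i,\bk_i)$ chosen are the same function of $o$ under $\mrT(\cH^{(b)})$ and $\mpEM(b)$; this holds because the adaptive choices depend only on $o_{<i}$, which is the same in both. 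Once these alignment facts are in place, the argument is the clean telescoping product described above.
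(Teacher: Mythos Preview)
Your proposal is correct and follows essentially the same approach as the paper: decompose $\Pr[\mrT(\cH^{(b)})=o]$ and $\Pr[\mpEM(b)=o]$ as products of per-round conditional probabilities, apply Lemma~\ref{lem:peelingEQone} and Lemma~\ref{lem:mainEpsBoundRest} at each round to obtain the identity with multiplicative factor $p_i\in[1-\delta,1]$ (via Lemma~\ref{lem:mainDeltBoundRest}), and then use $(1-\delta)^{\ellmax}\ge 1-\ellmax\delta$ for the lower bound. The paper's proof is exactly this telescoping-product argument, and the feasibility and alignment points you flag are handled in the paper by the same ``by construction'' observations you describe.
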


\begin{proof}
Let the length of outcome $o \in \cS^{(0)} \cap \cS^{(1)}$ be $\ell$.  By construction, we have 

\[
\Pr[\mrT(\cH^{(b)}) = o] = \prod_{i = 1}^\ell \Pr[\rT{k_i,\bk_i}(\bbh_i^{(b)} (o_{<i})) = o_i \mid o_{<i}]
\]

We then apply Lemma~\ref{lem:peelingEQone} and Lemma~\ref{lem:mainEpsBoundRest} where we use $\cS^\delta_{o_{<i}}$ as in the proof of Lemma~\ref{lem:multCallsDelt}

\begin{multline*}
\Pr[\rT{k_i,\bk_i}(\bbh_i^{(b)}(o_{<i}) ) = o_i \mid o_{<i}] = 
\Pr[\prEM{k_i,\bk_i}(\bbh_i^{(b)}(o_{<i}), \cD_i^{\bk_{i}}(o_{<i})) = o_i\mid o_{<i}] \\
\cdot \Pr[\prEM{k_i,\bk_i}\left(\bbh_i^{(b)}(o_{<i}), \domainE{\bk_i}{\bbh_i^{(b)}(o_{<i})}\right) \notin \cS^\delta_{o_{<i}}\mid o_{<i}] .
\end{multline*}
We then apply Lemma~\ref{lem:mainDeltBoundRest} to obtain 

\begin{align*}
& \left(1 - \delta\right) \Pr[\prEM{k_i,\bk_i}(\bbh_i^{(b)}(o_{<i}), \cD_i^{\bk_{i}}(o_{<i})) = o_i \mid o_{<i}]  \\
& \qquad \leq \Pr[\rT{k_i,\bk_i}(\bbh_i^{(b)}(o_{<i})) = o_i \mid o_{<i}] \\
& \qquad \leq \Pr[\prEM{k_i,\bk_i}(\bbh_i^{(b)}(o_{<i}), \cD_i^{\bk_{i}}(o_{<i})) = o_i \mid o_{<i}] 
\end{align*}

By construction, we also have 

\[
\Pr[\mpEM(b) = o] = \prod_{i = 1}^\ell \Pr[\prEM{k_i,\bk_i}(\bbh_i^{(b)}(o_{<i}), \cD_i^{\bk_{i}}(o_{<i})) = o_i \mid o_{<i}].
\]

We then use the fact that $\ell \leq \ellmax$ and $(1 - \ellmax \delta) \leq (1 - \delta)^{\ellmax}$ to achieve our desired inequality. 
\end{proof}

As with our other privacy proofs, these bounds on the \emph{bad} and \emph{good}. outcomes are the main technical details for proving our main lemma.

\begin{lemma}\label{lem:main_mult}
For any $S \subseteq \cS^{(0)}$, we have the following for $\diffpmax$ given in \eqref{eq:eps_max}.

\[
\Pr[\mrT(\cH^{(0)}) \in S] \leq e^{\diffpmax} \Pr[\mrT(\cH^{(1)}) \in S] + 2\ellmax \delta + \delta'
\]

where $\diffpmax$ is given in \eqref{eq:eps_max}

\end{lemma}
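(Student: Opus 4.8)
The plan is to follow the template of the proof of Lemma~\ref{lem:mainAlgoDP}, but with each single-call ingredient replaced by its multi-call analogue: the exact conditional factorization of Lemma~\ref{lem:mainEpsBoundRest} becomes the two-sided sandwich of Lemma~\ref{lem:multCallsEps}, the single-call bad-event bound of Lemma~\ref{lem:mainDeltBoundRest} becomes Lemma~\ref{lem:multCallsDelt}, and the pure-DP comparison of Corollary~\ref{cor:peelingDP} becomes the $(\diffpmax,\delta')$-DP guarantee for $\mpEM$ from Lemma~\ref{lem:multPeeling}. First I would split the target set as $S^\delta \defeq S \cap \cS^\delta$ and $S^\diffp \defeq S \setminus \cS^\delta$. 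Since $\cS^\delta = \cS^{(0)} \setminus \cS^{(1)}$ and $S \subseteq \cS^{(0)}$, the good part $S^\diffp$ lies in $\cS^{(0)} \cap \cS^{(1)}$, so every $o \in S^\diffp$ is feasible for all four of $\mpEM(0)$, $\mpEM(1)$, $\mrT(\cH^{(0)})$, $\mrT(\cH^{(1)})$ and the relevant probabilities are directly comparable. Lemma~\ref{lem:multCallsDelt} immediately gives $\Pr[\mrT(\cH^{(0)}) \in S^\delta] \le \Pr[\mrT(\cH^{(0)}) \in \cS^\delta] \le \ellmax\delta$, which supplies the first half of the additive error.

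For the good part I would run the chain $\mrT(\cH^{(0)}) \rightsquigarrow \mpEM(0) \rightsquigarrow \mpEM(1) \rightsquigarrow \mrT(\cH^{(1)})$. The upper inequality of Lemma~\ref{lem:multCallsEps} (with $b=0$) gives $\Pr[\mrT(\cH^{(0)}) \in S^\diffp] \le \Pr[\mpEM(0) \in S^\diffp]$; then Lemma~\ref{lem:multPeeling} gives $\Pr[\mpEM(0) \in S^\diffp] \le e^{\diffpmax}\Pr[\mpEM(1) \in S^\diffp] + \delta'$; and finally the lower inequality of Lemma~\ref{lem:multCallsEps} (with $b=1$), summed over $o \in S^\diffp$, converts this back into a statement about $\mrT(\cH^{(1)})$: from $\Pr[\mpEM(1)=o] - \Pr[\mrT(\cH^{(1)})=o] \le \ellmax\delta\,\Pr[\mpEM(1)=o]$ we sum and use $\Pr[\mpEM(1)\in S^\diffp]\le 1$ to get $\Pr[\mpEM(1) \in S^\diffp] \le \Pr[\mrT(\cH^{(1)}) \in S^\diffp] + \ellmax\delta \le \Pr[\mrT(\cH^{(1)}) \in S] + \ellmax\delta$. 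Assembling the chain and adding back the $\ellmax\delta$ charged to $S^\delta$ yields $\Pr[\mrT(\cH^{(0)}) \in S] \le e^{\diffpmax}\Pr[\mrT(\cH^{(1)}) \in S] + 2\ellmax\delta + \delta'$, with $\diffpmax$ as in \eqref{eq:eps_max}, inherited directly from Lemma~\ref{lem:multPeeling}.

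The step I expect to be the main obstacle is controlling this second $\ellmax\delta$ and justifying that it does not get inflated in the chain. In the single-call proof, Lemma~\ref{lem:mainEpsBoundRest} is an exact identity, which is exactly what lets Claim~\ref{claim:delt_event_helper} collapse the two bad-event masses $\delta_{\bbh},\delta_{\bbh'}$ into their maximum; here the corresponding rejection-sampling relation in Lemma~\ref{lem:multCallsEps} only holds up to the common multiplicative slack $(1-\delta)^\ell \ge 1-\ellmax\delta$, since the per-round conditional probabilities depend on the adaptive history $o_{<i}$. Consequently the $\max$-trick of Claim~\ref{claim:delt_event_helper} is unavailable, and the bad events of $\cH^{(0)}$ and $\cH^{(1)}$ must be charged separately — this is precisely why the additive loss doubles from the single-call $\delta$ to $2\ellmax\delta$. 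I would be careful to invoke the symmetric form of Lemma~\ref{lem:multCallsDelt} for the bad events of $\cH^{(1)}$ (which underlies the bound on $\Pr[\mpEM(1)=o]-\Pr[\mrT(\cH^{(1)})=o]$ summed over $S^\diffp$), and to keep the ordering of the inequalities so that this extra $\ellmax\delta$ enters as a free additive term rather than being scaled. Once these pieces line up, the deduction of Theorem~\ref{thm:multiple_callsDP} from this lemma should be immediate, exactly as Theorem~\ref{thm:rT_DP} follows from Lemma~\ref{lem:mainAlgoDP}.
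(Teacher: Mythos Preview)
Your overall plan and decomposition match the paper's proof exactly: split $S$ into $S^\delta$ and $S^\diffp$, bound the bad part by Lemma~\ref{lem:multCallsDelt}, and for the good part run the chain $\mrT(\cH^{(0)}) \to \mpEM(0) \to \mpEM(1) \to \mrT(\cH^{(1)})$ using Lemmas~\ref{lem:multCallsEps} and~\ref{lem:multPeeling}. You also correctly diagnose that the second $\ellmax\delta$ is the delicate point and that Claim~\ref{claim:delt_event_helper} is unavailable here.

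However, the specific chain you wrote does \emph{not} keep that second $\ellmax\delta$ unscaled, despite your stated intention. Once you substitute your additive bound $\Pr[\mpEM(1)\in S^\diffp] \le \Pr[\mrT(\cH^{(1)})\in S^\diffp] + \ellmax\delta$ into $e^{\diffpmax}\Pr[\mpEM(1)\in S^\diffp] + \delta'$, the $\ellmax\delta$ is inevitably multiplied by $e^{\diffpmax}$, giving $(e^{\diffpmax}+1)\ellmax\delta + \delta'$ rather than $2\ellmax\delta + \delta'$. No reordering of your three ingredients avoids this, because the only link from $\mpEM(1)$ back to $\mrT(\cH^{(1)})$ sits inside the $e^{\diffpmax}$ factor. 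The paper's fix is to retain the lower inequality of Lemma~\ref{lem:multCallsEps} in its \emph{multiplicative} form $(1-\ellmax\delta)\Pr[\mpEM(1)\in S^\diffp] \le \Pr[\mrT(\cH^{(1)})\in S^\diffp]$ and carry the factor $(1-\ellmax\delta)$ through the entire chain from the left, obtaining
\[
(1-\ellmax\delta)\,\Pr[\mrT(\cH^{(0)})\in S^\diffp] \le e^{\diffpmax}\Pr[\mrT(\cH^{(1)})\in S^\diffp] + (1-\ellmax\delta)\,\delta',
\]
and only at the very end using $\Pr[\mrT(\cH^{(0)})\in S^\diffp] \le (1-\ellmax\delta)\Pr[\mrT(\cH^{(0)})\in S^\diffp] + \ellmax\delta$ to strip the factor at a cost of a single, unscaled $\ellmax\delta$. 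You in fact derived your additive step from exactly this multiplicative inequality, so the repair is simply not to convert it to additive form prematurely.
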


\begin{proof}
As in our previous analysis of Lemma~\ref{lem:mainAlgoDP}, we will separate $S^\delta = S \cap \cS^{\delta}$ and $S^\diffp = S \setminus S^\delta$. From Lemma~\ref{lem:multCallsDelt} we can bound

\[
\Pr[\mrT(\cH^{(0)}) \in S^\delta] \leq \ellmax \delta
\]

We then apply Lemma~\ref{lem:multCallsEps} and Lemma~\ref{lem:multPeeling} to obtain

\begin{align*}
\left(1 - \ellmax\delta \right) & \Pr[\mrT(\cH^{(0)}) \in S^\diffp] \\
& \leq \left(1 - \ellmax\delta \right) \Pr[\mpEM(0) \in S^\diffp] \\
& \leq \left(1 - \ellmax\delta \right) \left( e^{\diffpmax} \Pr[\mpEM(1) \in S^\diffp] + \delta' \right) \\
& \leq e^{\diffpmax}  \Pr[\mrT(\cH^{(1)}) \in S^\diffp]  + \left(1 - \ellmax\delta \right) \delta'
\end{align*}

Combining these properties with the fact that $\Pr[\mrT(\cH^{(1)}) \in S^\delta] = 0$ by construction, we achieve

\begin{align*}
& \Pr[\mrT(\cH^{(0)}) \in S] \\
& \qquad = \Pr[\mrT(\cH^{(0)}) \in S^\delta]  + \Pr[\mrT(\cH^{(0)}) \in S^\diffp] \\
& \qquad \leq \ellmax \delta + e^{\diffpmax}  \Pr[\mrT(\cH^{(1)}) \in S^\diffp]  + \ellmax \delta +  \delta'  \\
& \qquad \leq e^{\diffpmax}  \Pr[\mrT(\cH^{(1)}) \in S]  + 2\ellmax \delta +  \delta' 
\end{align*}

\end{proof}

\begin{proof}[Proof of Theorem~\ref{thm:multiple_callsDP}]
Follows immediately from Lemma~\ref{lem:main_mult}

\end{proof}


\section{Accuracy Analysis}\label{sec:accuracy}

Accuracy comparisons with the standard exponential mechanism approach need to be qualified by the fact that we allow for approximate differential privacy and do not require our algorithm to always output $k$ indices. We made these relaxations in order to achieve differential privacy guarantees while not having our DP algorithms to iterate over the entire dataset, but this also means that we do not face the same lower bounds \cite{BafnaUl17}.

For example, if we set $k = \bk = 1$ in our Algorithm~\ref{algo:randThresExp}, then it would only either return the true top index or $\bot$. In general, by restricting ourselves to only looking at the true top-$\bk$ values, the accuracy of our output indices will only improve, and in fact by setting $\bk = k$ we guarantee that output indices must be in the top-$k$. However we can never guarantee that $k$ indices will be output, and the probability of outputting $k$ indices will only decrease the smaller we make $\bk$. Furthermore, quantifying the probability that our algorithm will return $k$ indices is highly data-dependent. Consider the histogram in which all values are equal, then the true top-$\bk$ could become a completely different set of indices in a neighboring database. Given that we only have access to this true top-$\bk$ index set, our algorithm needs to ensure that for this histogram we will return $\bot$ with probability at least $1 - \delta$.  However, we would not expect data distributions to be flat, but perhaps closer to a power law distribution, where there will be significant differences between the counts.

In general, our accuracy will be very similar to the standard exponential mechanism when there are reasonably large differences between the values in the histogram, but when values become much closer, our algorithm will return $\bot$ as opposed to a set of indices that is chosen close to uniformly at random.
We see this as the primary advantage of our pay-what-you-see composition. If a histogram is queried with values that are very close, instead of providing a list of indices that are drawn close to uniformly at random, our algorithm will only output $\bot$, and the only privacy cost will be for that one output. The $\bot$ output is then also informative in itself.

For a more formal analysis, we consider a comparatively standard metric of accuracy for top-$k$ queries \cite{BafnaUl17}.

\begin{definition}
Given histogram $\bbh$ along with non-negative integers $k$ and $\alpha$, we say that a subset of indices $\cD \subseteq [d] \cup \{\bot\}$ is an $(\alpha,k)$-accurate if for any $i \in \cD$ such that $i \neq \bot$, we have 

\[
h_i \geq h_{(k)} - \alpha
\]
\end{definition}

For this definition, we can give asymptotically better accuracy guarantees than what the standard exponential mechanism achieves, which are known to be tight \cite{BafnaUl17}, but it is critically important to mention that our definition does not require $k$ indices to be output. Accordingly, we add a sufficient condition under which our algorithm will return $k$ indices with a given probability.

\begin{lemma}\label{lem:accuracy}
For any histogram $\bbh$, with probability at least $1 - \beta$ the output from Algorithm~\ref{algo:randThresExp} with parameters $k,\bk,\diffp,\delta$ is  $(\alpha,k)$-accurate where

\[
\alpha = \frac{\ln(k\bk/\beta)}{\diffp}
\]

Additionally, we have that Algorithm~\ref{algo:randThresExp} will return $k$ indices with probability at least $1 - \beta$ if 

\[
h_{(k)} \geq h_{(\bk + 1)} + 1 + \frac{\ln(\min\{\Delta,\bk\}/\delta)}{\diffp} + \frac{\ln(k/\beta)}{\diffp}
\]

\end{lemma}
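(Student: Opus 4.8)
The plan is to work in the peeling--exponential--mechanism picture: by Lemma~\ref{lem:peelingEQone}, the output of Algorithm~\ref{algo:randThresExp} on $\bbh$ is distributed as $\prEM{k,\bk}(\bbh,\domain{\bk}{\bbh})$, i.e.\ at most $k$ successive draws from $\hEM{\bk}$ over a domain of size exactly $\bk$, together with the dummy element $\bot$ whose count is the threshold $h_\bot = h_{(\bk+1)} + 1 + \ln(\min\{\Delta,\bk\}/\delta)/\diffp$. The single structural fact I will use repeatedly is: at the start of any iteration $j\le k$ that the algorithm actually reaches, at most $j-1\le k-1$ indices have been peeled, so at least one true top-$k$ index $i_{(k^\star)}$ (with $k^\star\le k$) is still present, and hence the normalizing constant of $\hEM{\bk}$ at that step is at least $\exp(\diffp h_{(k^\star)})\ge \exp(\diffp h_{(k)})$ (the $\bot$ term only helps).

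For the accuracy claim, call an index $i\in\domain{\bk}{\bbh}$ \emph{bad} if $h_i < h_{(k)}-\alpha$. Every bad index has rank strictly larger than $k$, so there are at most $\bk$ of them, and an output containing no bad index is automatically $(\alpha,k)$-accurate ($\bot$ is exempt by definition). From the structural fact, the probability of drawing a fixed bad index $i$ at any iteration $j\le k$ is at most $\exp(\diffp(h_i - h_{(k)})) \le e^{-\diffp\alpha}$, regardless of which indices were peeled before. Summing over the at most $\bk$ bad indices and then over the at most $k$ iterations that can be reached gives probability at most $k\bk\, e^{-\diffp\alpha}$ that the output ever contains a bad index; setting this to $\beta$ gives $\alpha = \ln(k\bk/\beta)/\diffp$.

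For the ``returns $k$ indices'' claim, observe that the algorithm returns fewer than $k$ indices exactly when it draws $\bot$ at some iteration $j\le k$. At any such iteration reached without a prior $\bot$ draw, the denominator of $\hEM{\bk}$ is $\exp(\diffp h_\bot)+\sum_{i}\exp(\diffp h_i) \ge \exp(\diffp h_\bot)+\exp(\diffp h_{(k)})$, so $\Pr[\text{draw }\bot]\le \exp(\diffp h_\bot)/(\exp(\diffp h_\bot)+\exp(\diffp h_{(k)})) = 1/(1+\exp(\diffp(h_{(k)}-h_\bot)))$. The hypothesis is exactly $h_{(k)}-h_\bot \ge \ln(k/\beta)/\diffp$, so this is at most $\beta/k$ at each iteration; a union bound over the at most $k$ iterations gives failure probability at most $\beta$.

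There is no deep obstacle here; the one place that needs care is the union-bound bookkeeping in the first part --- namely that the per-step tail bound for $\hEM{\bk}$ holds uniformly over all reachable peeling states (which is what lets one sum over iterations without conditioning headaches), and that the count of bad indices is correctly bounded by $\bk$ and the number of peeling rounds by $k$. One could alternatively phrase everything in terms of the Gumbel noise directly, bounding $\Pr[\gum(1/\diffp)-\gum(1/\diffp)\ge \alpha]\le e^{-\diffp\alpha}$ via the logistic c.d.f., but routing both parts through $\hEM{\bk}$ keeps the argument uniform and reuses the ``a top-$k$ index survives'' observation already used in the privacy analysis.
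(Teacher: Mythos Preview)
Your proposal is correct and matches the paper's proof almost exactly: both parts route through the peeling exponential mechanism, use that at each of the at most $k$ steps some true top-$k$ index survives so the denominator is at least $\exp(\diffp h_{(k)})$, and then union-bound over the $k$ steps. The only cosmetic difference is that for the second claim the paper phrases the union bound in the one-shot Gumbel picture (over the $k$ events $\{h_\bot+\gum > h_{(i)}+\gum\}$) rather than over peeling iterations, but the resulting inequality $\exp(\diffp h_\bot)/(\exp(\diffp h_\bot)+\exp(\diffp h_{(k)}))\le \beta/k$ and the final bound are identical.
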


The first statement is essentially equivalent to Theorem 6 in \cite{BafnaUl17} which would have $\alpha = \frac{\ln(kd/\beta)}{\diffp}$ in this setting because we incorporate advanced composition at the end of the analysis and we consider the absolute counts (not normalized by the total number of users). Accordingly, our $\alpha$ parameter swaps $d$ with $\bk$ as expected, and will improve the accuracy statement for the output indices.

The utility statement in \citet{BhaskarLaSmTh10} says that for some $\gamma \geq 0$, with probability at least $1 - \beta$ all the returned indices should have true count at least $h_{(k)} + \gamma$, i.e. \emph{completeness}, and no returned indices should have true count less than $h_{(k)} - \gamma$, i.e. \emph{soundness}.\footnote{It also considers an accuracy statement on the values output after adding fresh Laplace noise to each index that was privately output as part of the top-$k$, which could easily extend to our setting if we wanted to give noisy estimates of the values for our output indices.}  The $\gamma$ in \cite{BhaskarLaSmTh10} gives

\[
\gamma = O\left(\frac{\ln{m \choose \ell}}{\diffp} + \frac{\ln(k/\beta)}{\diffp} \right)
\]
such that $m$ is the number of possible items and $\ell$ is the length of the itemset, and once again we remove the factor of $\tfrac{k}{n}$  for comparing to our setting because we apply composition at the end of our analysis and consider absolute counts instead of normalized counts. The second statement in Lemma~\ref{lem:accuracy} is similar to the soundness condition, where our algorithm ensures with probability 1 that no index with value below $h_{(\bk )}$ is output, and the probability statement is instead over whether we output $k$ indices (which occurs with probability 1 in \cite{BhaskarLaSmTh10}).
The difference in our terms then becomes $\ln(\min\{\Delta,\bk\}/\delta)$ as opposed to $\ln{m \choose \ell}$. For satisfying completeness, it is actually straightforward to show using the standard exponential mechanism analysis that we can achieve this for $\gamma = \ln(\bk k / \beta)/ \diffp$, which technically improves upon $\gamma = \ln({m \choose \ell}k / \beta)/\diffp$ in \cite{BhaskarLaSmTh10} where we can consider ${m \choose \ell} = d$.  However, this is only because their choice of $\bk$ is the index that satisfies $h_{(k)} \geq h_{(\bk +1)} + \gamma$ so it could be as large as the $d$th index, whereas we consider $\bk$ fixed and satisfies this assumption, so these bounds are equivalent when we have to find $\bk$ to satisfy $h_{(k)} \geq h_{(\bk +1)} + \gamma$.

We now prove the lemma.

\begin{proof}[Proof of Lemma~\ref{lem:accuracy}]
We first set up some notation.  Let $\cD_{\alpha} \defeq \{i \in \domain{\bk}{\bbh}: h_{i} < h_{(k)} - \alpha\}$ be the set of indices in the top-$\bk$ with true value below $h_{(k)} - \alpha$.  Furthermore, let $\cS_{\alpha} \defeq \{o : o \cap \cD_{\alpha} \neq \emptyset\}$ be the set of outcomes that includes some index in $\cD_{\alpha}$.
Formally, the first statement is equivalent to showing for any histogram $\bbh$ that

\[
\Pr[\rT{k,\bk}(\bbh) \in \cS_{\alpha}] \leq \beta.
\]

From Lemma~\ref{lem:peelingEQone} this is equivalent to showing 

\[
\Pr[\prEM{k,\bk}(\bbh, \domain{\bk}{\bbh}) \in \cS_{\alpha}] \leq \beta.
\]

By construction, the peeling exponential mechanism makes at most $k$ calls to the limited exponential mechanism $\hEM{\bk}(\bbh,\cD)$, and each of these calls must be using an input set $\cD$ that contains some index in $\{i_{(1)},...,i_{(k)}\}$, which are all the indices in the top-$k$. It then suffices to show that

\[
\Pr[\hEM{\bk}(\bbh,\domain{\bk}{\bbh} \setminus \{i_{(1)},...,i_{(k-1)}\}) \in \cD_{\alpha}] \leq \frac{\beta}{k}
\]

Applying our definition of the limited exponential mechanism, we can then obtain the bound

\[
\Pr[\hEM{\bk}(\bbh,\domain{\bk}{\bbh} \setminus \{i_{(1)},...,i_{(k-1)}\} ) \in \cD_{\alpha}] \leq \frac{\sum_{i \in \cD_{\alpha}} \exp(\diffp h_{i})}{ \exp(\diffp h_{(k)})} \leq \frac{\bk \exp(\diffp (h_{(k)} - \alpha))}{\exp(\diffp h_{(k)})}
\]
where the last step follows from the fact that $|\cD_{\alpha}| \leq \bk$ by construction and for each $j \in \cD_{\alpha}$ we have $h_{j} < h_{(k)} - \alpha$ by assumption. Cancelling like terms and plugging in for $\alpha = \ln(k \bk /\beta)/\diffp$ gives

\[
\Pr[\hEM{\bk}(\bbh,\domain{\bk}{\bbh} \setminus \{i_{(1)},...,i_{(k-1)}\}) \in \cD_{\alpha}] \leq \frac{\bk}{\exp(\diffp \alpha)} = \frac{\beta}{k}
\]
which proves our first claim.

For the second claim, we want to show that with probability at least $1-\beta$ there are $k$ indices whose noisy estimate is above the noisy threshold. It then suffices to show that for any $i \leq k$ we have $\Pr[h_{\bot} + \gum(1/\diffp) > h_{(i)} + \gum(1/\diffp)] \leq \frac{\beta}{k}$ where $h_{\bot} = h_{(\bk + 1)} + 1 + \frac{\ln(\min\{\Delta,\bk\}/\delta)}{\diffp} $. Setting $k= 1$ in Lemma~\ref{lem:ExpGumbelOneShot}, we have that 

\[
\Pr[h_{\bot} + \gum(1/\diffp) > h_{(i)} + \gum(1/\diffp)] = \frac{\exp(\diffp h_{\bot})}{\exp(\diffp h_{(i)}) + \exp(\diffp h_{\bot})}
\]

Due to the fact that $h_{(i)} \geq h_{(k)}$, we then apply our assumption that $h_{(i)} \geq h_{\bot} + \ln(k/\beta)/\diffp$, and this reduces to 

\[
\Pr[h_{\bot} + \gum(1/\diffp) > h_{(i)} + \gum(1/\diffp)] \leq \frac{\exp(\diffp h_{\bot})}{\frac{k}{\beta}\exp(\diffp h_{\bot}) + \exp(\diffp h_{\bot})} = \frac{\frac{\beta}{k}}{1 + \frac{\beta}{k}} \leq \frac{\beta}{k}
\]

\end{proof}


\section{Conclusions and Future Directions}

We have presented a way to efficiently report the top-$k$ elements in a dataset subject to differential privacy.  Our approach does not require adjusting the input data to an existing system, nor does it require altering the non-private data analytics.  Our algorithms can be seen as being an additional layer on top of existing systems so that we can leverage highly efficient, scalable data analytics platforms in our private systems.  Our algorithms can balance utility in terms of both privacy with $\diffp$ as well as efficiency with $\bk$.  Further, we have improved on the general composition bounds in differential privacy that can be applied in our setting to extract more utility under the same privacy budget.  

We believe that other mechanisms, such as report noisy max \cite{DworkRo14}, could benefit from the tighter characterization of \emph{range-bounded} in advanced composition.  An interesting line of future work is developing an optimal composition theorem for further savings in this setting similar to \cite{KairouzOhVi17,MurtaghVa16}. It would also be useful to show that we could replace Gumbel noise with another distribution and achieve similar or better guarantees, e.g. Laplace or Gaussian noise.  In fact for Gaussian noise, one would hope to improve the privacy parameters in Lemma~\ref{lem:main_Lap} from $\Delta\diffp$ to $\sqrt{\Delta}\diffp$ for the $\Delta$-restricted sensitivity setting.

It would also be interesting to explore other ways to choose $\bk$ in a private, yet also in a data-dependent manner, other than what we presented in Algorithm~\ref{algo:optimization}. These directions will be more application dependent that are conditional on the desired tradeoffs between computational restrictions, accuracy, and maximizing the number of outputs. For instance, if we relax the computational restrictions, we could privately choose $\bk$ that achieves a certain separation between $h_{(k)}$ and $h_{(\bk)}$ to maximize the probability of outputting $k$ indices.  We also leave it as an open problem to construct instance specific lower bounds when the algorithms can return fewer than $k$-indices.


\paragraph{Acknowledgements.}

We would like to thank our colleagues, including Subbu Subramaniam, Amir Sepehri, and Thanh Tran for their helpful comments and feedback. We particularly thank Sean Peng for many useful discussions that helped shape the research direction of this work. 

\clearpage

\bibliography{bib}
\bibliographystyle{abbrvnat}

\clearpage

\appendix

\section{Comparison between Bounded Range DP Composition and Optimal DP Composition\label{app:CompCompare}}
\begin{figure}[ht]
\includegraphics[width=6cm]{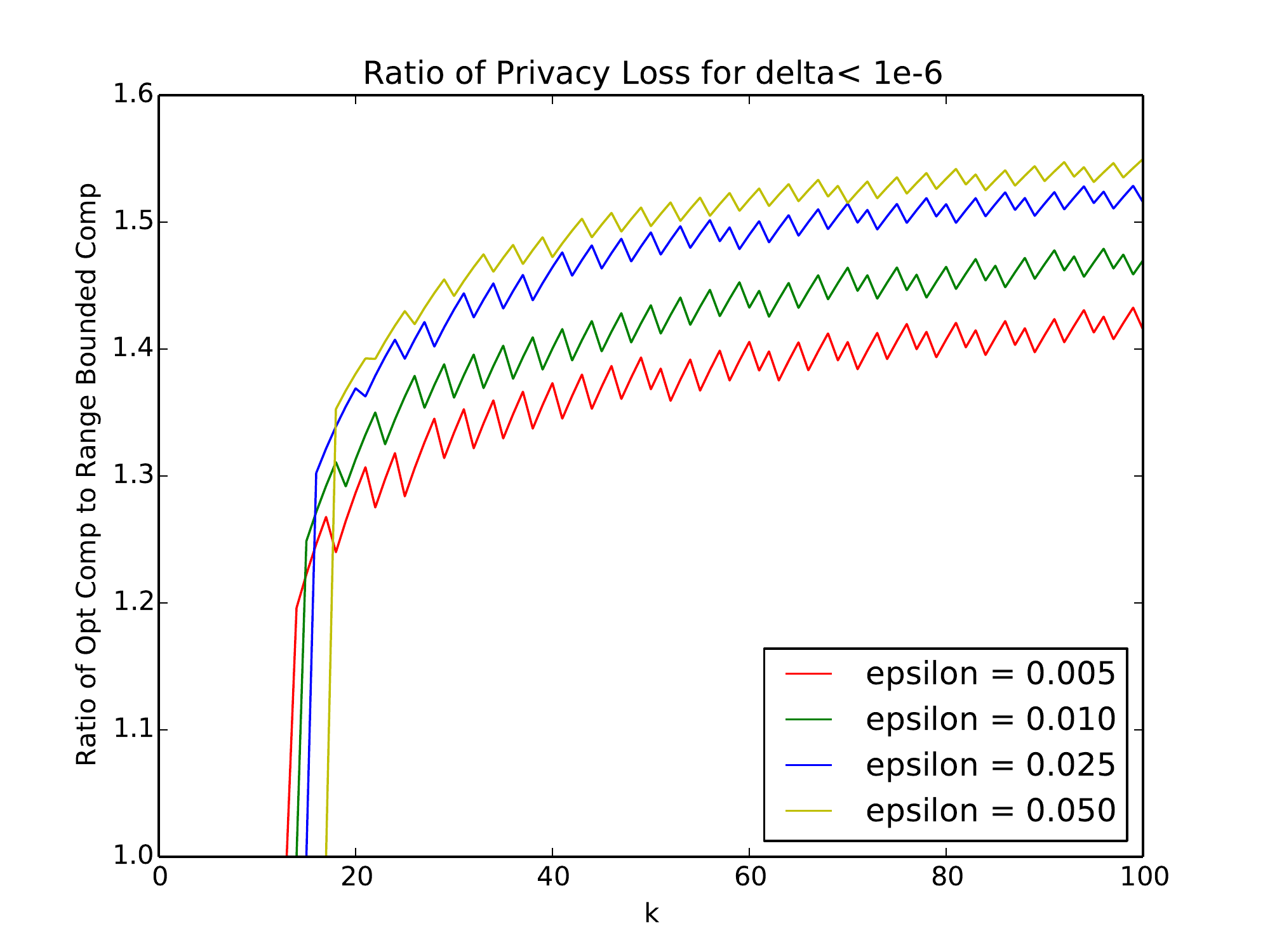}
\includegraphics[width=6cm]{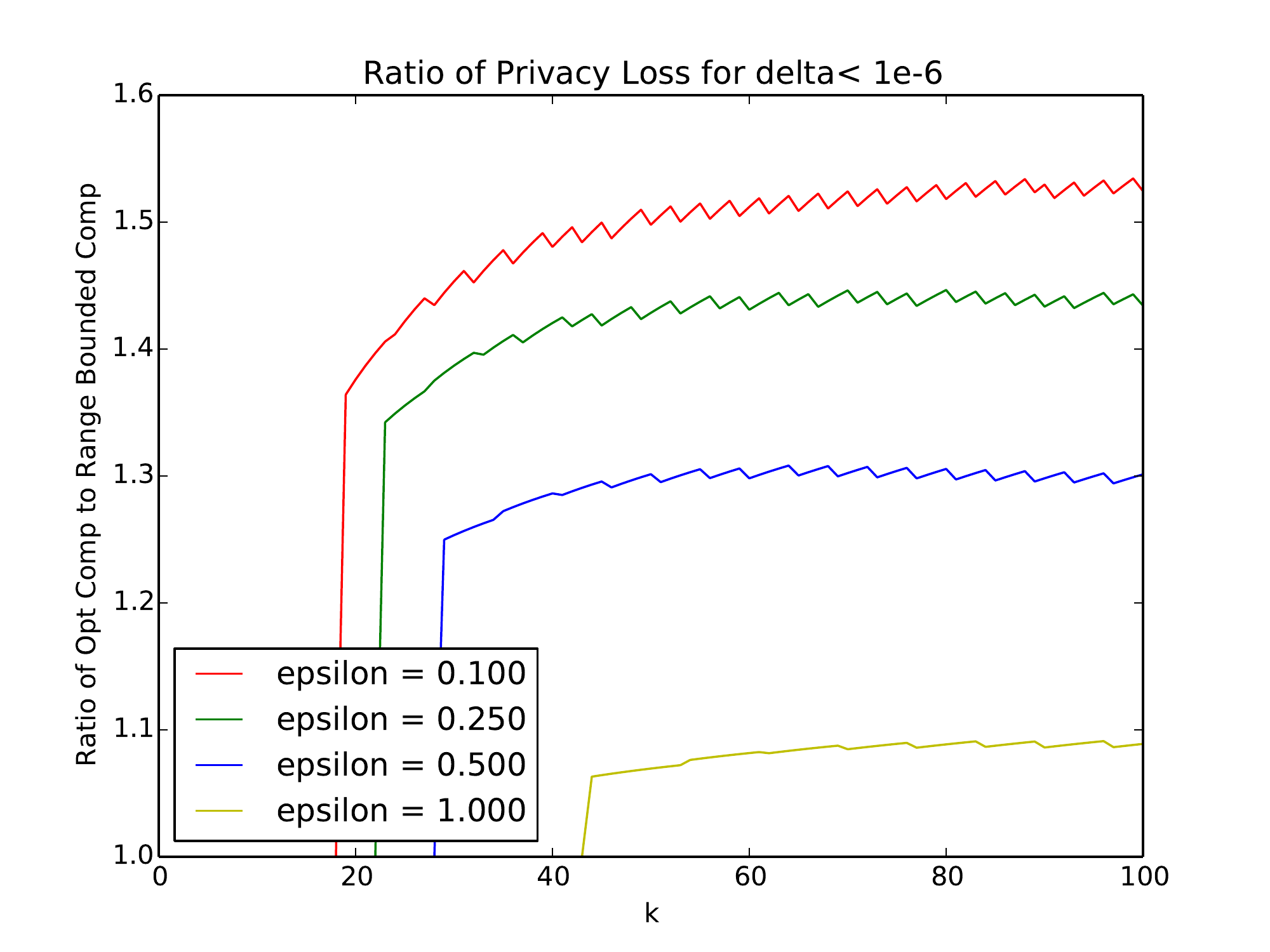}
\centering
\caption{Comparison of bounded range DP composition from Lemma~\ref{lem:compBoundedRange} and the optimal DP composition from \cite{KairouzOhVi17}.  A ratio larger than 1 means that the optimal composition bound in Lemma~\ref{lem:OptComp} is larger.  \label{fig:CompCompare}}
\end{figure}
Here we compare the composition bound given in Lemma~\ref{lem:compBoundedRange} and show that it can actually improve on the optimal bound for generally DP, which we state here for the homogeneous (all privacy parameters are the same at each round) case.
\begin{lemma}[Optimal DP Composition \cite{KairouzOhVi17}]\label{lem:OptComp}
For any $\diffp \geq 0$ and $\delta \in [0,1]$, the composed mechanism of $k$ adaptively chosen $\diffp$-DP is $((k - 2i) \diffp,\delta_i)$-DP for all $i \in \{0,1, \cdots, \lfloor k/2 \rfloor \}$where
$$
\delta_i = \frac{\sum_{\ell = 0 }^{i-1} {k \choose \ell} \left( e^{(k-\ell)\diffp} - e^{(k - 2i +\ell)\diffp} \right) }{(1+e^\diffp)^k}
$$
\end{lemma}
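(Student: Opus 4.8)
\emph{Proof proposal.} This statement is the optimal composition theorem of \citet{KairouzOhVi17}, specialized to $k$ homogeneous pure-$\diffp$-DP mechanisms; I sketch the argument. The plan is to (i) reduce the worst case over all adaptively chosen sequences of $\diffp$-DP mechanisms to a single canonical non-adaptive product mechanism, and then (ii) read off the exact $(\diffp',\delta)$ tradeoff curve for that canonical mechanism by direct binomial computation.

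For step (i), recall that for a fixed pair of neighbors $\bbx,\bbx'$ and a mechanism $\cM$ with $P = \cM(\bbx)$, $Q = \cM(\bbx')$, being $(\diffp',\delta)$-DP on this pair is equivalent to $\delta \geq \sup_{S}\big(P(S) - e^{\diffp'} Q(S)\big)$, and the supremum is attained on $S^* = \{y : \tfrac{dP}{dQ}(y) > e^{\diffp'}\}$. Hence the required $\delta$ at level $\diffp'$ is a functional only of the distribution of the privacy-loss random variable $L = \log\tfrac{dP}{dQ}(Y)$ with $Y\sim P$, namely $\delta(\diffp') = \Pr_{Y\sim P}[L > \diffp'] - e^{\diffp'}\Pr_{Y\sim Q}[L > \diffp']$. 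The structural heart of the proof is that every $\diffp$-DP pair $(P,Q)$ is dominated --- \emph{simultaneously for all thresholds $\diffp'$} --- by the two-point pair $(P_\diffp,Q_\diffp)$ on $\{0,1\}$ with $P_\diffp(0) = Q_\diffp(1) = \tfrac{e^\diffp}{1+e^\diffp}$ and $P_\diffp(1) = Q_\diffp(0) = \tfrac{1}{1+e^\diffp}$; moreover this "universal worst-case" property tensorizes and survives adaptive composition, since one may condition on the transcript prefix and replace each round by the canonical mechanism without improving any of the threshold divergences. Therefore the privacy-loss distribution of the $k$-fold adaptive composition is dominated by that of $(P_\diffp^{\otimes k}, Q_\diffp^{\otimes k})$, and it suffices to analyze the latter.

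For step (ii), under $P_\diffp^{\otimes k}$ the per-coordinate loss is $+\diffp$ with probability $\tfrac{e^\diffp}{1+e^\diffp}$ and $-\diffp$ with probability $\tfrac{1}{1+e^\diffp}$, independently across coordinates; if exactly $\ell$ coordinates take the value $-\diffp$ then $L = (k-2\ell)\diffp$, with $\Pr_{P_\diffp^{\otimes k}}[\ell\text{ coords}=-\diffp] = \binom{k}{\ell}\tfrac{e^{(k-\ell)\diffp}}{(1+e^\diffp)^k}$, while the same count under $Q_\diffp^{\otimes k}$ has probability $\binom{k}{\ell}\tfrac{e^{\ell\diffp}}{(1+e^\diffp)^k}$. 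Fixing $\diffp' = (k-2i)\diffp$, the optimal rejection set is $\{L > (k-2i)\diffp\} = \{\ell \leq i-1\}$, so
\[
\delta_i \;=\; \sum_{\ell=0}^{i-1}\binom{k}{\ell}\frac{e^{(k-\ell)\diffp}}{(1+e^\diffp)^k} \;-\; e^{(k-2i)\diffp}\sum_{\ell=0}^{i-1}\binom{k}{\ell}\frac{e^{\ell\diffp}}{(1+e^\diffp)^k} \;=\; \frac{\sum_{\ell=0}^{i-1}\binom{k}{\ell}\big(e^{(k-\ell)\diffp} - e^{(k-2i+\ell)\diffp}\big)}{(1+e^\diffp)^k},
\]
which is exactly the claimed expression, and letting $i$ range over $\{0,1,\dots,\lfloor k/2\rfloor\}$ traces the entire curve.

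The main obstacle is step (i): establishing that the symmetric two-point mechanism is worst-case for all thresholds at once and that this property is closed under adaptive composition. The single-round domination is a finite linear-programming fact about the extreme points of the set of $\diffp$-DP pairs, but lifting it to the $k$-fold adaptive setting requires a careful conditioning/coupling argument (as in \citet{KairouzOhVi17}); once that is in hand, step (ii) is routine bookkeeping with the binomial distribution.
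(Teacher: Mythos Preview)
The paper does not prove this lemma at all: it is stated in Appendix~\ref{app:CompCompare} purely as a cited result from \citet{KairouzOhVi17}, used only to generate the comparison plot in Figure~\ref{fig:CompCompare}. There is therefore no ``paper's own proof'' to compare against. Your sketch correctly reproduces the standard argument of \citet{KairouzOhVi17} --- reduction to the symmetric two-point randomized-response mechanism as the simultaneous worst case, followed by the explicit binomial computation of the hockey-stick divergence at threshold $(k-2i)\diffp$ --- and the arithmetic in step~(ii) is right. You are also right to flag step~(i) (the domination/tensorization argument) as the nontrivial part that requires the machinery of the original reference; the present paper simply imports the conclusion.
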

In Figure~\ref{fig:CompCompare}, we plot, for various $k$ and $\diffp$, the ratio between the composition bound for range bounded DP algorithms and the general DP optimal composition bound, where a ratio larger than 1 means that our bound is smaller.  Due to the discrete formula for $\delta_i$ in Lemma~\ref{lem:OptComp}, we select the index $i$ that produces the smallest $(k - 2i)\diffp$ while $\delta_i \leq 10^{-6}$.  Frequently, this $\delta_i$ that is selected is much smaller than the threshold $10^{-6}$, so we use this same $\delta_i$ when we compare our bounds to the optimal composition bound.  Note that the jaggedness in the plot is because the optimal composition bound might be $((k - 2i) \diffp,  \delta \ll 10^{-6})$-DP at round $k$ but $( (k + 1 - 2(i+1))\diffp, \delta \approx 10^{-6})$-DP at round $k+1$.  Hence, plotting only the first privacy parameter might be non-monotonic.

\section{Omitted Proofs from Section~\ref{sect:existing}}

\subsection{Proof of Lemma~\ref{lem:ExpGumbelOneShot} \label{app:proofExpGumbelOneShot}}
\begin{proof}
We start with the peeling exponential mechanism.
\begin{align*}
& \Pr[\pEM{k}_q(\bbh) = (o_1, \cdots, o_k)] \\
& \quad  = \frac{ \exp(\tfrac{\diffp}{\Delta(q)} q(\bbx,o_{i_1}) )}{\sum_{y \in \cY} \exp(\tfrac{\diffp}{\Delta(q)} q(\bbx,y))}  \cdot \frac{ \exp(\tfrac{\diffp}{\Delta(q)} q(\bbx,o_{i_2}))}{\sum_{y \neq o_1} \exp(\tfrac{\diffp}{\Delta(q)} q(\bbx,y))}\cdot  \ldots \cdot \frac{ \exp(\tfrac{\diffp}{\Delta(q)} q(\bbx,o_{i_k}))}{\sum_{y \notin \{o_1, \cdots o_{k-1} \}} \exp(\tfrac{\diffp}{\Delta(q)} q(\bbx,y))}
\end{align*}
Now we consider the one-shot Gumbel noise mechanism.  We will write $p_\gum$ as the density of a $\gum(\Delta(q)/\diffp)$ random variable, which is given in \eqref{eq:PDFs}.
\begin{align*}
& \Pr[\cM_\gum^k(\bbq(\bbx)) = (o_1, \cdots, o_k)] \\
& \qquad = \int_{-\infty}^\infty p_\gum(u_1 - q(\bbx,o_1)) \int_{-\infty}^{u_1}  p_\gum(u_2 - q(\bbx,o_2)) \cdots \int_{-\infty}^{u_{k-1}} p_\gum(u_k - q(\bbx,o_k)) \\
& \qquad \qquad  \prod_{y \neq \{o_1, \cdots, o_k \} } \Pr[\gum(\Delta(q)/\diffp) < u_k - q(\bbx,y)]du_k \cdots du_1.
\end{align*}
Note that we have 
$$
\Pr[\gum(1/\diffp) < y] = \exp\left( - \exp\left( -\diffp y \right) \right)
$$
and 
$$
p_\gum(y)  = \diffp \exp\left( - \diffp y - \exp( -\diffp y)  \right).
$$
We then integrate to get the following with the substitution $\diffp_q = \tfrac{\diffp}{\Delta(q)}$ and $q(\bbx,\cdot) = q(\cdot)$,
\begin{align*}
& \int_{-\infty}^{u_{k-1}} p_\gum(u_k - q(o_k) )\prod_{y \notin \{o_1, \cdots, o_k \} } \Pr[\gum(1/\diffp_q) < u_k - q(y)]du_k\\
& \quad = \int_{-\infty}^{u_{k-1}} \diffp_q \cdot  \exp\left( - \diffp_q (u_k - q(o_k)) - e^{ -\diffp_q (u_k - q(o_k)) } \right) \cdot  \exp\left( -e^{\diffp_q u_k}  \sum_{y \notin \{o_1, \cdots, o_k \} } e^{\diffp_q q(y)}  \right) du_k \\
&\quad  = \diffp_q e^{\diffp_q q(o_k)}  \int_{-\infty}^{u_{i_{k-1}}} \exp\left( -\diffp_q u_k - e^{-\diffp_q u_k} \left( e^{\diffp_q q(o_k)} + \sum_{y \notin \{i_1, \cdots i_k \} } e^{\diffp_q q(y)} \right) \right) du_k \\
& \quad = \frac{e^{\diffp_q q(o_k)}}{\sum_{y \notin \{o_1, \cdots, o_{k-1} \}} e^{\diffp_q q(y)}} \cdot \exp\left( - e^{-\diffp_q u_{k-1}} \cdot \sum_{y \notin \{o_1, \cdots, o_{k-1} \}} e^{\diffp_q q(y)} \right).
\end{align*}
By induction, we have
\begin{align*}
\Pr[M_\gum^k(\bbq(\bbx)) = (o_1, \cdots, o_k)] & =\frac{e^{\diffp_q q(o_1)}}{\sum_{y \in \cY}^d e^{\diffp_q q(y)}}  \frac{e^{\diffp_q q(o_2)}}{\sum_{y \neq o_1} e^{\diffp_q q(y)}}  \ldots \frac{e^{\diffp_q q(o_k)}}{\sum_{y \notin \{o_1, \cdots, o_{k-1} \}} e^{\diffp_q q(y)}} .
\end{align*}
\end{proof}

\subsection{Proof of Lemma~\ref{lem:compBoundedRange}\label{app:proofCompBoundedRange}}
\begin{proof}
We use the same argument as in \cite{DworkRoVa10}.  Thus, we form the privacy loss random variable at round $i$ as $Z_i = Z_i(v_{\leq i})$ where $v_{\leq i} \sim \cM_{\leq i}(\bbx)$ and
$$
Z_i(v_{\leq i}) = \log\left( \frac{ \Pr[\cM_i(\bbx)  =v_i  \mid \cM_{<i}(\bbx)  =v_{<i} ] }{ \Pr[\cM_i(\bbx')  =v_i  \mid \cM_{<i}(\bbx')  =v_{<i} ]} \right)
$$
We then define our martingale $X_t = \sum_{i = 1}^t \left(Z_i - \mu_i \right)$ where $\mu_i(v_{<i})  = \E[Z_i | v_{< i}]$.  To bound $\mu_i(v_{<i})$, we use results from \cite{BunSt16}, since each $\cM_i$ is also $\diffp_i$-DP, which states  $\mu_i(v_{<i}) \leq \frac{1}{2} \diffp_i^2$.  Note that because each algorithm $\cM_i$ is $\diffp_i$-bounded range DP, there is for some $\alpha_t \in [0,\diffp_t]$ such that
$$
X_t - X_{t-1} = Z_t - \mu_t(v_{\leq t})  \in [ -\alpha_t - \mu_t(v_{\leq t}), \diffp_t - \alpha_t - \mu_t(v_{\leq t}) ]
$$
Using Theorem~\ref{thm:Azume_Hoeffding}, we get the following result
$$
\Pr\left[ \sum_{i=1}^t Z_i \geq \sum_{i=1}^k \frac{1}{2} \diffp_i^2 + \beta\right] \leq  \exp\left( \frac{-2 \beta^2}{ \sum_{i=1}^k \diffp_i^2} \right)
$$
Hence setting $\beta = \sqrt{ \tfrac{1}{2} \sum_{i=1}^k \diffp_i^2 \log(1/\delta)} $ ensures that the total privacy loss is bounded with probability at least $\delta$.  The function for $\diffp''(\cdot)$ is the minimum over three terms, the first and second terms being from Theorem~\ref{thm:comp} and the last term being what we just computed.  
This completes the proof.
\end{proof}


\section{Omitted Proofs from Section~\ref{sec:exp}\label{app:proofs}}

\subsection{Proof of Lemma~\ref{lem:mainDeltBoundRest}}

\begin{lemma}\label{lem:botOutput}
Given neighboring histograms $\bbh,\bbh'$. Let $\cD_{\delta} = \domain{\bk}{\bbh} \setminus \domain{\bk}{\bbh'}$, then we have

\[
\Pr[\hEM{\bk}(\bbh,\cD_{\delta}) \neq \bot] \leq \delta
\]

\end{lemma}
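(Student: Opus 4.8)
The plan is to argue directly from the closed form of the limited histogram exponential mechanism. By Definition~\ref{def:restrictedExp},
\[
\Pr[\hEM{\bk}(\bbh,\cD_{\delta}) \neq \bot]
= \frac{\sum_{j \in \cD_{\delta}} \exp(\diffp h_j)}{\exp(\diffp h_{\bot}) + \sum_{j \in \cD_{\delta}} \exp(\diffp h_j)}
\leq \frac{\sum_{j \in \cD_{\delta}} \exp(\diffp h_j)}{\exp(\diffp h_{\bot})},
\]
where the inequality simply drops the nonnegative sum from the denominator. So it suffices to show that the numerator is at most $\delta \exp(\diffp h_{\bot})$.

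First I would control the numerator using the two structural facts already established for the set $\cD_{\delta} = \domain{\bk}{\bbh}\setminus\domain{\bk}{\bbh'}$: Lemma~\ref{lem:outsideIntersection} gives $h_j \leq h_{(\bk+1)}+1$ for every $j \in \cD_{\delta}$, and Lemma~\ref{lem:boundedDomain} gives $|\cD_{\delta}| \leq \min\{\Delta,\bk\}$. Together these yield
\[
\sum_{j \in \cD_{\delta}}\exp(\diffp h_j) \leq \min\{\Delta,\bk\}\cdot \exp\!\big(\diffp(h_{(\bk+1)}+1)\big).
\]
Next I would expand the denominator using the definition $h_{\bot} = h_{(\bk+1)}+1+\ln(\min\{\bk,\Delta\}/\delta)/\diffp$ from Definition~\ref{def:restrictedExp}, which gives exactly $\exp(\diffp h_{\bot}) = (\min\{\bk,\Delta\}/\delta)\cdot\exp(\diffp(h_{(\bk+1)}+1))$. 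Dividing, the factors $\min\{\bk,\Delta\}$ and $\exp(\diffp(h_{(\bk+1)}+1))$ cancel and what remains is exactly $\delta$.

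This argument has essentially no hard step: all of it is bookkeeping once Lemmas~\ref{lem:outsideIntersection} and~\ref{lem:boundedDomain} are in hand, since the real content was already spent establishing that only indices with count within $1$ of $h_{(\bk+1)}$ can enter or leave the top-$\bk$ and that at most $\min\{\Delta,\bk\}$ of them can. The only thing to be careful about is that the additive $1$ and the logarithmic term $\ln(\min\{\bk,\Delta\}/\delta)/\diffp$ in $h_{\bot}$ are calibrated precisely so that the numerator bound and the denominator cancel with no slack; this is exactly why the threshold is chosen as it is. (In the unrestricted setting one takes $\Delta = d$, so $\min\{\bk,\Delta\}=\bk$ and the two structural lemmas still apply; likewise, with only the weaker bound $|\cD_{\delta}| \leq \bk$ from the strictly-limited-domain variant of Section~\ref{sec:practical}, the same computation goes through with $\bk$ in place of $\min\{\bk,\Delta\}$ throughout.)
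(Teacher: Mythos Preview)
Your proof is correct and follows essentially the same approach as the paper: both invoke Lemma~\ref{lem:outsideIntersection} to bound each $h_j$ by $h_{(\bk+1)}+1$ and Lemma~\ref{lem:boundedDomain} to bound $|\cD_{\delta}|$ by $\min\{\Delta,\bk\}$, then use the explicit form of $h_{\bot}$ to cancel. The only cosmetic difference is that you drop the sum from the denominator up front, whereas the paper keeps it through one more simplification step before bounding; the resulting inequality is the same.
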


\begin{proof}
For simplicity, we will write $m = \min\{\Delta,\bk \}$.  By definition, we can write 

\[
\Pr[\hEM{\bk}(\bbh,\cD_{\delta}) \neq \bot] = \frac{\sum_{i \in \cD_{\delta}} \exp(\diffp h_i)}{\exp(\diffp h_{\bot}) + \sum_{i \in \cD_{\delta}} \exp(\diffp h_i)}
\]
where we know that $h_{\bot} = h_{(\bk + 1)} + 1 + \ln(m/\delta)/\diffp$. Furthermore, from Lemma~\ref{lem:outsideIntersection} we know that $h_i \leq h_{(\bk + 1)} + 1$ for each $i \in \cD_{\delta}$. Therefore if we let $x = h_{(\bk + 1)} + 1$, we can reduce this to 

\[
\Pr[\hEM{\bk}(\bbh,\cD_{\delta}) \neq \bot] \leq \frac{|\cD_{\delta}| \exp(\diffp x)}{\exp(\diffp (x + \ln(m/\delta)/\diffp)) + |\cD_{\delta}| \exp(\diffp x)}
\]

Further factoring out all the $\exp(\diffp x)$ terms gives

\[
\Pr[\hEM{\bk}(\bbh,\cD_{\delta}) \neq \bot] \leq \frac{|\cD_{\delta}| }{(m/\delta) + |\cD_{\delta}| }
=  \frac{\delta (|\cD_{\delta}|/ m) }{1+ (\delta / m) |\cD_{\delta}| } 
\leq \delta (|\cD_{\delta}|/ m) \leq \delta
\]
where the last inequality follows from the fact that $|\cD_{\delta}| \leq m$ from Lemma~\ref{lem:boundedDomain}.

\end{proof}

\begin{lemma}\label{lem:badElemBound}
Consider a subset $T \subseteq [d]$, and domain $\cD$ such that $T \subseteq \cD \subseteq [d]$.  For histogram $\bbh$, we will write the outcome set of $\prEM{k}(\bbh,\cD)$ as $\cO$ and define the set $\cT = \{ o \in \cO : o \cap T \neq \emptyset \}$.  We then have,

\[
\left(\Pr[\hEM{\bk}(\bbh,T) \neq \bot] \right)^{-1}\Pr[\prEM{k,\bk}(\bbh,\cD) \in \cT] \leq 1
\]
\end{lemma}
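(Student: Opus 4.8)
The plan is to rephrase the claimed inequality, multiplying through by the (positive) quantity $\Pr[\hEM{\bk}(\bbh,T)\neq\bot]$, as
\[
\Pr[\prEM{k,\bk}(\bbh,\cD)\in\cT] \;\le\; \Pr[\hEM{\bk}(\bbh,T)\neq\bot],
\]
and to prove this by induction on the budget $k$. First I would isolate the two weights that are invariant under peeling: set $a \defeq \sum_{i\in T}\exp(\diffp h_i)$ and $b \defeq \exp(\diffp h_\bot)$, so that by definition of $\hEM{\bk}$ the right-hand side equals $\tfrac{a}{a+b}$. The key structural observation is that $\prEM{k,\bk}$ only ever removes from its domain an index it has already output, and it removes an index of $T$ exactly when the output sequence first meets $T$; hence along any prefix of a run that has not yet met $T$, all of $T$ is still present in the current domain, and $h_\bot$ is unchanged because it is computed from the original histogram $\bbh$ (not the restricted domain). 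Consequently the weights $a$ and $b$ are the same at every such step, which is what makes the induction close exactly.

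Next I would write down the one-step recursion. Fix any $\cD'$ with $T\subseteq\cD'\subseteq[d]$, put $W \defeq \sum_{j\in\cD'\setminus T}\exp(\diffp h_j)\ge 0$, and condition on the first draw of $\hEM{\bk}(\bbh,\cD')$ inside $\prEM{k,\bk}(\bbh,\cD')$. That draw lies in $T$ with probability $\tfrac{a}{a+b+W}$, in which case the output already belongs to $\cT$; it equals $\bot$ with probability $\tfrac{b}{a+b+W}$, in which case the run halts with output disjoint from $T$; and otherwise it is some $j\in\cD'\setminus T$ of weight $\exp(\diffp h_j)$, after which the run continues as $\prEM{k-1,\bk}(\bbh,\cD'\setminus\{j\})$ with $T\subseteq\cD'\setminus\{j\}$. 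Writing $p_k(\cD')\defeq\Pr[\prEM{k,\bk}(\bbh,\cD')\in\cT]$, this gives
\[
p_k(\cD') \;=\; \frac{a}{a+b+W} \;+\; \sum_{j\in\cD'\setminus T}\frac{\exp(\diffp h_j)}{a+b+W}\, p_{k-1}(\cD'\setminus\{j\}).
\]

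Finally I would run the induction on $k$. For $k=0$ the while-loop never executes, so $p_0(\cD')=0\le\tfrac{a}{a+b}$. For the inductive step, the inductive hypothesis applied to each $\cD'\setminus\{j\}$ (still a superset of $T$) gives $p_{k-1}(\cD'\setminus\{j\})\le\tfrac{a}{a+b}$, so, using $\sum_{j\in\cD'\setminus T}\exp(\diffp h_j)=W$,
\[
p_k(\cD') \;\le\; \frac{a}{a+b+W} + \frac{W}{a+b+W}\cdot\frac{a}{a+b} \;=\; \frac{a(a+b)+Wa}{(a+b)(a+b+W)} \;=\; \frac{a}{a+b},
\]
which holds for every $W\ge 0$. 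Taking $\cD'=\cD$ and $k$ the given budget yields the desired bound. I expect the only delicate point to be the invariance argument in the first paragraph — convincing oneself that $a$ and $b$ truly are the same weights at every step before $T$ is reached — while the budget cap in $\prEM{k,\bk}$ causes no trouble since it is carried transparently through the induction on $k$ (early stopping only lowers $p_k$).
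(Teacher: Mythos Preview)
Your proof is correct and follows essentially the same approach as the paper: induction on $k$ via the one-step recursion that conditions on the first draw of $\hEM{\bk}$, splitting into the cases where it lands in $T$, equals $\bot$, or lands in $\cD\setminus T$. The only cosmetic differences are that you take $k=0$ as the base case and abbreviate the weights as $a,b,W$, whereas the paper starts at $k=1$ and carries the factor $(\Pr[\hEM{\bk}(\bbh,T)\neq\bot])^{-1}$ through the inequality rather than rewriting the target as $a/(a+b)$; the final identity you verify, $\tfrac{a}{a+b+W}+\tfrac{W}{a+b+W}\cdot\tfrac{a}{a+b}=\tfrac{a}{a+b}$, is exactly the paper's closing computation in disguise.
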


\begin{proof}
We will prove this inductively on the size of $k$ where our base case is $k = 1$. By definition

\[
\Pr[\prEM{k,\bk}(\bbh,\cD) \in \cT] = \Pr[\hEM{\bk}(\bbh,\cD) \in T]  =\frac{\sum_{i \in T} \exp(\diffp h_i)}{\exp(\diffp h_{\bot}) + \sum_{i \in \cD} \exp(\diffp h_i)}
\]

Therefore, 

\[
\left(\Pr[\hEM{\bk}(\bbh,T) \neq \bot] \right)^{-1}\Pr[\prEM{k,\bk}(\bbh,\cD) \in \cT]  =\frac{\exp(\diffp h_{\bot}) + \sum_{i \in T} \exp(\diffp h_i)}{\exp(\diffp h_{\bot}) + \sum_{i \in \cD} \exp(\diffp h_i)} \leq 1
\]

We now assume for $k-1$, and we use the fact that our peeling exponential mechanism iteratively applies the limited domain exponential mechanism, which allows us to rewrite our probability as 

\begin{align*}
& \Pr[\prEM{k,\bk}(\bbh,\cD) \in \cT] \\
& \qquad =\Pr[\hEM{\bk}(\bbh,\cD) \in T] + \sum_{i \in \cD \setminus T} \Pr[\hEM{\bk}(\bbh,\cD) = i] \Pr[\prEM{k-1,\bk}(\bbh,\cD \setminus \{i\}) \cap T \neq \emptyset] 
\end{align*}
where the first term is the probability that the first index is in $T$ (and thus the outcome must be in $\cT$), then we consider all non-$\bot$ possibilities for the first index, and take the probability of that event and multiply it by the probability that one of the remaining indices is in $T$ as the peeling process proceeds (and thus the outcome would be in $\cT$).
Multiplying through this summation by $\left(\Pr[\hEM{\bk}(\bbh,T) \neq \bot] \right)^{-1}$, we apply our inductive hypothesis to achieve

\[
\left(\Pr[\hEM{\bk}(\bbh,T) \neq \bot] \right)^{-1} \Pr[\prEM{k-1,\bk}(\bbh,\cD \setminus \{i\}) \in \cT]  \leq 1
\]
where our inductive hypothesis was on all $\cD$ such that $T \subseteq \cD$ and we must have $T \subseteq \cD \setminus \{i\}$ because $i \in \cD \setminus T$.
Therefore, we can bound 

\begin{multline*}
\left(\Pr[\hEM{\bk}(\bbh,T) \neq \bot] \right)^{-1} \Pr[\prEM{k,\bk}(\bbh,\cD) \in\cT] \\
\leq \left(\Pr[\hEM{\bk}(\bbh,T) \neq \bot] \right)^{-1}\Pr[\hEM{\bk}(\bbh,\cD) \in T] + \sum_{i \in \cD \setminus T} \Pr[\hEM{\bk}(\bbh,\cD) = i]
\end{multline*}

Applying Definition~\ref{def:restrictedExp}, we can explicitly write both terms and obtain

\begin{multline*}
\left(\Pr[\hEM{\bk}(\bbh,T) \neq \bot] \right)^{-1} \Pr[\prEM{k,\bk}(\bbh,\cD) \in \cT] \\
\leq \frac{\exp(\diffp h_{\bot}) + \sum_{i \in T} \exp(\diffp h_i)}{\exp(\diffp h_{\bot}) + \sum_{i \in \cD} \exp(\diffp h_i)} + \frac{\sum_{i \in \cD \setminus T} \exp(\diffp h_i)}{\exp(\diffp h_{\bot}) + \sum_{i \in \cD} \exp(\diffp h_i)} = 1
\end{multline*}

\end{proof}

\begin{proof}[Proof of Lemma~\ref{lem:mainDeltBoundRest}]
In the application of Lemma~\ref{lem:badElemBound}, we set $T = \cD_{\delta}$ as in Lemma~\ref{lem:botOutput} and $\cD = \domain{\bk}{\bbh}$, in which case $\cT = \cS^\delta$ from Definition~\ref{defn:eps_delt_sets}.  This gives

\[
\Pr[\prEM{k,\bk}(\bbh,\domain{\bk}{\bbh}) \in \cS^\delta] \leq \Pr[\hEM{\bk}(\bbh,\cD_{\delta}) \neq \bot] 
\]
and our bound follows from Lemma~\ref{lem:botOutput}
\end{proof}


\subsection{Proof of Lemma~\ref{lem:mainEpsBoundRest}}

\begin{lemma}\label{lem:prodConditionals}
Consider a subset $T \subseteq [d]$, and domain $\cD \subseteq [d]$.  For histogram $\bbh$, we will write the outcome set of $\prEM{k,\bk}(\bbh,\cD)$ as $\cO$ and define the set $\cT = \{ o \in \cO : o \cap T \neq \emptyset \}$.  For any $o = (i_1,...,i_{\ell}) \notin \cT$ we have

\begin{multline*}
\Pr[\prEM{k,\bk}(\bbh,\cD) = o | \prEM{k,\bk}(\bbh,\cD) \notin \cT] \\
= \prod_{j = 0}^{\ell - 1} \Pr[\hEM{\bk}(\bbh,\cD \setminus \{i_1,...,i_j\} ) = i_{j+1} | \hEM{\bk}(\bbh,\cD \setminus \{i_1,...,i_j\} )  \notin  T]
\end{multline*}

\end{lemma}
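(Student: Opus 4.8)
The plan is to prove the identity by induction on $k$, using the recursive structure of the peeling mechanism together with one elementary fact about a single call to $\hEM{\bk}$.

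The elementary fact: for any $\cD' \subseteq [d]$ and any $i \in (\cD' \setminus T) \cup \{\bot\}$,
\[
\Pr[\hEM{\bk}(\bbh,\cD') = i \mid \hEM{\bk}(\bbh,\cD') \notin T] = \Pr[\hEM{\bk}(\bbh,\cD' \setminus T) = i],
\]
which is immediate from Definition~\ref{def:restrictedExp}: the numerator $\exp(\diffp h_i)$ is untouched, the denominator drops exactly the mass $\sum_{j \in \cD' \cap T}\exp(\diffp h_j)$, and $h_{\bot}$ depends only on $\bbh$ and not on the input index set. Since $o = (i_1,\dots,i_\ell)\notin\cT$ means every $i_j$ lies outside $T$, and since $(\cD\setminus\{i_1,\dots,i_j\})\setminus T = (\cD\setminus T)\setminus\{i_1,\dots,i_j\}$, chaining this fact term by term shows that the right-hand side of the lemma equals $\Pr[\prEM{k,\bk}(\bbh,\cD\setminus T) = o]$. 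Hence the lemma is equivalent to the clean assertion that $\prEM{k,\bk}(\bbh,\cD)$ conditioned on the event $\{\prEM{k,\bk}(\bbh,\cD)\notin\cT\}$ has the same law as $\prEM{k,\bk}(\bbh,\cD\setminus T)$, and this is the form I would push through the induction.

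For the base case $k=1$, $\prEM{1,\bk}(\bbh,\cD)$ is a single call to $\hEM{\bk}(\bbh,\cD)$, $\cT$ is the set of length-one outcomes in $T$, and the two sides coincide with the elementary fact at $j=0$. For the inductive step I would unfold one level of peeling: $\prEM{k,\bk}(\bbh,\cD)$ draws $o_1 = \hEM{\bk}(\bbh,\cD)$, halts with $(\bot)$ if $o_1=\bot$, and otherwise returns $(o_1)$ concatenated with an independent run of $\prEM{k-1,\bk}(\bbh,\cD\setminus\{o_1\})$. The event $\{\prEM{k,\bk}(\bbh,\cD)\notin\cT\}$ then splits over the value of $o_1$: for $o_1=\bot$ the only continuation is $(\bot)$; for $o_1=i_1\in\cD\setminus T$ we additionally need the recursive run to avoid $T$, at which point the inductive hypothesis identifies its conditional law with $\prEM{k-1,\bk}(\bbh,(\cD\setminus\{i_1\})\setminus T)$. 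Combining this with the elementary fact for the first draw should reassemble the conditional law of $\prEM{k,\bk}(\bbh,\cD)$ on $\{\notin\cT\}$ into that of $\prEM{k,\bk}(\bbh,\cD\setminus T)$, and re-expanding the latter as a product along the path of $o$ recovers the stated formula.

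The step I expect to be the main obstacle is the normalization bookkeeping. Conditioning the joint process on $\{\prEM{k,\bk}(\bbh,\cD)\notin\cT\}$ introduces, a priori, a path-dependent constant $\prod_{j}\Pr[\hEM{\bk}(\bbh,\cD\setminus\{i_1,\dots,i_j\})\notin T]^{-1}$, and the crux is to show that this quantity equals $\Pr[\prEM{k,\bk}(\bbh,\cD)\notin\cT]^{-1}$ independently of which good outcome $o$ we look at --- equivalently, that the per-step conditioning constants interact correctly with the $\bot$-halting so that outcomes of length $<k$ (those terminating in a drawn $\bot$) and full-length-$k$ outcomes are each produced with the right weight by the deleted-domain process. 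Making the induction hypothesis strong enough to deliver this path-independence at every level, rather than merely the pointwise identity, is where the real care is needed; the remaining manipulations are routine algebra on the exponential-mechanism weights.
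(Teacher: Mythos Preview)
Your instinct about the normalization bookkeeping is exactly right, and it is not merely a matter of care: the path-independence you need is \emph{false}, and with it the lemma as stated. Writing $w_i=\exp(\diffp h_i)$, $w_\bot=\exp(\diffp h_\bot)$, take $k=2$, $\cD=\{1,2,3\}$, $T=\{3\}$, and $(w_1,w_2,w_3,w_\bot)=(1,2,4,8)$. Direct computation gives
\[
\Pr[\prEM{2,\bk}(\bbh,\cD)=(1,2)]=\tfrac{1}{15}\cdot\tfrac{2}{14}=\tfrac{1}{105},\qquad
\Pr[\prEM{2,\bk}(\bbh,\cD)\notin\cT]=\tfrac{919}{1365},
\]
so the left side of the lemma at $o=(1,2)$ is $13/919$, whereas the right side is $\tfrac{1}{11}\cdot\tfrac{2}{10}=\tfrac{1}{55}$; these differ. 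Equivalently, the per-step factors $\prod_{j=0}^{\ell-1}\Pr[\hEM{\bk}(\bbh,\cD_j)\notin T]$ depend on the particular prefix $(i_1,\dots,i_{\ell-1})$ (here they equal $\tfrac{11}{15}\cdot\tfrac{10}{14}$ along $(1,2)$ but $\tfrac{11}{15}\cdot\tfrac{9}{13}$ along $(2,1)$), so they cannot all coincide with the single constant $\Pr[\prEM{k,\bk}(\bbh,\cD)\notin\cT]$.

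The paper's own argument hides the same problem: it asserts that $\{\prEM{k,\bk}(\bbh,\cD)\notin\cT\}$ equals an intersection $B_1\cap B_2$ with $B_1$ depending only on the first draw and $B_2$ only on the remainder, but no such decomposition exists---the continuation that must avoid $T$ is determined by which first index was drawn, so the conditioning event does not split as a product. Your reformulation (that conditioning on $\notin\cT$ yields the law of $\prEM{k,\bk}(\bbh,\cD\setminus T)$) is exactly the right thing to \emph{test}, and the test fails. What does survive is the one-sided comparison $\Pr[\prEM{k,\bk}(\bbh,\cD)=o]\le \Pr[\prEM{k,\bk}(\bbh,\cD\setminus T)=o]$ for $o\notin\cT$, obtained simply by dropping the extra CDF factors for indices in $T$; that inequality (rather than the claimed equality) is what one should try to feed into the downstream argument.
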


\begin{proof}
We can rewrite the event $\{\prEM{k,\bk}(\bbh,\cD) = o\}$ as the intersection of independent events $\{ \hEM{\bk}(\bbh,\cD) = i_1 \}  \cap \{ \prEM{k- 1,\bk}(\bbh,\cD \setminus \{i_1\}) = (i_2,...,i_{\ell})\}$. Similarly, we can rewrite as independent events

\[
\{ \prEM{k,\bk}(\bbh,\cD) \notin \cT \} =  \{ \hEM{\bk}(\bbh,\cD) \notin T \} \cap \left\{ \bigcap_{i \in \cD \setminus T} \{ \prEM{k - 1,\bk}(\bbh,\cD \setminus \{i\}) \notin \cT \} \right\}
\]
Therefore, we can rewrite our probability statement as

\[
\Pr[\prEM{k,\bk}(\bbh,\cD) = o | \prEM{k,\bk}(\bbh,\cD) \notin \cT] 
= \Pr[A_1 \cap A_2| B_1\cap B_2 ]
\]

where $A_1 = \{ \hEM{\bk}(\bbh,\cD) = i_1\}, A_2 = \{\prEM{k- 1,\bk}(\bbh,\cD \setminus \{i_1\}) = (i_2,...,i_{\ell})\}$ and also $B_1 = \{ \hEM{\bk}(\bbh,\cD) \notin T\} , B_2 =  \{ \prEM{k - 1,\bk}(\bbh,\cD \setminus \{i_1\}) \notin \cT\}$. Accordingly, we only have that $A_1$ is dependent on $B_1$ and $A_2$ is dependent on $B_2$ with everything else being pairwise independent. It is straightforward to then show that

\[
\Pr[A_1 \cap A_2| B_1\cap B_2 ] = \Pr[A_1 | B_1] \Pr[A_2 | B_2]
\]

Substituting back in for our variables then gives 

\begin{multline*}
\Pr[\prEM{k,\bk}(\bbh,\cD) = o | \prEM{k,\bk}(\bbh,\cD) \notin \cT] \\
= 
\Pr[\hEM{\bk}(\bbh,\cD) = i_1 | \hEM{\bk}(\bbh,\cD)  \notin  T] 
\\
\cdot \Pr[\prEM{k-1,\bk}(\bbh,\cD \setminus \{i_1\}) = (i_2,...,i_{\ell}) | \prEM{k-1,\bk}(\bbh,\cD \setminus \{i_1\}) \notin \cT]
\end{multline*}

Applying this argument inductively (where the base case of $k = 1$ is true by definition of our peeling exponential mechanism) then gives our desired claim.

\end{proof}

\begin{lemma}\label{lem:expConditional}
For any $\cD \subseteq [d]$ and $T \subseteq [d]$, along with outcome $j \in \{ [d] \cup \{\bot \} \}  \setminus T$, we have for any $\bbh$

\[
\Pr[\hEM{\bk}(\bbh,\cD) = j | \hEM{\bk}(\bbh,\cD) \notin T] = 
\Pr[\hEM{\bk}(\bbh,\cD \setminus T )= j ] 
\]
\end{lemma}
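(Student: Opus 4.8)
The plan is to prove the identity by a direct computation from Definition~\ref{def:restrictedExp}, exploiting the fact that the additive term $h_{\bot}$ is defined from the full histogram $\bbh$ (through the order statistic $h_{(\bk+1)}$) and the parameters $\bk,\Delta,\delta,\diffp$ only, so it is \emph{unchanged} when we restrict the domain argument from $\cD$ to $\cD\setminus T$. Consequently, restricting the domain merely deletes the summands indexed by $T\cap\cD$ from the normalizing sum, while leaving the ``$\bot$ mass'' and every other numerator intact.

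First I would dispose of the degenerate case. If $j\notin(\cD\setminus T)\cup\{\bot\}$, then since $j\notin T$ this already forces $j\notin\cD\cup\{\bot\}$, so $\Pr[\hEM{\bk}(\bbh,\cD)=j]=0$ and likewise $\Pr[\hEM{\bk}(\bbh,\cD\setminus T)=j]=0$, and the claimed equality reads $0=0$. So assume $j\in(\cD\setminus T)\cup\{\bot\}$. Next I would compute the probability of the conditioning event. Using $\bot\notin T$ (since $T\subseteq[d]$) and summing the definition over $i\in T\cap\cD$,
\[
\Pr[\hEM{\bk}(\bbh,\cD)\notin T]
=1-\sum_{i\in T\cap\cD}\frac{\exp(\diffp h_i)}{\exp(\diffp h_{\bot})+\sum_{\ell\in\cD}\exp(\diffp h_\ell)}
=\frac{\exp(\diffp h_{\bot})+\sum_{\ell\in\cD\setminus T}\exp(\diffp h_\ell)}{\exp(\diffp h_{\bot})+\sum_{\ell\in\cD}\exp(\diffp h_\ell)}.
\]
Then forming the ratio and cancelling the common normalizer $\exp(\diffp h_{\bot})+\sum_{\ell\in\cD}\exp(\diffp h_\ell)$,
\[
\Pr[\hEM{\bk}(\bbh,\cD)=j\mid\hEM{\bk}(\bbh,\cD)\notin T]
=\frac{\Pr[\hEM{\bk}(\bbh,\cD)=j]}{\Pr[\hEM{\bk}(\bbh,\cD)\notin T]}
=\frac{\exp(\diffp h_j)}{\exp(\diffp h_{\bot})+\sum_{\ell\in\cD\setminus T}\exp(\diffp h_\ell)}.
\]
Since $h_{\bot}$ does not depend on the domain argument, the right-hand side is exactly $\Pr[\hEM{\bk}(\bbh,\cD\setminus T)=j]$ by Definition~\ref{def:restrictedExp}, completing the proof.

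There is essentially no real obstacle here: the only point requiring care is the observation that $h_{\bot}$ is a function of $\bbh$ and $\bk$ (not of the domain set fed to $\hEM{\bk}$), so the coordinate $\bot$ keeps the same unnormalized weight $\exp(\diffp h_{\bot})$ in both $\hEM{\bk}(\bbh,\cD)$ and $\hEM{\bk}(\bbh,\cD\setminus T)$; everything else is just cancellation of a shared partition function. This lemma then feeds directly into Lemma~\ref{lem:prodConditionals} (and ultimately Lemma~\ref{lem:mainEpsBoundRest}), where the rejection-sampling interpretation of conditioning on ``no bad outcome'' is identified coordinate-by-coordinate with running the peeling exponential mechanism over the intersected domain.
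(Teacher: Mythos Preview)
Your proof is correct and follows essentially the same approach as the paper's own proof: write the conditional probability as a ratio, plug in the explicit exponential-mechanism weights from Definition~\ref{def:restrictedExp}, and cancel the shared normalizer. You are slightly more careful than the paper in that you handle the degenerate case $j\notin\cD\cup\{\bot\}$ explicitly and flag that $h_{\bot}$ is independent of the domain argument, but the core argument is identical.
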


\begin{proof}
From the definition of conditional probabilities we have

\[
\Pr[\hEM{\bk}(\bbh,\cD) = j | \hEM{\bk}(\bbh,\cD) \notin T]  = \frac{\Pr[\{\hEM{\bk}(\bbh,\cD) = j\} \cap \{ \hEM{\bk}(\bbh,\cD) \notin T\}] }{\Pr[\hEM{\bk}(\bbh,\cD) \notin T ]}
\]

By our assumption that $j \notin T$ we can reduce this to

\[
\Pr[\hEM{\bk}(\bbh,\cD) = j | \hEM{\bk}(\bbh,\cD) \notin T]  = \frac{\Pr[\hEM{\bk}(\bbh,\cD) = j] }{\Pr[\hEM{\bk}(\bbh,\cD) \notin T] }
\]

Applying Definition~\ref{def:restrictedExp}, we can explicitly write both terms and obtain

\[
\Pr[\hEM{\bk}(\bbh,\cD) = j | \hEM{\bk}(\bbh,\cD) \notin T]  = \frac{\frac{\exp(\diffp h_j) }{\exp(\diffp h_{\bot}) + \sum_{i \in \cD} \exp(\diffp h_i)}  }{\frac{\exp(\diffp h_{\bot}) + \sum_{i \in \cD \setminus T} \exp(\diffp h_i)}{\exp(\diffp h_{\bot}) + \sum_{i \in \cD} \exp(\diffp h_i)} } = 
\frac{\exp(\diffp h_j) }{\exp(\diffp h_{\bot}) + \sum_{i \in \cD \setminus T} \exp(\diffp h_i)}  
\]

where because $j \notin T$, this then reduces to $\Pr[\hEM{\bk}(\bbh,\cD \setminus T )= j ] $ as desired.

\end{proof}

\begin{corollary}\label{cor:rejectionSamp}
Consider a subset $T \subseteq [d]$, and domain $\cD \subseteq [d]$.  For histogram $\bbh$, we will write the outcome set of $\prEM{k,\bk}(\bbh,\cD)$ as $\cO$ and define the set $\cT = \{ o \in \cO : o \cap T \neq \emptyset \}$.  For any $o = (i_1,...,i_{\ell}) \notin \cT$ we have

\[
\Pr[\prEM{k,\bk}(\bbh,\cD) = o] = \Pr[\prEM{k,\bk}(\bbh,\cD \setminus T) = o] \Pr[ \prEM{k,\bk}(\bbh,\cD) \notin \cT]
\]
\end{corollary}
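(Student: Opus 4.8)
The plan is to combine Lemma~\ref{lem:prodConditionals} and Lemma~\ref{lem:expConditional} with the elementary identity $\Pr[A] = \Pr[A\mid B]\Pr[B]$ valid whenever $A \subseteq B$. First I would note that since $o = (i_1,\ldots,i_\ell) \notin \cT$, the event $\{\prEM{k,\bk}(\bbh,\cD) = o\}$ is contained in $\{\prEM{k,\bk}(\bbh,\cD) \notin \cT\}$, so that
\[
\Pr[\prEM{k,\bk}(\bbh,\cD) = o] = \Pr[\prEM{k,\bk}(\bbh,\cD) = o \mid \prEM{k,\bk}(\bbh,\cD) \notin \cT]\cdot \Pr[\prEM{k,\bk}(\bbh,\cD) \notin \cT].
\]
It then suffices to show that the conditional probability on the right equals $\Pr[\prEM{k,\bk}(\bbh,\cD\setminus T) = o]$.

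Next I would apply Lemma~\ref{lem:prodConditionals} to rewrite that conditional probability as the product
\[
\prod_{j=0}^{\ell-1}\Pr[\hEM{\bk}(\bbh,\cD\setminus\{i_1,\ldots,i_j\}) = i_{j+1} \mid \hEM{\bk}(\bbh,\cD\setminus\{i_1,\ldots,i_j\}) \notin T].
\]
Because $o \notin \cT$ means $o \cap T = \emptyset$, every coordinate $i_{j+1}$ lies in $([d]\cup\{\bot\})\setminus T$ (here using $T\subseteq[d]$, so $\bot \notin T$), hence Lemma~\ref{lem:expConditional} applies termwise and turns each factor into $\Pr[\hEM{\bk}(\bbh,(\cD\setminus\{i_1,\ldots,i_j\})\setminus T) = i_{j+1}]$. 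Using the identity $(\cD\setminus\{i_1,\ldots,i_j\})\setminus T = (\cD\setminus T)\setminus\{i_1,\ldots,i_j\}$, this product is precisely the probability that the peeling exponential mechanism run on domain $\cD\setminus T$ selects $i_1$ first, then $i_2$, and so on through $i_\ell$; by unrolling the definition of $\prEM{k,\bk}$ as iterated calls to $\hEM{\bk}$ (with the iteration halting once $\bot$ is drawn, consistent with $i_\ell = \bot$ when $\ell < k$), this equals $\Pr[\prEM{k,\bk}(\bbh,\cD\setminus T) = o]$, which completes the argument.

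I do not expect a serious obstacle here. The one point requiring care is verifying the hypotheses of Lemma~\ref{lem:expConditional} at every step — namely that each coordinate of $o$, including a possible trailing $\bot$, avoids $T$ — which is exactly the content of $o\notin\cT$ together with $T\subseteq[d]$. The remainder is bookkeeping with set differences and the recursive structure of the peeling mechanism.
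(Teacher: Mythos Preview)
Your proposal is correct and follows essentially the same route as the paper: factor out $\Pr[\prEM{k,\bk}(\bbh,\cD)\notin\cT]$ via the conditional-probability identity, apply Lemma~\ref{lem:prodConditionals} to get a product of single-step conditionals, then apply Lemma~\ref{lem:expConditional} termwise to remove $T$ from each domain and recognize the result as $\Pr[\prEM{k,\bk}(\bbh,\cD\setminus T)=o]$. You are, if anything, slightly more explicit than the paper in checking the hypothesis $i_{j+1}\notin T$ (including the $\bot$ case) and in stating the set-difference identity.
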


\begin{proof}

We first use the fact that $o \notin \cT$ to rewrite our probability as 

\begin{multline*}
\Pr[\prEM{k,\bk}(\bbh,\cD) = o] = \Pr[\{\prEM{k,\bk}(\bbh,\cD) = o\} \cap \{ \prEM{k,\bk}(\bbh,\cD) \notin \cT\}] \\
= \Pr[\prEM{k,\bk}(\bbh,\cD) = o | \prEM{k,\bk}(\bbh,\cD) \notin \cT] \Pr[ \prEM{k,\bk}(\bbh,\cD) \notin \cT]
\end{multline*}

We then apply Lemma~\ref{lem:prodConditionals} and this gives 

\begin{multline*}
\Pr[\prEM{k,\bk}(\bbh,\cD) = o] \\
= \prod_{j = 0}^{\ell - 1} \Pr[\hEM{\bk}(\bbh,\cD \setminus \{i_1,...,i_j\} ) = i_{j+1} | \hEM{\bk}(\bbh,\cD \setminus \{i_1,...,i_j\} )  \notin  T] \cdot \Pr[ \prEM{k,\bk}(\bbh,\cD) \notin \cT]
\end{multline*}

We then apply Lemma~\ref{lem:expConditional} to achieve

\[
\Pr[\prEM{k,\bk}(\bbh,\cD) = o] \\
= \prod_{j = 0}^{\ell - 1} \Pr[\hEM{\bk}(\bbh, (\cD \setminus T) \setminus \{i_1,...,i_j\} ) = i_{j+1} ] \Pr[ \prEM{k,\bk}(\bbh,\cD) \notin \cT]
\]

Using our construction of the peeling mechanism then implies our desired equality.

\end{proof}

\begin{proof}[Proof of Lemma~\ref{lem:mainEpsBoundRest}]
We will set $\cD = \domain{\bk}{\bbh}$ and $T  = \domain{\bk}{\bbh} \setminus \domain{\bk}{\bbh'}$ in our Corollary~\ref{cor:rejectionSamp} and our desired result is immediately implied.

\end{proof}


\section{Omitted Proofs from Section~\ref{sec:RNM} \label{app:more_proofs}}

\subsection{Proof of Lemma~\ref{lem:lap_del}}

\begin{lemma}\label{lem:delta_bound}
Given an histogram $\bbh$ and some $\cD \subseteq [d]$. For any $i \in \cD$ such that $h_i \leq h_{(\bk + 1)} + 1$, then 

\[
\Pr[i \in \rnm{k,\bk}(\bbh,\cD)] \leq \frac{3\delta}{4\Delta} + \frac{\ln(\Delta/\delta)\delta}{4\Delta}
\]
\end{lemma}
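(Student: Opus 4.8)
The plan is to reduce the statement to a one‑dimensional tail bound on the difference of two independent Laplace variables and then evaluate that tail in closed form. First I would note that in $\rnm{k,\bk}(\bbh,\cD)$ one draws independent $Y_j \sim \lap(1/\diffp)$ for $j \in \cD$ together with $Y_\bot \sim \lap(1/\diffp)$, forms $v_j = h_j + Y_j$ and $v_\bot = h_\bot + Y_\bot$, and outputs only those indices whose noisy value strictly exceeds $v_\bot$ in the sorted order. Hence a necessary condition for $i \in \rnm{k,\bk}(\bbh,\cD)$ is $v_i > v_\bot$, so
\[
\Pr[i \in \rnm{k,\bk}(\bbh,\cD)] \;\leq\; \Pr[h_i + Y_i > h_\bot + Y_\bot] \;=\; \Pr[Y_i - Y_\bot > h_\bot - h_i].
\]
Since $h_\bot = h_{(\bk+1)} + 1 + \ln(\Delta/\delta)/\diffp$ and, by hypothesis, $h_i \leq h_{(\bk+1)} + 1$, we have $h_\bot - h_i \geq \ln(\Delta/\delta)/\diffp =: c$; and because the survival function $z \mapsto \Pr[Y_i - Y_\bot > z]$ is non‑increasing, it suffices to bound $\Pr[Y_i - Y_\bot > c]$ by the claimed right‑hand side.

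Next I would compute this tail directly. A routine convolution of $p_\lap(\cdot\,;1/\diffp)$ with itself shows that $Z := Y_i - Y_\bot$ has density $f_Z(z) = \tfrac{\diffp}{4}(1 + \diffp|z|)e^{-\diffp|z|}$, so for $c \geq 0$,
\[
\Pr[Z > c] \;=\; \int_c^\infty \tfrac{\diffp}{4}(1 + \diffp z)e^{-\diffp z}\,dz \;=\; \tfrac14(\diffp c + 2)\,e^{-\diffp c}.
\]
(Equivalently one can condition on $Y_\bot$ and integrate the Laplace tail $\Pr[Y_i > c + y]$; the convolution form is the cleanest bookkeeping.) Plugging in $c = \ln(\Delta/\delta)/\diffp$, so that $\diffp c = \ln(\Delta/\delta)$ and $e^{-\diffp c} = \delta/\Delta$, gives
\[
\Pr[Z > c] \;=\; \frac{\bigl(\ln(\Delta/\delta) + 2\bigr)\delta}{4\Delta} \;=\; \frac{2\delta}{4\Delta} + \frac{\ln(\Delta/\delta)\delta}{4\Delta} \;\leq\; \frac{3\delta}{4\Delta} + \frac{\ln(\Delta/\delta)\delta}{4\Delta},
\]
which is exactly the stated bound, in fact with slack (a $2$ in place of the $3$). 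The degenerate regime $\delta \geq \Delta$, where $c \leq 0$, I would dispose of separately, since there the right‑hand side is already large and the trivial bounds $\Pr[\cdot]\le 1/2$ or $\le 1$ suffice; this is not the parameter range of interest anyway.

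The only genuine computation is the Laplace‑difference tail; everything else is the reduction "being output forces beating the noisy threshold" together with monotonicity of a survival function, so I do not anticipate a real obstacle. The one point to get right is the convolution density (or, alternatively, carefully splitting the conditional integral over the sign of $c + Y_\bot$), but that is entirely routine.
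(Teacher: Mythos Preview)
Your proposal is correct and follows the same reduction as the paper: both observe that $i$ being output forces $h_i + \lap(1/\diffp) > h_\bot + \lap(1/\diffp)$ and then bound the tail $\Pr[\lap(1/\diffp) - \lap(1/\diffp) > \ln(\Delta/\delta)/\diffp]$. The only difference is in how that tail is evaluated: the paper conditions on one of the Laplace variables and splits the resulting integral into three regions $(-\infty,0]$, $[0,T]$, $[T,\infty)$, using crude upper bounds on the outer two, which is where the constant $3$ comes from; you instead use the closed-form density of the Laplace difference, $f_Z(z)=\tfrac{\diffp}{4}(1+\diffp|z|)e^{-\diffp|z|}$, and integrate the tail exactly, which yields the sharper constant $2$. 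Your route is slightly cleaner and tighter, but the two arguments are otherwise identical.
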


\begin{proof}
For simplicity, we will set $T= \ln(\tfrac{\Delta}{\delta})/\diffp$, which implies $h_{\bot} = h_{(\bk + 1)} + 1 +  T$ and plug back in at the end of the analysis.
By construction of our mechanism, we know that the noisy estimate of $h_i$ must be greater than the noisy estimate of our threshold $h_{\bot} = h_{(\bk + 1)} + 1 +  T$ to be a possible output, which implies

\[
\Pr[i \in \rnm{k,\bk}(\bbh,\cD)] \leq \Pr[h_i + \lap(1/\diffp) > h_\bot + \lap(1/\diffp)] 
\]

By assumption, $h_i \leq h_{(\bk + 1)} + 1$, which gives us 
$$
\Pr[i \in \rnm{k,\bk}(\bbh,\cD)]  \leq \Pr[\lap(1/\diffp) >  T + \lap(1/\diffp)].
$$
We can then rewrite this as the convolution of two Laplace random variables.  We will denote the density of a $\lap(1/\diffp)$ random variable as $p(\cdot)$.

\begin{align}
\int^{\infty}_{-\infty} & p(x) \Pr[\lap(1/\diffp) < x -  T] dx \\
& = \int^{0}_{-\infty} p(x) \Pr[\lap(1/\diffp) < x - T] dx
 + \int^{T}_{0} p(x) \Pr[\lap(1/\diffp) < x - T]  dx \\
& \qquad + \int^{\infty}_{T} p(x) \Pr[\lap(1/\diffp) < x - T] dx
\label{eq:to_bound}
\end{align}
We will bound each separately. First, note that $\Pr[\lap(1/\diffp) < - T ] = \tfrac{\delta}{2 \Delta}$,  which implies that 

\[
\int^{0}_{-\infty}p(x) \Pr[ \lap(1/\diffp)< x -T] dx \leq \tfrac{\delta}{2 \Delta} \int^{0}_{-\infty} p(x) = \tfrac{\delta}{4\Delta}
\]

where the last step follows from the symmetry of the Laplace distribution where $\int^{0}_{-\infty} p(x) = \frac{1}{2}$. By similar reasoning, we have 

\[
\int^{\infty}_{T} p(x) \Pr[\lap(1/\diffp) < x -  T] dx \leq \int^{\infty}_{T} p(x)dx = \tfrac{\delta}{2 \Delta}
\]

The middle term in \eqref{eq:to_bound} will be a bit trickier to bound, and we will need to apply the explicit form of the Laplace distribution. Rewriting $\Pr[\lap(1/\diffp) < x - T] = \int^{x - T}_{-\infty} p( y) dy$, we then use the fact that $x - T \leq 0$ for $x \in [0,T]$, and it is straightforward to see that plugging in the Laplace pdf to this integral evaluates to the following for $x \in [0,T]$

\[
\Pr[\lap(1/\diffp) < x - T] = \int^{x - T}_{-\infty} p(y) dy = \frac{1}{2}e^{(x-T)\diffp}.
\]
We then plug this into the final term we want to bound and get

\[
\int^{T}_{0}p(x)\Pr[\lap(1/\diffp) < x - T] dx = \int^{T}_{0} p(x)  \frac{1}{2}e^{(x-T)\diffp} dx
\]

Furthermore, we plug in the PDF for the Laplace distribution with the absolute value eliminated because $x \in [0,T]$, which reduces to 

\begin{align*}
\int^{T}_{0} p(x) \Pr[\lap(1/\diffp) < x - T] dx &= \int^{T}_{0} \frac{\diffp}{2}e^{-x\diffp}  \frac{1}{2}e^{(x-T)\diffp} dx \\
&= \int^{T}_{0} \frac{\diffp}{4}e^{-T\diffp} dx = \frac{\diffp T}{4} e^{-\diffp T}\\
& = \frac{\delta}{4 \Delta} \ln(\tfrac{\Delta}{\delta})
\end{align*}

Combining these inequalities and plugging back in for $\delta$ easily gives the desired bound.

\end{proof}

\begin{proof}[Proof of Lemma~\ref{lem:lap_del}]

This will follow from a simple union bound on each  $i \in  \domainE{\bk}{\bbh}\setminus \domainE{\bk}{\bbh'}$ where we consider each subset of $\cS^\delta_{\lap}$ such that each outcome contains $i$, or more formally we define $\cS^\delta_{\lap}(i) \defeq \{o \in \cS^\delta_{\lap}: i \in o\}$
This then implies that 

\[
\Pr[\rnm{k,\bk}(\bbh,\domainE{\bk}{\bbh}) \in \cS^{\delta}_{\lap}] \leq \sum_{i \in \domainE{\bk}{\bbh'}\setminus  \domainE{\bk}{\bbh}} \Pr[\rnm{k,\bk}(\bbh,\domainE{\bk}{\bbh}) \in \cS^{\delta}_{\lap}(i)] 
\]
because each outcome $o \in \cS^{\delta}_{\lap}$ must contain some $i \in  \domainE{\bk}{\bbh}\setminus \domainE{\bk}{\bbh'}$ by construction. Furthermore, by construction we also have 

\[
\Pr[\rnm{k,\bk}(\bbh,\domainE{\bk}{\bbh}) \in \cS^{\delta}_{\lap}(i)] = \Pr[i \in \rnm{k,\bk}(\bbh,\domainE{\bk}{\bbh})]
\]

Our claim then immediately follows from Lemma~\ref{lem:delta_bound} and the fact that the size of $\domain{\bk}{\bbh}\setminus  \domain{\bk}{\bbh'}$ is at most $\Delta$ by Lemma~\ref{lem:boundedDomain} and our assumption that $\min\{\Delta,\bk\} = \Delta$.

\end{proof}

\subsection{Proof of Lemma~\ref{lem:lap_eps}}

\begin{proof}[Proof of Lemma~\ref{lem:lap_eps}]

For any $o = (i_1,...,i_{\ell}) \in S$ we know that by definition of $\cS_{\lap} \cap \cS'_{\lap}$ we must have each $i_j \in \domainE{\bk}{\bbh} \cup \{\bot\}$ and also $i_j \in \cD^{\diffp} \cup \{\bot\}$. Furthermore, letting $p(\cdot)$ be the PDF for a $\lap(1/\diffp)$ random variable, we know 

\begin{multline*}
\Pr[\rnm{k,\bk}(\bbh,\domainE{\bk}{\bbh}) = o] = \\
\int_{-\infty}^{\infty} p( x_1 - h_{i_1}) \int_{-\infty}^{x_1} p(x_2 - h_{i_2})\cdots \int_{-\infty}^{x_{\ell-1}}  p(x_{\ell} - h_{i_\ell}) \prod_{j \in \{\domainE{\bk}{\bbh} \cup \{\bot\}\} \setminus \{ o\}} \Pr[\lap(1/\diffp)< x_{\ell} - h_{i_\ell}] dx_{\ell} \cdots dx_{1}
\end{multline*}
We then note that $\{\cD^\diffp \cup \{\bot\}\} \setminus \{ o\}$ is a subset of $\{ \domainE{\bk}{\bbh} \cup \{\bot\} \} \setminus \{o\}$ so the only difference is that the product contains more probabilities for $\Pr[\rnm{k,\bk}(\bbh,\domainE{\bk}{\bbh}) = o] $, which implies 

\[
\Pr[\rnm{k,\bk}(\bbh,\domainE{\bk}{\bbh}) = o] \leq \Pr[\rnm{k,\bk}(\bbh,\cD^\diffp) = o] 
\]
and hence our first inequality.

For the second inequality, we will prove by contradiction and suppose that there is some $S \subseteq \cS_{\lap} \cap \cS'_{\lap}$ such that 

\[
\Pr[\rnm{k,\bk}(\bbh,\cD^\diffp) \in S]  > \Pr[\rnm{k,\bk}(\bbh,\domainE{\bk}{\bbh}) \in S] + \bar{\delta}
\]

From our first inequality, we know that if we let $\bar{S} =  \{ \cS_{\lap} \cap \cS'_{\lap} \} \setminus S$ then

\[
\Pr[\rnm{k,\bk}(\bbh,\cD^\diffp) \in \bar{S}]  \geq \Pr[\rnm{k,\bk}(\bbh,\domainE{\bk}{\bbh}) \in \bar{S}]
\]
and by summing together the two inequalities implies

\[
\Pr[\rnm{k,\bk}(\bbh,\cD^\diffp) \in \cS_{\lap} \cap \cS'_{\lap}]  > \Pr[\rnm{k,\bk}(\bbh,\domainE{\bk}{\bbh}) \in \cS_{\lap} \cap \cS'_{\lap}] + \bar{\delta}
\]

From Lemma~\ref{lem:lap_del} we then conclude

\begin{multline*}
\Pr[\rnm{k,\bk}(\bbh,\cD^\diffp) \in \cS_{\lap} \cap \cS'_{\lap}]  > \\ \Pr[\rnm{k,\bk}(\bbh,\domainE{\bk}{\bbh}) \in \cS_{\lap} \ \cap \cS'_{\lap}] + \Pr[\rnm{k,\bk}(\bbh,\domainE{\bk}{\bbh}) \in \cS^{\delta}_{\lap}] 
\end{multline*}

and this gives our contradiction because 
\[\Pr[\rnm{k,\bk}(\bbh,\domainE{\bk}{\bbh}) \in \cS_{\lap} \ \cap \cS'_{\lap}] + \Pr[\rnm{k,\bk}(\bbh,\domainE{\bk}{\bbh}) \in \cS^{\delta}_{\lap}] =1\]

\end{proof}


\section{Further Accuracy Guarantees \label{app:acc}}

We will give a few additional accuracy guarantees, complementing results in Section~\ref{sec:accuracy} regarding correctly outputting the true top index first. More specifically, we will look at the conditions under which our algorithm returns the true top index rather than the traditional exponential mechanism, which has access to the full histogram. Additionally, we show that the probability of incorrectly outputting an index other than the true top index or $\bot$ will only be smaller in our algorithm versus the exponential mechanism. 
Throughout this section we will write $\EM$ be the exponential mechanism with quality score $q(\bbh,i) = h_i$.

\begin{lemma}
Given histogram $\bbh$ where $i_{(1)}$ is the index of $h_{(1)}$, then 

\[
\Pr[\hEM{\bk}(\bbh,\domainE{\bk}{\bbh}) = i_{(1)}] \geq \Pr[\EM(\bbh) = i_{(1)}]
\]

iff we have

\[
\frac{\min\{ \Delta,\bk\} e^\diffp}{\delta} \cdot \exp(\diffp h_{(\bk+1)}) \leq \sum_{j > \bk} \exp(\diffp h_{(j)})
\]

\end{lemma}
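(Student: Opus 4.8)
The plan is a direct computation, since both sides are ratios with the same numerator $\exp(\diffp h_{(1)})$. First I would record the two probabilities explicitly. Because the quality score $q(\bbh,i)=h_i$ has sensitivity $\Delta(q)=1$, the (ordinary) exponential mechanism satisfies
\[
\Pr[\EM(\bbh)=i_{(1)}] = \frac{\exp(\diffp h_{(1)})}{\sum_{j=1}^{d}\exp(\diffp h_{(j)})}.
\]
On the other side, $\domainE{\bk}{\bbh}=\{i_{(1)},\dots,i_{(\bk)}\}$ contains $i_{(1)}$, so by Definition~\ref{def:restrictedExp},
\[
\Pr[\hEM{\bk}(\bbh,\domainE{\bk}{\bbh})=i_{(1)}] = \frac{\exp(\diffp h_{(1)})}{\exp(\diffp h_{\bot}) + \sum_{j=1}^{\bk}\exp(\diffp h_{(j)})},
\]
and expanding $h_{\bot}=h_{(\bk+1)}+1+\ln(\min\{\Delta,\bk\}/\delta)/\diffp$ gives the closed form $\exp(\diffp h_{\bot}) = \tfrac{\min\{\Delta,\bk\}\,e^{\diffp}}{\delta}\exp(\diffp h_{(\bk+1)})$.

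Next I would compare the two expressions. Since they share the common positive numerator $\exp(\diffp h_{(1)})$ and both have strictly positive denominators, the inequality $\Pr[\hEM{\bk}(\bbh,\domainE{\bk}{\bbh})=i_{(1)}] \geq \Pr[\EM(\bbh)=i_{(1)}]$ holds if and only if the denominators satisfy the reverse inequality,
\[
\exp(\diffp h_{\bot}) + \sum_{j=1}^{\bk}\exp(\diffp h_{(j)}) \;\leq\; \sum_{j=1}^{d}\exp(\diffp h_{(j)}).
\]
Cancelling the common term $\sum_{j=1}^{\bk}\exp(\diffp h_{(j)})$ from both sides and substituting the closed form for $\exp(\diffp h_{\bot})$ leaves exactly
\[
\frac{\min\{\Delta,\bk\}\,e^{\diffp}}{\delta}\cdot\exp(\diffp h_{(\bk+1)}) \;\leq\; \sum_{j>\bk}\exp(\diffp h_{(j)}),
\]
which is the claimed condition, and every step above is an equivalence.

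There is essentially no genuine obstacle here; the proof is a sequence of reversible algebraic manipulations. The only points that warrant a sentence of care are (i) noting that $\Delta(q)=1$, so that both mechanisms use the same base $\exp(\diffp\,\cdot)$ and the numerators really do coincide, and (ii) justifying that comparing the reciprocals flips the inequality, which is valid precisely because $\exp(\diffp h_{(1)})>0$ and both denominators are strictly positive. I would also remark that the condition is vacuous/automatic in degenerate cases (e.g. it can fail only when the limited domain's discarded mass $\sum_{j>\bk}\exp(\diffp h_{(j)})$ is small relative to the $\bot$-weight), which is consistent with the intuition that restricting attention to the true top-$\bk$ only helps the top index unless the tail of the histogram is heavy.
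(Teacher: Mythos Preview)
Your proposal is correct and follows essentially the same approach as the paper: write out the two probabilities explicitly, observe they share the numerator $\exp(\diffp h_{(1)})$, and reduce the comparison to the denominators. The paper's proof is terser (it simply records the two expressions and says the result follows immediately), whereas you spell out the cancellation of $\sum_{j\le \bk}\exp(\diffp h_{(j)})$ and the substitution $\exp(\diffp h_\bot)=\tfrac{\min\{\Delta,\bk\}e^{\diffp}}{\delta}\exp(\diffp h_{(\bk+1)})$, but there is no substantive difference.
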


\begin{proof}
This follows immediately from the fact that with $h_\bot = h_{(\bk+1)} + 1 + \tfrac{\log(\min\{ \Delta,\bk\}/\delta)}{\diffp}$, we have

\[
\Pr[\hEM{\bk}(\bbh,\domainE{\bk}{\bbh}) = i_{(1)}]  = \frac{\exp(\diffp h_{(1)})}{\exp(\diffp h_{\bot}) + \sum_{j \leq \bk} \exp(\diffp h_{(j)})}
\]

and 

\[
\Pr[\EM(\bbh) = i_{(1)}] = \frac{\exp(\diffp h_{(1)})}{\sum_{j \leq d} \exp(\diffp h_{(j)})}
\]

\end{proof}

In contrast, if we consider $\bot$ to be a null event, then the probability that our variant will output an incorrect index will always be smaller than for the peeling mechanism.

\begin{lemma}
Given histogram $\bbh$ where $i_{(1)}$ is the index of $h_{(1)}$, then 

\[
\Pr[\hEM{\bk}(\bbh,\domainE{\bk}{\bbh}) \notin \{i_{(1)}, \bot\}] < \Pr[\EM(\bbh) \neq i_{(1)}]
\]

\end{lemma}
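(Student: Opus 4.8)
The plan is to expand both probabilities with their explicit mass functions and reduce the claim to a trivial algebraic inequality. First, recall from \eqref{eq:domain} that $\domainE{\bk}{\bbh} = \{i_{(1)}, \dots, i_{(\bk)}\}$, so by Definition~\ref{def:restrictedExp},
\[
\Pr[\hEM{\bk}(\bbh,\domainE{\bk}{\bbh}) \notin \{i_{(1)},\bot\}] = \frac{\sum_{j=2}^{\bk} \exp(\diffp h_{(j)})}{\exp(\diffp h_\bot) + \sum_{j=1}^{\bk}\exp(\diffp h_{(j)})},
\]
whereas, directly from the definition of the exponential mechanism with quality score $q(\bbh,i)=h_i$,
\[
\Pr[\EM(\bbh) \neq i_{(1)}] = \frac{\sum_{j=2}^{d}\exp(\diffp h_{(j)})}{\sum_{j=1}^{d}\exp(\diffp h_{(j)})}.
\]

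Next, I would abbreviate $A = \exp(\diffp h_{(1)})$, $B = \sum_{j=2}^{\bk}\exp(\diffp h_{(j)})$, $C = \sum_{j=\bk+1}^{d}\exp(\diffp h_{(j)})$, and $D = \exp(\diffp h_\bot)$, noting that $A,D > 0$, that $B \geq 0$, and that $C > 0$ in the regime $\bk < d$ in which the limited-domain mechanism is used. The claim then reads $\tfrac{B}{D+A+B} < \tfrac{B+C}{A+B+C}$. Since all denominators are positive, cross-multiplying and cancelling the common terms $AB + B^2 + BC$ leaves the equivalent inequality $0 < BD + CD + AC$, which holds because $AC > 0$ (and $BD, CD \geq 0$ as well).

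I do not expect a genuine obstacle: the whole argument is a one-line comparison of two rational functions once the pmf's are written out, and the role of the threshold mass $D = \exp(\diffp h_\bot) > 0$ is precisely what forces strictness. The only point to state carefully is the nondegeneracy assumption $\bk < d$ (equivalently, that some index other than $i_{(1)}$ exists); if $B = C = 0$ both sides collapse to $0$ and the strict inequality would fail, but this corner case is excluded by the standing convention $\bk < d$ used in the earlier lemmas.
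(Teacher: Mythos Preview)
Your proposal is correct and follows essentially the same approach as the paper: both write out the explicit exponential-mechanism probabilities, cross-multiply, cancel common terms, and observe that the residual is strictly positive. Your version is slightly more detailed in tracking the source of strictness ($AC>0$ from $\bk<d$), but the argument is identical in substance.
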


\begin{proof}
Writing the explicit form of each we have

\[
\Pr[\hEM{\bk}(\bbh,\domainE{\bk}{\bbh}) \notin\{ i_{(1)}, \bot\}]  = \frac{\sum_{1 < j \leq \bk} \exp(\diffp h_{(j)})}{\exp(\diffp h_{\bot}) + \sum_{j \leq \bk} \exp(\diffp h_{(j)})}
\]

and 

\[
\Pr[\EM(\bbh) \neq i_{(1)}] = \frac{\sum_{1 < j \leq d} \exp(\diffp h_{(j)})}{\sum_{j \leq d} \exp(\diffp h_{(j)})}
\]

Multiply each side by the denominator and cancelling like terms, we get that the inequality is equivalent to

\[
\left(\sum_{1 < j \leq \bk} \exp(\diffp h_{(j)})\right)\exp(\diffp h_{(1)}) < \left(\sum_{1 < j \leq d} \exp(\diffp h_{(j)})\right) \left( \exp(\diffp h_{(1)})  + \exp(\diffp h_{\bot}) \right)
\]

which holds.

\end{proof}


\section{Fixed Threshold Mechanism \label{app:fixed_threshold}}

We also consider a mechanism that keeps the threshold fixed, rather than adding noise to it, which may be of independent interest since it requires a slightly different analysis than our main, randomized threshold algorithm.  
Furthermore, it requires a smaller additive factor to the threshold (an additive savings of $\ln(2)/\diffp)$, which along with the fact that the threshold is fixed, increases the probability that the noisy values of considered indices are above this threshold. 
However, it requires us to set $\bk = k$, cannot return an ordering on the indices, and also does not achieve the same \emph{range-bounded} composition or \emph{pay-what-you-get} composition.
As such, the primary application of this algorithm would only be in the setting in which $k$ is small and the user only wanted to know indices within the top-$k$, and would prefer being returned $\bot$ as opposed to an index not in the top-$k$.

We start with a basic property of returning a noisy count that is above a data dependent threshold.
\begin{lemma}\label{lem:data_dep_thresh}
For any neighboring databases $\bbh, \bbh'$, any fixed $k< d$, and any $T \in \R$, we have the following for any $i \in [d]$

\[
\Pr[h_i + \lap(1/\diffp) > h_{(k+1)} + T] \leq e^{\diffp} \Pr[h_i'+ \lap(1/\diffp) > h_{(k+1)}' +T] 
\]
\end{lemma}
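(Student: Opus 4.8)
The plan is to handle two cases depending on how the ordered count $h_{(k+1)}$ moves between $\bbh$ and $\bbh'$, using Lemma~\ref{lem:thresSimilar} to control that movement. Without loss of generality assume $\bbh \geq \bbh'$, so that $h_{(k+1)} \in \{h'_{(k+1)}, h'_{(k+1)}+1\}$ and $h_i \in \{h'_i, h'_i + 1\}$ for every index $i$. The key observation is that the event in question only depends on the scalar difference $h_i - h_{(k+1)}$ (shifted by the noise and the fixed constant $T$): writing $Z \sim \lap(1/\diffp)$,
\[
\Pr[h_i + Z > h_{(k+1)} + T] = \Pr[Z > (h_{(k+1)} - h_i) + T],
\]
and similarly for the primed quantities with $h'_{(k+1)} - h'_i$ in place of $h_{(k+1)} - h_i$. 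So it suffices to show that the ``gap'' $g \defeq h_{(k+1)} - h_i$ and $g' \defeq h'_{(k+1)} - h'_i$ satisfy $g \geq g' - 1$, i.e. that the threshold gap can only shrink by at most $1$ when moving from $\bbh'$ to $\bbh$.

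First I would establish that bound on the gaps. Since $\bbh \geq \bbh'$ we have $h_i \leq h'_i + 1$, and by Lemma~\ref{lem:thresSimilar} we have $h_{(k+1)} \geq h'_{(k+1)}$. Therefore
\[
g = h_{(k+1)} - h_i \geq h'_{(k+1)} - (h'_i + 1) = g' - 1.
\]
Now I invoke the standard $1$-Lipschitz-in-location property of the Laplace tail: for any real $a \leq b$,
\[
\Pr[Z > a] \leq e^{\diffp(b-a)} \Pr[Z > b],
\]
which is immediate from the explicit form $p_\lap(z; 1/\diffp) = \tfrac{\diffp}{2} e^{-\diffp|z|}$ (the tail probability changes by at most a factor $e^{\diffp}$ per unit shift; one checks the three sub-cases according to the signs of $a$ and $b$, but each is a one-line computation). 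Applying this with $a = g + T$ and $b = g' + T$, using $b - a = g' - g \leq 1$, gives
\[
\Pr[h_i + Z > h_{(k+1)} + T] = \Pr[Z > g + T] \leq e^{\diffp}\,\Pr[Z > g' + T] = e^{\diffp}\,\Pr[h'_i + Z > h'_{(k+1)} + T],
\]
which is exactly the claimed inequality. The case $\bbh' \geq \bbh$ is symmetric in the roles of the two histograms: there $h_i \leq h'_i$ and $h_{(k+1)} \leq h'_{(k+1)} + 1$ by Lemma~\ref{lem:thresSimilar}, so again $g \geq g' - 1$ and the same Laplace-shift step finishes it.

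The only mild subtlety — and the step I would be most careful about — is the directionality: the statement is asymmetric (it bounds the $\bbh$-probability by $e^{\diffp}$ times the $\bbh'$-probability, not vice versa), so I must make sure the inequality $g \geq g' - 1$ points the right way in each case, rather than the reverse $g' \geq g - 1$, which would give the wrong direction. This is why both cases ($\bbh \geq \bbh'$ and $\bbh' \geq \bbh$) genuinely need to be checked rather than just ``without loss of generality''; the roles of $h_i$ versus $h'_i$ and of $h_{(k+1)}$ versus $h'_{(k+1)}$ are not symmetric under the direction of the desired inequality. Everything else is routine: the reduction to a scalar gap and the Laplace tail Lipschitz bound are both elementary.
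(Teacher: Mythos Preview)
Your proposal is correct and follows essentially the same approach as the paper: the paper's proof simply asserts that $|(h_i - h_{(k+1)}) - (h'_i - h'_{(k+1)})| \leq 1$ and invokes the standard Laplace tail-shift property, which is exactly what you unpack in detail via the case analysis on $\bbh \geq \bbh'$ versus $\bbh' \geq \bbh$ using Lemma~\ref{lem:thresSimilar}. One very minor point: you state the Laplace shift bound only for $a \leq b$ but then apply it knowing only that $b - a \leq 1$; the missing case $a > b$ is of course trivial since then $\Pr[Z > a] \leq \Pr[Z > b]$ outright, so there is no real gap.
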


This result is not entirely trivial because the threshold being considered is data-dependent and not fixed across the mechanism.

\begin{proof}
Follows immediately from the fact that $|(h_i - h_{(k+1)}) - (h_i' -h_{(k+1)}')| \leq 1$ and using known properties of the Laplace distribution.
\end{proof}

We will connect our fixed threshold mechanism to a simple randomized response for each $i \in [d]$, but at most $k$ will have non-zero probabilities.

\begin{definition}
For index $i \in [d]$, and some fixed value $k< d$ and given $\diffp,\delta > 0$, we define the truncated randomized response mechanism $\tRR{k}{i}: \mathbb{N}^{d} \rightarrow \{i,\bot\}$, such that 

\begin{equation*}
\Pr[\tRR{k}{i}(\bbh) = i] = \begin{cases}
\Pr\left[h_i + \lap(1/\diffp) > h_{(k+1)} + 1 + \tfrac{ \ln(\frac{1}{2\delta})}{\diffp}\right] &\text{if $h_i >  h_{(k+1)}$}\\
0 &\text{otherwise} 
\end{cases}
\end{equation*}
\end{definition}

We then have the following properties of the truncated randomized response.  Recall that we defined the strictly limited domain $\domainG{k}{\bbh}$ in  \eqref{eq:domainG}.
\begin{lemma}
For a fixed $i \in [d]$ and fixed $k<d$, along with given $\diffp,\delta > 0$, then for any neighboring histograms $\bbh,\bbh'$

\begin{enumerate}
\item $\Pr[\tRR{k}{i}(\bbh) = i] = \Pr[\tRR{k}{i}(\bbh') = i] = 0$ if $i \notin \domainG{k}{\bbh} \cup  \domainG{k}{\bbh'}$ 

\item $\Pr[\tRR{k}{i}(\bbh) = i] = \delta$ if $i \in \domainG{k}{\bbh} \setminus  \domainG{k}{\bbh'}$  or  $\Pr[\tRR{k}{i}(\bbh') = i] = \delta$ if $i \in \domainG{k}{\bbh'} \setminus  \domainG{k}{\bbh}$.

\item $\Pr[\tRR{k}{i}(\bbh) = i] \leq e^{\diffp} \Pr[\tRR{k}{i}(\bbh') = i]$ if $i \in \domainG{k}{\bbh} \cap \domainG{k}{\bbh'}$ 
\end{enumerate}
\end{lemma}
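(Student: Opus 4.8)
The plan is to dispatch the three claims one at a time, since each is either immediate from the definition of $\tRR{k}{i}$ or a direct consequence of a structural fact already established in the excerpt.

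First, for claim 1, I would simply unfold definitions: the hypothesis $i \notin \domainG{k}{\bbh} \cup \domainG{k}{\bbh'}$ means, by \eqref{eq:domainG}, that $h_i \le h_{(k+1)}$ and $h_i' \le h'_{(k+1)}$, so both output probabilities fall into the ``otherwise'' branch of the definition and equal $0$. For claim 3, I would observe that $i \in \domainG{k}{\bbh} \cap \domainG{k}{\bbh'}$ puts both $\Pr[\tRR{k}{i}(\bbh) = i]$ and $\Pr[\tRR{k}{i}(\bbh') = i]$ into the first branch, with the common additive constant $T = 1 + \ln(1/(2\delta))/\diffp$; the inequality $\Pr[\tRR{k}{i}(\bbh) = i] \le e^{\diffp}\Pr[\tRR{k}{i}(\bbh') = i]$ is then exactly Lemma~\ref{lem:data_dep_thresh} applied with that $T$, so nothing new is needed.

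The one step requiring genuine input is claim 2. If $i \in \domainG{k}{\bbh} \setminus \domainG{k}{\bbh'}$, then this set difference is nonempty, which rules out the second alternative in Lemma~\ref{lem:domainSimilar} and forces the first, giving the exact value $h_i = h_{(k+1)} + 1$. Substituting into the first branch of the definition collapses the event to $\{\lap(1/\diffp) > \ln(1/(2\delta))/\diffp\}$, and since that threshold is nonnegative (for $\delta \le \tfrac{1}{2}$) the Laplace tail identity $\Pr[\lap(1/\diffp) > t] = \tfrac{1}{2} e^{-\diffp t}$ evaluates it to $\tfrac{1}{2}\cdot 2\delta = \delta$. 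The symmetric statement follows by interchanging $\bbh$ and $\bbh'$.

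The main thing to be careful about is precisely this exactness in claim 2: Lemma~\ref{lem:domainSimilar} must pin the count of a newly-entered index down to $h_{(k+1)}+1$ rather than merely bounding it by $h_{(k+1)}+1$, since that is what turns the inequality ``$\le \delta$'' into the equality ``$=\delta$'' asserted in the statement. Everything else is routine unfolding of the definition together with the two cited lemmas, so I would not expect any further obstruction.
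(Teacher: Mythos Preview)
Your proposal is correct and matches the paper's approach: items 1 and 3 are dispatched identically (by definition and by Lemma~\ref{lem:data_dep_thresh}, respectively), and for item 2 both you and the paper reduce to the exact identity $h_i = h_{(k+1)}+1$ before evaluating the Laplace tail. The only cosmetic difference is that you invoke Lemma~\ref{lem:domainSimilar} directly to obtain that identity, whereas the paper re-derives it inline from Lemma~\ref{lem:thresSimilar}; since Lemma~\ref{lem:domainSimilar} is already available, your route is the cleaner packaging of the same argument.
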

\begin{proof}
Item 1. follows from the definition of $\tRR{k}{i}$ when $i \notin \domainG{k}{\bbh} \cup \domainG{k}{\bbh'}$ and item 3. follows from Lemma~\ref{lem:data_dep_thresh}.

We now focus on item 2.  Without loss of generality assume that $\bbh \geq \bbh'$ and let $i \in \domainG{k}{\bbh} \setminus  \domainG{k}{\bbh'}$.  In this case, we must have $h_i' \leq h_{(k+1)}'$ yet $h_i > h_{(k+1)}$.  If $h_i = h_i'$, then $h_{(k+1)} < h_{(k+1)}'$, which cannot happen.  Thus, $h_i   = h_i'+1$, in which case $h_{(k+1)} < h_{(k+1)}' +1$.   Since $h_{(k+1)} = h_{(k+1)}'$ or $h_{(k+1)} = h_{(k+1)} + 1$ from Lemma~\ref{lem:thresSimilar}, we must be in the  $h_{(k+1)} = h_{(k+1)}'$ case.  Stringing the inequalities, we have
$$
h_{(k+1)} < h_i = h_i'+1 \leq h_{(k+1)}' + 1 = h_{(k+1)} + 1 
$$
Since $h_i$ must be integral, we have $h_i = h_{(k+1)} + 1$.  We then have the following
\begin{align*}
\Pr[\tRR{k}{i}(\bbh) = i] &= \Pr\left[h_i + \lap(1/\diffp) > h_{(k+1)} + 1 + \tfrac{ \ln(\frac{1}{2\delta})}{\diffp}\right] \\
& = \Pr\left[ \lap(1/\diffp) > \tfrac{ \ln(\frac{1}{2\delta})}{\diffp}\right] \\
& = \delta
\end{align*}
  
\end{proof}

\begin{corollary}\label{cor:tRR_DP}
For every $i \in [d]$ and fixed $k$, along with given $\diffp,\delta > 0$, the algorithm $\tRR{k}{i}$ is $(\diffp,\delta)$-DP
\end{corollary}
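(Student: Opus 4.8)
Since the output space of $\tRR{k}{i}$ is the two‑point set $\{i,\bot\}$, establishing $(\diffp,\delta)$‑DP reduces to checking, for every ordered pair of neighbors $\bbh,\bbh'$, the two single‑outcome inequalities $\Pr[\tRR{k}{i}(\bbh)=i]\le e^{\diffp}\Pr[\tRR{k}{i}(\bbh')=i]+\delta$ and $\Pr[\tRR{k}{i}(\bbh)=\bot]\le e^{\diffp}\Pr[\tRR{k}{i}(\bbh')=\bot]+\delta$; the events $\emptyset$ and $\{i,\bot\}$ are trivial, and the ``swapped'' direction is just another ordered neighbor pair. I would then split into the three cases supplied by the preceding lemma according to whether $i$ lies in neither, exactly one, or both of $\domainG{k}{\bbh}$ and $\domainG{k}{\bbh'}$. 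Throughout I would assume $\delta<1/2$, which is what the preceding lemma implicitly uses (its item~2 relies on $\Pr[\lap(1/\diffp)>\ln(1/(2\delta))/\diffp]=\delta$); for $\delta$ near or above $1$ the claim is vacuous.

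For the event $\{i\}$ each case is immediate: if $i$ is in neither domain both probabilities are $0$; if $i$ is in $\bbh$'s domain only, the left side is $\delta$ and the right side is $0$, so $\delta\le 0+\delta$; if $i$ is in $\bbh'$'s domain only, the left side is $0$; and if $i$ is in both, item~3 of the preceding lemma gives the $e^\diffp$ ratio directly. The content is the $\{\bot\}$ event, where I would write $p=\Pr[\tRR{k}{i}(\bbh)=i]$, $q=\Pr[\tRR{k}{i}(\bbh')=i]$ and must show $1-p\le e^{\diffp}(1-q)+\delta$. When $i$ is in neither domain, $p=q=0$ and this is $1\le e^{\diffp}+\delta$. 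When $i$ is in both domains, on such inputs the mechanism is exactly a Laplace threshold test applied to $h_i-h_{(k+1)}$, a quantity of sensitivity $1$ (Lemma~\ref{lem:thresSimilar}, as exploited in Lemma~\ref{lem:data_dep_thresh}); combining $1-\Pr[\lap(1/\diffp)>t]=\Pr[\lap(1/\diffp)>-t]$ with the Laplace tail bound $\Pr[\lap(1/\diffp)>t-1]\le e^{\diffp}\Pr[\lap(1/\diffp)>t]$ shows that \emph{both} outcomes, in particular $\bot$, satisfy the pure $e^{\diffp}$ ratio bound, so no $\delta$ is needed there.

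The one delicate case — and the step I expect to be the main (if mild) obstacle — is the $\{\bot\}$ event when $i$ lies in exactly one domain, since this is where the $\delta$ term is genuinely used rather than inherited. If $i\in\domainG{k}{\bbh}\setminus\domainG{k}{\bbh'}$ then $p=\delta$ and $q=0$, and $1-\delta\le e^{\diffp}+\delta$ follows from $e^{\diffp}\ge 1$. If instead $i\in\domainG{k}{\bbh'}\setminus\domainG{k}{\bbh}$ then $p=0$ and $q=\delta$, so we need $1\le e^{\diffp}(1-\delta)+\delta$, equivalently $\delta(e^{\diffp}-1)\le e^{\diffp}-1$, which holds because $\delta\le 1$ (and trivially when $\diffp=0$). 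Collecting the bounds over all three cases for both outcomes, and noting the argument is symmetric in $\bbh$ and $\bbh'$, yields $(\diffp,\delta)$‑DP and completes the proof.
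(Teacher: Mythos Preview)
Your proposal is correct and is exactly the case analysis the paper intends: the corollary is stated without proof, meant to follow from the three-item lemma immediately preceding it, and you carry out precisely that verification. One small point worth noting is that you correctly observe that item~3 of the lemma alone handles only the outcome $\{i\}$ in the ``both domains'' case; for $\{\bot\}$ you rightly go back to the Laplace-tail symmetry (equivalently, Lemma~\ref{lem:data_dep_thresh} applied in both directions) to get the pure $e^{\diffp}$ ratio there as well, which is a detail the paper leaves implicit.
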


 Recall, that we will write our input to our mechanism as a histogram $\bbh \in \N^d$.  Accordingly, we note that our randomized response mechanism $\tRR{k}{i}$ will only return $i$ with non-zero probability if $i \in \domainG{k}{\bbh}$, where by definition $|\domainG{k}{\bbh}| \leq k$, so we will only need to draw randomness for these indices and are not required to consider the entire histogram.

\begin{algorithm}[h!]
	\caption{Fixed Threshold at level $k$, $\fT{k}$}
	\begin{algorithmic}
		\State \textbf{Input:} Histogram $\bbh \in \N^d$, and parameters $k, \diffp,\delta$.
		\State \textbf{Output:} Set of indices $\fOutput$.
		\State Set $h_{\bot} = h_{(k+1)} + 1 + \ln(1/2\delta)/\diffp$ 
		\State Set $\fOutput = \emptyset$
		\For {$i \leq k$}
		\If {$h_{(i)} > h_{\bot} $}
		\State Draw $r_i \sim Lap(1/\diffp)$
        \If {$h_{(i)} + r_i > h_{\bot} $}
        \State $\fOutput \leftarrow \fOutput \cup i$
		\EndIf
		\EndIf
		\EndFor
		\State Output $\fOutput$		
	\end{algorithmic}\label{algo:simple}
\end{algorithm}

We first more formally define this mechanism $\fT{k}$ in Algorithm~\ref{algo:simple}.  Note that we can connect $\fT{k}$ with the randomized response algorithm $\tRR{k}{i}$  for any integer $k < d$ in the following way for any input histogram $\bbh$ and any outcome $\fOutput \subseteq [d]$,

\[
\Pr[\fT{k}(\bbh) = \fOutput] = \prod_{i \in \fOutput} \Pr[\tRR{k}{i}(\bbh) = i] \prod_{i \in [d] \setminus \fOutput} \Pr[\tRR{k}{i} = \bot].
\]

\begin{definition}
We will restrict the domain and range of our fixed threshold mechanism to a subset of $[d]$. We fix $k< d$.  Consider some histogram $\what{\bbh}$ and it's corresponding domain $\domainG{k}{\what\bbh} \subseteq [d]$.
$$
\cH_k(\what\bbh) \defeq \left\{ \bbh \in \N^d : \domainG{k}{\bbh} \subseteq \domainG{k}{\what\bbh} \right\}.
$$
Let $\pi_{\what{\bbh}}: \domainG{k}{\what\bbh} \to [|\domainG{k}{\what\bbh}|]$ be an invertible mapping.
We define the fixed threshold mechanism limited to domain for some fixed histogram $\what\bbh$ to be $\fT{k}|_{\domainG{k}{\what\bbh}}:\cH_k(\what\bbh)\rightarrow \{\bot, 1\} ^{|\domainG{k}{\what\bbh}|}$ for any integer $k \leq d$, such that for some input histogram $\bbh$ and any $\bby \in  \{\bot, 1\} ^{|\domainG{k}{\what\bbh}|}$,

\[
\Pr[\fT{k}|_{\domainG{k}{\what\bbh}}(\bbh) = \bby] = \prod_{i : y_i = 1} \Pr[\tRR{k}{\pi_{\what{\bbh}}(i)}(\bbh) = \pi_{\what{\bbh}}(i)] \prod_{i : y_i = \bot} \Pr[\tRR{k}{\pi_{\what{\bbh}}(i)}(\bbh) = \bot]
\]
\end{definition}

\begin{lemma}\label{lem:restrictedComp}
Fix a histogram $\what{\bbh}$ and $k<d$.  The fixed threshold mechanism limited to a domain $\fT{k}|_{\domain{k}{\what\bbh}}$ can be written in terms of $\tRR{k}{i}$ for each $i \in \domainG{k}{\what\bbh}$ in the following way for any invertible $\pi_{\what\bbh}:\domainG{k}{\what\bbh}\to [|\domainG{k}{\what\bbh}|]$
$$
\fT{k}|_{\domainG{k}{\what\bbh}} (\bbh)= \left( y_{\pi_{\what\bbh}(i)} =  \tRR{k}{i}(\bbh) : i \in \domainG{k}{\what\bbh}  \right).
$$
Thus, given a histogram $\what{\bbh}$ the limited mapping $\fT{k}|_{\domainG{k}{\what\bbh}}$ is $(\diffp'(\delta'),k\delta + \delta')$-DP for any $\delta'\geq 0$ where 
$$
\diffp'(\delta') = \min\left\{ k \diffp, k \diffp \cdot \left( \frac{e^\diffp -1}{e^\diffp + 1} \right) + \diffp \sqrt{2 k \log(1/\delta')}\right\}. 
$$

\end{lemma}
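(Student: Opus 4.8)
The plan is to reduce $\fT{k}|_{\domainG{k}{\what\bbh}}$ to a fixed composition of the $(\diffp,\delta)$-DP mechanisms $\tRR{k}{i}$ and then invoke Corollary~\ref{cor:tRR_DP} together with the composition theorem, Theorem~\ref{thm:comp}. First I would dispatch the displayed identity, which is essentially a matter of unwinding definitions. Fix $\bbh \in \cH_k(\what\bbh)$, so that $\domainG{k}{\bbh} \subseteq \domainG{k}{\what\bbh}$ by construction of $\cH_k(\what\bbh)$. For any index $j \notin \domainG{k}{\bbh}$ we have $h_j \le h_{(k+1)}$, so the definition of $\tRR{k}{j}$ forces $\tRR{k}{j}(\bbh) = \bot$ almost surely; hence the coordinates of $\left(\tRR{k}{i}(\bbh) : i \in \domainG{k}{\what\bbh}\right)$ indexed outside $\domainG{k}{\bbh}$ are deterministically $\bot$, and this vector is a faithful (bijective) re-encoding of the subset $\fOutput \subseteq \domainG{k}{\bbh}$ returned by $\fT{k}(\bbh)$. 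Since the noise variables $r_i$ in Algorithm~\ref{algo:simple} are drawn independently across $i$ --- equivalently, by the product formula relating $\fT{k}$ and the $\tRR{k}{i}$ displayed just after Algorithm~\ref{algo:simple} --- the joint law of this vector is the product of the marginal laws of the $\tRR{k}{i}(\bbh)$, which is exactly the definition of $\fT{k}|_{\domainG{k}{\what\bbh}}$; relabelling coordinates through the bijection $\pi_{\what\bbh}$ gives the first display.

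For the privacy bound I would observe that by Corollary~\ref{cor:tRR_DP} each $\tRR{k}{i}$ with $i \in \domainG{k}{\what\bbh}$ is $(\diffp,\delta)$-DP as a randomized map on all of $\N^d$. By the identity just established, $\fT{k}|_{\domainG{k}{\what\bbh}}$ is the composition (in fact non-adaptive) of $t := |\domainG{k}{\what\bbh}|$ such mechanisms, so Theorem~\ref{thm:comp} applied with $\diffp_i = \diffp$ and $\delta_i = \delta$ for $i = 1,\dots,t$ shows it is $\left(\diffp'_t(\delta'),\, t\delta + \delta'\right)$-DP for every $\delta' \ge 0$, where $\diffp'_t(\delta') = \min\!\left\{ t\diffp,\ t\diffp\cdot\tfrac{e^\diffp-1}{e^\diffp+1} + \diffp\sqrt{2t\log(1/\delta')}\right\}$. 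Finally, $\domainG{k}{\what\bbh}$ consists of the indices whose count in $\what\bbh$ strictly exceeds its $(k+1)$-th largest count, so $t \le k$; both $t\delta$ and the two expressions defining $\diffp'_t(\delta')$ are nondecreasing in the integer $t$, so $\diffp'_t(\delta') \le \diffp'(\delta')$ and $t\delta \le k\delta$, and by monotonicity of the $(\diffp,\delta)$-DP guarantee in both parameters we conclude $\fT{k}|_{\domainG{k}{\what\bbh}}$ is $(\diffp'(\delta'), k\delta + \delta')$-DP, as stated.

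The only step that takes any thought is the first one: the reason $\fT{k}$ is awkward to analyze directly is that its output alphabet $\domainG{k}{\bbh}$ is data-dependent and can differ between neighbours, which blocks a direct application of a composition theorem. Restricting attention to the region $\cH_k(\what\bbh)$ is precisely what makes the relevant coordinate set the fixed set $\domainG{k}{\what\bbh}$ rather than the varying $\domainG{k}{\bbh}$, so that $\fT{k}|_{\domainG{k}{\what\bbh}}$ genuinely is a composition over a fixed coordinate set; once this is recognized, the rest is routine bookkeeping with Theorem~\ref{thm:comp} and monotonicity. One should also note that, because each $\tRR{k}{i}$ is $(\diffp,\delta)$-DP on all of $\N^d$, the conclusion holds for any pair of neighbouring histograms lying in the domain $\cH_k(\what\bbh)$ of the restricted mechanism, which is all that is needed here.
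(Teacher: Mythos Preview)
Your proposal is correct and follows the same approach as the paper: the paper's proof is simply the one-liner ``This follows from Corollary~\ref{cor:tRR_DP} as well as basic and advanced composition given in Theorem~\ref{thm:comp},'' and you have unpacked exactly this, additionally making explicit the $t = |\domainG{k}{\what\bbh}| \le k$ monotonicity step and the bookkeeping for the first displayed identity.
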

\begin{proof}
This follows from Corollary~\ref{cor:tRR_DP} as well as basic and advanced composition given in Theorem~\ref{thm:comp}.
\end{proof}

Note that if we are given a set $\fOutput \subseteq \domainG{k}{\what\bbh}$, then we can equivalently write it as a vector $\bby$ where each coordinate $y_i = 1$ if $\pi_{\what{\bbh}}(i) \in \fOutput$ and $y_i = \bot$ otherwise.

\begin{corollary}\label{cor:smallerSet}
Given some collection of subsets $\cS \subseteq 2^{[d]}$, subset $T \subseteq [d]$, and histogram $\bbh$, we denote $\cS|_{T} = \{ \fOutput \in \cS: \fOutput \subseteq T\}$.  Then we must have

\[
\Pr[\fT{k}(\bbh) \in \cS] = \Pr[\fT{k}(\bbh) \in \cS|_{\domainG{k}{\bbh}}]
\]

\end{corollary}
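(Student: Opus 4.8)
The plan is to observe that $\fT{k}(\bbh)$ deterministically never outputs any index outside the strictly limited domain $\domainG{k}{\bbh}$, so the events $\{\fT{k}(\bbh)\in\cS\}$ and $\{\fT{k}(\bbh)\in\cS|_{\domainG{k}{\bbh}}\}$ coincide.

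First I would recall the decomposition stated just before the corollary,
\[
\Pr[\fT{k}(\bbh) = \fOutput] = \prod_{i \in \fOutput} \Pr[\tRR{k}{i}(\bbh) = i] \;\prod_{i \in [d] \setminus \fOutput} \Pr[\tRR{k}{i}(\bbh) = \bot],
\]
and then use the definition of the truncated randomized response: $\Pr[\tRR{k}{i}(\bbh) = i] = 0$ whenever $h_i \leq h_{(k+1)}$, i.e. whenever $i \notin \domainG{k}{\bbh}$ (recall $\domainG{k}{\bbh} = \{i \in [d] : h_i > h_{(k+1)}\}$ from \eqref{eq:domainG}). Consequently, if $\fOutput \not\subseteq \domainG{k}{\bbh}$ there is some $i \in \fOutput$ with $i \notin \domainG{k}{\bbh}$, that factor vanishes, and so $\Pr[\fT{k}(\bbh) = \fOutput] = 0$. (Equivalently, one could argue directly from Algorithm~\ref{algo:simple}: an index $i$ is added to $\fOutput$ only when $h_{(i)} > h_\bot > h_{(k+1)}$, so every realized output is a subset of $\domainG{k}{\bbh}$.)

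Having established that all outcomes of positive probability lie in $2^{\domainG{k}{\bbh}}$, I would finish with a one-line summation over outcomes:
\[
\Pr[\fT{k}(\bbh) \in \cS] = \sum_{\fOutput \in \cS} \Pr[\fT{k}(\bbh) = \fOutput] = \sum_{\substack{\fOutput \in \cS \\ \fOutput \subseteq \domainG{k}{\bbh}}} \Pr[\fT{k}(\bbh) = \fOutput] = \sum_{\fOutput \in \cS|_{\domainG{k}{\bbh}}} \Pr[\fT{k}(\bbh) = \fOutput] = \Pr[\fT{k}(\bbh) \in \cS|_{\domainG{k}{\bbh}}],
\]
where the middle equality drops exactly the zero-probability terms by the previous paragraph and the third equality is just the definition $\cS|_{\domainG{k}{\bbh}} = \{\fOutput \in \cS : \fOutput \subseteq \domainG{k}{\bbh}\}$.

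There is no real obstacle here; the only thing to be careful about is matching the indexing conventions — Algorithm~\ref{algo:simple} loops over the ranks $i \leq k$ and adds the corresponding index, so I should phrase the argument in terms of the actual domain elements $i \in [d]$ (as the $\tRR{k}{i}$ formulation already does) rather than the ranks, to make the inclusion $\fOutput \subseteq \domainG{k}{\bbh}$ literally correct.
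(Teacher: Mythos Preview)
Your proposal is correct and is exactly the intended argument: the paper states this corollary without proof because it follows immediately from the definition of $\tRR{k}{i}$ (which returns $i$ with probability zero whenever $i\notin\domainG{k}{\bbh}$), and you have spelled out precisely that reasoning.
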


\begin{lemma}
For any neighboring histograms $\bbh,\bbh'$ and any $\cS \subseteq 2^{[d]}$, along with $k < d$ and parameters $\diffp,\delta > 0$ and $\delta' \geq 0$, then

\[
\Pr[\fT{k}(\bbh) \in \cS] \leq e^{\diffp'(\delta') } \Pr[\fT{k}(\bbh') \in \cS] + k\delta + \delta'
\]

where 

\[
\diffp'(\delta') = \min\left\{\diffp k, \diffp k \left( \frac{e^\diffp + 1}{e^\diffp -1} \right) + \diffp \sqrt{2k\log(1/\delta')} \right\}.
\]
\end{lemma}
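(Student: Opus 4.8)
The plan is to mirror the structure used in Section~\ref{sec:exp} and Section~\ref{sec:RNM}: fix neighboring histograms $\bbh,\bbh'$, split the outcome collection $\cS$ into a ``bad'' part (outcomes containing some index in the symmetric difference of the two strictly limited domains) and a ``good'' part (outcomes living entirely in $\domainG{k}{\bbh}\cap\domainG{k}{\bbh'}$), bound the bad part by $k\delta$, and apply a composition-of-randomized-response argument to the good part. Concretely, define $\cD_\delta = \domainG{k}{\bbh}\setminus\domainG{k}{\bbh'}$ (by Lemma~\ref{lem:domainSimilar}, w.l.o.g.\ $\domainG{k}{\bbh'}\subseteq\domainG{k}{\bbh}$, so only this direction is nonempty), and let $\cS^\delta = \{\fOutput\in\cS : \fOutput\cap\cD_\delta\neq\emptyset\}$ and $\cS^\diffp = \cS\setminus\cS^\delta$.

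First I would handle the bad events. By a union bound over $i\in\cD_\delta$, $\Pr[\fT{k}(\bbh)\in\cS^\delta]\le\sum_{i\in\cD_\delta}\Pr[i\in\fT{k}(\bbh)] = \sum_{i\in\cD_\delta}\Pr[\tRR{k}{i}(\bbh)=i]$, and item~2 of the lemma preceding Corollary~\ref{cor:tRR_DP} gives $\Pr[\tRR{k}{i}(\bbh)=i]=\delta$ for each such $i$; since $|\cD_\delta|\le k$ (as $|\domainG{k}{\bbh}|\le k$ by construction, cf.\ \eqref{eq:domainG}), this part is at most $k\delta$.

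Next I would handle the good events. Using Corollary~\ref{cor:smallerSet} twice, $\Pr[\fT{k}(\bbh)\in\cS^\diffp] = \Pr[\fT{k}(\bbh)\in\cS^\diffp|_{\domainG{k}{\bbh}}]$, but every $\fOutput\in\cS^\diffp$ already satisfies $\fOutput\subseteq\domainG{k}{\bbh}\cap\domainG{k}{\bbh'}$ by definition of $\cS^\diffp$ and the fact that $\fT{k}(\bbh)$ can only output indices of $\domainG{k}{\bbh}$ — so in fact $\cS^\diffp$ is a subcollection of $2^{\domainG{k}{\bbh}\cap\domainG{k}{\bbh'}}$. Restricting both $\bbh$ and $\bbh'$ to the common domain $T = \domainG{k}{\bbh}\cap\domainG{k}{\bbh'}$ (where $|T|\le k$), Lemma~\ref{lem:restrictedComp} shows $\fT{k}|_T$ is a composition of $|T|$ independent $\tRR{k}{i}$ mechanisms, each of which restricted to the common domain is $\diffp$-DP (item~3 of that same lemma, via Lemma~\ref{lem:data_dep_thresh}), hence by basic and advanced composition $\fT{k}|_T$ is $(\diffp'(\delta'),\delta')$-DP — note on the common domain the ``$k\delta$'' term of Lemma~\ref{lem:restrictedComp} vanishes because each $\tRR{k}{i}$ is purely $\diffp$-DP there. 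Thus $\Pr[\fT{k}(\bbh)\in\cS^\diffp]\le e^{\diffp'(\delta')}\Pr[\fT{k}(\bbh')\in\cS^\diffp]+\delta'$. Finally, since $\cS^\delta\cap 2^{\domainG{k}{\bbh'}}=\emptyset$ forces $\Pr[\fT{k}(\bbh')\in\cS^\delta]=0$, we get $\Pr[\fT{k}(\bbh')\in\cS^\diffp]\le\Pr[\fT{k}(\bbh')\in\cS]$, and adding the two bounds yields the claim.

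The main obstacle I expect is the bookkeeping around the asymmetry of the two domains: Lemma~\ref{lem:domainSimilar} only gives one containment direction, so some care is needed to confirm that after symmetrizing (the inequality to prove is stated for a fixed ordered pair $(\bbh,\bbh')$, but we may need to consider which of $\bbh,\bbh'$ has the larger domain), the ``bad set for $\bbh$'' is genuinely impossible under $\bbh'$ and that outcomes of $\fT{k}(\bbh)$ restricted away from $\cD_\delta$ really do land in the common domain. A secondary subtlety is making sure the restricted mechanism $\fT{k}|_T(\bbh)$ agrees in distribution with $\fT{k}(\bbh)$ on the event $\{\fT{k}(\bbh)\subseteq T\}$ — this follows from Corollary~\ref{cor:smallerSet} but should be invoked explicitly — and that the $\delta'$ slack, not $k\delta+\delta'$, is the right additive term for the good part since the common-domain $\tRR{k}{i}$'s are pure DP.
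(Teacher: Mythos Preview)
Your proposal takes a substantially different route than the paper. The paper does \emph{not} do a good/bad split here at all: it uses Lemma~\ref{lem:domainSimilar} to conclude that one of $\domainG{k}{\bbh}$, $\domainG{k}{\bbh'}$ is contained in the other, then views \emph{both} $\fT{k}(\bbh)$ and $\fT{k}(\bbh')$ as instances of the restricted mechanism $\fT{k}|_{\widehat\cD}$ where $\widehat\cD$ is the \emph{larger} of the two strictly limited domains (indices in $\widehat\cD$ but outside the smaller domain have $\tRR{k}{i}=\bot$ deterministically for that histogram, so the identification is exact). Lemma~\ref{lem:restrictedComp} then applies directly, and the $k\delta$ term comes straight from its composition bound --- no separate bad-event accounting is needed.

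Your approach instead restricts to the \emph{common} domain $T=\domainG{k}{\bbh}\cap\domainG{k}{\bbh'}$ and argues pure DP there. This works cleanly in the case $\domainG{k}{\bbh'}\subseteq\domainG{k}{\bbh}$ that you wrote out (because then $T=\domainG{k}{\bbh'}$ and $\fT{k}|_T(\bbh')$ equals $\fT{k}(\bbh')$ exactly). But in the reversed case $\domainG{k}{\bbh}\subseteq\domainG{k}{\bbh'}$ --- the ``obstacle'' you flagged --- it does not close: there $T=\domainG{k}{\bbh}$, $\cD_\delta=\emptyset$, and your chain gives $\Pr[\fT{k}(\bbh)\in\cS]\le e^{\diffp'(\delta')}\Pr[\fT{k}|_T(\bbh')\in\cS|_T]+\delta'$. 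The quantity $\Pr[\fT{k}|_T(\bbh')\in\cS|_T]$ is the $T$-\emph{marginal} of $\fT{k}(\bbh')$ landing in $\cS|_T$, which can strictly exceed $\Pr[\fT{k}(\bbh')\in\cS]$: for instance with $T=\{1\}$, $\domainG{k}{\bbh'}=\{1,2\}$, $\cS=\{\{1\}\}$, the marginal event also picks up the outcome $\{1,2\}\notin\cS$. You can bound the discrepancy by $k\delta$ (union bound over $j\in\domainG{k}{\bbh'}\setminus T$, each output with probability $\delta$ by item~2 of the lemma before Corollary~\ref{cor:tRR_DP}), but that places the $k\delta$ \emph{inside} the $e^{\diffp'(\delta')}$ factor and yields $e^{\diffp'(\delta')}k\delta+\delta'$ rather than the claimed $k\delta+\delta'$. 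Restricting to the larger domain, as the paper does, sidesteps this entirely.
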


\begin{proof}
We first apply Corollary~\ref{cor:smallerSet} to instead consider the set $\cS|_{\domainG{k}{\bbh}}$. We will fix two neighboring histograms $\bbh, \bbh'$ and by Lemma~\ref{lem:domainSimilar} we need to only consider two cases.

First, if $\domainG{k}{\bbh} \subseteq \domainG{k}{\bbh'}$, then we know $\bbh \in \cH_{k}(\bbh')$ and $\cS|_{\domainG{k}{\bbh}} \subseteq 2^{\domainG{k}{\bbh'}}$. Let $\pi_{\bbh'}: \domain{k}{\bbh'} \to [|\domainG{k}{\bbh'}|]$ be an invertible mapping such that for every $S \subseteq \domainG{k}{\bbh'}$ there is a $\bby^{(S)} \in \{1,\bot \}^{|\domainG{k}{\bbh'}|}$ such that $y^{(S)}_{\pi_{\bbh'}(i)} = 1$ if $i \in S$.  Then we have 
$$
\Pr[\fT{k}(\bbh) = S] = \Pr[\fT{k}|_{\domainG{k}{\bbh'}}(\bbh) = \bby_S]
$$
and 
$$
\Pr[\fT{k}(\bbh') =S] = \Pr[\fT{k}|_{\domainG{k}{\bbh'}}(\bbh') = \bby_S]
$$
Hence, we also have equality in the probability statements when $\fT{k}(\bbh) \in \cS\mid_{\domainG{k}{\bbh)}}$ and $\fT{k}(\bbh') \in \cS\mid_{\domainG{k}{\bbh)}}$.   We then apply Lemma~\ref{lem:restrictedComp} to get the result.

The second case follows symmetrically.

\end{proof}

Summarizing the above results we have the following theorem.
\begin{theorem}
Algorithm~\ref{algo:simple} is $(k\diffp, k\delta)$-DP, and also $(\diffp\sqrt{2k\ln(1/\delta')} + \diffp \left(\tfrac{e^{\diffp} - 1}{e^\diffp + 1}\right)k , k\delta + \delta')$-DP for $\delta'>0$.
\end{theorem}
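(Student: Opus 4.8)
The plan is to obtain the theorem as an immediate specialization of the lemma stated just above it (the $(\diffp'(\delta'), k\delta+\delta')$-DP bound for $\fT{k}$), which already carries all of the real content: the decomposition of $\fT{k}$ into the $\le k$ per-index truncated randomized responses $\tRR{k}{i}$, the restriction to the limited domain via $\fT{k}|_{\domainG{k}{\what\bbh}}$ and Corollary~\ref{cor:smallerSet}, the case analysis of Lemma~\ref{lem:domainSimilar} on how $\domainG{k}{\bbh}$ and $\domainG{k}{\bbh'}$ relate for neighboring histograms, and the composition over the active coordinates from Lemma~\ref{lem:restrictedComp}. So the proof of the theorem itself is just reading off the two branches of the minimum in $\diffp'(\delta')$.

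First I would handle the $(k\diffp, k\delta)$ statement: in the expression $\diffp'(\delta') = \min\{\diffp k,\ \diffp k\,\tfrac{e^\diffp-1}{e^\diffp+1} + \diffp\sqrt{2k\log(1/\delta')}\}$, the first term $\diffp k$ is a valid choice for every $\delta' \ge 0$; letting $\delta' \downarrow 0$ the additive slack $k\delta + \delta'$ collapses to $k\delta$. Equivalently, this is nothing more than basic composition of the $k$ possibly-active $(\diffp,\delta)$-DP mechanisms $\tRR{k}{i}$ given by Corollary~\ref{cor:tRR_DP}. Second, for the $(\diffp\sqrt{2k\ln(1/\delta')} + \diffp\,\tfrac{e^\diffp-1}{e^\diffp+1}k,\ k\delta + \delta')$ statement with $\delta' > 0$, I would take the second term in the minimum, which is exactly advanced composition (Theorem~\ref{thm:comp}) applied to the $\le k$ active $(\diffp,\delta)$-DP mechanisms, contributing $\sum_i \delta_i \le k\delta$ plus the usual $\delta'$ slack. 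Both bullets then follow verbatim from the preceding lemma.

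The step I expect to be the genuine obstacle — though it has already been discharged in the supporting lemmas, and I would only need to cite it — is that the threshold $h_{(k+1)} + 1 + \ln(1/2\delta)/\diffp$ is \emph{data dependent}, so a neighboring database can simultaneously shift the threshold and change which indices lie in $\domainG{k}{\cdot}$. The resolution is two-fold: Lemma~\ref{lem:data_dep_thresh} shows the event $\{h_i + \lap(1/\diffp) > h_{(k+1)} + T\}$ is still $e^\diffp$-stable under a neighboring change (using $|(h_i - h_{(k+1)}) - (h_i' - h_{(k+1)}')| \le 1$), handling indices common to both domains; and Lemma~\ref{lem:domainSimilar} gives the dichotomy that one of $\domainG{k}{\bbh}, \domainG{k}{\bbh'}$ contains the other and every newly-appearing index $i$ satisfies $h_i = h_{(k+1)} + 1$ exactly, which is precisely why $\Pr[\tRR{k}{i}(\bbh) = i] = \Pr[\lap(1/\diffp) > \ln(1/2\delta)/\diffp] = \delta$ on those indices. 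Thus the only new work in the theorem's proof is to invoke the last lemma once with $\delta' = 0$ and once with $\delta' > 0$.
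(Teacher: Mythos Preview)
Your proposal is correct and matches the paper's own treatment: the paper does not give a separate proof of this theorem at all, merely introducing it with ``Summarizing the above results we have the following theorem,'' so the content is exactly what you describe---reading off the two branches of the minimum in the preceding lemma's $\diffp'(\delta')$ (with $\delta'=0$ for basic composition and $\delta'>0$ for advanced composition), all the real work having been done in Corollary~\ref{cor:tRR_DP}, Lemma~\ref{lem:restrictedComp}, and the final lemma. One cosmetic remark: you do not need to take a limit $\delta'\downarrow 0$, since that lemma is stated for all $\delta'\ge 0$ and you may simply plug in $\delta'=0$.
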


We point out that in either the unrestricted sensitivity or the $\Delta$-restricted sensitivity setting, Algorithm~\ref{algo:simple} will still be differentially private with the same privacy parameters as in the above theorem, but we cannot improve the factor of $k \diffp$ to $\Delta \diffp$ because the count of a single element that a user contributed to can modify the threshold and change the count of all the at most $k$ elements above this threshold.

\end{document}